\def\isarxiv{1}
\let\C\relax
\definecolor{mydarkblue}{rgb}{0,0.08,0.45}
\newtheorem{theorem}{Theorem}[section]
\newtheorem{lemma}[theorem]{Lemma}
\newtheorem{question}[theorem]{Question}
\newtheorem{definition}[theorem]{Definition}
\newtheorem{conjecture}[theorem]{Conjecture}
\newtheorem{remark}[theorem]{Remark}
\newcommand{\ketbra}[2]{\ket{#1}\!\!\bra{#2}}
\newcommand{\wt}{\widetilde}
\newcommand{\eps}{\varepsilon}
\renewcommand{\epsilon}{\varepsilon}
\renewcommand{\phi}{\varphi}
\newcommand{\R}{\mathbb{R}}
\newcommand{\Z}{\mathbb{Z}}
\newcommand{\F}{\mathbb{F}}
\newcommand{\C}{\mathbb{C}}
\newcommand{\tr}{\mathrm{tr}}
\newcommand{\polylog}{\mathrm{polylog}}
\renewcommand{\i}{\mathbf{i}}
\renewcommand{\tilde}{\wt}
\newcommand{\poly}{\mathrm{poly}}
\DeclareMathOperator*{\E}{{\mathbb{E}}}
\newcommand{\SQIOP}{\mathsf{SQIOP}}
\newcommand{\QIOPEPR}{\mathsf{GQIOP}}
\newcommand{\QMA}{\mathsf{QMA}}
\newcommand*{\RN}[1]{\expandafter\@slowromancap\romannumeral #1@}
\definecolor{b2}{RGB}{51,153,255}
\definecolor{mygreen}{RGB}{80,180,0}
\begin{document}

\ifdefined\isarxiv

\date{}

\title{Quantum Interactive Oracle Proofs}
\author{
Baocheng Sun\thanks{\texttt{baocheng.sun@weizmann.ac.il}. Weizmann Institute of Science.
} 
\and
Thomas Vidick\thanks{\texttt{thomas.vidick@epfl.ch}. School of Computer and Communication Sciences and Center for Quantum Science and Engineering, École Polytechnique Fédérale de Lausanne, and Faculty of Mathematics and Computer Science, Weizmann Institute of Science.
}
}

\else
    
    \title{} 
    \maketitle 

\fi

\ifdefined\isarxiv
\begin{titlepage}
  \maketitle
  \begin{abstract}

We initiate the study of quantum Interactive Oracle Proofs (qIOPs), a generalization of both quantum Probabilistically Checkable Proofs and quantum Interactive Proofs, as well as a quantum analogue of classical Interactive Oracle Proofs. 

In the model of quantum Interactive Oracle Proofs, we allow multiple rounds of quantum interaction between the quantum prover and the quantum verifier, but the verifier has limited access to quantum resources. This includes both queries to the prover’s messages and the complexity of the quantum circuits applied by the verifier.
The question of whether QMA admits a quantum interactive oracle proof system is a relaxation of the quantum PCP Conjecture.

We show the following two main constructions of qIOPs, both of which are unconditional:
\begin{itemize}
\item We construct a qIOP for QMA in which the verifier shares polynomially many EPR pairs with the prover at the start of the protocol and reads only a constant number of qubits from the prover’s messages. 
\item We provide a stronger construction of qIOP for QMA in which the verifier not only reads a constant number of qubits but also operates on a constant number of qubits overall, including those in their private registers. However, in this stronger setting, the communication complexity becomes exponential. This leaves open the question of whether strong qIOPs for QMA, with polynomial communication complexity, exist.
\end{itemize}

As a key component of our construction, we introduce a novel single prover many-qubits test, which may be of independent interest.
  \end{abstract}
  \thispagestyle{empty}
\end{titlepage}

{\hypersetup{linkcolor=black}
\tableofcontents
}
\newpage
\setcounter{page}{1} 

\else

\fi

\section{Introduction}
The classical PCP theorem \cite{alm+98,as98, dinur06} is one of the most significant breakthroughs in computational complexity theory from the early 1990s. Its impact extends beyond practical applications like succinct cryptographic arguments to foundational areas such as hardness of approximation \cite{fgl+91}, and it has profoundly influenced the development of coding theory.

The quantum analog of the PCP theorem, known as the quantum PCP conjecture (qPCP) \cite{an02, aalv09}, has been a subject of exploration for several decades. 
The qPCP conjecture holds significant implications for the foundations of quantum mechanics \cite{aav13}, as exemplified by its relationship with the NLTS (No Low-Energy Trivial States) theorem \cite{fh14}. Research into the qPCP conjecture has yielded both promising \cite{abn23} and negative results \cite{bh13, nn24}. A recent breakthrough \cite{abn23} proving the NLTS theorem has strengthened confidence in the plausibility of qPCP. 
However, the qPCP conjecture remains largely unexplored \cite{aav13}, sparking interest in natural and less restrictive questions derived from the qPCP conjecture.

We introduce a relaxation of the qPCP conjecture analogous to how classical PCPs were generalized to Interactive Oracle Proofs (IOPs) \cite{bcs16, rrr16}. 
Specifically, we introduce quantum Interactive Oracle Proofs (qIOP) and present two incomparable constructions, each using a different technical approach.
Our formulation of qIOP builds upon the quantum PCP conjecture, which is defined in terms of efficient proof verification \cite{aav13}: for any instance of a QMA language, there exists a polynomial-length quantum proof that can be verified by measuring only a constant number of qubits.
Our approach extends the proof system framework to allow multiple rounds of interaction while ensuring that the verifier reads only a constant number of qubits in total. This positions qPCP as a special case of quantum interactive oracle proofs, where restricting to a single round results in a qPCP. 

As a generalization of qPCP, the existence of qIOPs provides evidence supporting the validity of the qPCP conjecture.
Furthermore, qIOPs hold potential for developing quantum succinct arguments for QMA languages \cite{bcs16, klvy22, bkl+22, gjmz23, mnz24}.

\subsection{Our formulation}

We formulate several variants of quantum IOPs. Each formulation is such that when all computation and communication are restricted to be classical, the model reduces to standard classical IOPs. More specifically, we define quantum IOP systems and strong quantum IOP systems as follows.

\begin{definition}[Quantum IOP systems; informal version of Definition~\ref{def:gqiop}]
A \emph{quantum interactive oracle proof} (quantum IOP) system, parameterized by functions $q(n)$ and $l(n): \mathbb{Z}_+ \to \mathbb{Z}_+$, consists of a quantum verifier $\texttt{V}$ satisfying the following conditions for any input of size $n$:

\begin{itemize}
    \item The verifier interacts with the prover for at most $\mathrm{poly}(n)$ rounds.
    \item In each round, the prover sends a {quantum} message to the verifier. The message length may vary between rounds.
    \item In each round, the verifier:
    \begin{itemize}
        \item applies a quantum circuit to his private qubits and to portions of the prover’s message, including qubits received in the current round as well as those retained from previous rounds,
        \item stores part of the prover’s message registers for use in later rounds.
    \end{itemize}
    \item After each interaction, the verifier returns the remainder of the prover's {quantum} message (i.e., the part not stored) and may optionally send back part of their own private {quantum} register.
    \item At the end of the protocol, the verifier outputs a single decision bit.
\end{itemize}
The main complexity parameters are:
\begin{itemize}
    \item The \textbf{query complexity} $q(n)$: the total number of qubits from the prover's messages accessed by the verifier across all rounds.
    \item The \textbf{communication complexity} $l(n)$: the total number of qubits in the prover's messages across all rounds.
\end{itemize}
We refer to such a system as a \emph{quantum IOP} if $q(n)$ is bounded by a constant.\footnote{In the formal Definition~\ref{def:gqiop}, we refer to this as the \emph{general} quantum IOP to distinguish it from the restricted and strong variants.}

We define the following variants of a \emph{quantum IOP}:
\begin{itemize}
    \item A system is called a \emph{restricted quantum IOP} if the private register sent back by the verifier is required to be \emph{classical}.
    \item It is called a \emph{strong quantum IOP} if the verifier must return the \emph{entire} prover's message register in each round (i.e., does not store any part of the message).
\end{itemize}

\end{definition}

\begin{definition}[Quantum IOP systems for a promise problem, informal version of Definition~\ref{def:qiop_}]
A \emph{quantum interactive oracle proof} (quantum IOP) system for a language $\mathcal{L}$ with completeness $c: \mathbb{Z}_+ \rightarrow [0,1]$ and soundness $s: \mathbb{Z}_+ \rightarrow [0,1]$ satisfies the following:
\begin{itemize}
\item For any \emph{YES-instance} of size $n$, 
There exists a polynomial-time prover such that
the verifier accepts with probability at least $c(n)$.
\item For any \emph{NO-instance} of size $n$, for all provers, the verifier accepts with probability at most $s(n)$.
\end{itemize}
The \emph{completeness-soundness gap} of the system is defined as $c(n) - s(n)$.
\end{definition}
\begin{remark}
In the formal definition, we use the notion of query, as defined in Definition \ref{def:q}, to precisely quantify the query complexity and the proof length. This notion of query is motivated by the oracle model used in the literature on quantum query complexity.
\end{remark}

\begin{remark}
Classically, constant-round IOPs, where the verifier reads the prover's message at each round, are equivalent to those where the verifier reads the entire interaction only at the end. Moreover, adaptive and non-adaptive IOPs are equivalent. However, these equivalences do not necessarily hold in the quantum setting.
\end{remark}

\subsection{Main results}

The following are informal statements of our two main results. 

\begin{theorem}[Informal version of Theorem \ref{thm:qpcp_tlp}]\label{thm:inf1}
There exists a quantum IOP system for $\QMA$ with the following parameters: query complexity $q(n) = O(1)$, communication complexity $l(n) = \mathrm{poly}(n)$, and constant completeness-soundness gap. 
\end{theorem}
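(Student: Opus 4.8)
\textbf{Proof plan for Theorem~\ref{thm:inf1}.} The plan is to build a constant-round (hence polynomial-round) qIOP in which the verifier's \emph{total} query budget over the whole interaction is $O(1)$, while still achieving a constant completeness--soundness gap. The starting point is the standard QMA verification circuit applied to a history-state / local-Hamiltonian encoding of the QMA instance: for a YES instance there is a witness state on which the verifier's $O(1)$-local terms are nearly all satisfied, and for a NO instance every state violates a constant fraction of them by a constant amount. Naively reading the witness directly gives only a $1/\poly$ gap (as in the weaker amplification approach), so the whole point is to use \emph{interaction} to let the prover \emph{steer} the verifier toward a single randomly chosen local check, performing the error-amplification inside the protocol rather than by the verifier's own measurement statistics.

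Concretely, I would proceed in the following steps. First, fix a constant-locality Hamiltonian $H=\frac1m\sum_{i=1}^m H_i$ with a constant promise gap (Kitaev-style, or the improved-locality versions), so that YES instances have a state with energy $\le a$ and NO instances have energy $\ge b$ with $b-a=\Omega(1)$; the verifier's job is to estimate $\langle H_i\rangle$ for a uniformly random $i$, which touches only $O(1)$ qubits of the honest witness. Second, design the interaction so that the honest prover \emph{commits} to the witness state in its message registers and then, over $O(1)$ rounds of back-and-forth (the verifier sending private coins, the prover applying the isometry that routes the relevant $O(1)$ qubits of the witness into a fixed $O(1)$-qubit ``output'' block), delivers to the verifier a short block that the verifier measures with a constant number of queries; soundness of this routing is enforced by a cut-and-choose / consistency sub-test in which the verifier occasionally instead asks the prover to reveal a different block, and by the many-qubits test introduced earlier in the paper (used here to certify that the prover really holds a genuine multi-qubit state and has acted on it by the claimed local isometry, not by something that depends on which $H_i$ the verifier picked). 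Third, I would amplify the gap \emph{once, in parallel}: run a constant number $k=O(1)$ of independent copies of the single-check protocol in parallel, each contributing $O(1)$ queries so the total stays $O(1)$, and take a majority/threshold decision; since a constant number of copies each with a constant (small) gap already composes to a constant gap of the right form via a Chernoff bound over $k$ copies, this step is legitimate precisely because $k$ is a fixed constant rather than $\poly(n)$. Fourth, verify completeness (honest prover routes correctly, each copy accepts with probability $\ge 1-a-o(1)$) and soundness (any cheating prover is, up to the guarantees of the many-qubits/consistency sub-tests, effectively committing to \emph{some} fixed state on which the verifier samples a random $H_i$, so its acceptance is bounded by $1-b+o(1)$ per copy, and the $k$-fold parallel threshold separates these), and finally tally the parameters: $O(1)$ rounds, $\poly(n)$ communication (the committed witness plus sub-test transcripts), and $O(1)$ total queries.

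The main obstacle I anticipate is the soundness of the ``routing + cut-and-choose'' step: a cheating prover could, in principle, prepare its committed registers so that they pass the consistency checks on the blocks actually opened while placing a low-energy-looking fake in whichever block the verifier ends up querying, i.e.\ the prover's strategy could adaptively depend on the verifier's choice of $H_i$ even though the verifier tries to hide it behind the interaction. Handling this is exactly where the single-prover many-qubits test does the heavy lifting: one must show that passing that test forces the prover's operations to act as a fixed isometry on an honest-dimension register independent of the hidden index, so that the energy the verifier sees is (close to) $\langle\psi|H_i|\psi\rangle$ for a \emph{fixed} $\psi$, reducing soundness to the Hamiltonian promise gap. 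Making this reduction quantitative -- tracking how the $\eps$ in the many-qubits test and the $1/\poly$ slack in consistency propagate into the final gap, and ensuring the final gap is an absolute constant after only $O(1)$ parallel repetitions -- is the delicate part; everything else (round counting, communication bookkeeping, completeness) is routine.
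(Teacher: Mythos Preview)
Your proposal has a fundamental gap in step one. You assume a constant-locality Hamiltonian with a \emph{constant} promise gap $b-a=\Omega(1)$ that is QMA-complete, citing ``Kitaev-style, or the improved-locality versions.'' But Kitaev's construction (and all known variants) gives only a $1/\poly(n)$ gap; obtaining a QMA-hard local Hamiltonian with constant gap is precisely the quantum PCP conjecture, which is open and is the very thing this paper is trying to \emph{relax}. Your step three, running $k=O(1)$ parallel copies, cannot fix this: constant repetition of a $1/\poly$ gap still gives a $1/\poly$ gap. So as written, your plan assumes what the whole paper is trying to avoid assuming.

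A second issue: you propose using the many-qubits test from the paper to enforce that the prover's routing is honest. But that test, as constructed here, encodes the purified measurement outcome with the Hadamard code and therefore has \emph{exponential} communication. It is the engine behind Theorem~\ref{thm:inf2} (strong qIOP, $\exp(\poly)$ communication), not Theorem~\ref{thm:inf1}. Invoking it would immediately blow past the $\poly(n)$ communication bound you are trying to establish.

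The paper's actual route is quite different and sidesteps both problems. It starts from the $5$-local Clifford Hamiltonian (QMA-complete with $1/\poly$ gap) and amplifies the gap to a constant by taking $(1-2H)^{\otimes N}$ with $N=\poly(n)$; the resulting Hamiltonian has a constant gap but is \emph{non-local}. To handle non-local energy measurements with $O(1)$ queries, the protocol uses teleportation: the verifier sends $\poly(n)$ half-EPR pairs to the prover (allowed in the \emph{general} qIOP model), the prover teleports $N$ copies of the witness, and the verifier now holds the one-time-padded witness in its \emph{private} registers, where it can measure arbitrary (non-local) Pauli terms for free. The only information the verifier needs from the prover is the correction bit $p\cdot b_i + q\cdot a_i$ for the sampled term, and this is extracted with $O(1)$ queries via an ``interactive oracle commitment'': the prover first commits to the OTP $(p,q)$ with an error-correcting code, the verifier then reveals the term, and a PCPP certifies consistency. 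No qubit test is used here at all; soundness reduces to PCPP soundness plus the spectral bound on the amplified Hamiltonian.
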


\begin{remark}
In our construction, the only quantum communication occurs in the first round, where the verifier sends the prover $\poly(n)$ EPR pairs. All subsequent messages exchanged between the prover and the verifier are classical.
\end{remark}

\begin{theorem}[Informal version of Theorem \ref{thm:sqiop}]\label{thm:inf2}
There exists a strong quantum IOP system for $\QMA$ with query complexity $q(n) = O(1)$, communication complexity $l(n) = \exp(\mathrm{poly}(n))$, and constant completeness-soundness gap.
\end{theorem}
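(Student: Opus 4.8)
The plan is to bootstrap the construction of Theorem~\ref{thm:inf1} by moving \emph{all} of the verifier's quantum workspace onto the prover's side, paying for this with an exponential blow-up in communication. Recall that in the Theorem~\ref{thm:inf1} protocol the verifier's quantum resources are exactly (i) the $\poly(n)$ halves of EPR pairs it shares with the prover, and (ii) a bounded collection of local measurements applied to those halves together with $O(1)$ qubits of the (otherwise classical) prover messages -- this is what the single prover many-qubits test plus the local Hamiltonian check amount to. The first step is to have the prover, rather than the verifier, hold the EPR halves, now encoded under an exponentially redundant quantum code $\mathcal{C}$ on $\exp(\poly(n))$ qubits whose key is classical ($\poly(n)$ bits, hence essentially free) and kept by the verifier. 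We ask of $\mathcal{C}$ two properties: a constant relative distance together with an $O(1)$-query test that rejects any register $\Omega(1)$-far from the codespace with \emph{constant} probability, and a logical/transversal implementation of the handful of local operations that the Theorem~\ref{thm:inf1} verifier applies. The verifier then simulates the Theorem~\ref{thm:inf1} verifier at the logical level -- delegating the encoded operations to the prover and collecting classical reports -- while independently performing $O(1)$ physical spot-checks of $\mathcal{C}$ and retaining only an $O(1)$-qubit probe register for the genuinely quantum step of the many-qubits test. This keeps the verifier's total quantum footprint constant, uses interaction exactly as in Theorem~\ref{thm:inf1}, and reads only $O(1)$ qubits of the prover's messages.

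Completeness is immediate: the honest prover encodes the honest Theorem~\ref{thm:inf1} strategy under $\mathcal{C}$, so the spot-checks pass and the delegated logical operations reproduce a valid Theorem~\ref{thm:inf1} transcript, which is accepted with probability at least $c(n)$. For soundness one runs the standard case split on the prover's encoded register: either it is $\Omega(1)$-far from the codespace, in which case the $O(1)$-query test rejects with constant probability; or it is close, in which case -- after (conceptually) projecting to the codespace -- the prover's behaviour is within $o(1)$ of a logically honest strategy, and one inherits the constant completeness--soundness gap of Theorem~\ref{thm:inf1}. Combining the two cases gives a constant gap overall. A direct alternative would be to build a genuine ``exponential-length quantum PCP'' for $\QMA$ -- a quantum analogue of the Hadamard/Reed--Muller inner PCP -- by arithmetizing the history state of the local Hamiltonian into an $\exp(\poly(n))$-qubit encoding that can be checked with $O(1)$ queries; this leads to the same technical heart, so I would pursue the bootstrapping route as the cleaner packaging.

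The hard part is constructing the code $\mathcal{C}$ (equivalently, the arithmetized encoding): it must simultaneously be (a) $O(1)$-query testable with \emph{constant} soundness -- a quantum locally-testable-code-type statement, which in the $\poly(n)$-length regime is essentially the quantum PCP conjecture, and is only affordable here because we allow exponential length -- and (b) expressive enough to support the specific logical operations of the Theorem~\ref{thm:inf1} verifier, including the measurements certifying the EPR structure. Pushing the testing soundness from the naive $1/\poly(n)$ up to a constant is precisely what forces the exponential size, and the analysis of the test is where the real work lies: I expect it to be an \emph{interactive}, sumcheck-style test over an exponentially large index set, so that the $\poly(n)$ classical rounds available in a qIOP amplify the per-query detection probability to a constant, all while the verifier touches only $O(1)$ physical qubits and $O(1)$ qubits of private state. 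A secondary, routine point is prover efficiency: the prover must prepare the $\exp(\poly(n))$-qubit encoding, which is fine, since encoding circuits for Reed--Muller-type codes run in time polynomial in their (here exponential) output length, matching the ``polynomial-time relative to the proof'' convention for exponential-communication proof systems.
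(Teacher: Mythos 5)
There is a genuine gap, and it sits exactly where you acknowledge it: the object your construction needs does not exist in the proposal, and even granting it, your soundness argument would not go through. Your plan hands the (encoded) EPR halves to the prover, with the verifier keeping only a classical key and doing $O(1)$ spot-checks of a quantum code $\mathcal{C}$. But the soundness of the Theorem~\ref{thm:inf1} protocol rests precisely on the verifier holding its halves privately: once the prover commits to the one-time pad, the teleported state is fixed because the prover cannot touch the verifier's register. If the prover holds both halves (even under an encoding), codespace membership no longer binds anything: the prover can place an arbitrary \emph{logical} state in the codespace and apply arbitrary logical unitaries between rounds, so your step ``close to the codespace $\Rightarrow$ within $o(1)$ of a logically honest strategy'' is false as stated. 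Certifying the prover's \emph{operations and measurements}, not just proximity of a state to a codespace, is the actual difficulty; a quantum LTC-style test of the state cannot supply it. Moreover, the required $\mathcal{C}$ --- constant-soundness $O(1)$-query testability together with transversal support for the specific logical operations of the Theorem~\ref{thm:inf1} verifier, plus an interactive sumcheck-style analysis --- is left entirely unconstructed, and that is the whole technical content of the theorem, not a routine step.

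The paper's route is different and avoids needing any quantum locally testable code. The prover sends \emph{classical} Hadamard-code encodings of purified $Z$- and $X$-basis measurements of its witness (this is the source of the exponential communication), and the verifier --- acting on $O(1)$ qubits in total, using only $Z$ gates and $Z$-basis measurements together with local testing and self-correction of the classical Hadamard code --- runs consistency, flipped-consistency and anti-commutation tests (the many-qubits test, Theorem~\ref{thm:mqt}). Soundness is obtained by showing the prover's induced observables form an approximate representation of the Weyl--Heisenberg group and invoking a Gowers--Hatami stability theorem (Theorem~\ref{thm:ourqubits}) to conclude they are conjugate to genuine Pauli operators; the energy of the amplified Clifford Hamiltonian is then extracted via an energy test and an energy-consistency test that compares the prover's claimed outcomes (certified through a PCPP and a random-mask inner product) against Pauli measurements synthesized from the first two rounds. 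In short, the binding/commitment mechanism you are missing is supplied in the paper by the anti-commutation self-test plus group-stability analysis, not by proximity testing to a code, and that is the ingredient any correct proof along your lines would still have to build.
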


For both results the number of rounds of interaction is a constant. We expand on the proof technique for each result in Section~\ref{sec:tech}. In few words, the first protocol is based on teleportation of the witness and adequate use of a PCP of proximity (PCPP) to enable the verifier to correct for the teleportation pad while only querying a constant number of bits from the prover. (However, the verifier does perform a polynomial-size quantum computation on their own private space, which contain the verifier's half-EPR pairs.) The second protocol is based on a multi-qubit test to delegate measurements to the prover and classical PCP techniques to enable the verifier to recover energy measurement outcomes. The multi-qubit test requires exponential quantum communication from the prover to the verifier, even though the verifier only operates on a constant number of qubits in total. Yet we believe the test should be of independent interest and may be a target for subsequent improvements.

\subsection{Open questions}

Classically, a public-coin interactive oracle proof with $O(\log n)$-bit verifier messages can be compiled into a PCP. 
It is also known that a public-coin interactive oracle proof with $\polylog(n)$ rounds, $O(1)$ query complexity, perfect completeness, and $O(1/\poly(n))$ soundness implies the existence of a polynomial-length PCP \cite{abcy22}.
The connection between quantum IOPs (qIOPs) and quantum PCPs (qPCPs) is not as well understood as in the classical setting. 
In fact, it is not even known whether a strong qIOP implies an exponential-size quantum PCP. 
As a result, our qIOP constructions do not yield new insights into the form of local Hamiltonians that could arise from quantum PCP reductions, nor into the reductions themselves. This raises the following question about the connection between the two frameworks:
\begin{question}
Is there a black-box transformation from a quantum IOP to a quantum PCP?
\end{question}

Since our current construction of strong quantum IOPs has exponential communication complexity, it is natural to ask:
\begin{question}
Are there strong quantum IOPs with polynomial communication complexity?
\end{question}

Furthermore, we are also curious about the cryptographic applications of the qIOP model. In \cite{gjmz23}, the authors construct a quantum-communication succinct argument for QMA under assumptions weaker than one-way functions (OWFs), assuming the existence of a quantum PCP. This leads to the following question:
\begin{question}
Are there protocols for delegating quantum computation based on assumptions weaker than OWFs, without relying on a positive resolution of the quantum PCP conjecture?
\end{question}

Another potential cryptographic application is the following:
\begin{question}
Can we apply a Fiat–Shamir transformation to the qIOP protocol to obtain a non-interactive succinct quantum argument for QMA, or a non-interactive quantum argument for QMA with constant query complexity?
\end{question}

In our construction of the strong quantum IOP, we observe that the verifier’s actions have a classical flavor. Specifically, in the protocol, the verifier either applies a Z gate or measures in the Z-basis. This raises a natural question about the extent to which the communication in such a proof system can be classical, while still allowing an efficient quantum prover to convince a quantum verifier of a QMA language. This motivates the following question:
\begin{question}
Does there exist a proof system for QMA languages in which the communication is entirely classical, yet both the prover and the verifier are quantum?
\end{question}

\subsection{Related works}

\paragraph{Interactive Oracle Proofs.}

Classically, IOP emerged after the proof of the PCP theorem \cite{bcs16, rrr16}. Interaction therein aids in reducing proof length, primarily through a tool known as interactive composition. However, our objective is not to simplify but to create something that does not yet exist. Consequently, experiences from classical IOP provide limited insights. Nevertheless, our main motivation is to sketch a possible path towards proving the quantum PCP conjecture, that as a first steps leverages insights from the study of classical IOPs as well as nonlocal games (see below) to, at least initially, avoid some of the main difficulties encountered in traditional approaches to qPCP (such as the no-cloning theorem).

\paragraph{Verification protocols for QMA with a limited quantum verifier.}

There exist a number of interactive protocols for information-theoretically secure verification of QMA, where the verifier has limited quantum capabilities. This includes~\cite{fitzsimons2017unconditionally,aharonov2017interactive,fitzsimons2018post}, see~\cite{gheorghiu2019verification} for a survey. While these protocols are based on a verifier whose quantum operations in each round are constant (or logarithmic) size, the protocols all have a polynomial number of rounds of interaction. Therefore, the total effort of the verifier, as well as total amount of quantum communication received and read from the prover, is always polynomial. Our focus is on a \emph{total} constant effort of the verifier; however, we do not limit the verifier's quantum memory. Therefore, the models are incomparable. We do not borrow techniques from this literature (other than the idea of reducing the QMA language to the local Hamiltonian problem, which is used in~\cite{fitzsimons2018post} and is by now very classical).

\paragraph{Exponential-length classical PCP for QMA.}

Although it is natural to derive qPCP by quantizing the classical PCP transformations, \cite{aav13} takes a step away from this approach, partly because \cite{bh13} suggests that ``the most `natural' transformations may fail in the quantum setting'' \cite{aav13}. More specifically, they demonstrate the existence of exponential-length classical PCP for QMA through the following: $ \mathsf{QMA}\subseteq\mathsf{PSPACE}=\mathsf{IP}$, and the natural transformation from $\mathsf{IP}$ to exponential-length classical PCP. Yet, the resulting proof system works directly with the local Hamiltonian rather than explicitly encoding the quantum witness. As a result, even if the prover possesses the witness, they cannot use it to improve their efficiency. In contrast, for both our constructions the honest prover strategy is based on a natural aglorithm. For the case of Theorem~\ref{thm:inf1} it is efficient (given a supply of copies of the quantum witness). For the case of Theorem~\ref{thm:inf2} it is efficient but for the coherent evaluation of exponentially many Pauli observables, in a very systematic manner that could easily be parallelized.

\paragraph{Game version of quantum PCPs.}

\cite{nv16} proposed a ``games version'' of qPCP for the XZ-Hamiltonian problem, in which multiple provers share entanglement and communicate classically with a verifier in a single-round protocol with constant-length prover messages. It is natural to use this and similar approaches as a starting point, but connecting the multi-prover setting to the single-prover one presents challenges. 

The only prior work we are aware of to establish such a connection is~\cite{wehner06}, which demonstrates the inclusion $\oplus \mathsf{MIP}^*[2] \subseteq \mathsf{QIP}(2)$, thus unconditionally ``compiling'' two quantum provers into a single one. However, their approach seems restricted to testing XOR-type relationships and  protocols involving the exchange of exactly two messages. We did not see how to extend it to more complex two-prover protocols. 

More recent methods~\cite{nz23, mnz24} compress provers using cryptographic techniques, but these rely on computational assumptions and impose restrictions on the prover's computational power which are not inherent to a proof system. More critically, \cite{nn24} notes that \cite{nv16} applies only to the XZ-Hamiltonian problem, while we aim to address QMA. To date, there is no known construction of a game version qPCP for QMA \cite{nn24}. 

Nevertheless, our construction of the strong quantum IOP is inspired by the techniques developed in~\cite{nv16}.
Rather than constructing a game version of qPCP for QMA and subsequently compiling it to qPCP, we adapt the qubit-testing technique used in \cite{nv16} for verifying proofs of the XZ-Hamiltonian and extend it to a single-prover protocol, applying it to QMA languages.

Our construction of a quantum IOP with shared EPR pairs is based on teleportation. 
We note that the protocol in~\cite{gri19} is also based on teleportation. 
In fact, our quantum IOP protocol with shared EPR pairs can be viewed as a single-prover analogue of the multi-prover protocol in~\cite{gri19}.
However, in their setting both provers are untrusted, and the verifier cannot control the operations performed by the prover, so additional self-testing is required. In our setting, the verifier effectively plays the role of one of the provers in \cite{gri19}, but as an honest quantum verifier, there is no need to prove the qubits. Moreover, because the verifier is honest and also acts as the prover for measuring the qubits, we do not need to hide which Hamiltonian term is being measured.

\paragraph{Commuting tests.}
While testing anti-commutation relations in the two-prover setting requires sophisticated constructions, testing commutation is a more elementary and well-established technique \cite{cgjv19}. Ultimately, our protocol for single-qubit tests is close to the protocol in \cite{crsv17} for testing state-dependent commutation. However, in their setting, each individual qubit is already known to anti-commute, and the primary question is whether different qubits commute with each other.

\section{Technical Overview}
\label{sec:tech}

In this paper, unless otherwise specified, we use the term \emph{observable} to refer to an operator that is both Hermitian and squares to the identity. Occasionally, we also use \emph{observable} to refer to an operator of the form $O = \sum_{i} a_i P_i$, where $a_i \in \mathbb{R}$, each $P_i$ is an orthogonal projection, and $\sum_i P_i = I$. The notation $O(a)$ for $a \in \mathbb{F}_2^n$ denotes a tensor product of Pauli $O\in\{X,Y,Z\}$ operators applied to selected qubits according to the bit string $a$:
\[
O(a) = O^{a_1} \otimes O^{a_2} \otimes \cdots \otimes O^{a_n},
\]
where $O^0 = I$ is the identity operator and $O^1 = O$ is the Pauli-$O$ matrix.

\subsection{The Clifford-Hamiltonian problem}

The Clifford-Hamiltonian problem  was introduced in \cite{bjsw16}. This is a special case of the local Hamiltonian problem where each local term has the form  $C^\dagger (\ketbra{0}{0})^{\otimes 5} C$, with $C$ being an element of the Clifford group on $5$ qubits.
The Clifford-Hamiltonian problem is the problem of determining the lowest energy (i.e., the smallest eigenvalue) of a Clifford-Hamiltonian given as input, up to an additive error that is inverse polynomial in the number of qubits.  This problem was shown QMA-complete in \cite{bjsw16}, by adapting the standard Kitaev reduction. Since all our qIOP protocols aim to obtain a constant gap between completeness and soundness, we first show how to  amplify the Clifford-Hamiltonian problem, trading locality for an improved gap.

Let  $H = \mathbb{E}[H_i]$ be a Clifford Hamiltonian. Here each $H_i$ is a $5$-local Clifford projection. After amplification, the Hamiltonian becomes
\begin{align*}
(1 - 2H)^{\otimes N} 
= \big(\mathbb{E}_i[1 - 2H_i]\big)^{\otimes N} = \mathbb{E}_{s_1,\ldots,s_N} \Big[\bigotimes_{j \in [N]} (1 - 2H_{s_j})\Big]\;.
\end{align*}
This amplified Hamiltonian consists of exponentially many terms; however, we can efficiently sample a term. Note that each sampled term is not necessarily local and, in general, does not correspond to a Clifford projection.

In many settings, verifying the measurement result of a Pauli observable is much easier  than verifying the measurement result of a Clifford projection. Therefore, we aim to reduce the measurement of Clifford projections to measurements of Pauli observables.
Note that the meaning of a state lying in a Clifford projection of the form  $C^\dagger (\ketbra{0}{0})^{\otimes 5} C$ is that after applying $C$, all the 5 qubits are in the state $\ket{0}$. Instead of measuring all of them at once, we can measure them one by one. Performing the original projection is then equivalent to projecting the state for each $j \in [5]$ using:
$$
C^\dagger  \Big(I^{\otimes j-1} \otimes \frac{I+Z}{2} \otimes I^{\otimes (5-j)}\Big) C \;.
$$
Because Clifford gates preserve the Pauli group, each of these terms is a projection onto the $+1$ eigenspace of a Pauli observable. By measuring the original state with these $5$ projections and taking the logical AND of the measurement results, we obtain a method to estimate the energy of the amplified Clifford-Hamiltonian using only Pauli observables.

A Hamiltonian is said to be an \emph{XZ-Hamiltonian} if it is a weighted sum of XZ terms, where each term is of the form $X(a)Z(b)$ with $a \wedge b = 0$, meaning that $a, b \in \mathbb{F}_2^n$ have disjoint support.
It is not known whether the XZ-Hamiltonian problem with a constant gap is QMA complete. \cite{nn24} presents several reasons why amplifying the gap while keeping the problem within QMA may be challenging. 
For this reason we mostly work with the the Clifford-Hamiltonian problem. However, it will sometimes be convenient to reason about XZ-Hamiltonian instances; in particular our method for measuring the energy of a Clifford-Hamiltonian essentially uses measurements of XZ-Hamiltonians as a subroutine.

\subsection{General quantum IOPs}
Our first construction is based on teleportation. 
Note that the constant-gap Hamiltonian problem we obtain, namely, the amplified Clifford Hamiltonian, is not local.
Therefore, the prover cannot simply send the witness to the verifier and have the verifier sample and measure a random term, since the verifier only has constant query complexity.
Recall that in the standard definition of qIOP, we only count the number of qubits that the verifier reads from the prover's message. If the verifier and prover share EPR pairs 
\footnote{The shared EPR pairs can also be established by having the verifier send the first message of half EPR pairs. However, in our query-complexity-based definition of qIOP, the verifier is not allowed to send long messages to the prover. The verifier only receives messages from the prover, queries them, and sends them back. If the verifier wishes to transmit information to the prover, he must do so by using the prover's previous message, swapping it with his own register. Therefore, we define the protocol to begin with shared EPR pairs.}
and the verifier measures his half, this measurement is not counted towards the query complexity. Building on this observation, we design the following protocol: the prover and verifier initially share EPR pairs. The prover then performs teleportation and sends the verifier an encoding of the classical one-time pad generated during teleportation.
Since each term in the Hamiltonian is non-local, the verifier still needs to read the one-time pad non-locally to correctly interpret the measurement result. Fortunately, the one-time pad is classical, and we can find a classical solution that enables the prover to provide exactly the information needed by the verifier, in a sound and succinct manner.

Note that the only feature of general quantum IOPs we use is that verifier's messages derived from the verifier’s private qubits are not required to be classical; in particular, the verifier may send arbitrary quantum states. 
In contrast, restricted or strong quantum IOPs allow only classical verifier's messages, except for returning the prover’s message registers.
In particular, in the general quantum IOPs setting, the verifier can send polynomially many half-EPR pairs to the prover in the first round. 

We will show how to estimate the energy of a non-local XZ-Hamiltonian; the approach for estimating the energy of a non-local Clifford-Hamiltonian is similar. 
For an XZ-Hamiltonian, to estimate the energy of the Hamiltonian on a given state, we can randomly sample an XZ term and accept or reject according to the measurement result. The acceptance probability is proportional to the energy. Since the teleported state is masked by a one-time pad, directly measuring it does not yield the correct result on the original state.
However, it is possible to infer the direct measurement outcome of the XZ term from the masked state, given access to the OTP: if measuring $X(a_i)Z(b_i)$ on the masked state yields $r$, and the mask is $X(p)Z(q)$, we know that the measurement on the original state would give $r + p \cdot b_i + q \cdot a_i$. (Here, $a_i,b_i,p,q$ are all $n$-bit strings and operations are mod $2$.)

Note that the observable $X(a_i)Z(b_i)$ that we need to measure is non-local.
We still need to figure out how to encode the one-time pad $(p, q)$ so that the verifier can compute the value $p \cdot b_i + q \cdot a_i$. While the Hadamard code can achieve this, it results in an exponentially long encoding. In the interactive setting, the verifier can request the prover's help, but cannot directly send the chosen XZ-term $(a_i, b_i)$ to the prover, as this would allow the prover to cheat.

The idea is for the prover to first submit a ``commitment" to the one-time pad $(p, q)$, and only then the verifier sends $(a_i, b_i)$. The prover must then prove that the committed values yield the correct result $p \cdot b_i + q \cdot a_i$. To achieve this, we encode the commitment using any error-correcting code, and the proof of correctness is provided through a PCP of proximity. According to the soundness of PCPs of proximity, if the commitment is far from any codeword consistent with a valid $(p, q)$ producing the claimed value, the verifier will reject with high probability.

We refer to this as a commitment in the Interactive Oracle Proof model. Notably, similar techniques have been used for other purposes in the cryptographic literature \cite{lms22, mnz24}.

 One advantage of this protocol is that the communication is entirely classical. However, due to the a priori use of EPR pairs, it remains unclear how to convert it into a succinct argument for QMA.

\subsection{Strong quantum IOPs}
\label{sec:intro-strong}

\paragraph{High-level approach.}

Many prior protocols for verification of QMA originate in multi-prover qubit tests, and we follow the same approach. Early works such as~\cite{nv16} generalize single-qubit anti-commuting games such as CHSH and the Magic Square game to many-qubit tests, and apply these results to game-based versions of qPCPs for the XZ-Hamiltonian problem \cite{nn24}. These game versions involve multiple entangled provers interacting with a classical verifier in a single-round protocol with constant-length messages from the provers.

One way to understand the construction of qIOP is to view it as a compilation of multi-prover, many-qubit tests into a single-prover, many-qubit test protocol, which can then be used to estimate the energy of a Hamiltonian corresponding to some QMA language—mimicking the techniques of \cite{nv16, gri19}. While this has been achieved under computational assumptions \cite{klvy22, mnz24}, it has never before been done in an information-theoretic way.

Converting the messages intended for all provers in a non-local game directly to a single message to the single prover in the qIOP setting would potentially disclose queries intended for other provers to the single qIOP prover, providing it with explicit questions and the capability to respond accurately. 
Classically, encoding the answering strategy of each prover for every possible question compiles a MIP, but with a blow-up in message size that is exponential in the size of the question \cite{bfl90}.
However, for quantum provers sharing entanglement, the prover’s strategies in response to different verifier questions may not commute. As a result, it is not straightforward to list all possible measurement outcomes, since the order in which measurements are performed can affect the result.

If we momentarily set aside the compilation approach, the natural approach to design a qIOP for QMA is as follows. First, the prover sends the verifier a ``commitment" to the quantum witness, which will prevent the prover from altering it midway. The verifier then performs measurements on the committed state, assisted by the prover.
An initial goal in this framework is to design such a commitment scheme that allows the verifier to measure simple observables, such as tensor products of Pauli $X(a), a \in \F_2^n$ and $Z(b), b \in \F_2^n$, with only constant query complexity. These operators commonly appear in many-qubit tests.

Rather than direct compilation, we take inspiration from nonlocal games to design the commitment procedure. 
In our protocol, in the first round, the prover sends the verifier the witness, measured in a purified way in the $X$ basis and encoded using an error-correcting code (specifically, the Hadamard code). In a subsequent round, the witness is measured in the $Z$ basis, also in a purified manner and similarly encoded.
This ``commitment" can be interpreted as the witness being destroyed in the $X$-basis and in the $Z$-basis (corresponding to two types of questions), with the measurement results encoded using an error-correcting code. Because the measurement is purified, the prover has the ability (given all qubits returned to him) to switch from one basis to the other. 

Interestingly, in our protocol, each round features a fixed question. 
Nevertheless, by the end of the protocol, we have asked the prover to provide answers according to X and according to Z, therefore, all possible questions that could have been asked to the provers are always covered.
Moreover, in the final round the verifier does need to sample questions from a distribution, as different questions are required to define distinct observables that characterize the prover's behavior. 
The different observables provide the objects for studying the anticommutation relationship that is key to characterizing the prover's measurements.

\paragraph{Single-prover single-qubit test.}
Following the formulation in \cite{vid20}, a qubit can be understood as a system that can be observed in two mutually incompatible ways—more precisely, it is a system equipped with a pair of anticommuting observables, essentially corresponding to the existence of $X$ and $Z$ operators up to isomorphism. 

We design a simple protocol in which the prover and verifier engage in several rounds of interaction. At the end, the verifier sends the prover a classical bit, and the prover responds with a classical bit. Without loss of generality, in the final round, the prover applies one of two types of observables, depending on the verifier's question bit, on some quantum state. We show that these two observables anti-commute on that state. While it is straightforward to force the prover to measure anti-commuting observables on, say, half of an EPR pair the other half of which is held by the verifier; here the challenge is to evaluate anti-commutation on a state of the prover's choice (i.e., one that is not constrainted by the verifier or the protocol). 

It is interesting to compare this statement with existing research in the multi-prover setting.
In \cite{nv16}, the authors use a standard non-local game where the robustness of the multi-prover anti-commuting game involves both the robustness of the EPR pair and the robustness of the anti-commuting operator acting on the EPR pair. It is well known that these robustness results are sufficient to construct the game version of the PCP for the XZ-Hamiltonian.
What is the proper condition that should hold for a meaningful anti-commuting relationship in order to achieve the desired QMA result for a single prover game?
Should we test whether the single prover holds EPR pairs? Since the original provers act on different Hilbert spaces, should we expect that, after compression, operators originally from different provers still commute with each other? At first glance, both of these questions might seem to have affirmative answers. However, neither of these appears as a subroutine in our final protocol. Intuitively, this makes sense because, in non-local games with EPR pairs, an operator acting on the first prover is equivalent to its transpose acting on the second prover. EPR pairs serve as a bridge in this context. Since there is only one prover, there is no need to bridge between them; both provers can act on the same qubits, and the operators no longer commute.

\paragraph{The construction.}

Our protocol consists of three rounds. 
\begin{itemize}
\item In the first round the honest prover sends the purified measurement of their state in the $Z$ basis. The purified measurement outcome is stored in a register $\mathsf{R}$.
\item The verifier can choose to either apply a phase flip gate ($Z$ gate) on register $\mathsf{R}$, measure it in the computational basis ($Z$ basis), or do nothing (apply the identity). Then, the verifier sends register $\mathsf{R}$ back to the prover. 
\item In the second round, the honest prover undoes the purified measurement in the $Z$ basis and sends the purified measurement of their state in the $X$ basis (in the same register $\mathsf{R}$). 
\item Again, the verifier may choose to apply a $Z$ gate, measure in the $Z$ basis, or do nothing.
\item In the third round, the verifier sends $\mathsf{R}$ back to the prover, together with a one-bit question, either ``X" or ``Z". 
\item The prover undoes the purified measurement in the $X$ basis and returns the measurement result corresponding to the verifier's question. 
\end{itemize}
Note that applying a phase flip gate to the purified measurement result of an observable is equivalent to applying the observable itself directly as a gate. 
Measuring the purified measurement result of an observable in the computational basis is equivalent to directly measuring the observable on the state.

In the protocol, the verifier uniformly samples one of the following four tests uniformly at random and performs it with the prover:
\begin{enumerate}
    \item \textbf{Z consistency test:} The verifier measures in the first round, does nothing in the second round, sends ``Z'' to the prover in the third round, and accepts if and only if the measurement result from the first round matches the prover's answer bit to the ``Z'' query in the third round.
    
    \item \textbf{X consistency test (standard):} The verifier does nothing in the first round, measures in the second round, sends ``X'' to the prover in the third round, and accepts if and only if the measurement result from the second round matches the prover's answer bit to the ``X'' query in the third round.
    
    \item \textbf{X consistency test (flipped):} The verifier applies a phase flip in the first round, measures in the second round, sends ``X'' to the prover in the third round, and accepts if and only if the measurement result from the second round matches the prover's answer bit to the ``X'' query in the third round.
    
    \item \textbf{Anti-commuting test:} The verifier measures in the first round, applies a phase flip in the second round, sends ``Z'' to the prover in the third round, and accepts if and only if the measurement result from the first round is \emph{opposite} 
    to the prover's answer bit to the ``Z'' query in the third round.
\end{enumerate}
If we interpret the observable applied by the prover in the final round as a unitary operator, then both $Z$ and $X$ can be applied together in the last round, and in different orders. The anti-commutation relation $XZ=-ZX$ between those observables is precisely the statement we aim to prove.
The anti-commuting test already provides an initial equation in the form of an anti-commutation relation, while the consistency tests serve to shift the operations in this equation to the observable used in the final round. 
Indeed can view $ZX$ and $XZ$ as a sequence of operations, $Z$ followed by $X$ in the case of $XZ$. 
These operations can be distributed across different stages of the interaction. 
For instance, we can apply $Z$ in both the first and last rounds. 
Another example is that applying a phase flip in the first round and measuring in the second corresponds to applying $Z$ followed by $X$. 
The relationship between applying $Z$ and $X$ in different rounds is enforced by the consistency tests.

The preceding operation justifies the flipped version of the consistency test as a means to induce different sequences of observables. From the perspective of non-local games, there is another explanation why this test is important. Intuitively, without flipping, the prover is only constrained to play the game with a single initial state of their choosing. By introducing flipping (which the verifier only performs before all measurements), the prover is forced to play the game across multiple initial states.
By comparing the protocol with non-local games, we can derive several interesting observations.

\paragraph{Comparison with non-local games.}

To illustrate the underlying intuition, we present a construction based on a single-prover variant of a non-local game. 
Since we did not identify any particular advantage gained by using the non-local games framework, our main results on quantum IOPs and strong quantum IOPs are not based on the non-local game-based tests described here.
Therefore, we omit the formal statement.

We use the Magic Square game as an illustrative example, but the construction extends naturally to other non-local games. 

\begin{center}
\begin{tabular}{|c|c|c|c|}
\hline
      & c1 & c2 & c3 \\\hline
r1 & (1) $I\otimes Z$ & (2) $Z\otimes I$ & (3) $Z\otimes Z$ \\\hline
r2 & (4) $X\otimes I$ & (5) $I\otimes X$ & (6) $X\otimes X$ \\\hline
r3 & (7) $X\otimes Z$ & (8) $Z\otimes X$ & (9) $Y\otimes Y$ \\\hline
\end{tabular}
\end{center}

The protocol consists of 6 rounds (12 messages), each corresponding to either a row (indexed r1, r2, r3 from top to bottom) or a column (indexed c1, c2, c3 from left to right) in the Magic Square game. The rounds follow a fixed order: r2, c3, r1, c2, r3, c1.\footnote{While a specific order is not strictly required, not every ordering will work. The order emerges from deduction. In particular, we need to preserve the following sequence: the index of the row containing position 4 must appear first, followed by the index of the column containing position 9 (the third element of the third row), and then the index of the row containing position 2. Likewise, the index of the column containing position 2 should appear first, followed by the index of the row containing position 9, and then the index of the column containing position 4. This ordering enables us to transform the pair (2, 4) into an expression involving 9 in the middle, and eventually into (4, 2).
} In each round, the honest prover sends a purified measurement of the relevant row or column (2 qubits, the third can be inferred). The verifier either measures in the $Z$ basis or applies a $Z$ gate to some of the qubits or does nothing, then sends the qubits back. The honest prover then undoes the purified measurement.

At the end of the interaction, the verifier sends an index in $\{1,\ldots,9\}$ (corresponding to the 9 positions in the Magic Square), and the prover returns a single bit. The only tests applied are consistency checks: ensuring consistency at the intersection of rows and columns, and consistency between the first 6 rounds and the final round. 
These tests also include various forms of flipping prior to the two measurements for consistency checking. 
Since there are only a constant number of flipping choices, we can perform each possible tests with constant probability.
The anti-commutation relationship manifests itself in the final round, specifically between position 2 (the second element of the first row) and position 4 (the first element of the second row).

Intuitively, requiring the prover to sequentially measure all rows or columns forces them to adopt a quantum strategy. In effect, the prover is committing to answers for all possible questions. Note that, since the game has no classical solution, it's impossible for the prover to respond correctly to every question in a single round.

Surprisingly, without all the flipping tests, although soundness is preserved, the protocol is not robust in the sense that it fails to enforce anti-commuting relationships. We were able to construct counterexamples that pass the single-prover version of the Magic Square tests using only commuting operators. However, in all of these counterexamples, the initial state is uniquely determined by the operators used.
Another perspective for understanding the flipping test is by comparing it to the role of consistency relations between the two provers. In the multi-prover setting, consistency relations can be used to switch which player an operation is applied to. This plays a crucial role in proving the robustness of the Magic Square game.
Such ``player switching" also plays an important role after establishing the anti-commutation relation between X and Z. This switching helps derive relations between elements of the Weyl–Heisenberg group:
The goal is to show that $X(a)Z(b)X(c)Z(d) = (-1)^{b c}X(a+c)Z(b+d)$, with a key subgoal being the ability to ``flip" the middle operators $Z(b)X(c)$. Consistency checks are used to enable ``player-switching", which in turn allows us to derive that
$X(a)Z(b)X(c)Z(d) - X(a)Z(b)X(c) \otimes Z(d) = 0$, thereby making it possible to flip $Z(b)X(c)$ within $X(a)Z(b)X(c)$.

\paragraph{Other difficulties in analyzing the robustness.}

Note that in the middle of the protocol, the honest prover does not necessarily apply a purified observable, but may instead apply an arbitrary unitary. This creates a difficulty for us, as we would prefer to formulate the prover's action in the middle round as an observable, i.e., an operator that is both Hermitian and squares to the identity, so that it can be directly compared with the observable in the final round. Moreover, the prover might apply an additional unitary to their state before the next round of interaction, which may allow them to cheat. However, this exhausts the flexibility that the prover may have. 

To address these issues, we can reinterpret the verifier's honest measurement or phase-flip gate as part of the prover's operation, absorbing them into the prover's action and treating them as the prover's observables. While the prover may still apply additional unitaries between rounds, this does not significantly impact our analysis. 
 
\paragraph{Many-qubits tests.}

We follow the standard approach from \cite{nv16} to extend the single-qubit test to a many-qubit test. Specifically, in each round, the prover sends the verifier the measurement of their witness in the basis corresponding to that round: Z in the first round, X in the second, and in the third round, acts according to the verifier’s question. The prover always responds with the Hadamard encoding of the purified measurement result. Although the prover runs in exponential time in terms of time complexity, they can still be implemented by a logarithmic-depth circuit using CNOT gates with arbitrary fan-out.

\paragraph{Testing the energy of a Clifford-Hamiltonian.}

To estimate the observable $X(a)Z(b)Y(c)$, we can leverage the first and second rounds of the qubits test. Measuring an observable is equivalent to applying the controlled version of that observable as a unitary, with a control qubit initialized in the $\ket{+}$ state, followed by a measurement of the control qubit in the $X$ basis.
In our setting, the verifier performs a CNOT gate on positions $b + c$ in the first round, and another CNOT on positions $a + c$ in the second round, along with an additional application of the phase gate $c$ times on the control qubit.
This procedure allows us to test the energy of a Clifford-Hamiltonian.

\paragraph{The challenge of making the communication polynomial.}

Note that the construction we described has exponential communication complexity. In each round, whether measuring observables of the form $X(a)$ or $Z(b)$, the prover must simultaneously send the measurement results for all $a, b \in \mathbb{F}_2^n$, so they cannot anticipate which specific position the verifier will read in that round.

We attempted to follow the approach of \cite{dls22} to reduce the protocol length to polynomial. Instead of using a $\lambda$-biased set as in \cite{dls22}, we chose to use a classical locally testable code. This choice stems from the fact that, although possible, it is complicated to test whether the prover is indeed sending a superposition of inner products between a $\lambda$-biased set and some string, i.e., the expected purified measurement result.

Moreover, in the setting of \cite{dls22}, using any error-correcting code yields similar results. However, a crucial property required in their analysis is that the prover's state must be totally mixed. This assumption is essential for ensuring that the representation $\pi(a)X = U(a)XU(a)^\dagger$ forms a unitary representation, which is necessary for applying the Poincar\'e inequality. 
In the multi-prover setting, this property holds without loss of generality, as the provers can be assumed to start with EPR pairs \footnote{This is because the game is a synchronous game.}. However, in our setting, the single prover is allowed to choose an arbitrary initial state, so this assumption no longer holds.

\section{Organization}

In Section~\ref{sec:notation} we introduce the necessary notation and  conventions used throughout the paper. We also define the Clifford-Hamiltonian problem and show an amplification of the gap.
In Section~\ref{sec:formulation}, we formally define quantum interactive oracle proofs (qIOPs) and their variants.
Section~\ref{sec:mqt} extends the discussion to testing many qubits, and based on this, Section~\ref{sec:sqiop} presents our construction of strong quantum IOPs.
Appendix~\ref{sec:sqt} presents our approach to testing single qubits. 
While not directly connected to the main results, this section elaborates the ideas behind testing qubits in the single-prover setting that was already given earlier.

\paragraph{Acknowledgments.} T.V. is supported by
AFOSR Grant No. FA9550-22-1-0391 and the Swiss State Secretariat for Education, Research and Innovation (SERI). Parts of this work was completed while T.V. was at the Weizmann Institute in Israel, where he was supported by a research grant from the Center for New Scientists.
B.S. is supported by a scholarship from the Weizmann School of Science. The authors would like to thank Prof. Zvika Brakerski for helpful discussions.

\section{Preliminaries}
\label{sec:notation}

In this section, we present the notation and conventions used throughout the paper to ensure clarity and consistency.

\paragraph{Register and space notations.}
\begin{itemize}
    \item \textbf{Registers:} Capital letters in sans serif font, such as $\mathsf{X, Y, Z}$, are used to denote quantum registers.
    \item \textbf{Hilbert spaces:} Capital calligraphic letters, such as $\mathcal{X, Y, Z}$, represent the complex Hilbert spaces associated with the respective registers.
    \item \textbf{Quantum channels:} Blackboard bold letters, such as $\mathbb{P, Q}$, or capital Greek letters, such as $\Phi, \Psi$, are used for quantum channels.
    \item \textbf{Algorithms:} Monospace capital letters, such as $\texttt{A, B, C}$, denote algorithms.
\end{itemize}

\paragraph{Controlled unitaries.}
To simplify our presentation, we introduce notation for  controlled unitary operations.
\begin{definition}[Controlled unitary]
Let $\mathsf{P}$ be a register and $U_\mathsf{P}$ a unitary acting on it. Let $\mathsf{R}$ be a one-qubit register.
The controlled unitary $C(U_\mathsf{P}, \mathsf{R})$ applies $U_\mathsf{P}$ if the qubit in $\mathsf{R}$ is $|1\rangle$:
\begin{align*}
    C(U_\mathsf{P}, \mathsf{R}) = \begin{pmatrix} I_\mathsf{P} & 0 \\ 0 & U_\mathsf{P} \end{pmatrix}
\end{align*}
More generally, for a register $\mathsf{R}'$ of dimension $d$ and $i \in [d]$, the controlled unitary $C(U_\mathsf{P}, \mathsf{R}, i)$ applies $U_\mathsf{P}$ if the qubit in register $\mathsf{R}$ is $\ket i$:
\begin{align*}
    C(U_\mathsf{P}, \mathsf{R}, i) = \mathrm{diag}(\underbrace{I_\mathsf{P}, \ldots, I_\mathsf{P}}_{i-1}, U_\mathsf{P}, \underbrace{I_\mathsf{P}, \ldots, I_\mathsf{P}}_{d-i})
\end{align*}
\end{definition}

\paragraph{Purified measurements.}
We define purified measurements, which simulate measuring an observable and storing the result in a register.
\begin{definition}[Purified measurement]
Let $\mathsf{P}, \mathsf{R}$ be registers and $O_\mathsf{P}$ an observable decomposed as $O = \sum_{i=1}^r a_i P_i$, where each $P_i$ is an orthogonal projection and $I = \sum_i P_i$. The unitary that performs the purified measurement and stores the outcome in $\mathsf{R}$ is:
\begin{align*}
    P(O_\mathsf{P}, \mathsf{R}) = \sum_{i=1}^r P_{i, \mathsf{P}} \cdot (X(a_i))_\mathsf{R}
\end{align*}
\end{definition}
Note that applying $P(O_\mathsf{P}, \mathsf{R})$ and then measuring $\mathsf{R}$ is equivalent to directly measuring $O_\mathsf{P}$ and storing the result in $\mathsf{R}$.

\paragraph{EPR basis.}
We define the EPR basis as follows:
\begin{definition}[EPR basis]
\label{def:epr_basis}
Let $|\phi^+\rangle = \frac{1}{\sqrt{2}}(|00\rangle + |11\rangle)$. The EPR basis states are:
\begin{align*}
    \ket{E_{00}} &= \ketbra{\phi^+}{\phi^+}, \\
    \ket{E_{10}} &= (Z \otimes I) \ketbra{\phi^+}{\phi^+} (Z \otimes I), \\
    \ket{E_{01}} &= (X \otimes I) \ketbra{\phi^+}{\phi^+}( X \otimes I), \\
    \ket{E_{11}} &= (ZX \otimes I) \ketbra{\phi^+}{\phi^+} (XZ \otimes I).
\end{align*}
\end{definition}

\paragraph{Subregisters and substrings.}
We introduce notation for accessing subregisters and substrings:
\begin{definition}
Let $\mathsf{A} = (\mathsf{A}_1, \ldots, \mathsf{A}_n)$ be a tuple of registers, and $S = \{s_1, \ldots, s_k\} \subset [n]$ with $s_1 < \cdots < s_k$. Define $\mathsf{A}_S = (\mathsf{A}_{s_1}, \ldots, \mathsf{A}_{s_k})$.
\end{definition}

\begin{definition}
Let $n, m \in \mathbb{Z}_+$, $w \in \mathbb{F}_2^n$, $\mathsf{T} = (\mathsf{T}_i)_{i \in [n]}$, and $s = (s_1, \ldots, s_m) \in [n]^m$. Define:
\begin{align*}
    w(s) = (w_{s_1}, \ldots, w_{s_m}), \quad \mathsf{T}(s) = (\mathsf{T}_{s_j})_{j \in [m]}\;.
\end{align*}
\end{definition}

\paragraph{Inner product with respect to a state.}
\begin{definition}
For a positive semidefinite operator $\sigma$, define:
\begin{align*}
    \langle A, B \rangle_\sigma = \mathrm{Tr}(A^\dagger B \sigma), \quad \|A\|_\sigma^2 = \mathrm{Tr}(A^\dagger A \sigma)\;.
\end{align*}
\end{definition}

\subsection{Gap amplification of the Clifford-Hamiltonian problem}

The Clifford-Hamiltonian problem was introduced in \cite{bjsw16}. A Clifford-Hamiltonian is a local Hamiltonian such that each local term has the form  $C^\dagger (\ketbra{0}{0})^{\otimes 5} C$, with $C$ being an element of the $5$-fold Clifford group.
When choosing a Hamiltonian problem for amplification, we care about two key parameters: the energy gap and the $\ell_1$-norm of the Hamiltonian. The $\ell_1$-norm refers to the sum of the absolute values of the coefficients when decomposing the Hamiltonian into a linear combination of observables.
In Kitaev's clock construction, the minimal energies for both YES and NO instances are very close to $0$, making it a poor starting point for gap amplification. The standard approach for amplification begins by shifting the Hamiltonian $H$, for example, replacing it with $I - H$. However, this shifting causes the $\ell_1$-norm to grow beyond $1$, and when repeated, it can blow up exponentially.

In practice, since we ultimately measure observables, all we need is that after shifting, the Hamiltonian becomes a sum of observables. Then, the $\ell_1$-norm is naturally $1$ before amplification (i.e., repetition).
A key advantage of using Clifford-Hamiltonian terms is that they are projections $P$, which allows us to shift them to $1 - 2P$, yielding observables. Then, we can amplify the gap through repetition.

Note that $$
C^\dagger (\ketbra{0}{0})^{\otimes 5} C=\prod_{j\in[5]} C^\dagger \cdot \Big(I^{\otimes j-1} \otimes \frac{I+Z}{2} \otimes I^{\otimes (5-j)}\Big) \cdot C ,
$$
which may indicate that we can also transform a Clifford projection directly into a sum of XZ-Hamiltonians. This may suggest that we can now measure XZ-Hamiltonians. However, the transformation does not preserve the $\ell_1$ norm. 
After amplification, the situation becomes even worse: a Clifford Hamiltonian with $\ell_1$ norm $1$ may be transformed into a sum of XZ-Hamiltonians with exponentially larger $\ell_1$ norm. As a result, although amplification increases the spectral gap, the exponential blow-up in the norm causes the effective gap to shrink exponentially, rendering amplification ineffective.

The following definition describes a specific type of local Hamiltonian term used in QMA-complete constructions. Each term projects onto a Clifford-transformed all-zero state on a subset of qubits, while acting as the identity on the remaining qubits.
\begin{definition}[Clifford-Hamiltonian projection]
A Hamiltonian (hermitian matrix) $H$ is a  Clifford-Hamiltonian projection iff $H$ acts non-trivially on a set $S$ of qubits as $C^\dagger \ketbra{0^{|S|}}{0^{|S|}} C$, where $C$ is an element of the $|S|$-fold Clifford group. More specifically,
\begin{align*}
H = (C^\dagger  |0^{|S|}\rangle\langle0^{|S|}| C)_{S} \otimes I_{[n]-S}.
\end{align*}
\end{definition}

This definition generalizes the projection above to a full Hamiltonian composed of multiple such projections. Averaging them ensures the spectrum of $H$ lies between $0$ and $1$.
\begin{definition}[Clifford-Hamiltonian]
A Hamiltonian $H \in \C^{2^n} \otimes \C^{2^n}$ is a Clifford-Hamiltonian if it is the uniform average of a polynomial number $m=\poly(n)$ of Clifford-Hamiltonian projection terms $H_i$. That is,
\begin{align*}
H = \frac{1}{m} \sum_{i \in [m]} H_i\;.
\end{align*}

Moreover, $H$ is $k$-local if each $H_i$ acts non-trivially on at most $k$ qubits.
\end{definition}

This problem asks whether the ground-state energy of a given local Clifford-Hamiltonian lies
below $a(n)$ or above $b(n)$, with a promise that one of the two holds. 

\begin{definition}[Local Clifford-Hamiltonian problem]
Let $k \in \Z^+$ and let $a,b:\Z^+ \rightarrow \R^+$ be non-decreasing functions such that $\forall n \in \Z^+,~ a(n) < b(n)$. The local Clifford-Hamiltonian problem $\mathcal{LCH}(k,a,b)$ is the promise problem $(S_{yes}, S_{no})$ defined by:
\begin{itemize}
    \item $S_{yes}$: the set of $k$-local Clifford-Hamiltonians $H\in \C^{2^n} \otimes \C^{2^n}$ such that $\lambda_{\min}(H) \leq a(n)$,
    \item $S_{no}$: the set of $k$-local Clifford-Hamiltonians $H\in \C^{2^n} \otimes \C^{2^n}$ such that $\lambda_{\min}(H) \geq b(n)$.
\end{itemize}
\end{definition}

This theorem from \cite{bjsw16} shows that the local Clifford-Hamiltonian problem is QMA-complete for 5-local terms and polynomial small completeness-soundness gaps. It provides a concrete and structured Hamiltonian family for studying QMA.

\begin{theorem}[Theorem 2 in \cite{bjsw16}]
\label{thm:clifford_ham}
There exist polynomials $p$ and $q$ such that the problem $$\mathcal{LCH}(5, 2^{-p(n)}, 1/q(n))$$ is QMA-complete.
\end{theorem}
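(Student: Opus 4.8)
The plan is to prove the two directions separately. Membership in $\QMA$ is easy: $\mathcal{LCH}(5,a,b)$ is a special case of the local Hamiltonian problem, so the honest prover sends a ground state $\rho$, the verifier samples a uniformly random term $H_i = C_i^\dagger\ketbra{0^5}{0^5}C_i$, applies the efficiently computable Clifford $C_i$ to the five relevant qubits, measures them in the computational basis, and declares ``low energy'' iff the outcome is $0^5$; the acceptance bit equals $1$ with probability exactly $\tr(H\rho)$, and after standard $\QMA$ amplification the inverse-polynomial promise gap $1/q(n)-2^{-p(n)}$ suffices to decide the problem. So the real work is in the hardness direction, which I would obtain by a Clifford-friendly adaptation of Kitaev's circuit-to-Hamiltonian reduction.

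For hardness I would start from an arbitrary $L\in\QMA$ with verification circuit $V$, and first amplify it so that $V$ accepts YES-witnesses with probability $\ge 1-2^{-\mathrm{poly}(n)}$ and rejects every witness on NO-instances with probability $\ge 1-2^{-\mathrm{poly}(n)}$, arranging that all of $V$'s workspace ancillas are initialized exactly to $|0\rangle$. The next step is to replace $V$ by a circuit $W$ built entirely from $1$- and $2$-qubit Clifford gates: each Clifford gate is applied through a teleportation gadget (so that all byproduct corrections are Pauli and only Pauli/$\mathrm{CNOT}$-type gates actually appear in the circuit), while each non-Clifford gate is implemented by magic-state injection, consuming an ancilla that in the honest run is a fixed magic state (say $|T\rangle$ or $|\mathrm{CCZ}\rangle$). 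Since a non-stabilizer ancilla cannot be penalized by a Clifford projection, I would not let the prover supply the magic states directly; instead $W$ begins with a \emph{magic-state distillation} sub-circuit — which uses only Clifford gates and computational-basis (stabilizer-syndrome) measurements — taking polynomially many \emph{raw} magic-state candidates from the prover and either outputting near-ideal magic states or raising a flag. All intermediate measurement outcomes, the flag, and the (Pauli) byproduct corrections are handled coherently, copying bits into the output register with $\mathrm{CNOT}$s, so that $W$ is a Clifford circuit followed by a single computational-basis read-out. The honest prover supplies the $\QMA$-witness together with honest $|0\rangle$ ancillas and honest raw magic states; by soundness of distillation, for \emph{any} raw-magic register the sub-circuit either flags (forcing rejection) or effectively replaces it by genuine magic states, so a cheating prover gains nothing from the one register it fully controls beyond the witness.

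I would then apply Kitaev's unary-clock construction to $W$. Clock-legality terms are $\ketbra{01}{01}$ on consecutive clock qubits; input-check terms are $|1\rangle\langle1|$ on each honest $|0\rangle$-ancilla at time $0$ (with nothing on the witness or raw-magic registers); the output term is $|1\rangle\langle1|$ on the output qubit at the final time; and each propagation term for a gate $U_t$ is the usual projector $\tfrac12(I-U_t\otimes|t\rangle\langle t{-}1|-U_t^\dagger\otimes|t{-}1\rangle\langle t|)$ dressed with clock projectors. Because $W$'s gates are Clifford and the clock projectors are $Z$-diagonal, each of these terms is a projector onto a \emph{stabilizer} subspace of at most $5$ qubits; expanding such a projector in a logical computational basis of stabilizer states rewrites it as a sum of at most $2^5$ rank-one terms of the required form $C^\dagger\ketbra{0^{|S|}}{0^{|S|}}C$ with $|S|\le 5$, and rescaling by the polynomial number of terms yields a $5$-local Clifford-Hamiltonian $H$. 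On a YES-instance the honest history state satisfies all propagation, clock, and input terms exactly and incurs output energy $\le 2^{-\mathrm{poly}(n)}/(L{+}1)$, which, averaged over $\mathrm{poly}(n)$ terms, gives $\lambda_{\min}(H)\le 2^{-p(n)}$ for a suitable polynomial $p$. On a NO-instance, Kitaev's geometric lemma combined with the $2^{-\mathrm{poly}(n)}$ circuit soundness — which, thanks to distillation soundness, holds even for an adversarial raw-magic register — gives $\lambda_{\min}(H)\ge c/(L{+}1)^3\ge 1/q(n)$ for a suitable polynomial $q$.

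The main obstacle I expect is the second step: rendering the computation Clifford via teleportation gadgets and magic-state injection so that every gate appearing in the Hamiltonian has a propagation term that is a projector onto a stabilizer subspace, \emph{and} certifying the injected magic states using Clifford-only operations, all while keeping an inverse-polynomial completeness–soundness gap. Everything else — the locality accounting, the decomposition of stabilizer-subspace projectors into rank-one Clifford projections, and the Kitaev energy bounds — is routine and follows the standard analysis. (The separate gap-amplification step $(1-2H)^{\otimes N}$ that turns this polynomial gap into a constant one is carried out afterward and is not needed here.)
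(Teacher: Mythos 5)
The paper does not actually prove this statement: it is imported verbatim from~\cite{bjsw16}, whose proof is a direct adaptation of Kitaev's circuit-to-Hamiltonian reduction. Your membership direction and your overall skeleton (unary clock, then decomposing each $\le 5$-qubit penalty projector into rank-one Clifford projections by exhibiting a stabilizer basis of its range) match that known argument. The problems are in the middle step, where you try to make the verification circuit Clifford via teleportation and magic-state injection. The central claim there is false: injecting a non-Clifford gate has a \emph{non-Pauli} byproduct correction ($S$ for $\ket{T}$-injection, $CZ$'s for $CCZ$-injection), and once you defer the gadget measurement and apply the correction coherently you have introduced controlled-$S$ or controlled-$CZ$ ($=CCZ$) gates, which are not Clifford. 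So the circuit $W$ is not Clifford, and the premise you invoke to conclude that every propagation term projects onto a stabilizer subspace does not hold as stated. (It happens that the propagation penalty of \emph{any} gate that permutes computational basis states up to phases in $\{\pm 1,\pm i\}$ still has a range spanned by stabilizer states, since $(\ket{u}+\omega\ket{v})/\sqrt{2}$ is a stabilizer state for such $\omega$; but once you notice this, the entire injection machinery is unnecessary — you can simply compile the verifier over a universal set such as Hadamard plus Toffoli, or Hadamard plus controlled-$S$, and every Kitaev term decomposes directly, which is essentially what~\cite{bjsw16} does. Note also that three-qubit corrections/gates make the corresponding propagation terms $6$-local, so to land on $k=5$ you want two-qubit non-Clifford basis-permuting gates.)

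The second genuine gap is the magic-state distillation step. You assert that ``by soundness of distillation'' any adversarial raw-magic register is either flagged or effectively replaced by genuine magic states, but standard distillation theorems are proved for i.i.d.\ or independently noisy inputs; here the prover supplies an arbitrary state, entangled across all raw copies and with the witness, and a quantitative statement of the form ``conditioned on not flagging, the residual state is within $1/\mathrm{poly}$ of ideal magic states'' would need its own nontrivial proof, calibrated against the $\Omega(1/\mathrm{poly})$ Kitaev soundness gap. Since the simpler gate-set choice above removes any need for magic states (ancillas are only ever forced to $\ket{0}$, which the Clifford projection $\ketbra{1}{1}$ penalizes directly), I would drop the teleportation/injection/distillation layer entirely and argue hardness exactly as in~\cite{bjsw16}: amplify the verifier, compile it over a universal family of one- and two-qubit gates each of which is Clifford or a basis permutation with fourth-root-of-unity phases, and then decompose each clock, input, output, and propagation projector into rank-one terms $C^\dagger\ketbra{0^{|S|}}{0^{|S|}}C$ with $|S|\le 5$, finishing with the standard completeness bound and geometric lemma.
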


This lemma presents a gap amplification technique that enhances the spectral gap between YES and NO instances of the Clifford-Hamiltonian problem. Specifically, the transformation $H' = (1 - 2H)^{\otimes N}$ increases the gap exponentially. However, this comes at the cost of breaking the locality structure of the Hamiltonian. The proof leverages the spectral properties of Hermitian operators and the multiplicative behavior of their tensor powers.
\begin{lemma}[Gap amplification of the Clifford-Hamiltonian problem]\label{lem:c_ham_amp}
Let $(S_{yes}, S_{no}) = \mathcal{LCH}(k, a, b)$ with $a, b \in (0, 0.1)$. Let $H \in S_{yes} \cup S_{no}$, and define $H' = (1 - 2H)^{\otimes N}$ for some odd $N \in \Z^+$. Then:
\begin{enumerate}
    \item If $H \in S_{yes}$, then $\lambda_{\max}(H') \geq (1 - 2a)^N$.
    \item If $H \in S_{no}$, then $\lambda_{\max}(H') \leq (1 - 2b)^N$.
\end{enumerate}

Furthermore, let $p, q$ be as in Theorem~\ref{thm:clifford_ham}. For $k=5$, $a=2^{-p(n)}$, $b=1/q(n)$, and $N=q^2(n)$, there exists a polynomial $s$ such that:
\begin{enumerate}
    \item If $H \in S_{yes}$, then $\lambda_{\max}(H') \geq 1 - 2^{-s(n)}$.
    \item If $H \in S_{no}$, then $\lambda_{\max}(H') \leq 2^{-s(n)}$.
\end{enumerate}
\end{lemma}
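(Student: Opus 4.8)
The plan is to reduce everything to elementary spectral facts about Hermitian operators. First I would observe that since $0 \le H \le I$ (as $H$ is an average of projections), the operator $1 - 2H$ is Hermitian with spectrum contained in $[-1,1]$, and its eigenvalues are exactly $1 - 2\lambda$ as $\lambda$ ranges over the eigenvalues of $H$. In particular the top eigenvalue of $1-2H$ in absolute value is governed by the extreme eigenvalues of $H$: if $\lambda_{\min}(H) \le a$ then $1 - 2\lambda_{\min}(H) \ge 1 - 2a$, which is positive since $a < 0.1$, so $\lambda_{\max}(1-2H) \ge 1 - 2a$. For the spectral norm I would use that $\|1-2H\|_{\mathrm{op}} = \max(|1 - 2\lambda_{\min}(H)|, |1 - 2\lambda_{\max}(H)|)$; in the NO case $\lambda_{\min}(H) \ge b$ forces every eigenvalue $\lambda$ of $H$ to lie in $[b, 1]$, hence $1 - 2\lambda \in [1-2, 1-2b] = [-1, 1-2b]$, so $\|1-2H\|_{\mathrm{op}} = \max(1 - 2b, \, \text{(something} \le 1)\text{)}$ — here I need to be slightly careful, and I will use that $b < 0.1$ together with the fact that $\lambda_{\min}(H)\ge b$ to conclude that the negative eigenvalues of $1-2H$ have magnitude at most $1 - 2b$ as well, using $\lambda_{\max}(H)\le 1$ hence $1-2\lambda_{\max}(H)\ge -1$; but $-1$ could have magnitude $1 > 1-2b$. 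This is the one place that needs thought — see below.

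Next, for the tensor power: the eigenvalues of $H' = (1-2H)^{\otimes N}$ are exactly the $N$-fold products $\prod_{j} (1 - 2\lambda_{i_j})$ of eigenvalues of $1-2H$, so $\lambda_{\max}(H') = \max$ over all such products. In the YES case, taking all $N$ factors equal to the eigenvalue $1 - 2\lambda_{\min}(H) \ge 1 - 2a > 0$ gives a product at least $(1-2a)^N$, proving item 1. In the NO case, $|\lambda_{\max}(H')| \le \|H'\|_{\mathrm{op}} = \|1-2H\|_{\mathrm{op}}^N$, and if we have shown $\|1-2H\|_{\mathrm{op}} \le 1 - 2b$ then we get $\lambda_{\max}(H') \le (1-2b)^N$, proving item 2. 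The resolution of the subtlety above is that $\|1-2H\|_{\mathrm{op}} \le 1$ always, and we only need $\lambda_{\max}(H') \le (1-2b)^N$ for the eigenvalue, not the operator norm: but in fact all eigenvalues of $1-2H$ lie in $[-1+2b', 1-2b]$ is false in general. The correct fix is that we use $N$ \emph{odd} precisely so that a product of an even number of negative factors near $-1$ is near $+1$ but — no. Let me instead argue directly: the relevant claim is only about $\lambda_{\max}(H')$, and a product $\prod_j(1-2\lambda_{i_j})$ is large and positive only if few factors are negative; but actually $(1-2H)^{\otimes N}$ can have an eigenvalue equal to $1$ if $1-2H$ has eigenvalue $-1$, i.e. if $H$ has eigenvalue $1$. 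So the lemma as stated must be implicitly using that for the \emph{amplified} Clifford-Hamiltonian instances arising from Theorem~\ref{thm:clifford_ham} we do not have eigenvalue exactly $1$; more robustly, I expect the intended argument restricts attention to $b < 0.1$ \emph{and} uses a shifted/rescaled version, or simply notes $\|1-2H\|_{\mathrm{op}} \le 1$ suffices when combined with a pre-processing step ensuring $\lambda_{\max}(H) \le 1 - b$ as well (which holds after a harmless renormalization of the QMA-complete instance). I would insert that normalization remark.

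For the quantitative ``furthermore'' part, I would simply substitute: with $a = 2^{-p(n)}$, $(1-2a)^N \ge 1 - 2aN = 1 - 2^{-p(n)+1} q(n)^2 \ge 1 - 2^{-s(n)}$ for a suitable polynomial $s$ (using $(1-x)^N \ge 1 - Nx$ for $x \in [0,1]$, i.e. Bernoulli), and with $b = 1/q(n)$, $N = q(n)^2$, $(1-2b)^N = (1 - 2/q(n))^{q(n)^2} \le e^{-2q(n)} \le 2^{-s(n)}$ for the same (or smaller) polynomial $s$, choosing $s$ to be the minimum of the two bounds. I would also record that $N = q^2(n)$ can be taken odd by adding $1$ if necessary, which changes nothing asymptotically. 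The main obstacle, as flagged, is cleanly handling the NO-case operator norm bound: I expect to spend most of the write-up making precise that after the standard normalization of the QMA-complete Clifford-Hamiltonian instance we may assume all eigenvalues of $H$ lie in, say, $[b, 1-b]$ (or at least bounded away from $1$ on the NO side in the relevant regime), so that $\|1-2H\|_{\mathrm{op}} \le 1-2b$; the rest is routine Bernoulli-type estimates.
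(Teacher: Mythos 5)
Your treatment of the YES case and of the quantitative ``furthermore'' estimates coincides with the paper's: the paper's entire proof of Lemma~\ref{lem:c_ham_amp} consists of exactly the two arithmetic chains you give (Bernoulli for $(1-2a)^N\ge 1-2aN$, and $(1-2b)^N=(1-2b)^{\frac{1}{2b}\cdot 2bN}\le e^{-2bN}$ --- its NO-case display even ends in ``$\ge 1-2^{-s(n)}$'' where ``$\le 2^{-s(n)}$'' is clearly intended), while items 1 and 2 of the first part are asserted there with no justification at all. Your observation that item 1 follows by tensoring $N$ copies of a ground-state eigenvector is the missing (easy) step for the YES bound.

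The obstruction you flag in the NO case is a genuine gap, and it is a gap in the lemma's justification, not merely in your write-up. Membership in $S_{no}$ only lower-bounds $\lambda_{\min}(H)$; the definition of $\mathcal{LCH}$ places no constraint on $\lambda_{\max}(H)$, so all one knows is that the spectrum of $1-2H$ lies in $[-1,1-2b]$. For an instance with $\lambda_{\min}(H)=b$ and some eigenvalue $\lambda>1-b$, placing two tensor factors in the $\lambda$-eigenspace and $N-2$ in the ground space exhibits an eigenvalue $(2\lambda-1)^2(1-2b)^{N-2}>(1-2b)^N$ of $H'$, so item 2 cannot follow from the stated hypotheses; and if $\lambda$ could be within $o(1/N)$ of $1$, then $\lambda_{\max}(H')$ would be close to $1-2b$, so the ``furthermore'' bound $2^{-s(n)}$ --- which the soundness analyses later invoke against arbitrary prover states --- would fail as well (oddness of $N$ does not help, since negative factors pair up, as you noted). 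What is missing is a bound of the form $\lambda_{\max}(H)\le 1-\Omega(b)$ for NO instances: this may well be provable from the structure of uniform averages of rank-one Clifford projections (it holds, often with equality $\lambda_{\max}=1-\lambda_{\min}$, in the natural examples), but neither your proposal nor the paper establishes it. Be careful also with your proposed normalization fix: a generic shift such as $H\mapsto\frac12 H+\frac14 I$ (implementable by adding Clifford terms) does bound the spectrum away from $1$, but it compresses the spectrum of $1-2H$ into $[-\frac12,\frac12]$, making the YES value of $\lambda_{\max}(H')$ exponentially small too and destroying the constant completeness--soundness gap the downstream protocols need; any preprocessing must cap $\lambda_{\max}(H)$ while keeping $1-2\lambda_{\min}(H)$ close to $1$. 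As it stands, item 2 (and hence the NO half of the ``furthermore'' part) is proved neither in your proposal nor in the paper.
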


\begin{proof}  
If $H\in S_{yes}$, then
\begin{align*}
\lambda_{\max}(H')\geq (1-2a)^N \geq 1-2aN \geq 1-2^{-s(n)}\;.
\end{align*}
If $H\in S_{no}$, then
\begin{align*}
\lambda_{\max}(H')\leq (1-2b)^N = (1-2b)^{1/(2b)\cdot 2bN}  \geq 1-2^{-s(n)}\;.
\end{align*}
\end{proof}

\subsection{Probabilistically checkable proofs of proximity}

In this section, we recall the notion of probabilistically checkable proofs of proximity (PCPPs), introduced in~\cite{bgh+04}, and their application to the canonical $\cal P$-complete problem \textsc{Circuit Value}. These notions play a central role in our deductions.

\begin{definition}[\cite{bgh+04}]
We define the $\mathcal{P}$-complete language \textsc{Circuit Value} as $\textsc{CktVal} = \{(C, w) : C(w) = 1\}$, where $C$ is a Boolean circuit and $w$ is an input string.  
For a pair language $L \subseteq \{0,1\}^* \times \{0,1\}^*$, we define $L(x) = \{y : (x, y) \in L\}$.
\end{definition}

The following definition formalizes a proof system in which the verifier can efficiently check whether a given input is close to being valid, without reading the entire input. 

\begin{definition}[PCPs of Proximity (PCPPs), \cite{bgh+04}]
\label{def:pcpp}
Let $s, \delta: \mathbb{Z}^+ \rightarrow [0, 1]$ be functions. A verifier $V$ is said to be a probabilistically checkable proof of proximity (PCPP) system for a pair language $L$ with proximity parameter $\delta$ and soundness error $s$ if, for every pair of strings $(x, y)$, the following conditions hold:
\begin{itemize}
    \item \textnormal{Completeness:} If $(x, y) \in L$, then there exists a proof string $\pi$ such that $V(x)$ accepts the oracle $y \circ \pi$ with probability $1$.
    \item \textnormal{Soundness:} If $y$ is $\delta(|x|)$-far from $L(x)$, then for every $\pi$, the verifier $V(x)$ accepts the oracle $y \circ \pi$ with probability strictly less than $s(|x|)$.
\end{itemize}
If $s$ and $\delta$ are not explicitly specified, they are assumed to be constants in the interval $(0, 1)$.
\end{definition}

The following theorem, due to~\cite{bgh+04}, establishes that the \textsc{Circuit Value} problem admits a PCPP. 

\begin{theorem}[\cite{bgh+04}]
\label{thm:pcpp}
There exists a constant $\delta_0 \in (0, 1)$ such that for all $n \in \mathbb{Z}^+$ and all $\delta < \delta_0$, the language \textsc{Circuit Value} has a PCP of proximity (for circuits of size $n$) with the following parameters:
\begin{itemize}
    \item randomness complexity $O(\log n)$,
    \item query complexity $O(1)$,
    \item perfect completeness, and
    \item soundness error $1 - \Omega(\delta)$ for proximity parameter $\delta$.
\end{itemize}
\end{theorem}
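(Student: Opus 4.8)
The plan would be to follow the proof-composition paradigm of \cite{bgh+04}, building the PCPP for \textsc{CktVal} from an algebraic ``outer'' construction and then iteratively composing it with itself so as to drive the query complexity down to a constant while keeping the randomness logarithmic and preserving perfect completeness throughout.

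First I would arithmetize the circuit-value instance. Given a circuit $C$ of size $n$ and a claimed input $w$, identify the wires of $C$ with points of a subcube $H^m \subseteq \F^m$ over a finite field $\F$ with $|H| = \polylog(n)$ and $m = O(\log n/\log\log n)$, so that $|\F|^m = \poly(n)$. Take the proof oracle $\pi$ to be (the low-degree extension of) the full wire-value assignment of $C$ on $w$; then the predicate ``$C(w)=1$'' becomes the assertion that an explicit low-degree polynomial built from $\pi$ and the gate predicates of $C$ vanishes identically on $H^m$, together with the requirement that the input-wire part of the assignment coincides with the raw string $y$ (which the PCPP receives as its input oracle and must certify close to $L(x)$).

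Next I would exhibit a ``plain'' PCPP verifier that, on oracle $y\circ\pi$, (i) runs a low-degree test certifying that $y\circ\pi$ is close to some degree-$d$ polynomial $f$, (ii) runs a sum-check / zero-on-$H^m$ test certifying the vanishing condition for $f$, and (iii) spot-checks a few coordinates certifying that the input-wire restriction of $f$ agrees with $y$. If $y$ is $\delta$-far from $L(x)$, then one of these must fail: either $y\circ\pi$ is far from every low-degree polynomial, caught by (i) with constant probability; or such an $f$ exists but violates a constraint, caught by (ii); or $f$ satisfies all constraints but its input part is $\Omega(\delta)$-far from $y$, caught by (iii) with probability $\Omega(\delta)$. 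This gives perfect completeness, soundness error $1-\Omega(\delta)$, randomness $O(\log n)$, but only $\polylog(n)$ queries.

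To reach $O(1)$ queries I would \emph{robustify} this verifier—arrange that a rejecting local view is not merely rejecting but far in relative Hamming distance from every accepting view—and then compose: the outer robust PCPP reduces ``$y\circ\pi$ is good'' to a proximity question about a short window of the oracle, which an inner PCPP (the same construction run at a smaller scale, bottoming out in a brute-force PCPP for constant-size instances) then verifies; a constant number of composition steps yields $O(1)$ queries, $O(\log n)$ randomness, and perfect completeness. The main obstacle is exactly this composition step: PCPPs, unlike PCPs, compose only when the outer verifier is robust, and one must carefully track how the proximity parameter, the robustness parameter, and the soundness degrade at each level, so that after a constant number of compositions the final soundness is still $1-\Omega(\delta)$ rather than, say, $1-\Omega(\delta/\polylog(n))$. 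Robustifying the algebraic low-degree and sum-check tests without inflating the alphabet or the randomness, and propagating the $\Omega(\delta)$ proximity-dependence through composition, is the technical heart of \cite{bgh+04}; for our purposes we simply invoke their theorem.
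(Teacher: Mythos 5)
The paper does not prove this statement at all: it is imported verbatim as a black-box result from \cite{bgh+04}, exactly as you conclude at the end of your proposal. Your sketch (arithmetization over a low-degree extension, low-degree/sum-check/consistency tests, then robustification and a constant number of composition steps to get $O(1)$ queries while preserving $O(\log n)$ randomness, perfect completeness, and soundness $1-\Omega(\delta)$) is a faithful summary of how \cite{bgh+04} establish it, so there is no discrepancy with the paper's treatment.
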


\subsection{Group stability}
\label{sec:siop:pre}

In this section, we review several mathematical tools and foundational results that will be used throughout our analysis. 
We do not directly copy from existing literature, either because the results are not explicitly stated or because they do not fully align with our requirements.
We begin by introducing the $\sigma$-weighted Hilbert–Schmidt positive semi-definite Hermitian form and seminorm, which allow us to reason about approximate representations and operator comparisons in the presence of a quantum state. We then recall a fundamental stability result for approximate representations of the Pauli group due to Gowers and Hatami, which will serve as the core analytic tool in our many-qubits test.

We define a positive semi-definite Hermitian form and seminorm with respect to a positive semidefinite operator $\sigma$ as follows. Similar definitions already appear in~\cite{vidick2011complexity} and in much of the literature on nonlocal games. 

 \begin{definition}
For a positive semidefinite operator $\sigma$, define:
\begin{align*}
    \langle A, B \rangle_\sigma = \mathrm{Tr}(A^\dagger B \sigma), \quad \|A\|_\sigma^2 = \mathrm{Tr}(A^\dagger A \sigma).
\end{align*}
\end{definition}

The following useful properties are well-known. 

\begin{lemma}
Let $\sigma$ be a positive semidefinite operator. Then for any operators $A, B$:
\begin{align*}
\|A\|_\sigma^2 &\geq 0, \\
\|A + B\|_\sigma^2 &\leq 2\|A\|_\sigma^2 + 2\|B\|_\sigma^2, \\
\|AB\|_\sigma^2 &\leq \|A\|^2 \cdot \|B\|_\sigma^2,
\end{align*}
and if $C^\dagger C = I$, then
\[
\langle CA, CB \rangle_\sigma = \langle A, B \rangle_\sigma.
\]
\end{lemma}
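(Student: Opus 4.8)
The statement to prove is a short lemma collecting standard properties of the $\sigma$-weighted seminorm. Let me sketch a proof plan.

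\textbf{Proof plan.}

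The plan is to verify each of the four claimed facts directly from the definitions $\langle A, B\rangle_\sigma = \mathrm{Tr}(A^\dagger B \sigma)$ and $\|A\|_\sigma^2 = \mathrm{Tr}(A^\dagger A \sigma)$, using only that $\sigma$ is positive semidefinite. First, for nonnegativity: since $\sigma \succeq 0$, write $\sigma = \rho^\dagger \rho$ (e.g. $\rho = \sigma^{1/2}$); then $\|A\|_\sigma^2 = \mathrm{Tr}(A^\dagger A \rho^\dagger \rho) = \mathrm{Tr}((A\rho^\dagger)^\dagger (A\rho^\dagger)) = \|A\rho^\dagger\|_{\mathrm{HS}}^2 \geq 0$, where $\|\cdot\|_{\mathrm{HS}}$ is the Hilbert–Schmidt norm. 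This identification $\|A\|_\sigma^2 = \|A\sigma^{1/2}\|_{\mathrm{HS}}^2$ is the key observation and makes the remaining items essentially immediate.

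For the triangle-type inequality $\|A+B\|_\sigma^2 \leq 2\|A\|_\sigma^2 + 2\|B\|_\sigma^2$: using the identification above, this is just $\|A\sigma^{1/2} + B\sigma^{1/2}\|_{\mathrm{HS}}^2 \leq 2\|A\sigma^{1/2}\|_{\mathrm{HS}}^2 + 2\|B\sigma^{1/2}\|_{\mathrm{HS}}^2$, which follows from the parallelogram law (or from $\|x+y\|^2 \leq 2\|x\|^2 + 2\|y\|^2$ in any inner product space, itself a consequence of $\|x-y\|^2 \geq 0$). For the submultiplicativity $\|AB\|_\sigma^2 \leq \|A\|^2 \|B\|_\sigma^2$ (here $\|A\|$ is the operator norm): write $\|AB\|_\sigma^2 = \mathrm{Tr}(B^\dagger A^\dagger A B \sigma)$ and use $A^\dagger A \preceq \|A\|^2 I$, so that $B^\dagger A^\dagger A B \preceq \|A\|^2 B^\dagger B$; since $\sigma \succeq 0$, taking the trace against $\sigma$ preserves this inequality (because $X \preceq Y$ and $\sigma \succeq 0$ imply $\mathrm{Tr}(X\sigma) \leq \mathrm{Tr}(Y\sigma)$, which one sees by writing $\mathrm{Tr}((Y-X)\sigma) = \mathrm{Tr}(\sigma^{1/2}(Y-X)\sigma^{1/2}) \geq 0$). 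This yields $\|AB\|_\sigma^2 \leq \|A\|^2 \mathrm{Tr}(B^\dagger B \sigma) = \|A\|^2 \|B\|_\sigma^2$.

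For the invariance under isometries, if $C^\dagger C = I$ then $\langle CA, CB\rangle_\sigma = \mathrm{Tr}(A^\dagger C^\dagger C B \sigma) = \mathrm{Tr}(A^\dagger B \sigma) = \langle A, B\rangle_\sigma$, which is immediate. I do not expect any real obstacle here: every step is a one-line manipulation, and the only mild subtlety is being careful that $\sigma$ is only assumed positive semidefinite (not a density matrix, not invertible), which is why the argument goes through $\sigma^{1/2}$ and the monotonicity of $X \mapsto \mathrm{Tr}(X\sigma)$ on the PSD order rather than through anything requiring positivity of $\sigma$ on the whole space.
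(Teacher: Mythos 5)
Your proposal is correct and follows essentially the same route as the paper's proof: each item is a direct verification from the definition, with the second inequality amounting to dropping the nonnegative term $\|A-B\|_\sigma^2$ (the paper writes this explicitly, you phrase it via the Hilbert--Schmidt identification $\|A\|_\sigma^2=\|A\sigma^{1/2}\|_{\mathrm{HS}}^2$), and the third using the operator-norm bound against a positive semidefinite matrix, which the paper obtains by cycling the trace to $\tr(A^\dagger A\, B\sigma B^\dagger)$. The differences are purely cosmetic, and your handling of the case where $\sigma$ is merely positive semidefinite is sound.
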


\begin{proof}
The first and last items follow by definition. For the second, 
\begin{align*}
\|A+B\|^2_\sigma = &~ \tr((A+B)^\dagger(A+B) \sigma ) \\
= &~ 2\tr(A^\dagger A \sigma )+2\tr(B^\dagger B \sigma )-\tr((A-B)^\dagger(A-B) \sigma ) \\
\leq &~ 2 \|A\|^2_\sigma + 2 \|B\|^2_\sigma\;.
\end{align*}
For the third, 
\begin{align*}
\|A B \|^2_\sigma 
= &~ \tr( A^\dagger A B \sigma B^\dagger) \\
\leq &~ \|A^\dagger A\| \cdot \tr(  B \sigma B^\dagger) \\
=&~ \|A\|^2 \cdot \|B \|^2_\sigma .
\end{align*}
\end{proof}

We next recall the triangle inequality for the $\sigma$-seminorm, see e.g.~\cite[Fact 4.28]{natarajan2019neexp}.

\begin{lemma}[Triangle inequality for $\sigma$-seminorm]
\label{lem:tri_sigma_norm}
Let $\sigma$ be a positive semidefinite operator. Then for any operators $A$ and $B$,
\[
\|A + B\|_\sigma \leq \|A\|_\sigma + \|B\|_\sigma.
\]
\end{lemma}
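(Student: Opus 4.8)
The statement to prove is the triangle inequality for the $\sigma$-seminorm:
\[
\|A + B\|_\sigma \leq \|A\|_\sigma + \|B\|_\sigma.
\]

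The plan is to reduce this to the familiar triangle inequality for the standard Hilbert–Schmidt norm by factoring $\sigma$ through a square root. Since $\sigma$ is positive semidefinite, it admits a unique positive semidefinite square root $\sigma^{1/2}$, and one has $\|A\|_\sigma^2 = \tr(A^\dagger A \sigma) = \tr(A^\dagger A \sigma^{1/2}\sigma^{1/2}) = \tr(\sigma^{1/2} A^\dagger A \sigma^{1/2}) = \|A\sigma^{1/2}\|_2^2$, where $\|\cdot\|_2$ denotes the ordinary (unweighted) Frobenius/Hilbert–Schmidt norm. In other words, the map $A \mapsto A\sigma^{1/2}$ is an isometry from the seminorm $\|\cdot\|_\sigma$ to the genuine norm $\|\cdot\|_2$. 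The triangle inequality then transfers directly: $\|A+B\|_\sigma = \|(A+B)\sigma^{1/2}\|_2 = \|A\sigma^{1/2} + B\sigma^{1/2}\|_2 \le \|A\sigma^{1/2}\|_2 + \|B\sigma^{1/2}\|_2 = \|A\|_\sigma + \|B\|_\sigma$, using the ordinary triangle inequality for $\|\cdot\|_2$ in the middle step.

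Alternatively, one could argue purely from the Hermitian-form structure: the bilinear form $\langle A, B\rangle_\sigma = \tr(A^\dagger B \sigma)$ is a positive semidefinite Hermitian form (it is conjugate-symmetric and $\langle A,A\rangle_\sigma = \|A\|_\sigma^2 \ge 0$, both already recorded in the preceding lemma), so the Cauchy–Schwarz inequality $|\langle A,B\rangle_\sigma| \le \|A\|_\sigma\|B\|_\sigma$ holds, and then expanding $\|A+B\|_\sigma^2 = \|A\|_\sigma^2 + 2\,\mathrm{Re}\langle A,B\rangle_\sigma + \|B\|_\sigma^2 \le \|A\|_\sigma^2 + 2\|A\|_\sigma\|B\|_\sigma + \|B\|_\sigma^2 = (\|A\|_\sigma + \|B\|_\sigma)^2$ gives the claim after taking square roots. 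I would use whichever of these the paper's surrounding style favors; the square-root factorization is the shortest and most transparent.

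There is essentially no obstacle here — this is a standard fact and both routes are a few lines. The only minor point of care is that $\|\cdot\|_\sigma$ is only a seminorm (it can vanish on nonzero $A$ when $\sigma$ is not full rank), but neither argument uses positive-definiteness: Cauchy–Schwarz and the isometry argument both go through verbatim for positive semidefinite $\sigma$. I would simply present the square-root computation and invoke the ordinary triangle inequality.
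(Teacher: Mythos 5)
Your proof is correct: both the factorization $\|A\|_\sigma = \|A\sigma^{1/2}\|_2$ reducing to the ordinary Hilbert--Schmidt triangle inequality, and the Cauchy--Schwarz argument for the positive semidefinite form $\langle A,B\rangle_\sigma$, are valid and neither requires $\sigma$ to be full rank. The paper itself gives no proof of this lemma --- it simply cites it as a known fact from the literature --- so there is nothing to compare against; either of your two routes would serve as a complete self-contained justification, with the $\sigma^{1/2}$ factorization being the shorter of the two.
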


We now introduce several definitions that will be useful in the context of  group representations and approximate representations of group.

\begin{definition}[Unitary group]
Define $U(\mathbb{C}^d)$ as the group of $d \times d$ unitary matrices.
\end{definition}

\begin{definition}[Approximate representation]
Given a finite group $G$, a dimension $d \geq 1$, $\varepsilon \geq 0$, and a $d$-dimensional positive semidefinite matrix $\sigma$ with trace $1$, an $(\varepsilon, \sigma, D_1,D_2)$-representation of $G$ is a function $f: G \to U(\mathbb{C}^d)$ satisfying:
\[
\mathbb{E}_{x \sim D_1, y \sim D_2} \Re \left( \langle f(xy), f(x)f(y) \rangle_\sigma \right) \geq 1 - \varepsilon.
\]
\end{definition}

We now state a stability theorem due to Gowers and Hatami, which allows us to conclude that approximate representations of the Pauli group are close to the Pauli group itself, up to isometry.
We state the theorem specifically for the Weyl-Heisenberg group and for approximate representations satisfying \( f(-x) = -f(x) \), \( f(1) = I \). This is because we not only want the approximate representation to be close to some representation of the Weyl-Heisenberg group, but we also want it to correspond to the canonical representation \( \pm X(a)Z(b) \). The conditions \( f(-x) = -f(x) \) and \( f(1) = I \) serve to uniquely identify this representation. The work of~\cite{mnz24} approaches this issue differently: they first obtain a general representation of the Weyl-Heisenberg group and then show how to isolate the canonical representation \( \pm X(a)Z(b) \). They also did not state the result explicitly as a theorem. For these reasons, even though the proof is well-known, we include our own version, that emphasizes the perspective of representation theory.

\begin{theorem}[Gowers-Hatami stability theorem, \cite{gh17}]
Let $\mathcal{P}_n$ denote the $n$-qubit Weyl–Heisenberg group, and let $f: \mathcal{P}_n \to U(\mathbb{C}^d)$ be an $(\varepsilon, \sigma, \text{Uniform}(\mathcal{P}_n), \mu)$-representation of 
$\mathcal{P}_n$ satisfying:
\[
f(-x) = -f(x), \quad f(1) = I.
\]
Then there exists $d' \geq d$, an isometry $V: \mathbb{C}^d \to \mathbb{C}^{d'}$, and a  representation $g: \mathcal{P}_n \to U(\mathbb{C}^{d'})$ such that $g(x) = \rho(x) \otimes I$, and
\[
\mathbb{E}_{x \sim \mu} \|f(x) - V^\dagger g(x) V\|_\sigma^2 \leq \Theta(\varepsilon),
\]
where $\rho(\pm X(a)Z(b)) = \pm X(a)Z(b)$. 
\end{theorem}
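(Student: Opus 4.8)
The plan is to follow the standard Gowers--Hatami argument via Fourier analysis over the (projective) representation theory of $\mathcal{P}_n$, but carefully tracking the two extra normalization conditions $f(-x)=-f(x)$ and $f(1)=I$ so that the resulting exact representation is forced to be the canonical one $\rho(\pm X(a)Z(b)) = \pm X(a)Z(b)$. First I would set up the Fourier transform: for each irreducible representation $\pi$ of $\mathcal{P}_n$ that is ``genuine'' (i.e.\ satisfies $\pi(-x)=-\pi(x)$), define $\widehat{f}(\pi) = \mathbb{E}_{x\sim\mathrm{Uniform}(\mathcal{P}_n)}[\,f(x)\otimes \overline{\pi(x)}\,]$ (or the analogous matrix coefficient object), and build the isometry $V$ by stacking $\sqrt{d_\pi}\,\widehat{f}(\pi)$ over the relevant irreps, with $g(x) = \bigoplus_\pi (I\otimes\pi(x))$. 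The key algebraic facts are: (i) the genuine irreps of $\mathcal{P}_n$ are all isomorphic to the Schr\"odinger representation, i.e.\ the canonical Pauli representation $\pm X(a)Z(b)$ on $\mathbb{C}^{2^n}$ (this is where $f(-x)=-f(x)$ gets used — it restricts the Fourier support to genuine irreps, hence to copies of the canonical representation); and (ii) $V$ is an isometry because $\sum_\pi d_\pi \|\widehat f(\pi)\|^2$-type sums collapse by Schur orthogonality together with $f(x)^\dagger f(x)=I$.

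The second step is the quantitative estimate. I would expand $\mathbb{E}_{x\sim\mu}\|f(x)-V^\dagger g(x)V\|_\sigma^2$ by writing $\|f(x)-V^\dagger g(x)V\|_\sigma^2 = \|f(x)\|_\sigma^2 + \|V^\dagger g(x)V\|_\sigma^2 - 2\,\Re\langle f(x), V^\dagger g(x)V\rangle_\sigma$, note the first two terms are essentially $1$ (using $f(x)^\dagger f(x)=I$ and that $V$ is an isometry so $V^\dagger g(x) V$ is a contraction whose $\sigma$-norm is controlled), and then lower-bound the cross term $\mathbb{E}_x \Re\langle f(x), V^\dagger g(x)V\rangle_\sigma$ using the defining inequality of an $(\varepsilon,\sigma,\mathrm{Uniform}(\mathcal{P}_n),\mu)$-representation. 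Concretely, $V^\dagger g(x) V = \sum_\pi d_\pi\, \widehat f(\pi)^\dagger (I\otimes\pi(x)) \widehat f(\pi)$, and substituting the definition of $\widehat f(\pi)$ turns this into an average over $y,z\sim\mathrm{Uniform}(\mathcal{P}_n)$ of $f(y)^\dagger f(z)$ weighted by characters; Schur orthogonality collapses the double average so that $\mathbb{E}_x\Re\langle f(x),V^\dagger g(x)V\rangle_\sigma$ becomes (a multiple of) $\mathbb{E}_{x\sim\mathrm{Uniform}(\mathcal{P}_n), \cdot}\Re\langle f(xy),f(x)f(y)\rangle_\sigma \geq 1-\varepsilon$, up to the distribution mismatch between $\mu$ and uniform which is handled because $V$ does not depend on $x$ and the relevant sum is symmetric. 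Combining gives the $\Theta(\varepsilon)$ bound.

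The final step is identifying the representation: since every genuine irrep of $\mathcal{P}_n$ is (unitarily equivalent to) the canonical one, $g(x)$ has the block form $\rho(x)\otimes I$ with $\rho(\pm X(a)Z(b)) = \pm X(a)Z(b)$; the normalization $f(1)=I$ and $f(-x)=-f(x)$ rules out the trivial/non-genuine components and pins down the phase conventions so that no stray scalar appears (i.e.\ $g(1)=I$ and $g(-x)=-g(x)$ as well). I would also check $d'\geq d$ simply because $V$ is an isometry into a space built by stacking blocks.

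The main obstacle I anticipate is the bookkeeping around $\sigma$ and the asymmetric roles of $\mathrm{Uniform}(\mathcal{P}_n)$ versus $\mu$: the classical Gowers--Hatami theorem is stated for the normalized Hilbert--Schmidt norm (i.e.\ $\sigma = I/d$) and with the uniform distribution in both arguments, whereas here one argument is $\mu$ and the inner product is $\sigma$-weighted with a general density $\sigma$. Getting the Schur-orthogonality collapse to go through requires that the \emph{averaging} over the group element defining $\widehat f$ is uniform (which it is, by hypothesis), so $\mu$ only enters in the outer expectation and can be absorbed without harm; but making this rigorous — in particular verifying that $\|V^\dagger g(x) V\|_\sigma \le 1$ pointwise in $x$ and that the cross-term manipulation is valid for non-tracial $\sigma$ — is the delicate part, and is presumably why the authors chose to re-prove it rather than cite it verbatim.
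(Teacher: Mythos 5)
Your proposal is correct and follows essentially the same route as the paper: the paper's isometry $V(u)=\sqrt{N/(2d_0)}\,\E_x f(x)u\otimes\rho(x)$ is exactly your $\sqrt{d_\pi}\widehat f(\pi)$ construction specialized to the unique genuine irrep of $\mathcal{P}_n$, with $f(-x)=-f(x)$ doing the same job (killing the non-genuine Fourier components so the character collapse yields both the isometry and the identity $V^\dagger g(x)V=\E_z f(z)^\dagger f(zx)$), and the cross-term then matches the $(\varepsilon,\sigma,\mathrm{Uniform},\mu)$ hypothesis just as you describe, with $\|V^\dagger g(x)V\|_\sigma^2\le 1$ handling the middle term. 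The $\sigma$-weighted, asymmetric-distribution bookkeeping you flag as delicate goes through exactly as you anticipate, so there is no gap.
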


\begin{proof}
Note that $N=|{\cal P}_n|= 2\cdot 4^n $, the only irreducible representation that is not one-dimensional $\rho(\pm X(a)Z(b)) = \pm X(a)Z(b) $ has dimension $d_0=2^n$. 

Define $V$ as follows: for $u\in \C^d$,
\begin{align*}
V(u) = \sqrt{\frac{N}{2d_0}} \cdot \E_{x\in {\cal P}_n} f(x) u \otimes \rho(x) .
\end{align*}

$V$ is an isometry: for any $u, v\in \C^d $,
\begin{align*}
u^\dagger  V^\dagger  V  v
=&~\tr( (Vu)^\dagger V v)  \\
=&~ \frac{N}{2d_0} \cdot \tr(\E_{x,y\in {\cal P}_n} u^\dagger f(x)^\dagger f(y) v \otimes \rho(x)^\dagger  \rho(y) )\\
=&~ \frac{N}{2d_0} \cdot (\E_{x,y\in {\cal P}_n} u^\dagger f(x)^\dagger f(y) v \cdot \tr(\rho(x^{-1} y) ) )\\
=&~ \frac{1}{2d_0} (\E_{x\in {\cal P}_n} u^\dagger f(x)^\dagger f(x) v \cdot \tr(\rho(I) ) + \E_{x\in {\cal P}_n} u^\dagger f(x)^\dagger f(-x) v \cdot \tr(\rho(-I) )) \\
=&~ u^\dagger  v\;. 
\end{align*}

Let, for $v\in \C^{d}$, $O\in \C^{d_0\times d_0}$,
\begin{align*}
(g(x))(v\otimes O) = v \otimes O \rho(x^{-1})\;. 
\end{align*}

Note that $g(-I) = -I$. Among all the irreducible representations of $\mathcal{P}_n$, the only one that maps $-I$ to $-I$ is $\rho$.
Then, for any $u, v\in \C^d $,
\begin{align*}
u^*V^*g(x)V v 
=&~\tr((Vu)^*g(x)Vu) \\
=&~ \frac{N}{2d_0} \cdot \tr(\E_{z,y\in {\cal P}_n} u^\dagger f(z)^\dagger f(y) v \otimes \rho(z)^\dagger  \rho(y) \rho(x^{-1}))\\
=&~ \frac{N}{2d_0} \cdot \E_{z,y\in {\cal P}_n} u^\dagger f(z)^\dagger f(y) v \cdot \tr( \rho(z)^\dagger  \rho(y) \rho(x^{-1}))\\
=&~ \frac{1}{2d_0} \cdot (\E_{z\in {\cal P}_n} u^\dagger f(z)^\dagger f(zx) v \cdot \tr(I)+\E_{z\in {\cal P}_n} u^\dagger f(z)^\dagger f(-zx) v \cdot \tr( -I))\\
=&~ \E_{z\in {\cal P}_n} u^\dagger f(z)^\dagger f(zx) v\; .
\end{align*}

Therefore, 
\begin{align*}
\E_{x\sim \mu} \Re( \langle f(x), V^* g(x) V\rangle_\sigma ) =&~ \E_{x\sim \mu,z\in_R {\cal P}_n} \Re(\tr(f(x)^\dagger f(z)^\dagger f(zx) \sigma))\\
=&~ \E_{x\sim \mu,z\in_R {\cal P}_n} \Re(\tr(f(zx)^\dagger f(z) f(x) \sigma))\\
 =&~ \E_{x\sim \mu,z\in_R {\cal P}_n} \Re(\langle f(z)^\dagger f(zx), f(x)\rangle_\sigma)\\
=&~ \E_{x\sim \mu,z\in_R {\cal P}_n} \Re(\langle  f(zx), f(z)f(x)\rangle_\sigma)\\
\geq&~ 1- \eps \;.
\end{align*}
As a result,
\begin{align*}
&~\E_{x\sim \mu} \|f(x)-V^\dagger g(x) V\|_\sigma^2 \\
= &~\E_{x\sim \mu} \tr((f(x)^\dagger-V^\dagger g(x)^\dagger V)(f(x)-V^\dagger g(x) V)\sigma) \\
= &~2 - 2 \Re(\E_{x\sim \mu} \tr(f(x)^\dagger\cdot V^\dagger g(x) V)\sigma) \\
\leq &~ \Theta (\eps)\;.
\end{align*}
\end{proof}
\begin{remark}
The original version of this theorem generalizes the Fourier transform to matrix-valued functions and non-Abelian groups. Notably, we proved that for any approximate representation of the Weyl-Heisenberg group, the only non-one-dimensional irreducible representation always contributes a significant component. 
\end{remark}
\begin{remark}
We follow the approach from \cite{vid20}, and the idea of switching distributions was proposed in \cite{mnz24}. However, it is possible to modify the energy consistency test to directly verify the correctness of a single observable such as $X(a)$, $Y(b)$, or $Z(c)$. By using self-correction, we can enforce the correctness for each individual $a$, $b$, and $c$, thereby avoiding the need to switch the distribution.
\end{remark}

The following theorem builds on the above and provides a central technical tool in our many-qubits tests.

\begin{theorem}[Robustness of the many qubits tests]
\label{thm:ourqubits}
Let $n, d \in \mathbb{Z}_+$, $\delta \in (0, 1)$, and $\sigma \in \mathbb{C}^d \otimes \mathbb{C}^d$. 
Let $\mu_1$ be a distribution over ${0,1}^n$. Sample $(b, a) \sim \mu_1$, then sample $r$ uniformly at random from ${0,1}^n$. Define $\mu_2$ as the distribution of $(b + r, a)$.
Let $\rho(\pm X(a)Z(b)) = \pm X(a)Z(b)$. 
Let $f_X, f_Z: \{0,1\}^n \to U(\mathbb{C}^d)$ such that:
\begin{itemize}
\item $f_X(a)^2 = f_Z(b)^2 = I$, $f_X(a)^\dagger = f_X(a)$, $f_Z(b)^\dagger = f_Z(b)$ for all $a,b\in \{0,1\}^n$,
\item $f_X(a)f_X(b) = f_X(a + b)$, $f_Z(a)f_Z(b) = f_Z(a + b)$, for all $a,b\in \{0,1\}^n$,
\item $\mathbb{E}_{(b,a) \sim \mu_1} \|f_Z(b)f_X(a) - (-1)^{a \cdot b} f_X(a)f_Z(b)\|_\sigma^2 \leq \delta$,
\item $\mathbb{E}_{(b,a) \sim \mu_2} \|f_Z(b)f_X(a) - (-1)^{a \cdot b} f_X(a)f_Z(b)\|_\sigma^2 \leq \delta$,
\end{itemize}
Then there exists $d' \geq d$, an isometry $V: \mathbb{C}^d \to \mathbb{C}^{d'}$, and a representation $g: \mathcal{P}_n \to U(\mathbb{C}^{d'})$ such that $g(x) = \rho(x) \otimes I$ and
\[
\mathbb{E}_{(b,a) \sim \mu_1} \|f_X(a)f_Z(b) - V^\dagger g(X(a)Z(b)) V\|_\sigma^2 \leq \Theta(\delta)\;.
\]
\end{theorem}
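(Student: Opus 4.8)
The plan is to fold $f_X$ and $f_Z$ into a single map $f$ on the Weyl--Heisenberg group $\mathcal{P}_n$ and then apply the Gowers--Hatami stability theorem above; the two approximate anticommutation hypotheses — one over $\mu_1$, one over the randomly shifted distribution $\mu_2$ — are precisely what is needed to verify that $f$ is an approximate representation. Concretely, set $f(\pm X(a)Z(b)) := \pm f_X(a)f_Z(b)$. Since $f_X,f_Z$ are honest homomorphisms of $(\mathbb{F}_2^n,+)$ we have $f_X(0)=f_Z(0)=I$, so $f$ is well defined on $\mathcal{P}_n$ (whose elements are uniquely $\pm X(a)Z(b)$) and obeys $f(-x)=-f(x)$, $f(1)=I$. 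Let $\mu$ be the image of $\mu_1$ under $(b,a)\mapsto X(a)Z(b)$; since $f(X(a)Z(b))=f_X(a)f_Z(b)$ and $g(X(a)Z(b))=X(a)Z(b)\otimes I$, the desired inequality is exactly $\mathbb{E}_{x\sim\mu}\|f(x)-V^\dagger g(x)V\|_\sigma^2\le\Theta(\delta)$, which Gowers--Hatami will produce once $f$ is shown to be an $(O(\delta),\sigma,\mathrm{Uniform}(\mathcal{P}_n),\mu)$-representation.

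To check that last point I would take $x=X(p)Z(q)$ uniform and $y=X(a)Z(b)$ with $(b,a)\sim\mu_1$. From the exact group law $X(p)Z(q)X(a)Z(b)=(-1)^{q\cdot a}X(p+a)Z(q+b)$ and the homomorphism property of $f_X,f_Z$ one gets the identity $f(xy)-f(x)f(y)=f_X(p)\bigl[(-1)^{q\cdot a}f_X(a)f_Z(q)-f_Z(q)f_X(a)\bigr]f_Z(b)$. Since $\Re\langle U,V\rangle_\sigma=1-\tfrac12\|U-V\|_\sigma^2$ for unitaries and $\|\cdot\|_\sigma$ is invariant under left multiplication by the unitary $f_X(p)$, after distributing $f_Z(b)$ into the bracket the quantity to bound is $\bigl\|\,f_Z(q)f_X(a)f_Z(b)-(-1)^{q\cdot a}f_X(a)f_Z(q+b)\,\bigr\|_\sigma$.

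The heart of the argument is a two-leg triangle inequality through the operator $(-1)^{a\cdot b}f_Z(q+b)f_X(a)$. The first leg simplifies (pull out $f_Z(q)$, then a global sign) to $\|f_Z(b)f_X(a)-(-1)^{a\cdot b}f_X(a)f_Z(b)\|_\sigma$, whose square has $\mu_1$-expectation $\le\delta$ by the third hypothesis. The second leg simplifies (pull out the sign, using $q\cdot a+a\cdot b=a\cdot(q+b)$ mod $2$) to $\|f_Z(q+b)f_X(a)-(-1)^{a\cdot(q+b)}f_X(a)f_Z(q+b)\|_\sigma$; here the point is that with $q$ uniform the pair $(q+b,a)$ is distributed exactly as $\mu_2$, so the square of this leg has expectation $\le\delta$ by the fourth hypothesis. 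Squaring the triangle inequality via $(s+t)^2\le2s^2+2t^2$ and averaging gives $\mathbb{E}\|f(xy)-f(x)f(y)\|_\sigma^2\le4\delta$, hence $\mathbb{E}_{x,y}\Re\langle f(xy),f(x)f(y)\rangle_\sigma\ge1-2\delta$, so $f$ is a $(2\delta,\sigma,\mathrm{Uniform}(\mathcal{P}_n),\mu)$-representation.

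Finally, apply the Gowers--Hatami theorem to $f$: it yields $d'\ge d$, an isometry $V$, and $g(x)=\rho(x)\otimes I$ with $\rho(\pm X(a)Z(b))=\pm X(a)Z(b)$ and $\mathbb{E}_{x\sim\mu}\|f(x)-V^\dagger g(x)V\|_\sigma^2\le\Theta(\delta)$; unwinding the definitions of $f$, $g$, and $\mu$ gives the claimed bound. I expect the delicate point to be exactly the place where $f_Z(b)$ gets distributed inside the bracket: this is equivalent to evaluating an anticommutator in the arguments $(q,a)$ against the \emph{shifted} state $f_Z(b)\sigma f_Z(b)$ rather than $\sigma$, which cannot be controlled using $\mu_1$ alone — this is the whole reason the theorem assumes approximate anticommutation over $\mu_2$ as well, and the ``distribution switching'' (turning the inner $Z$-argument into a uniform one so that $(q+b,a)\sim\mu_2$) is what makes that hypothesis applicable. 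All remaining steps — triangle inequalities, unitary-invariance of $\|\cdot\|_\sigma$, and bookkeeping of the $(-1)$ signs via $X(a)Z(b)X(c)Z(d)=(-1)^{b\cdot c}X(a+c)Z(b+d)$ — are routine.
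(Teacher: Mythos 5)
Your proposal is correct and takes essentially the same route as the paper's proof: define $f(\pm X(a)Z(b))=\pm f_X(a)f_Z(b)$, verify it is an approximate representation with respect to $(\mathrm{Uniform}(\mathcal{P}_n),\mu_1)$ via a two-leg hybrid in which the exact homomorphism properties merge the $Z$-arguments, the $\mu_1$ hypothesis controls one leg, the uniform shift makes the other leg exactly the $\mu_2$ hypothesis, and then apply the Gowers--Hatami theorem. The paper's telescoping (four terms, two of which vanish identically by the homomorphism property) passes through the same intermediate operator $(-1)^{a\cdot b}f_Z(q+b)f_X(a)$ as your decomposition, so the two arguments coincide up to bookkeeping.
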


\begin{proof}
Using that any element of ${\cal P}_n$ has a unique representative of the form 
$\pm X(a)Z(b)$ for $a,b\in \{0,1\}^n$,
we define $f(\pm X(a)Z(b)) = \pm f_X(a)f_Z(b) $. Next we need to verify that $f$ is an approximate representation.
Note that
\begin{align*}
&~f_X(a_0)f_Z(b_0)f_X(a_1)f_Z(b_1) - (-1)^{b_0\cdot a_1}f_X(a_0+a_1)f_Z(b_0+b_1) \\
= &~ f_X(a_0)f_Z(b_0)f_X(a_1)f_Z(b_1) - (-1)^{b_1\cdot a_1} f_X(a_0)f_Z(b_0)f_Z(b_1)f_X(a_1) \\
&~\quad+ (-1)^{b_1\cdot a_1}f_X(a_0)f_Z(b_0)f_Z(b_1)f_X(a_1) - (-1)^{b_1\cdot a_1}f_X(a_0)f_Z(b_0+b_1)f_X(a_1) \\
&~\quad+ (-1)^{b_1\cdot a_1}f_X(a_0)f_Z(b_0+b_1)f_X(a_1) - (-1)^{b_0\cdot a_1}f_X(a_0)f_X(a_1)f_Z(b_0+b_1) \\
&~\quad+ (-1)^{b_0\cdot a_1}f_X(a_0)f_X(a_1)f_Z(b_0+b_1) - (-1)^{b_0\cdot a_1}f_X(a_0+a_1)f_Z(b_0+b_1) \;.
\end{align*}
Therefore 
\begin{align*}
\E_{(b_0,a_0)\sim Uniform({\cal P}_n),(b_1,a_1)\sim \mu_1} \|f_X(a_0)f_Z(b_0)f_X(a_1)f_Z(b_1) - (-1)^{b_0\cdot a_1}f_X(a_0+a_1)f_Z(b_0+b_1)\|_\rho^2 \leq \Theta(\delta)\;.
\end{align*}

\end{proof}

\section{Formulation of Quantum Interactive Oracle Proofs}
\label{sec:formulation}

This section formalizes the model of Quantum Interactive Oracle Proofs (QIOPs), a quantum analogue of classical IOPs with limited quantum access by the verifier. We first specify basic notation. We then introduce fundamental notation and primitives such as the notion of a query and of a quantum channel with queries. We then define what constitutes a qIOP system, along with variations that capture different resource constraints.

\subsection{Notation}

We use sans serif font \(\mathsf{A}\), \(\mathsf{X}\), etc.\ to denote quantum registers and calligraphic font \(\mathcal{A}\), \(\mathcal{X}\), etc., to denote the associated complex Hilbert space.  We let $L(\mathcal{H})$ denote the linear operators on $\mathcal{H}$, and $C(\mathcal{H})$ the set of quantum channels on $\mathcal{H}$, i.e.\ completely positive and trace-preserving linear maps on $L(\mathcal{H})$.

We use \(d(\mathsf{A})\) to denote the dimension of a register \(\mathsf{A}\) and call a register \emph{trivial} if \(d(\mathsf{A})=1\). If \(\mathsf{A}\) is a register of dimension \(d_A\), we define a computational basis measurement channel \(\mathbb{M}_{\mathsf{A}}\in C(\mathcal{A})\) on it by setting:
\[
\mathbb{M}_{\mathsf{A}} = \sum_{i=1}^{d_A} \mathbb{M}_{\mathsf{A}}^i, \quad \mathbb{M}_{\mathsf{A}}^i(\rho_{\mathsf{A}}) = \ketbra{i}{i} \rho_{\mathsf{A}} \ketbra{i}{i}.
\]
This channel models standard basis measurements for quantum registers.

\subsection{Quantum channels with queries}

The notion of query channel is the core primitive which we use for modeling quantum access in qIOPs. 

\begin{definition}[Query]
\label{def:q}
Let $\mathsf{W} = (\mathsf{W}_1,\ldots,\mathsf{W}_n)$,  $\mathsf{R} = (\mathsf{R}_1,\ldots,\mathsf{R}_m)$, and $\mathsf{R}'$ be qubit registers.  
Let $k \in [n]$ and $\{s_1,\ldots, s_k\} \subseteq [n]$.  
Let $\mathsf{S} = (\mathsf{W}_{s_1},\ldots,\mathsf{W}_{s_k})$ be a subregister of $\mathsf{W}$.  
Then, a \emph{query channel on $\mathsf{S}$ with input register $\mathsf{R}$ and output register $\mathsf{R}'$} is a quantum channel that acts only on the registers $\mathsf{S}$ and $\mathsf{R}$, and maps $\mathsf{S}$ to $\mathsf{S}$ and $\mathsf{R}$ to $\mathsf{R}'$.
Moreover, the \emph{query set} of $\Phi$ is defined as $\{s_1,\ldots, s_k\}$. 
\end{definition}

A quantum channel with query consists of a single invocation of the query operation.  
Specifically, the channel first selects (through measurement) a query set $\{s_1,\ldots, s_k\}$, and then applies a query channel of registers $\mathsf{S}$ and $\mathsf{R}$.  
This structure models computation with oracle-style access, where the query simulates a single round of interaction with an oracle.

\begin{definition}[Quantum channel with queries]
\label{def:c_q}
Let $k$, $\mathsf{W}$, $\mathsf{R}$, and $\mathsf{R}'$ be as in Definition~\ref{def:q}.  
Let $\mathsf{I} = (\mathsf{R}_{1}, \ldots, \mathsf{R}_{k \cdot \lceil \log(n) \rceil})$ be a subregister of $\mathsf{R}$.  
We refer to $\mathsf{W}$ as the \emph{witness register}, $\mathsf{R}$ as the \emph{reference register}, and $\mathsf{R}'$ as the \emph{post-processed reference register}.
Let $\mathsf{S}_s$ denote the subregister $(\mathsf{W}_{s_1}, \ldots, \mathsf{W}_{s_k})$ indexed by $s = (s_1, \ldots, s_k) \in [n]^k$.  
For each $s$, let $\Phi_{\mathsf{R} \mathsf{S}_s, s}$ be a query channel of registers $\mathsf{S}_s$ with input register $\mathsf{R}$ and output register $\mathsf{R}'$ as in Definition~\ref{def:q}.

A quantum channel $\Lambda$ is called a \emph{quantum channel with queries} if it decomposes as:
\[
\Lambda = 
( \sum_s \Phi_{\mathsf{R} \mathsf{S}_s, s} \otimes \mathbb{M}_{\mathsf{I}}^s ) \circ \Psi,
\]
where $\Psi \in C(\mathcal{R})$ is a quantum channel. 
\end{definition}

\begin{definition}
    For a channel $\Psi$, we denote by $t(\Psi) $ its time complexity. This is the smallest number of elementary gates used to implement $\Psi$.\footnote{For this, we assume that some finite universal gate set, made of single and two-qubit gates, has been fixed.} We often abuse notation and use $t(\Psi) $ to denote an upper bound on the time complexity of channel $\Psi$.
\end{definition}

\paragraph{Complexity parameters.} Given a quantum channel with queries, we associate with it the following parameters:
\begin{itemize}
\item The \textbf{proof length} is $l(\Lambda) = n$.
\item The \textbf{query complexity} is $q(\Lambda) = k$.
\item The \textbf{time complexity} is $t(\Lambda)$.
\end{itemize}

The following definition formalizes the notion of a quantum algorithm with queries. Specifically, it models such an algorithm as being specified by a classical polynomial-time algorithm that outputs a quantum channel with queries. 
\begin{definition}[Quantum algorithm with queries]\label{def:qwq}
An algorithm $\texttt{A}$ is said to be a quantum algorithm with queries if $\texttt{A}$ is a classical polynomial-time algorithm that, given an input $x \in \{0,1\}^*$, outputs a description of a quantum channel with queries.
Let $\mathbb{A}_{x}$ denote the quantum channel produced by $\texttt{A}(x)$. 

The complexity of $\texttt{A}$ is characterized by three parameters, proof length $l(n)$, query complexity $q(n)$, and time complexity $t(n)$, each defined as an upper bound on the corresponding resource used by the channel $\mathbb{A}_x$ for any input $x$ of length $n$.

Moreover, 
we use $\texttt A(x,\rho_{\mathsf{W}})$ to denote $ \mathbb A_x(\rho_{\mathsf{W}}\otimes {\ketbra{0}{0}_{\mathsf{R}}})$.
We use $\texttt A(x)$ to denote 
$ \mathbb A_x({\ketbra{0}{0}_{\mathsf{W}}}\otimes {\ketbra{0}{0}_{\mathsf{R}}})$. 
Finally, we use the witness register of $ \mathbb A_x$ to denote the witness register of $\texttt{A}(x) $. 
\end{definition}

\subsection{Quantum interactive oracle proof systems}

Before introducing quantum interactive oracle proof systems in full generality, we first use the notation introduced in the previous subsection to formulate the well-known quantum PCP conjecture, which will serve as a guiding example to help clarify the definitions.

\begin{definition}[Quantum probabilistically checkable proof systems]\label{def:qpcp_sys}
A quantum probabilistically checkable proof system for a promise problem $(S_{yes}, S_{no})$ is
a quantum algorithm with queries $\texttt V$ such that:
\begin{enumerate}
\item \textnormal{Efficient:} $\texttt V$ has polynomial time complexity.  
\item \textnormal{Completeness:} For any $x\in S_{yes}$, there exists a quantum state $\rho$ such that $\Pr[\texttt  V(x)=1] \geq \frac{2}{3}$.   
\item \textnormal{Soundness:} For any $x\in S_{no} $ and any quantum state $\rho$, $\Pr[\texttt  V(x)=1] \leq \frac{1}{3}$.   
\end{enumerate}

\end{definition}

\begin{definition}[Quantum probabilistically checkable proofs]
Let $l,q:\Z_+ \rightarrow \Z_+$. Define $\mathcal{QPCP}(l, q)$ as the set of promise problem $(S_{yes}, S_{no})$ that have a quantum probabilistically checkable proof system with proof length $l$ and query complexity $q$.
\end{definition}

\begin{remark}
Our definition matches the standard (non-adaptive definition) of quantum PCPs~\cite{aav13,gjmz23,buhrman2024quantum}.
\end{remark}

\begin{conjecture}[qPCP conjecture]
${\QMA}\subseteq \mathcal{QPCP}(\poly, O(1))$
\end{conjecture}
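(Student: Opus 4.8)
The final statement is the quantum PCP conjecture itself, a central open problem; what follows is therefore a research program rather than a proof. The natural starting point is the QMA-complete local Clifford-Hamiltonian problem of Theorem~\ref{thm:clifford_ham}, which already furnishes a $5$-local instance with an inverse-polynomial promise gap. The goal is a \emph{quantum gap amplification} theorem: a polynomial-time reduction producing a Hamiltonian that is still $O(1)$-local (over a possibly larger but $O(1)$-size alphabet) but whose promise gap is a constant fraction of its $\ell_1$-norm, together with the observation that the ground energy of such a Hamiltonian is estimated by measuring $O(1)$ qubits of the prover's polynomial-length witness --- exactly the conclusion $\QMA \subseteq \mathcal{QPCP}(\poly, O(1))$.

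Following the template of Dinur's classical proof, the plan has three movements. First, \emph{expanderization}: rearrange the constraint hypergraph of the Hamiltonian terms onto a constant-degree expander; this step is purely combinatorial and transfers essentially verbatim from the classical setting. Second, \emph{amplification by walk powering}: define a new Hamiltonian whose terms correspond to length-$t$ walks in the constraint graph, so that the violation probability on any state is boosted from $\eps$ to $\min(\Omega(t\eps),\Omega(1))$; here one needs a quantum analogue of Dinur's plurality-assignment argument, with the honest ground state being the original witness suitably \emph{locally encoded} so that the walk-terms can still be read coherently. Third, \emph{arity reduction by composition}: compose with a constant-size inner quantum PCP (or a quantum PCP of proximity) to bring the locality back to $O(1)$, mirroring the classical composition step, using the encoding's error-correcting structure to pin down a cheating prover.

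The hard part --- and the reason the conjecture remains open --- is that every quantum-specific obstruction appears simultaneously. In amplification, Dinur's argument crucially has the verifier read the same proof symbol inside many overlapping constraints; the no-cloning theorem forbids this for quantum proofs, so one must instead supply a proof format that self-corrects under $O(1)$-qubit queries --- a quantum locally testable code in the parameter regime needed does not currently exist, and the NLTS theorem of~\cite{abn23} and recent good qLDPC codes, while encouraging, fall short. In composition, the inner Hamiltonian terms one naturally obtains neither commute nor factor into a controlled sum of Pauli tensors (the exponential $\ell_1$-blow-up noted after Lemma~\ref{lem:c_ham_amp} is one symptom), so classical assignment-testing has no direct port. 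And soundness against a coherent, entangled cheating prover querying the proof is strictly stronger than classical soundness and is not inherited for free from any classical reduction.

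For precisely these reasons the present paper does not attack the conjecture head-on but settles for the weaker notion of a quantum IOP (Theorems~\ref{thm:inf1} and~\ref{thm:inf2}): allowing interaction lets teleportation of the witness plus a classical PCP of proximity --- or a single-prover many-qubit test (Theorem~\ref{thm:ourqubits}) plus classical PCP machinery --- sidestep exactly the no-cloning and non-commutativity barriers above, at the cost of a priori shared EPR pairs in the first construction or exponential communication in the second. A full resolution of the conjecture would, at minimum, require upgrading these interactive primitives into a single, non-interactive, polynomial-length, $O(1)$-query quantum proof.
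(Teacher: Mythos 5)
The statement you were asked about is not a theorem of the paper but the quantum PCP conjecture itself, stated explicitly as a \emph{conjecture} (Definition~\ref{def:qpcp_sys} and the surrounding text only formalize the class $\mathcal{QPCP}(\poly, O(1))$; no proof of the inclusion $\QMA \subseteq \mathcal{QPCP}(\poly, O(1))$ is given or claimed anywhere). Your response handles this correctly: you decline to offer a proof, identify the statement as open, and instead sketch the Dinur-style program and the quantum obstructions to it. Your diagnosis is consistent with the paper's own commentary --- the no-cloning obstruction to reusing proof symbols, the exponential $\ell_1$-norm blow-up when converting Clifford projections to XZ-terms (noted after Lemma~\ref{lem:c_ham_amp} and in the discussion around the Clifford-Hamiltonian amplification), the caution from \cite{bh13} and \cite{aav13} that natural classical transformations may fail quantumly, and the role of NLTS \cite{abn23} as encouraging but insufficient evidence. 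You also correctly place the paper's actual contributions (Theorems~\ref{thm:inf1} and~\ref{thm:inf2}) as relaxations that sidestep these barriers via interaction, shared EPR pairs, or exponential communication. In short: there is nothing in the paper to compare a proof against, your refusal to manufacture one is the right call, and your account of why the conjecture resists current techniques matches the paper's framing; the one caveat is that your three-movement program is speculative scaffolding, not something the paper endorses or develops, so it should be read as motivation rather than as partial progress.
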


We now introduce the central definition of a quantum interactive oracle proof (qIOP) system, which serves as a quantum analogue of classical interactive oracle proof systems.

A qIOP system consists of a sequence of algorithms $\texttt{V}_1, \ldots, \texttt{V}_n$, where each $\texttt{V}_i$ is a quantum algorithm with queries. 
The verifier and prover begin with initial state $\ket{0}$. 
During the interaction, the verifier is allowed to read from and modify the prover's message registers as a way of responding to the prover's messages.
The verifier is also allowed to retain part of the prover's message register and return it in later rounds of the interaction.
While the verifier operates under several complexity constraints, the malicious prover is assumed to be computationally unbounded. Soundness and completeness are defined analogously to classical interactive proof systems.
Moreover, the verifier is always allowed to send classical messages to the prover.

We first review the definition of a quantum interactive proof, and then proceed to our definition of a quantum interactive oracle proof. See Figure~\ref{fig:qip} for an illustration of the $3$-message case. 

\begin{figure}
\begin{tikzpicture}[
  font=\small,
  lane/.style={very thick},
  op/.style={
    draw, rounded corners=2pt, thick, fill=white,
    minimum width=14mm, minimum height=7mm,
    align=center
  },
  msg/.style={-Latex, thick},
  lab/.style={midway, fill=white, inner sep=1pt},
  note/.style={font=\footnotesize, align=left}
]

\def\xL{0}
\def\xR{13}
\def\yP{1.6}
\def\yV{0.0}

\draw[lane] (\xL,\yP) -- (\xR,\yP);
\draw[lane] (\xL,\yV) -- (\xR,\yV);

\node[anchor=east] at (\xL-0.4,\yP) {Prover};
\node[anchor=east] at (\xL-0.4,\yV) {Verifier};

\node[note, anchor=west] at (\xL+4.3,\yP+0.25) {private: $\mathsf{P}$};
\node[note, anchor=west] at (\xL+0.1,\yV-0.35) {private: $\mathsf{R}$};

\def\tA{2.2}
\def\tB{5.3}
\def\tC{8.4}
\def\tD{11.5}

\node[op] (P0) at (\tA,\yP) {$\mathbb{P}_0$};
\node[op] (V1) at (\tB,\yV) {$\mathbb{V}_{x,1}$};
\node[op] (P1) at (\tC,\yP) {$\mathbb{P}_1$};
\node[op] (V2) at (\tD,\yV) {$\mathbb{V}_{x,2}$};

\draw[msg] (P0.east) -- node[lab] {$\mathsf{W}_0$} (V1.west);
\draw[msg] (V1.east) -- node[lab] {$(\mathsf{M}_1,\mathsf{J}_1)$} (P1.west);
\draw[msg] (P1.east) -- node[lab] {$\mathsf{W}_1$} (V2.west);

\node[note, anchor=west] at (\tD+0.9,\yV-0.35) {decision in $\mathsf{J}_2$};

\end{tikzpicture}
\caption{An example of a $m=3$-message quantum interactive proof.}\label{fig:qip}
\end{figure}
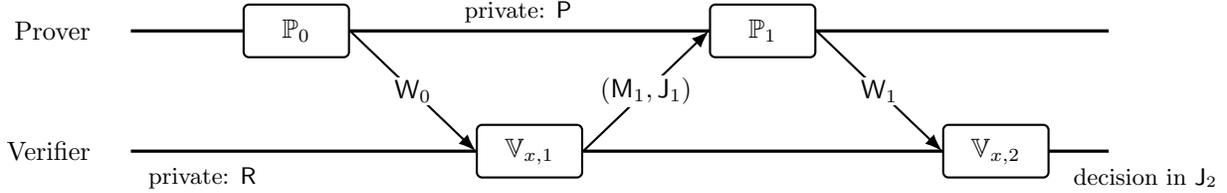

\begin{definition}[Quantum interactive proof systems]\label{def:qip}
Let $m \in \mathbb{Z}_+$ be the total number of messages, and let $h = \lfloor m/2 + 1 \rfloor$. An $m$-message quantum interactive oracle proof system is defined by a tuple of quantum algorithms with queries $\texttt{V} = (\texttt{V}_1, \texttt{V}_2, \ldots, \texttt{V}_h)$.
The tuple $\mathbb{V}_x = (\mathbb{V}_{x,1}, \ldots, \mathbb{V}_{x,h})$ represents the quantum channels with queries produced by the algorithm $\texttt{V}_i$ on input $x$. Each quantum channel $\mathbb{V}_{x,i}$, for $i\in[h]$, acts on a (quantum) witness register $(\mathsf{W}_{i-1}, \mathsf{W}'_{i-1})$ and (quantum) reference register $\mathsf{R}$, and it returns a message register  $(\mathsf{J}_i,\mathsf{M}_i)$ where  $\mathsf{J}_i$ is classical and  $\mathsf{M}_i$ quantum, together with private registers $( \mathsf{W}'_{i}, \mathsf{R})$.\footnote{For the present definition the registers $\mathsf{W}'_{i}$ are not needed and can be assumed to be trivial without loss of generality. We include them for future consistency.} 
Summarizing, the verifier channel $\mathbb{V}_{x,i}$ acts on the registers as follows:
\begin{align*}
\big((\mathsf{W}_{i-1}, \mathsf{W}'_{i-1}), \mathsf{R} \big)
\longrightarrow 
\big((\mathsf{M}_{i}, \mathsf{J}_{    i}), (\mathsf{W}'_{i},\mathsf{R}) \big)\;.
\end{align*}
If $m$ is odd, define the index set $I = \{0, \ldots, h-1\}$; if $m$ is even, let $I = \{1, \ldots, h-1\}$.  
A prover strategy is specified by a collection of quantum channels $\{\mathbb{P}_i\}_{i \in I}$ and a quantum register $\mathsf{P}$, where each $\mathbb{P}_i$ acts on the registers as follows:
\[
((\mathsf{M}_i, \mathsf{J}_i), \mathsf{P}) \longrightarrow (\mathsf{W}_i, \mathsf{P}).
\]
Let the first (classical) bit of the register $\mathsf{J}_{    h}$ in the output of $\mathbb{V}_{x,h}$ represent the verifier's final decision to accept or reject.
Let $\omega(\mathbb{V}_x)$ denote the value of the interactive game defined by $\mathbb{V}_x$. That is, $\omega(\mathbb{V}_x)$ is the maximum acceptance probability of $\mathbb{V}_x$ over all possible prover strategies $\{\mathbb{P}_i\}_{i \in I}$.

The following properties must be satisfied:
\begin{itemize}
\item \textnormal{Consistent:} 
Unless otherwise specified, we initialize $\mathsf{W}_i$, $\mathsf{W}'_i$, $ \mathsf{R}$, $ \mathsf{M}_i$ to $\ket{0}$, whereas $\mathsf{P}$ may be initialized in an arbitrary quantum state.
The register $\mathsf{J}_i$ is initialized to the all-zero string.
\item \textnormal{Efficient:} Each verifier's algorithm $\texttt{V}_i$ must be implementable in polynomial time and have polynomial time complexity as defined in Definition \ref{def:qwq}. 
\end{itemize}
\end{definition}

In the next definition we present a general model of quantum interactive oracle proofs, where the verifier is allowed to generate and send an arbitrary quantum state to the prover, rather than being restricted to sending only registers from the prover's message. 
In this definition, we denote by $\mathsf{M}'_i$ the message register generated by the verifier in round $i$. The full message register of the verifier in round $i$, denoted by $\mathsf{M}_i$, includes both $\mathsf{M}'_i$ and some registers received from the prover in the current and previous rounds.

In a later section, we provide a relatively ``simple'' construction of a general quantum interactive oracle proof, illustrating that this definition may be too permissive to yield meaningful or interesting results. For this reason, we subsequently introduce a more restricted definition where the $\mathsf{M}'_i$ registers are required to be trivial.

\begin{definition}[General quantum interactive oracle proof systems]\label{def:gqiop}
Let $m \in \mathbb{Z}_+$. 
An $m$-message general quantum interactive oracle proof system consists of an $m$-message quantum interactive proof system $\texttt{V}$ satisfying the same structure as in Definition~\ref{def:qip}, but with the following modifications to the register structure.

As before, in the $i$th round, for $i\in[h]$ where  $h = \lfloor m/2 + 1 \rfloor$, the verifier receives the quantum message register $\mathsf{W}_{i-1}$ from the prover. 
In addition, the verifier has in their possession  a register $\mathsf{W}'_{i-1}$ that stores (parts of) previously received messages from the prover, and a register $\mathsf{R}$ that is their private (reference) register. 
The register  $\mathsf{W}'_{i-1}$ decomposes as 
\[\mathsf{W}'_{i-1}=(\mathsf{W}'_{0,i-1},\mathsf{W}'_{1,i-1}, \ldots, \mathsf{W}'_{i-2,i-1})\;,\] where $\mathsf{W}'_{j,i-1}$ denotes the part of the prover's $j$th message (i.e., $\mathsf{W}_{j}$) that has so far not been sent back to the prover. 

For each $\mathsf{W}'_{j,i-1}$, the verifier sends a subregister $\mathsf{N}_{j,i}$ to the prover in the $i$th round, and keeps the remaining part as $\mathsf{W}'_{j,i}$. 
In addition, the verifier produces a quantum message register $\mathsf{M}'_i$ and a classical message register $\mathsf{J}_i$ . The verifier sends back the total message register $(\mathsf{M}_i, \mathsf{J}_i)$ to the prover, where
\[\mathsf{M}_i =
(\mathsf{M}'_i,(\mathsf{N}_{0,i},\mathsf{N}_{1,i}, \ldots, \mathsf{N}_{i-1,i})).\]

Finally, we define the following resource bounds of the protocol:
\begin{itemize}
    \item \textbf{Proof length:} The verifier $V$ has proof length $l : \mathbb{Z}_+ \rightarrow \mathbb{Z}_+$ if
    \[
    l(n) \geq \sum_{i \in [n]} l_i(n),
    \]
    where $l_i(n)$ is the proof length of $\texttt{V}_i$.

    \item \textbf{Query complexity:} The verifier $V$ has query complexity $q : \mathbb{Z}_+ \rightarrow \mathbb{Z}_+$ if
    \[
    q(n) \geq \sum_{i \in [n]} q_i(n),
    \]
    where $q_i(n)$ is the query complexity of $\texttt{V}_i$.
\end{itemize}

\end{definition}
\begin{remark}
Because $\mathbb{V}_{x,i}$ has witness register $ \mathsf{W}_{i-1},\mathsf{W}'_{i-1}$, any quantum register touched by gates in  $\mathbb{V}_{x,i}$ is going to be counted in the query complexity.
\end{remark}

\begin{definition}[Restricted quantum interactive oracle proof systems]\label{def:qiop}
An $m$-message \emph{restricted quantum interactive oracle proof system} consists of an $m$-message general quantum interactive oracle proof system such that for each $i$,  $\mathsf{M}'_i$ is a trivial quantum register. 
\end{definition}

\begin{definition}[Strong quantum interactive oracle proof systems]\label{def:sqiop}
An $m$-message \emph{strong quantum interactive oracle proof system} consists of an $m$-message quantum interactive oracle proof system satisfying that (using the notation in Definition \ref{def:qip} and Definition \ref{def:qiop}) 
for each $i \in [h]$, $\mathsf{M}'_i$ is trivial (no new quantum message sent by the verifier) and $\mathsf{N}_{i-1,i}=\mathsf{W}_{i-1}$ (all prover quantum messages are sent back to him). As a consequence, $\mathsf{M}_i=\mathsf{W}_{i-1}$, $\mathsf{W}'_{i-1}$ is always trivial, and $\mathbb{V}_{x,i}$ has witness register $ \mathsf{W}_{i-1}$ only. 
\end{definition}

\begin{definition}[Quantum interactive oracle proof systems for a promise problem]\label{def:qiop_}
Let $m \in \mathbb{Z}_+$ be the total number of messages. 
An $m$-message restricted quantum interactive oracle proof system (resp. general quantum interactive oracle proof system, strong  quantum interactive oracle proof system) for a promise problem $(S_{\text{yes}}, S_{\text{no}})$  satisfies the following, where $\omega(\mathbb{V}_x)$ is defined as in Definition~\ref{def:qiop} (resp. Definition~\ref{def:gqiop}, Definition~\ref{def:sqiop}). 
\begin{itemize}
\item \textnormal{Completeness:} There exists a constant $c \in (0,1]$ such that for all $x \in S_{\text{yes}}$, the maximum acceptance probability $\omega(\mathbb{V}_x) \geq c$. That is, there exists an honest prover strategy under which the verifier accepts every yes-instance with probability at least $c$.

\item \textnormal{Soundness:} There exists a constant $s \in [0, c)$ such that for all $x \in S_{\text{no}}$, the maximum acceptance probability $\omega(\mathbb{V}_x) \leq s$. That is, no prover strategy can cause the verifier to accept a no-instance with probability greater than $s$.
\end{itemize}
\end{definition}

We define three classes of quantum interactive oracle proof systems.
The class $\mathcal{QIOP}(r, l, q)$ consists of all promise problems that admit an $r$-message restricted quantum interactive oracle proof (qIOP) system with proof length bounded by $l(n)$ and query complexity bounded by $q(n)$. Here, the query complexity captures the total number of qubits accessed in the prover's message by the verifier across all rounds.

\begin{definition}[Quantum interactive oracle proofs]
Let $r,l,q:\Z_+ \rightarrow \Z_+$. Let $\mathcal{QIOP}(r, l, q)$ (resp. $\mathcal{GQIOP}(r, l, q)$, $\mathcal{SQIOP}(r, l, q)$) be the set of promise problem $(S_{yes}, S_{no})$ that has an $r$-message restricted (resp. general, strong) quantum interactive oracle proof system with proof length $l$ and query complexity $q$.
\end{definition}

Each of these models assumes a polynomial-time verifier and allows for a quantum prover that is computationally unbounded. The number of interaction rounds is bounded by $r(n)$, which is always taken to be constant. These complexity classes help characterize how much quantum information the verifier may access or process.

\section{Quantum IOPs with Shared EPR Pairs and Classical Communication}
\label{sec:qiopepr}

In our first construction,  the verifier prepares EPR pairs and sends one half of each to the prover, thereby establishing shared entanglement.

In this section, we proceed to prove the following theorem.

\begin{theorem}\label{thm:qpcp_tlp}
$\mathsf{QMA}\subseteq {\cal GQIOP}(O(1), \poly, O(1))$.     
\end{theorem}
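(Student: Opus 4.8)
\textbf{Overall plan.} The plan is to reduce $\QMA$ to the gap‑amplified Clifford‑Hamiltonian problem, and then build a constant‑round general qIOP in which the prover teleports (many copies of) the witness into EPR halves held by the verifier, and an ``IOP commitment'' consisting of an error‑correcting code together with a PCP of proximity lets the verifier correct the teleportation Pauli mask while reading only $O(1)$ bits of the prover's messages. Concretely, by Theorem~\ref{thm:clifford_ham} and Lemma~\ref{lem:c_ham_amp} it suffices to decide, for a $5$‑local Clifford‑Hamiltonian $H=\tfrac1m\sum_i H_i$ on $n$ qubits, whether $\lambda_{\max}(H')\ge 1-2^{-s(n)}$ or $\le 2^{-s(n)}$, where $H'=(1-2H)^{\otimes N}$, $N=\poly(n)$. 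Write $H'=\E_{\vec s\in[m]^N}[O_{\vec s}]$ with $O_{\vec s}=\bigotimes_{j\in[N]}(1-2H_{s_j})$, an observable on the $nN$‑qubit space that one measures as follows: sampling $\vec s$, for each copy $j\in[N]$ and $\ell\in[5]$ measure the Pauli observable $P^{(j)}_\ell:=C_{s_j}^\dagger Z_\ell C_{s_j}=\pm X(a^{(j)}_\ell)Z(b^{(j)}_\ell)$ (a $5$‑qubit Pauli with known sign, computable from $\vec s$), call copy $j$ ``bad'' iff all five outcomes are the $\ket{0^5}$‑outcome, and output $(-1)^{\#\{\text{bad }j\}}$; measuring $O_{\vec s}$ on a state $\tilde\rho$ and accepting iff the outcome is $+1$ accepts with probability $\tfrac12(1+\tr(H'\tilde\rho))$.

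\textbf{The protocol.} In the first round the verifier uses the one feature of general qIOPs we need — that $\mathsf{M}'_i$ may be quantum — to send the prover $nN$ halves of EPR pairs, keeping the other halves in its private register $\mathsf{R}$. The honest prover teleports $N$ copies of the witness into these halves, obtaining a classical Pauli byproduct $(p,q)\in\F_2^{2nN}$, and sends as its commitment the string $y=\mathcal C(p,q)$, where $\mathcal C$ is a fixed, efficiently decodable linear code of constant rate and relative distance $>2\delta_0$ ($\delta_0$ the PCPP proximity constant of Theorem~\ref{thm:pcpp}); the verifier stores $y$ without reading it. The verifier then samples $\vec s$, measures the $5N$ observables $P^{(j)}_\ell$ on $\mathsf R$, recording outcomes $\vec r$, and sends the classical pair $(\vec s,\vec r)$ to the prover. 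Let $F_{\vec s,\vec r}(p,q)$ be the bit computed as above from the \emph{corrected} outcomes $r^{(j)}_\ell\oplus\big(p\cdot b^{(j)}_\ell+q\cdot a^{(j)}_\ell\big)$ (together with the known signs of the $P^{(j)}_\ell$); this is a $\poly(n)$‑size circuit in $(p,q)$. The prover replies with $v=F_{\vec s,\vec r}(p,q)$ and a PCPP proof $\pi$ for the statement ``$y$ is a $\mathcal C$‑codeword and $F_{\vec s,\vec r}(\mathrm{decode}(y))=v$''; the latter is a $\poly(n)$‑size circuit predicate in the oracle $y$, so this is an instance of \textsc{Circuit Value} and Theorem~\ref{thm:pcpp} applies (after $O(1)$ repetitions, to drive the soundness error below any target constant). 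The verifier runs the PCPP verifier on $y\circ\pi$, reading $O(1)$ bits of $y$ (from round two) and of $\pi$ (from round three), and accepts iff the PCPP accepts and $v=0$. Since measuring its own register $\mathsf R$ is not charged to the query complexity, $q(n)=O(1)$; all messages and the verifier's circuit (which measures $\poly(n)$ Paulis on $\poly(n)$ qubits and runs an $O(\log n)$‑randomness PCPP verifier) have size $\poly(n)$; and the interaction uses a constant number of rounds, as required by Definition~\ref{def:gqiop}.

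\textbf{Analysis.} \emph{Completeness:} the honest prover is efficient given copies of the witness, PCPP completeness is perfect, and since the min‑energy state of $H$ has $\langle\psi|H|\psi\rangle\le a=2^{-p(n)}$, one gets $\Pr[\#\text{bad}=0]=(1-\langle\psi|H|\psi\rangle)^N\ge 1-\poly(n)\cdot 2^{-p(n)}$, so $v=0$ and the verifier accepts except with probability $\poly(n)\cdot 2^{-p(n)}$. \emph{Soundness} (for a NO instance): by the distance of $\mathcal C$, the round‑two message $y$ is either $\delta_0$‑far from all codewords — in which case, for every $(\vec s,\vec r,v)$, it is $\delta_0$‑far from $\{$codewords accepted by the circuit predicate$\}$ and the amplified PCPP rejects with probability close to $1$ regardless of $\pi$ — or it has a unique nearest codeword $\mathcal C(w^*)$, $w^*=(p^*,q^*)$, \emph{determined by $y$ alone}, hence fixed before $\vec s,\vec r$ are revealed. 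In the latter case, whenever the PCPP accepts with non‑negligible probability one must have $F_{\vec s,\vec r}(p^*,q^*)=v$. Since $X(a)Z(b)$ and $X(p^*)Z(q^*)$ (anti)commute according to the symplectic form $a\cdot q^*+b\cdot p^*$, measuring the $P^{(j)}_\ell$ on the state $\xi$ held in $\mathsf R$ produces corrected outcomes distributed exactly as the measurements of $C_{s_j}^\dagger Z_\ell C_{s_j}$ on $\tilde\rho:=X(p^*)Z(q^*)\,\xi\,(X(p^*)Z(q^*))^\dagger$; hence $v=F_{\vec s,\vec r}(p^*,q^*)$ equals the $\pm1$ outcome of $O_{\vec s}$ measured on $\tilde\rho$, and
\[
\Pr[\text{accept}] \;\le\; s_{\mathrm{PCPP}} + \Pr_{\vec s}\bigl[\,O_{\vec s}\text{ on }\tilde\rho\text{ outputs }{+}1\,\bigr] \;=\; s_{\mathrm{PCPP}} + \tfrac{1}{2}\bigl(1+\tr(H'\tilde\rho)\bigr) \;\le\; s_{\mathrm{PCPP}} + \tfrac{1}{2}\bigl(1+2^{-s(n)}\bigr),
\]
using $\tr(H'\tilde\rho)\le\lambda_{\max}(H')\le 2^{-s(n)}$. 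Choosing the PCPP soundness constant $s_{\mathrm{PCPP}}$ small (by $O(1)$ repetitions) makes this strictly below the completeness bound, giving a constant completeness–soundness gap; this proves $\QMA\subseteq\mathcal{GQIOP}(O(1),\poly,O(1))$.

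\textbf{Main obstacle.} The delicate point is soundness. The verifier must reveal both the sampled term $\vec s$ and its private outcomes $\vec r$ to the prover (so that the prover can produce a PCPP for the instance‑dependent predicate), and a teleporting prover can always later claim an arbitrary ``byproduct'' $(p,q)$; the danger is that, seeing $\vec s$ and $\vec r$, the prover retroactively picks a favorable mask. This is exactly what the error‑correcting‑code commitment plus PCP of proximity neutralizes: the codeword $y$ is sent \emph{before} $\vec s,\vec r$ and, by unique decoding, pins down a single $w^*=(p^*,q^*)$ independent of the later‑revealed instance, after which PCPP soundness forces $v$ to the honest value $F_{\vec s,\vec r}(p^*,q^*)$ for that fixed mask, and the amplified‑Hamiltonian soundness does the rest. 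A secondary technical point, and the reason the verifier measures each Clifford term directly as a product of signed Pauli observables rather than first rewriting $H'$ as an XZ‑Hamiltonian, is that the latter rewriting blows up the $\ell_1$‑norm exponentially and would collapse the amplified gap.
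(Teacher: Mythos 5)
Your proposal is correct and follows essentially the same route as the paper's proof: reduce to the gap-amplified Clifford-Hamiltonian, have the prover teleport $N$ witness copies over verifier-supplied EPR halves, commit to the teleportation pad with a constant-distance code, and use a PCPP (the paper packages this code-plus-PCPP step as an ``interactive oracle commitment'') so the verifier can recover the corrected parity bit with $O(1)$ queries, with soundness resting on unique decoding of the commitment and $\lambda_{\max}\big((1-2H)^{\otimes N}\big)\le 2^{-s(n)}$. The only differences are presentational (you inline the commitment soundness and condition on the committed string rather than carrying the paper's purified prover measurement through a global averaged state), not substantive.
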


\subsection{Commitment in the Interactive Oracle Proof Model}

In the interactive oracle proof setting, the verifier leverages the prover's computational power to evaluate certain functions of the input.
In our specific setting, in will be the case that the prover holds a string that the verifier cannot directly access, while the verifier holds a circuit unknown to the prover. The combination of this string and the circuit determines whether the input belongs to the language.
The trivial solution is problematic: if the verifier simply sends the circuit to the prover, the prover can cheat by evaluating it on a different input rather than the one initially given.
An interactive oracle commitment provides a solution by allowing the prover to commit to their input at the outset. Later, the verifier sends a circuit, and the commitment ensures that the prover evaluates it honestly on the committed input. 

In the following definition, the first requirement enforces an interactive oracle-style interaction, while the second requires the prover to be capable of committing to any string and performing any computation.
(Note that the following definition refers to the classical setting.)

\begin{definition}[Interactive Oracle Commitment]\label{def:ioc}
An interactive oracle commitment scheme is specified by a classical probabilistic polynomial-time interactive machine $\texttt{V}$ and a function $D:\F_2^*\rightarrow \F_2^*$ satisfying the following:
\begin{enumerate}
\item The verifier's query complexity is a constant. 
\item There exists an honest polynomial time prover $\texttt{P}$, such that for any $n$, any $s\in \F_2^n$, for any circuit $C$ of size  $\poly(n)$ with a single-bit output, any $\eps\in (0, 1)$, at the end of the interaction between $ \texttt{P}(s)$ and $\texttt{V}({C}, \eps)$, the verifier outputs ${C}({s})$ with probability $1$. Moreover, the communication complexity between $\texttt{P}$ and $\texttt{V}$ is $\poly(n)$.
\item \textnormal{Soundness:} For any polynomial size circuit ${C}$ with a single bit output, any $\eps\in (0, 1)$, and any prover $\texttt{P}$, let $m$ be the prover's first message, at the end of the interaction, 
\begin{align*}
\Pr[\text{$\texttt{V}({C}, \eps)$ outputs ${C}({D(m)})$ or rejects}] \geq 1-\eps\;.
\end{align*} 
\end{enumerate}
\end{definition}
\begin{remark}
For the honest prover, the first message $m$ is the commitment and satisfies $D(m)=s$. Moreover, $Im(D)=\F_2^*$. 
\end{remark}

We will construct an interactive oracle commitment (IOC) protocol using a PCP of Proximity (PCPP) for the \textsc{Circuit Value} problem and a family of error-correcting codes with polynomial rate, constant distance, and efficient encoding and decoding. 

First, we recall a fundamental result about the existence of error correction codes. These codes form the basis of the prover's commitment in our protocol.

\begin{definition}[Codeword tester for error-correcting code]
\label{def:ecc_tester}
Let $C \subseteq \mathbb{F}_2^n$ be a binary error-correcting code. A deterministic \emph{tester} for $C$ is a classical algorithm 
$\texttt{T} : \mathbb{F}_2^n \to \{0, 1\}$
such that
\[
\texttt{T}(x) = 
\begin{cases}
1 & \text{if } x \in C, \\
0 & \text{otherwise}.
\end{cases}
\]
That is, $\texttt{T}$ accepts a string if and only if it is a valid codeword in $C$.
\end{definition}

\begin{theorem}[Error correction codes, \cite{556667}]
\label{thm:ltc}
There exist families of linear error-correcting codes with polynomial rate, constant relative distance, and deterministic polynomial-time encoding and decoding algorithms, along with a deterministic polynomial-time tester for checking codeword membership.
\end{theorem}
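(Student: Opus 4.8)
The plan is to instantiate the claimed family with any explicit construction of asymptotically good binary linear codes, and then to observe that each of the four required features --- linearity, good parameters, efficient encoding/decoding, and an efficient membership tester --- is already available for such a construction. Concretely, I would invoke the linear-time encodable and decodable codes of \cite{556667} (alternatively, Forney-style concatenated codes, or Sipser--Spielman expander codes, both of which also work at the price of polynomial rather than linear time). For message length $k$ these yield a linear code $C\subseteq \F_2^n$ with $n=O(k)$, constant relative minimum distance $\delta_0>0$, together with a generator matrix $G\in\F_2^{n\times k}$ and a parity-check matrix $H\in\F_2^{(n-k)\times n}$ that are explicit and computable in time $\poly(k)$. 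In particular the rate $k/n$ is a positive constant, which is stronger than the ``polynomial rate'' asked for in the statement.

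For encoding, on input $x\in\F_2^k$ the algorithm outputs $Gx$; this is a deterministic $O(nk)$-time computation (and linear time for the codes of \cite{556667}). For decoding, the constructions above come with a deterministic procedure that, given any $y\in\F_2^n$ at Hamming distance at most $\alpha_0 n$ from a codeword for some constant $\alpha_0>0$ (e.g.\ $\alpha_0<\delta_0/2$, so the nearby codeword is unique), recovers the underlying message $x$ in time $\poly(n)$ --- this is exactly the guarantee invoked later when the code serves as the prover's commitment. If a self-contained route is preferred, concatenating an outer Reed--Solomon code with a suitable inner code and running Forney's generalized minimum distance decoder gives a deterministic polynomial-time unique decoder up to a constant fraction of errors.

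For the codeword tester of Definition~\ref{def:ecc_tester}, I would exploit linearity directly: having computed $H$ once in time $\poly(k)$, set $\texttt{T}(x)=1$ if and only if $Hx=0$ over $\F_2$. Since $C=\ker H$ by construction, $\texttt{T}$ accepts precisely the codewords, and each evaluation costs $O(n^2)$ time, so $\texttt{T}$ is a deterministic polynomial-time tester.

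The statement is essentially a repackaging of classical coding-theory results, so there is no genuine obstacle; the only point deserving care is to choose a single construction that \emph{simultaneously} meets all the requirements and is explicit enough that $G$, $H$, and hence the encoder, the decoder, and the tester can all be produced in polynomial time from the block length. The codes of \cite{556667} satisfy all of this and are the cleanest choice.
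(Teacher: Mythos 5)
Your proposal is correct and matches the paper, which states this theorem as a citation to \cite{556667} (Spielman's linear-time encodable and decodable codes) without providing its own proof; instantiating with those codes and using the parity-check matrix as the membership tester is exactly the intended justification. The only point worth noting is that the decoding guarantee later used in the commitment protocol is unique decoding up to a constant fraction of errors, which you correctly address.
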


We now describe the structure of the prover in the interactive oracle commitment protocol. The prover uses a codeword encoding of its message to commit to a value and later proves consistency of this value through interaction with the verifier.
\begin{definition}[Construction of the Prover in the Interactive Oracle Commitment Protocol]
\label{def:ioc_hp}
\hfill
\begin{itemize}
\item \textnormal{The prover's first step:} The prover sends the verifier a string $E(s)$, where $E$ is the encoding function of the code from Theorem~\ref{thm:ltc}.
\item \textnormal{The verifier's first step:} The verifier sends the prover a description of a Boolean circuit $C$.
\item \textnormal{The prover's second step:} The prover responds with a string $C(s) \circ w$, where $w$ is a witness such that the verifier in Theorem~\ref{thm:pcpp} accepts the oracle $E(s) \circ w$ when invoked on input $x = C(D(\cdot)) + C(s) + 1$.
\item \textnormal{The verifier's second step:} The verifier decides whether to accept, and if so, produces an output.
\end{itemize}
\end{definition}

Next, we describe the behavior of the verifier in the interactive oracle commitment protocol. The verifier checks the proximity of the prover's message to a codeword and verifies the correctness of the claimed output using a PCPP verifier.
\begin{definition}[Construction of the Verifier in the Interactive Oracle Commitment Protocol]
\label{def:ioc_verifier}
\hfill
\begin{itemize}
\item \textnormal{The prover's first step:} The prover sends the verifier a string $m$.
\item \textnormal{The verifier's first step:} The verifier sends the prover a description of a Boolean circuit $C$.
\item \textnormal{The prover's second step:} The prover sends the verifier a string $r \circ w$.
\item \textnormal{The verifier's second step:} The verifier accepts and outputs $r$ if and only if the verifier from Theorem~\ref{thm:pcpp}, when run a constant number of times, accepts each time on input
\[
x = \big(C(D(\cdot)) + r + 1\big) \wedge T(\cdot),
\]
with implicit input $y = m$ and proof $\pi = w$, where $D$ is the decoding circuit and $T$ is the  code tester  from Theorem~\ref{thm:ltc}
\end{itemize}
\end{definition}

The following lemma states that the protocol defined above satisfies the requirements of an interactive oracle commitment. In particular, the verifier satisfies completeness and soundness properties.
\begin{lemma}
The verifier described in Definition~\ref{def:ioc_verifier} satisfies the requirements of an interactive oracle commitment.
\end{lemma}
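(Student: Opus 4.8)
The plan is to verify the three properties of Definition~\ref{def:ioc} one by one for the verifier $\texttt V$ of Definition~\ref{def:ioc_verifier} together with the decoding function $D$ from Theorem~\ref{thm:ltc}, using the honest prover of Definition~\ref{def:ioc_hp}. The query-complexity requirement is immediate: the verifier's only access to the (long) oracle string $m$ is through the constantly-many invocations of the PCPP verifier of Theorem~\ref{thm:pcpp}, each of which reads $O(1)$ bits, so the total query complexity is $O(1)$, independent of $n$. The remaining work is completeness and soundness.

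For completeness, fix $s \in \F_2^n$, a circuit $C$, and $\eps \in (0,1)$. The honest prover sends $m = E(s)$, and after receiving $C$ replies with $r \circ w = C(s) \circ w$. I need to check that the string $m = E(s)$, viewed as the implicit input $y$ to the PCPP verifier of Theorem~\ref{thm:pcpp}, together with the proof $\pi = w$, is accepted with probability $1$ on input $x = \big(C(D(\cdot)) + r + 1\big) \wedge T(\cdot)$. Observe that $E(s)$ is a valid codeword, so $T(E(s)) = 1$; and since $D$ is the decoding circuit, $D(E(s)) = s$, hence $C(D(E(s))) + r + 1 = C(s) + C(s) + 1 = 1$. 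Therefore $E(s)$ lies in the language $x^{-1}(1)$, i.e. $E(s) \in \textsc{CktVal}(x)$ (it satisfies the circuit $x$), so by perfect completeness of the PCPP there is a proof $w$ accepted with probability $1$, and the honest prover from Definition~\ref{def:ioc_hp} is defined to send exactly such a $w$ (modulo the cosmetic difference that Definition~\ref{def:ioc_hp} writes the circuit without the explicit $\wedge T(\cdot)$ conjunct; I would just align the two descriptions, noting the tester conjunct is vacuously satisfied on a codeword). Hence the verifier accepts and outputs $r = C(s) = C(D(m))$ with probability $1$. Communication is $\poly(n)$ since $E(s)$, $C(s)\circ w$ and the PCPP proof all have polynomial length by Theorems~\ref{thm:ltc} and~\ref{thm:pcpp}.

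For soundness, fix any (possibly cheating) prover, let $m$ be its first message and $r \circ w$ its second. I must show $\Pr[\texttt V(C,\eps)\text{ outputs }C(D(m))\text{ or rejects}] \ge 1-\eps$. Suppose the verifier accepts and outputs some $r \ne C(D(m))$; I want to bound the probability of this bad event by $\eps$. The key point is a case analysis based on whether $m$ is close to a codeword. Decompose: let $m^*$ be the (unique, if it exists) codeword within relative distance $< \delta_0/2$ of $m$ — equivalently, $D(m) = D(m^*)$ is the decoded value. If $r \ne C(D(m))$, then $C(D(m)) + r + 1 = 0$, so the circuit $x = \big(C(D(\cdot)) + r + 1\big) \wedge T(\cdot)$ evaluates to $0$ on every codeword: its language $\textsc{CktVal}(x)$ contains no codewords at all. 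But $m$ must then be $\delta$-far (for the constant proximity parameter $\delta$ of Theorem~\ref{thm:pcpp}, taken below $\delta_0$) from $\textsc{CktVal}(x)$ — either $m$ is far from all codewords, in which case $T(\cdot)$ rejects it and it is far from the language; or $m$ is close to the codeword $m^*$, but $m^* \notin \textsc{CktVal}(x)$ and any other satisfying string must be far from $m^*$ and hence from $m$. Either way $m$ is $\delta$-far from $L(x)$, so by the soundness of the PCPP each invocation accepts with probability strictly less than $s = 1-\Omega(\delta)$, and repeating it a constant number $k$ times drives the acceptance probability below $(1-\Omega(\delta))^k \le \eps$ for $k = O(\log(1/\eps))$; since $\eps$ is given to the verifier, it chooses $k$ accordingly.

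The main obstacle I anticipate is the soundness case analysis around the proximity parameter: one has to argue carefully that whenever the verifier outputs a wrong value $r$, the implicit input $m$ really is $\delta$-far from the language $L(x)$ of the PCPP, which requires (i) fixing the relationship between the code's distance, the decoding radius of $D$, and the PCPP proximity parameter $\delta$ so that "close to a codeword" and "far from a codeword" are exhaustive and the middle regime where $T$ might be ambiguous is handled by the tester conjunct, and (ii) checking that $\textsc{CktVal}(x)$ truly contains no string within distance $\delta$ of $m$ in the bad case — in particular that the tester $T$ forces any satisfying assignment to be an exact codeword, so there is no "almost-codeword" that satisfies $x$ and sits near a non-codeword $m$. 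I would isolate this as a short sub-lemma: "if $r \ne C(D(m))$ then $m$ is $\delta$-far from $L(x)$", and prove it by the two-case split above. Everything else (query complexity, perfect completeness, the communication and running-time bounds) is routine bookkeeping from the cited theorems.
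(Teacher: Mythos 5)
Your proof is correct and follows essentially the same route as the paper's: the paper's soundness case split on whether $\mathrm{dist}(m,S)\le\delta$ for $S=\{t : C(D(t))=r \wedge T(t)=1\}$ is just the contrapositive of your sub-lemma that $r\neq C(D(m))$ forces $m$ to be $\delta$-far from the PCPP language, and both arguments then combine the repeated PCPP soundness with unique decoding of the code (an implicit assumption on $D$ shared by the paper). One intermediate sentence of yours is false --- when $r\neq C(D(m))$ the circuit $x=(C(D(\cdot))+r+1)\wedge T(\cdot)$ does \emph{not} vanish on every codeword, since codewords $E(s')$ with $C(s')=r$ still satisfy it --- but this claim is never used: your subsequent two-case split (either $m$ is far from all codewords, or it is close to $E(D(m))\notin L(x)$ and hence, by the code's distance, far from every satisfying codeword) is exactly the argument needed.
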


\begin{proof}
By Definition~\ref{def:ioc_verifier}, Theorem~\ref{thm:ltc}, and Theorem~\ref{thm:pcpp}, the verifier's query complexity is constant.

\textbf{Completeness:}  
Using Theorem~\ref{thm:ltc} and Theorem~\ref{thm:pcpp}, if the prover is honest and provides a valid encoding and witness, the verifier will accept with probability $1$. Moreover, the communication complexity between the prover and verifier is polynomial in the input size.

\textbf{Soundness:}  
Let $S = \{t ~|~ C(D(t)) = r \wedge T(t) = 1\}$.  
If $\text{dist}(m, S) > \delta$, then $\Pr[ \texttt{V}(C, \varepsilon) \text{ accepts}] \leq \varepsilon$,
and therefore,
\[
\Pr[ \texttt{V}(C, \varepsilon) \text{ either outputs } C(D(m)) \text{ or rejects}] \geq 1 - \varepsilon.
\]
If $\text{dist}(m, S) \leq \delta$,
then there exists $s$ such that $C(s) = r$ and $\text{dist}(m, E(s)) \leq \delta$. Hence $D(m) = s$ and $C(D(m)) = r$, and the verifier will either accept with correct output or reject. Thus,
\[
\Pr[\text{the verifier } \texttt{V}(C, \varepsilon) \text{ either outputs } C(D(m)) \text{ or rejects}] = 1.
\]
\end{proof}

\subsection{Quantum IOPs with shared EPR pairs}

\begin{theorem}[Restatement of Theorem~\ref{thm:qpcp_tlp}]
$\QMA\subseteq {\cal GQIOP}(3, \poly, O(1))$.
\end{theorem}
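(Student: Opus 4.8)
The plan is to combine three ingredients: (1) the gap-amplified Clifford-Hamiltonian (Lemma~\ref{lem:c_ham_amp}), which reduces any $\QMA$ instance to estimating, up to a constant gap, the maximal eigenvalue of a Hamiltonian of the form $H'=(1-2H)^{\otimes N}$ whose terms are (non-local) Pauli observables, with an efficiently sampleable decomposition; (2) quantum teleportation of the witness; and (3) the interactive oracle commitment of the previous subsection, used to let the verifier evaluate a single bit of a global linear function of a classical one-time pad while reading only $O(1)$ bits. Concretely, I would first argue that it suffices to handle the XZ-Hamiltonian subroutine: each amplified term $\bigotimes_j(1-2H_{s_j})$ is a product of $5$-local Clifford projections, and as explained in the technical overview, each Clifford projection $C^\dagger(\ketbra00)^{\otimes 5}C$ decomposes into a product of projections onto $+1$-eigenspaces of Pauli observables, so a sampled amplified term corresponds (after choosing one of the constantly many factors uniformly and renormalizing the expectation) to measuring a single Pauli observable $X(a)Z(b)$ with $a\wedge b=0$ on the witness and checking whether the outcome is $+1$. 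The acceptance probability of ``sample a random such Pauli observable, output $+1$ if it returns $+1$'' is an affine function of $\langle\psi|H'|\psi\rangle$ with constant slope, so a constant spectral gap for $H'$ becomes a constant completeness–soundness gap.

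Next I would describe the $3$-message ${\cal GQIOP}$ protocol. In the first round (the only round with quantum communication), the verifier prepares $\poly(n)$ EPR pairs and sends one half of each to the prover via the $\mathsf{M}'_1$ register (legal in the \emph{general} model). The honest prover holds the witness $|\psi\rangle$ (polynomially many qubits, as needed after amplification — actually a single copy of the $\QMA$ witness suffices since $H'$ acts on one copy and the amplified terms are sampled, not all measured; the prover teleports that one copy), teleports $|\psi\rangle$ into the verifier's halves of the EPR pairs, obtaining a classical Pauli correction $(p,q)\in\F_2^n\times\F_2^n$, and then uses the interactive oracle commitment scheme to \emph{commit} to $(p,q)$: its next message is $E(p,q)$ for the error-correcting code $E$ of Theorem~\ref{thm:ltc}. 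After receiving the commitment, the verifier samples the XZ term $(a_i,b_i)$ as above, measures $X(a_i)Z(b_i)$ on its (teleported) copy of the witness to get a bit $r$, and then sends the prover the circuit $C$ computing $(p,q)\mapsto p\cdot b_i+q\cdot a_i$ (this is where the IOC's ``send a circuit after the commitment'' structure is essential: $(a_i,b_i)$ is revealed only after $(p,q)$ is bound). By the IOC completeness the verifier then learns the correct value $v=p\cdot b_i+q\cdot a_i$, querying only $O(1)$ bits of the prover's commitment and proximity-proof messages, and accepts iff $r+v=0$. The total query complexity is $O(1)$ (constant for the Pauli measurement on the verifier's own private register — which is \emph{not} counted — plus constant from the IOC verifier reading the commitment and PCPP proof), the communication is $\poly(n)$, and the number of messages is a constant; folding the IOC's two prover-messages into the teleportation round gives $3$ messages total.

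For soundness, I would argue as follows. Fix a malicious prover and look at its first (quantum) message together with the state it holds: after the verifier measures its EPR halves in the Bell basis, the joint post-measurement state is, for every Pauli outcome $(p,q)$, some (sub-normalized) state $\rho_{p,q}$ on the prover's side; the prover's residual power is exactly to produce, from $\rho_{p,q}$ and the later classical interaction, a commitment string $m$ and a PCPP proof. Let $(\hat p,\hat q)=D(m)$ be the decoded pad. By the IOC soundness clause, for the circuit $C$ the verifier later sends, with probability $\ge 1-\eps$ the verifier either outputs $C(\hat p,\hat q)=\hat p\cdot b_i+\hat q\cdot a_i$ or rejects; so up to $\eps$ we may assume the verifier's correction bit is the \emph{fixed} (string-$m$-determined) value $\hat p\cdot b_i+\hat q\cdot a_i$, independent of $i$ except through the honest linear formula. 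Conditioned on the Bell outcome $(p,q)$, the verifier's teleported state equals $X(p)Z(q)|\phi\rangle$ for the appropriate $|\phi\rangle$ reconstructed from $\rho_{p,q}$, so measuring $X(a_i)Z(b_i)$ and XORing with $\hat p\cdot b_i+\hat q\cdot a_i$ is the same as measuring $X(a_i)Z(b_i)$ on $(X(\hat p+p)Z(\hat q+q))|\phi\rangle$, i.e.\ on the state $X(\hat p+p)Z(\hat q+q)|\phi\rangle$ after undoing what the prover \emph{claimed}. Averaging over $i$, the acceptance probability is an affine function of $\langle\Phi|H'|\Phi\rangle$ where $|\Phi\rangle$ is the (normalized, averaged over $(p,q)$) effective witness the prover is committed to; since $H'$ has maximal eigenvalue $\le 2^{-s(n)}$ on NO-instances, this is bounded by $s+O(\eps)+o(1)<c$ for a suitable constant, as desired. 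The main obstacle — and the step I would treat most carefully — is precisely this soundness reduction: making rigorous that the IOC ``binds'' the pad \emph{before} the XZ term is chosen, so that the prover cannot adapt $(p,q)$ to $(a_i,b_i)$, and checking that the resulting effective state is a genuine (normalized) quantum state on which $H'$ can be evaluated, including correctly accounting for the constant loss from amplifying a single sampled Clifford/XZ factor and from the $\eps$ soundness slack of the commitment.
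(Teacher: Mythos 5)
Your high-level skeleton --- gap-amplified Clifford Hamiltonian, teleportation of the witness through verifier-supplied EPR halves, and an interactive oracle commitment that binds the teleportation pad \emph{before} the verifier reveals which correction it needs --- is the same as the paper's. However, there is a genuine gap in your energy-estimation step, and it is exactly the pitfall the paper warns about. A sampled amplified term $\bigotimes_{j\in[N]}(1-2H_{s_j})$ is a tensor product over $N=q^2(n)$ blocks (not ``constantly many factors''), and each factor is a shifted Clifford projection, not a Pauli observable. If, as you propose, the verifier measures a single Pauli observable drawn from the decomposition of this term and accepts on outcome $+1$, then the acceptance probability is affine in $\langle\psi|H'|\psi\rangle$ with slope $1/\|H'\|_{\ell_1}$ (the sum of absolute values of Pauli coefficients), not with constant slope: each factor $1-2H_i$ already has Pauli $\ell_1$ norm strictly larger than $1$, so $\|(1-2H)^{\otimes N}\|_{\ell_1}$ is exponential in $N$ and the completeness--soundness gap collapses. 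This is precisely why the paper refuses to pass to an XZ-Hamiltonian after amplification. Your parenthetical ``a single copy of the QMA witness suffices'' makes the problem unavoidable even without the $\ell_1$ issue: measuring one sampled factor $1-2H_{s_j}$ on one copy distinguishes YES from NO instances only up to $O(1/q(n))$, an inverse-polynomial gap, since the unamplified promise gap of Theorem~\ref{thm:clifford_ham} is $1/q(n)$.

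The paper's protocol avoids this by having the prover teleport all $N$ copies of the witness; the verifier measures, on its private EPR halves (not counted as queries), all $5N$ Pauli observables $O_{l_i,j}$ (which in general contain $Y$ components, per Lemma~\ref{lem:clifford_pauli}, not just $X(a)Z(b)$ with disjoint support), and the circuit sent into the interactive oracle commitment computes the parity over the $N$ blocks of the AND of the five pad-corrected outcomes in each block. It is exactly because the needed correction is this \emph{global} function of the entire $2Nn$-bit pad --- rather than a single inner-product bit --- that the PCPP-based commitment machinery is required, and it is this XOR-of-ANDs structure (Lemmas~\ref{lem:same_dist:tel_qiop} and \ref{lem:xor_r:tel_qiop}) that converts the amplified spectral gap of Lemma~\ref{lem:c_ham_amp} into a constant completeness--soundness gap of roughly $1$ versus $1/2+\eps$. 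Your soundness sketch inherits the same flaw, since it again asserts an affine relation with constant slope; separately, the binding step you flag as delicate is handled in the paper not by conditioning on Bell outcomes factor-by-factor but by absorbing the prover's commitment measurement into a single trace-one operator $\rho$ and bounding $\tr\big[\tfrac{I+(1-2H)^{\otimes N}}{2}\rho\big]$, which sidesteps the (unjustified) claim that the post-measurement state has the product form $X(p)Z(q)\ket{\phi}$.
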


We begin by defining the verifier and prover used in the proof of Theorem~\ref{thm:qpcp_tlp}. Following this, we proceed to prove the theorem.

A naive approach would be for the prover to teleport $N$ copies of the witness to the verifier, sending along the corresponding one-time pads required for the teleportation. The verifier would then measure their half of the EPR pairs, apply the appropriate corrections using the received one-time pads, and decide whether to accept based on the result. Note that the verifier's half of the EPR pairs is stored in private registers, and thus does not contribute to the query complexity.

In the construction that follows, the prover and verifier follow a protocol inspired by this naive approach. However, instead of directly transmitting the one-time pads, the prover commits to them using a commitment scheme. The verifier then uses their ability to extract functions of the committed value to recover the necessary information for correcting the measurement outcomes.

At the start of the protocol, the verifier holds a register $\mathsf{R}'$ and the prover holds a register $\mathsf{R}$, such that the joint quantum state on $\mathsf{RR}'$ consists of shared EPR pairs.

\begin{definition}[Honest prover strategy for Theorem~\ref{thm:qpcp_tlp}]
\label{def:qpcp_tlp_hp}
Let $p, q$ be as in Theorem~\ref{thm:clifford_ham}, and let $s, N$ be as in Lemma~\ref{lem:c_ham_amp}.  
Let $H = \frac{1}{m} \sum_{i \in [m]} H_i \in \mathbb{C}^{2^n} \otimes \mathbb{C}^{2^n}$ be a $5$-local Clifford Hamiltonian with $m = \mathrm{poly}(n)$, such that either $\lambda_{\min}(H) \leq 2^{-p(n)}$ or $\lambda_{\min}(H) \geq 1/q(n)$.  

Let $\mathsf{R}\mathsf{R}'$ be quantum registers initialized to the maximally entangled state:
\[
\frac{1}{\sqrt{2^{nN}}}\sum_{i=0}^{2^{nN}-1} \ket{i}_{\mathsf{R}} \ket{i}_{\mathsf{R}'}\;.
\]

\begin{itemize}
\item The prover holds registers $\mathsf{W}_i = (\mathsf{W}_{i,1}, \mathsf{W}_{i,2}, \ldots, \mathsf{W}_{i,n})$ for $i \in [N]$. Let $\ket{\phi}$ be the eigenstate corresponding to the maximum eigenvalue of $1 - 2H$. The prover prepares $N$ copies of $\ket{\phi}$ and stores them in the registers $\mathsf{W}_i$.
  
\item For each $i \in [N]$ and $j \in [n]$, the prover measures the joint system $\mathsf{W}_{i,j} \mathsf{R}_{(i-1)*n + j}$ in the EPR basis as defined in Definition~\ref{def:epr_basis}, and stores the measurement result in $(s_{0, i, j}, s_{1, i, j})$.

\item Let 
\[
s = \left(s_{k, i, j}\right)_{(k, i, j) \in \{0, 1\} \times [N] \times [n]}.
\]
The prover then engages in the interactive oracle commitment protocol defined in Definition~\ref{def:ioc}, using $s$ as input.
\end{itemize}
\end{definition}

\begin{definition}[Verifier strategy for Theorem~\ref{thm:qpcp_tlp}]
\label{def:qpcp_tlp_v}
Under the same setup assumptions as in Definition~\ref{def:qpcp_tlp_hp}, the verifier performs the following actions:
\begin{itemize}
\item For each $i \in [N]$, the verifier samples $l_i \in [m]$ independently and uniformly at random.

\item For each $i \in [m]$, let $H_i = \left(C_i^\dagger (\ketbra{0}{0})^{\otimes 5} C_i\right) \otimes I$. For $j \in [5]$, define
\[
O_{i,j} = \left(C_i^\dagger \cdot (I^{\otimes j-1} \otimes Z \otimes I^{\otimes (5-j)}) \cdot C_i \right) \otimes I,
\]
and let
\[
O_{i,j} = (-1)^{d_{i,j}} X(a_{i,j}) Z(b_{i,j}) Y(c_{i,j})
\]
be its Pauli representation as in Lemma~\ref{lem:clifford_pauli}.

\item For each $i \in [N]$, the verifier measures the register
\[
\mathsf{R'}_i = (\mathsf{R'}_{(i-1)*n + 1}, \ldots, \mathsf{R'}_{(i-1)*n + n})
\]
using the observables $O_{l_i, j}$ for $j = 1$ to $5$, and stores the measurement outcomes in $r_{i, j}$.

\item The verifier constructs the Boolean circuit:
\[
C(x) = \bigoplus_{i \in [N]} \Big(1 \oplus \bigvee_{j \in [5]} \big(r_{l_i,j} \oplus a_{l_i,j} x_{0,i} \oplus b_{l_i,j} x_{1,i} \oplus c_{l_i,j} x_{1,i} \oplus c_{l_i,j} x_{0,i} \big) \Big),
\]
where $x = \left(x_{k, i, j}\right)_{k \in \{0, 1\}, i \in [N], j \in [n]}$.

\item The verifier runs the interactive oracle commitment protocol in Definition~\ref{def:ioc} with the circuit $C $ and $\varepsilon = 0.001$.

\item The verifier accepts if and only if the interactive oracle commitment accepts and its output is $0$.
\end{itemize}
\end{definition}

\begin{remark}
Note that in our construction, after the verifier sends the half EPR pairs in the first round, all remaining communication is classical. If we instead assume that the verifier and the prover share EPR pairs at the beginning, then the protocol becomes fully classical.
\end{remark}

\begin{proof}[Proof of Theorem \ref{thm:qpcp_tlp}]
Let $ \texttt V$ be the verifier in Definition \ref{def:qpcp_tlp_v}. We will show that $\texttt V $
is $3$-message non-adaptive quantum classical interactive oracle proof system with $Nn$-shared EPR pairs for $\mathcal{LCH}(5,2^{-p(n)},1/q(n))$ and thus $$ \mathcal{LCH}(5,2^{-p(n)},1/q(n))\in \QIOPEPR(3, \poly, O(1), \poly).$$ 
By Theorem \ref{thm:clifford_ham}, $\mathcal{LCH}(5,2^{-p(n)},1/q(n))$ is $ \QMA$-complete. 
Therefore, $$\QMA\subseteq \QIOPEPR(3, \poly, O(1), \poly)\;.$$

By Definition \ref{def:qpcp_tlp_hp}, Definition \ref{def:qpcp_tlp_v}, and Definition \ref{def:qiop}, $\texttt V $ is consistent, has polynomial time complexity, is efficient, has proof length $ \poly(n) $, and has query complexity $ O(1) $. Moreover, the completeness and soundness properties are as follows.

\medskip

{\bf Completeness:} 
Let $r'_i$ be the random variable that has the distribution of measuring $\ket{\phi}$ under $1-2H_{l_i}$. By quantum teleportation, the state of the verifier after receiving $s_0, s_1$ is $X(s_1)Z(s_0) \ket{\phi}^{\otimes N}$. 
By Lemma \ref{lem:same_dist:tel_qiop}, Definition \ref{def:qpcp_tlp_hp}, and Definition \ref{def:qpcp_tlp_v}, given $l_1,l_2,\cdots, l_N, s_0, s_1$, 
\begin{align}
\Pr\big[C((s_0,s_1)) =0 ~|~ l_1,l_2,\cdots, l_N, s_0, s_1\big]
 = \Pr\Big[\bigoplus_{i\in [N]} r'_i =0\Big]\;.
\end{align}
Thus, by Lemma \ref{lem:xor_r:tel_qiop}, we can conclude that
\begin{align*}
\omega(V) 
= &~ \Pr[\text{Verifier accepts}]\\
 = &~ \sum_{l_1,l_2,\cdots, l_N\in [m]} \Pr[l_1,l_2,\cdots, l_N]\cdot  \Pr[\text{Verifier accepts}~|~ l_1,l_2,\cdots, l_N]\\
 = &~ \sum_{l_1,l_2,\cdots, l_N\in [m], s_0, s_1 \in \F_2^{Nn}} 
 \frac{1}{m^N} 
\Pr[s_0,s_1] \cdot \Pr[\text{Verifier accepts}~|~ l_1,l_2,\cdots, l_N, s_0, s_1]\\
 = &~ \sum_{l_1,l_2,\cdots, l_N\in [m], s_0, s_1 \in \F_2^{Nn}} 
 \frac{1}{m^N} 
\cdot (\frac{1}{2^{Nn}} )^2 \cdot \Pr[C((s_0,s_1)) =0 ~|~ l_1,l_2,\cdots, l_N, s_0, s_1]\\
 = &~ \sum_{l_1,l_2,\cdots, l_N\in [m]} 
 \frac{1}{m^N} \cdot \Pr[\bigoplus_{i\in [N]} r'_i =0]\\ 
 = &~ \sum_{l_1,l_2,\cdots, l_N\in [m]} 
 \frac{1}{m^N} \cdot \Pr[\text{The outcome of measuring $ \ket{\phi}^{\otimes N}$ under $\bigotimes_{i\in [N]} (1-2 H_{l_i})$  is $0$}]\\ 
  = &~ \sum_{l_1,l_2,\cdots, l_N\in [m]} 
 \frac{1}{m^N} \cdot 
 \bra{\phi}^{\otimes N} \frac{I+\bigotimes_{i\in [N]} (1-2 H_{l_i})}{2}  \ket{\phi}^{\otimes N} 
 \\ 
 = &~  
 \bra{\phi}^{\otimes N} \frac{I+(1-2H)^{\otimes N}}{2}  \ket{\phi}^{\otimes N} 
 \\ 
 = &~  \frac{1}{2}+\frac{(\bra{\phi}(1-2H)\ket{\phi})^N}{2}
 \\ 
 \geq &~ 1- \Theta(2^{-s(n)}) \;,
\end{align*}
where the fourth step follows from the completeness of the interactive oracle commitment. 

\medskip

{\bf Soundness:} 
Let the first message sent by the prover in the interactive oracle commitment be $M$, and let $(s_0,s_1)$ be such that $D(M)=(s_0,s_1)$. Without loss of generality, the whole quantum state after the prover sends the message is 
$$ \frac{1}{\sqrt{2^{nN}}}\sum_{i=0 }^{2^{nN}-1} (U P_{M}\ket{0}_{\mathsf{W}}\ket{i}_{\mathsf{R}})\ket{i}_{\mathsf{R}'},$$ where $\mathsf{W}$ is arbitrary register, $U$ is an arbitrary unitary acting on $ \mathsf{W}\mathsf{R} $, and $\{P_{M}\}_{M\in \F_2^{h}}$ is and arbitrary orthogonal projective measurement acting on $ \mathsf{W}\mathsf{R} $, where $h=|E(1^{2Nn})|$ is the output length of the encoding circuit of the  codes in Theorem \ref{thm:ltc}.

For $i\in [N]$, define an  orthogonal projection $$ Q_i^{(1)} = C_{l_i}^\dagger \bigotimes_{j\in [5]} \frac{I+(-1)^{a_{l_i,j}s_{0, i}+ b_{l_i,j}s_{1, i}+ c_{l_i,j} (s_{0, i} + s_{1, i})} Z}{2}  C_{l_i}\;,\qquad  Q_i^{(0)}= I - Q_i^{(1)}\;.$$ 
Note that 
\begin{align*}
    X(s_1)Z(s_0) Q_i^{(j)} = \Big(\frac{1}{2}+(-1)^{j}\Big(\frac{1}{2}-H_{l_i}\Big)\Big) X(s_1)Z(s_0)\;.
\end{align*}
The probability of the verifier accepting can be bounded as follows:
\begin{align*}
\omega(V) 
= &~ \Pr[\text{Verifier accepts}]\\
 = &~ \sum_{l_1,l_2,\cdots, l_N\in [m]} \Pr[l_1,l_2,\cdots, l_N]\cdot  \Pr[\text{Verifier accepts}~|~ l_1,l_2,\cdots, l_N]\\
 = &~ \sum_{l_1,l_2,\cdots, l_N\in [m], M \in \F_2^{h}} 
 \frac{1}{m^N} 
\Pr[M] \cdot \Pr[\text{Verifier accepts}~|~ l_1,l_2,\cdots, l_N, M] \;.
\end{align*}
By the soundness of the interactive oracle commitment, 
\begin{align*}
\Pr[\text{the verifier in interactive oracle commitment either outputs ${C}({s_0,s_1})$ or reject}] \geq 1-\eps.
\end{align*}
Thus,
\begin{align*}
 \Pr[&\text{Verifier accepts}~|~ l_1,l_2,\cdots, l_N, M] \\
=&~ \Pr[\text{Verifier in interactive oracle commitment accepts and outputs $0$}~|~ l_1,l_2,\cdots, l_N, M] \\
\leq &~ \max\big\{\Pr[{C}({s_0,s_1})=0~|~ l_1,l_2,\cdots, l_N, M]\ ,\; \eps\big\} .
\end{align*}
Let $$
r_i=1 \oplus \bigvee_{j\in [5]} (r_{l_i,j} \oplus a_{l_i,j} s_{0, i} \oplus b_{l_i,j} s_{1,i} \oplus c_{l_i,j} s_{1,i} \oplus  c_{l_i,j} s_{0, i} )\;.
$$
Then,
\begin{align*}
\Pr[{C}({s_0,s_1})=0~|~ l_1,l_2,\cdots, l_N, M]
= &~ \sum_{\oplus_{i\in[N]} r_i = 0}  
\Pr[r_1,r_2,\cdots, r_N~|~ l_1,l_2,\cdots, l_N, M] \;.
\end{align*}
Moreover, we have that
\begin{align*}
&~ \Pr[r_1,r_2,\cdots, r_N~|~ l_1,l_2,\cdots, l_N, M] \cdot \Pr[M]\\
 = &~  \frac{1}{2^{nN}} \Big\|\sum_{i=0 }^{2^{nN}-1} (U P_{M}\ket{0}_{\mathsf{W}}\ket{i}_{\mathsf{R}}) \prod_{j=1}^N Q_j^{(r_j)} \ket{i}_{\mathsf{R}'}\Big\|^2 \\ 
= &~ \frac{1}{2^{nN}} \Big\|\sum_{i=0 }^{2^{nN}-1} (U P_{M}\ket{0}_{\mathsf{W}}\ket{i}_{\mathsf{R}}) (X(s_1)Z(s_0))_{\mathsf{R}'} \prod_{j=1}^N Q_j^{(r_j)} \ket{i}_{\mathsf{R}'}\Big\|^2 \\
= &~ \frac{1}{2^{nN}}\Big\|\sum_{i=0 }^{2^{nN}-1} (U P_{M}\ket{0}_{\mathsf{W}}\ket{i}_{\mathsf{R}}) \prod_{j=1}^N (\frac{1}{2}+(-1)^{r_j}(\frac{1}{2}-H_{l_j})) X(s_1)Z(s_0)\ket{i}_{\mathsf{R}'}\Big\|^2.
\end{align*}

We use $\rho$ to denote $$ \frac{1}{2^{nN}} \sum_{M \in \F_2^{h}} \sum_{i=0 }^{2^{nN}-1} (U P_{M}\ket{0}_{\mathsf{W}}\ket{i}_{\mathsf{R}}) X(s_1)Z(s_0)\ket{i}_{\mathsf{R}'} \sum_{l=0 }^{2^{nN}-1} (\bra{0}_{\mathsf{W}}\bra{l}_{\mathsf{R}}P_{M}U^\dagger) \bra{l}_{\mathsf{R}'}X(s_1)Z(s_0).$$ 

Note that 
\begin{align*}
\tr(\rho)=&~ \frac{1}{2^{nN}}\sum_{M} \tr\Big(\sum_{i=0 }^{2^{nN}-1}(\bra{0}_{\mathsf{W}}\bra{i}_{\mathsf{R}}P_{M} \ket{0}_{\mathsf{W}}\ket{i}_{\mathsf{R}}) \Big)\\
=&~ \frac{1}{2^{nN}}\tr\Big(\sum_{i=0 }^{2^{nN}-1}(\bra{0}_{\mathsf{W}}\bra{i}_{\mathsf{R}} I\ket{0}_{\mathsf{W}}\ket{i}_{\mathsf{R}})\Big)\\
=&~1.
\end{align*}

Since $\rho$ is a quantum state, 
\begin{align*}
\omega(V)\leq &~ \eps+\sum_{l_1,l_2,\cdots, l_N\in [m], M \in \F_2^{h}} \frac{1}{m^N} \Pr[M] \cdot \sum_{\oplus_{i\in[N]} r_i = 0} \Pr[r_1,r_2,\cdots, r_N~|~ l_1,l_2,\cdots, l_N, M] \\
=&~ \eps+\sum_{l_1,l_2,\cdots, l_N\in [m], \oplus_{i\in[N]} r_i = 0} \frac{1}{m^N } \tr[ \prod_{j=1}^N (\frac{1}{2}+(-1)^{r_j}(\frac{1}{2}-H_{l_j})) \rho] \\
=&~ \eps+\sum_{l_1,l_2,\cdots, l_N\in [m]}  \frac{1}{m^N } \tr[ \frac{I+\otimes_{i\in [N]} (1-2H_{l_i})}{2} \rho] \\
=&~ \eps+\tr[ \frac{I+ (1-2H)^{\otimes N} }{2} \rho] \\
\leq &~ \frac{1}{2} + \eps+\frac{1}{2} \cdot  2^{-s(n)}.
\end{align*}

\end{proof}

\subsection{Technical claims}

\begin{lemma}
\label{lem:clifford_pauli}
For any $k\in\Z_+$ and any element of the $ k $-fold Clifford group $C$ there exists $ a,b,c\in \F_2^k, d\in \F_2$, such that 
\begin{align}
\label{eq:lem:clifford_pauli}
    C^\dagger 
(Z\otimes \underbrace{I \otimes I\otimes\cdots \otimes I }_{k-1}) C = (-1)^d X(a)Z(b)Y(c)
\end{align} and  $a \land b =b \land c = c \land a=0 $. We call $(-1)^d X(a)Z(b)Y(c)$ the \emph{Pauli representation} of $C^\dagger 
(Z\otimes \underbrace{I \otimes I\otimes\cdots \otimes I }_{k-1}) C$.
\end{lemma}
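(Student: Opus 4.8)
The plan is to prove the Pauli representation lemma by combining two standard facts about the Clifford group with one normalization argument. First I would recall that for any $k$-qubit Clifford $C$ and any Pauli $P$, the conjugate $C^\dagger P C$ is again a Pauli operator up to a sign $\pm 1$ (Cliffords normalize the Pauli group). Applying this to $P = Z \otimes I^{\otimes(k-1)}$ gives $C^\dagger(Z\otimes I^{\otimes(k-1)})C = (-1)^{d'} Q$ for some $k$-qubit Pauli string $Q$ and $d' \in \{0,1\}$, where $Q$ is a tensor product whose factors lie in $\{I,X,Y,Z\}$ (absorbing the possible overall factor of $\mathbf{i}$ into how we split the sign, since $Z\otimes I^{\otimes(k-1)}$ is Hermitian and so is its conjugate — hence the global phase must be $\pm 1$, not $\pm\mathbf{i}$).

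Next I would convert the factorwise description of $Q$ into the $X(a)Z(b)Y(c)$ form. For each qubit position $\ell \in [k]$, the $\ell$-th tensor factor of $Q$ is one of $I, X, Y, Z$; I set $a_\ell = 1$ exactly when that factor is $X$, $b_\ell = 1$ exactly when it is $Z$, and $c_\ell = 1$ exactly when it is $Y$, and $0$ otherwise. By construction at most one of $a_\ell, b_\ell, c_\ell$ is nonzero for each $\ell$, which is precisely the disjoint-support condition $a\wedge b = b\wedge c = c\wedge a = 0$. With this choice, $X(a)Z(b)Y(c) = \bigotimes_{\ell\in[k]} X^{a_\ell}Z^{b_\ell}Y^{c_\ell}$ and on each factor $X^{a_\ell}Z^{b_\ell}Y^{c_\ell}$ equals $X$, $Z$, $Y$, or $I$ according to which (if any) of the bits is set — so $X(a)Z(b)Y(c) = Q$ exactly, with no extra phase introduced because at most one nontrivial Pauli appears per position. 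Then $C^\dagger(Z\otimes I^{\otimes(k-1)})C = (-1)^{d'} X(a)Z(b)Y(c)$, and we take $d = d'$.

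The only genuine subtlety — and the main thing to be careful about rather than a real obstacle — is the phase bookkeeping: one must check the global phase is a \emph{real} sign. This follows because $Z\otimes I^{\otimes(k-1)}$ is Hermitian, conjugation by the unitary $C$ preserves Hermiticity, and $X(a)Z(b)Y(c)$ with the disjoint-support convention is itself Hermitian (each tensor factor is one of the Hermitian operators $I,X,Y,Z$); a Hermitian operator equal to a scalar multiple of a Hermitian operator forces the scalar to be real, and since both sides square to $I$ the scalar is $\pm 1$. (Alternatively one invokes that $Z$ and $C^\dagger Z C$ both have order $2$ and the same spectrum.) One should also note that the convention $Y = \mathbf{i}XZ$ or $Y = -\mathbf{i}XZ$ does not matter here, since each position uses \emph{at most one} of $X, Z, Y$, so no product $X^{a_\ell}Z^{b_\ell}$ with both bits set ever arises. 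Everything else is routine, so I would write this up in a short paragraph invoking the normalizer property of the Clifford group and the per-qubit bit assignment.
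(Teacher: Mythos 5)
Your proposal is correct and follows essentially the same route as the paper: invoke that the Clifford group normalizes the Pauli group, write the conjugate in the form $\pm\i^{e} X(a)Z(b)Y(c)$ with disjoint supports, and use Hermiticity of $C^\dagger (Z\otimes I^{\otimes(k-1)})C$ to force the phase to be a real sign (the paper phrases this as showing $e=0$ by taking daggers). Your explicit per-qubit bit assignment is just a more detailed bookkeeping of the same decomposition the paper asserts.
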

\begin{proof}
Because $Z\otimes \underbrace{I \otimes I\otimes\cdots \otimes I }_{k-1}$ is an element of the $k$-fold Pauli group and $ C$ is an element of the $ k $-fold Clifford group, 
$C^\dagger (Z\otimes \underbrace{I \otimes I\otimes\cdots \otimes I }_{k-1}) C $ is an element of the $k$-fold Pauli group. Suppose $$C^\dagger (Z\otimes \underbrace{I \otimes I\otimes\cdots \otimes I }_{k-1}) C = (-1)^d 
\i^e X(a)Z(b)Y(c)$$ where $ a,b,c\in \F_2^k, d, e\in \F_2$ and $ a \land b =b \land c = c \land a=0$. 

Because 
\begin{align*}
(-1)^d 
\i^e X(a)Z(b)Y(c)=&~C^\dagger (Z\otimes \underbrace{I \otimes I\otimes\cdots \otimes I }_{k-1}) C \\
=&~ (C^\dagger (Z\otimes \underbrace{I \otimes I\otimes\cdots \otimes I }_{k-1}) C)^\dagger \\
=&~ ((-1)^d 
\i^e X(a)Z(b)Y(c))^\dagger,    
\end{align*}
thus $e=0$. 
\end{proof}

\begin{lemma}
\label{lem:switch_XZ}
Let $A,B$ be matrices.
For $x\in \F_2$, $ \frac{I+(-1)^x A}{2} B = B\frac{I+(-1)^{x+1} A}{2}$ if $ \{A,B\}=0 $, $ \frac{I+(-1)^x A}{2} B = B\frac{I+(-1)^x A}{2}$ if $  [A,B]=0$. 
\end{lemma}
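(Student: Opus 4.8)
The statement is a direct computation using bilinearity, so the plan is simply to expand both sides and match terms; there is no real obstacle here, only bookkeeping of the sign $(-1)^x$.

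First I would handle the anticommuting case $\{A,B\}=0$, i.e.\ $AB=-BA$. Expanding the left-hand side,
\[
\frac{I+(-1)^x A}{2}\,B \;=\; \frac{B+(-1)^x AB}{2} \;=\; \frac{B-(-1)^x BA}{2} \;=\; \frac{B+(-1)^{x+1} BA}{2} \;=\; B\,\frac{I+(-1)^{x+1} A}{2}\;,
\]
where the second equality uses $AB=-BA$ and the third uses $-(-1)^x=(-1)^{x+1}$ (valid for $x\in\F_2$). This is exactly the claimed identity.

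Next, for the commuting case $[A,B]=0$, i.e.\ $AB=BA$, the same expansion gives
\[
\frac{I+(-1)^x A}{2}\,B \;=\; \frac{B+(-1)^x AB}{2} \;=\; \frac{B+(-1)^x BA}{2} \;=\; B\,\frac{I+(-1)^x A}{2}\;,
\]
which proves the second identity. Both computations are elementary and complete the proof; the only point worth noting is the identity $-(-1)^x=(-1)^{x+1}$ used in the anticommuting case, which holds because $(-1)^x$ flips sign precisely when $x$ changes parity.
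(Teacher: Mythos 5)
Your proposal is correct and matches the paper's argument, which likewise just uses $AB=-BA$ (resp. $AB=BA$) to move $B$ across the projector; you simply spell out the sign bookkeeping $-(-1)^x=(-1)^{x+1}$ that the paper leaves implicit.
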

\begin{proof}

{\bf Part 1:}
Because $ AB=-BA$,
$$\frac{I+(-1)^x A}{2} B = B \frac{I+(-1)^{x+1} A}{2}.$$

{\bf Part 2:}
Because $ AB=BA$,
$$\frac{I+(-1)^x A}{2} B = B \frac{I+(-1)^x A}{2}.$$

\end{proof}

\begin{lemma}\label{lem:otimes_obs_eq_sum_otimes_proj}
For any observable $O_i$ such that $ O_i^2 = I$, 
\begin{align*}
\frac{I+\bigotimes_{i\in [N]} O_i}{2} = \sum_{\bigoplus_{i\in [N]} r_i =0} \bigotimes_{i\in [N]} \frac{(I+(-1)^{r_i} O_i)}{2} 
\end{align*}
\end{lemma}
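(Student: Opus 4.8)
The plan is to prove the identity by a direct expansion, using the standard indicator trick for the parity constraint. Write $P_i^{r_i} = \tfrac{1}{2}\big(I + (-1)^{r_i}O_i\big)$ for $r_i \in \F_2$, so that the claimed right-hand side is $\sum_{\oplus_i r_i = 0}\bigotimes_{i\in[N]} P_i^{r_i}$. The only facts about these operators that I would use are the two elementary relations $P_i^0 + P_i^1 = I$ and $P_i^0 - P_i^1 = O_i$, both immediate from the definition (the hypothesis $O_i^2 = I$ is in fact not needed for this particular identity).

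Next I would rewrite the constraint $\oplus_i r_i = 0$ using the indicator $\tfrac{1}{2}\big(1 + \prod_i (-1)^{r_i}\big)$, which equals $1$ when $\sum_i r_i$ is even and $0$ otherwise. Substituting and splitting the sum gives
\[
\sum_{\oplus_i r_i = 0}\bigotimes_{i\in[N]} P_i^{r_i} \;=\; \tfrac{1}{2}\sum_{r\in \F_2^N}\bigotimes_{i\in[N]} P_i^{r_i} \;+\; \tfrac{1}{2}\sum_{r\in \F_2^N}\bigotimes_{i\in[N]} \big((-1)^{r_i}P_i^{r_i}\big).
\]
Each sum over $r=(r_1,\dots,r_N)$ now factorizes across the tensor factors by the distributive law: the first becomes $\bigotimes_i (P_i^0 + P_i^1) = I^{\otimes N}$, and the second becomes $\bigotimes_i (P_i^0 - P_i^1) = \bigotimes_i O_i$. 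Adding, the right-hand side equals $\tfrac{1}{2}\big(I^{\otimes N} + \bigotimes_{i\in[N]} O_i\big)$, which is exactly the left-hand side.

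There is no genuine obstacle; the single point deserving a line of care is the factorization step, i.e. that $\sum_{r\in\F_2^N}\bigotimes_i A_i^{r_i} = \bigotimes_i (A_i^0 + A_i^1)$ when $A_i^{r_i}$ acts on the $i$-th tensor factor. If one prefers to avoid even this, an equivalent route is to expand $\bigotimes_i \big(I + (-1)^{r_i}O_i\big) = \sum_{S\subseteq [N]} (-1)^{\sum_{i\in S} r_i}\, O(S)$, where $O(S)$ denotes the tensor product with $O_i$ on the factors in $S$ and $I$ elsewhere, and then use that $\sum_{r:\,\oplus_i r_i = 0} (-1)^{\sum_{i\in S} r_i}$ equals $2^{N-1}$ for $S\in\{\emptyset,[N]\}$ and $0$ otherwise (orthogonality of characters on the index-$2$ subgroup $\{r:\oplus_i r_i = 0\}$ of $\F_2^N$); this reproduces the same coefficients $\tfrac{1}{2^N}\cdot 2^{N-1} = \tfrac12$ in front of $I^{\otimes N}$ and $\bigotimes_i O_i$.
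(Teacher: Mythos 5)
Your proof is correct and is essentially the paper's own argument in a slightly different dress: the paper states the two identities (sum over even-parity strings plus sum over odd-parity strings equals $I$, and their difference equals $\bigotimes_i O_i$) and averages them, which is algebraically the same as your indicator trick $\tfrac{1}{2}\bigl(1+\prod_i(-1)^{r_i}\bigr)$ followed by factorization across tensor factors. Your side remarks (that $O_i^2=I$ is not needed, and the character-orthogonality variant) are fine but do not change the substance.
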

\begin{proof}
The lemma follows from 
\begin{align*}
\sum_{\bigoplus_{i\in [N]} r_i =0} \bigotimes_{i\in [N]} \frac{(I+(-1)^{r_i} O_i)}{2} + \sum_{\bigoplus_{i\in [N]} r_i =1} \bigotimes_{i\in [N]} \frac{(I+(-1)^{r_i} O_i)}{2} =I . 
\end{align*}
and 
\begin{align*}
\sum_{\bigoplus_{i\in [N]} r_i =0} \bigotimes_{i\in [N]} \frac{(I+(-1)^{r_i} O_i)}{2} - \sum_{\bigoplus_{i\in [N]} r_i =1} \bigotimes_{i\in [N]} \frac{(I+(-1)^{r_i} O_i)}{2} = \bigotimes_{i\in [N]} O_i . 
\end{align*}
\end{proof}

\begin{lemma}
\label{lem:same_dist:tel_qiop}
Let $ C$ be an element of the $5$-fold Clifford group. Let $\ket\phi$ be a quantum state. 
Let $r'$ be the random variable that has the distribution of measuring $\ket{\phi}$ under $1-2 C^\dagger 
(\ketbra{0}{0})^{\otimes 5}  C$. 
Let $$O_{j} = (C^\dagger \underbrace{I\otimes I \cdots \otimes I}_{j-1} \otimes Z \otimes \underbrace{I\otimes I \cdots \otimes I}_{5-j}  C ) \otimes I, $$
and let $$O_{j} =  (-1)^{d_{j}} X(a_{j}) Z(b_{j}) Y(c_{j})  $$ be the Pauli representation of $ O_{j}  $ as defined in Lemma \ref{lem:clifford_pauli}. 
Let $ s_0,s_1 \in_R \F_2^n$. 
Let $ (r_1,r_2,\cdots, r_5)$ be the random variable that has the distribution of the measurement outcomes of measuring $Z(s_0)X(s_1)\ket\phi$ under $O_{j}$ from $j = 1$ to $j = 5$  sequentially.
Let $r = 1 \oplus \bigvee_{j\in [5]} (r_{j} \oplus a_{j} s_{0} \oplus b_{j} s_{1} \oplus c_{j} s_{1} \oplus  c_{j} s_{0} ) $. 

Then, the random variables $ r$ and $r'$ have the same distribution.
\end{lemma}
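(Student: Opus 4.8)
The plan is to replace the sequential measurement of the five commuting observables $O_1,\dots,O_5$ on the masked state $Z(s_0)X(s_1)\ket\phi$ by a single joint projective measurement, then to track how the teleportation pad $Z(s_0)X(s_1)$ twists the five outcomes by a fixed bit string, and finally to recognise the post-processed bit $r$ as the indicator that the joint outcome on $\ket\phi$ lands in the $\ket{0^5}$-block after applying $C$, which is exactly the event detected by $1-2\,C^\dagger(\ketbra{0}{0})^{\otimes 5}C$.

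First I would record the structure of the measurement. Each $O_j=\bigl(C^\dagger(I^{\otimes j-1}\otimes Z\otimes I^{\otimes 5-j})C\bigr)\otimes I$ is Hermitian and squares to the identity, and since single-qubit $Z$'s on distinct qubits commute and conjugation by $C$ preserves commutation, the $O_j$ pairwise commute. Hence measuring $O_1,\dots,O_5$ in sequence on any state $\ket\psi$ is the same as the joint projective measurement with operators $\Pi^{t}:=\prod_{j\in[5]}\frac{I+(-1)^{t_j}O_j}{2}$ indexed by $t\in\F_2^5$; since the commuting factors $\Pi_j^{t_j}:=\frac{I+(-1)^{t_j}O_j}{2}$ are projections, each $\Pi^{t}$ is itself an orthogonal projection, so the outcome probability $\langle\psi|\Pi^{t}|\psi\rangle$ does not depend on the measurement order. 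Telescoping the $C$'s gives $\Pi^{t}=\bigl(C^\dagger\,\ketbra{t}{t}\,C\bigr)\otimes I$; in particular $\Pi^{0^5}=P$, where I write $P:=\bigl(C^\dagger(\ketbra{0}{0})^{\otimes 5}C\bigr)\otimes I$ (identity padding on the remaining qubits left implicit).

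Next I would carry out the teleportation-pad computation. Writing $O_j=(-1)^{d_j}X(a_j)Z(b_j)Y(c_j)$ for its Pauli representation from Lemma~\ref{lem:clifford_pauli} (so $a_j\land b_j=b_j\land c_j=c_j\land a_j=0$), and using that $X(s_1)$ commutes with the $X$-type factor and anticommutes qubit-by-qubit with the $Z$- and $Y$-type factors, while $Z(s_0)$ commutes with the $Z$-type factor and anticommutes qubit-by-qubit with the $X$- and $Y$-type factors, one obtains the conjugation identity
\[
\bigl(Z(s_0)X(s_1)\bigr)\,O_j\,\bigl(Z(s_0)X(s_1)\bigr)^\dagger=(-1)^{\pi_j}O_j,\qquad \pi_j:=a_j\cdot s_0\oplus b_j\cdot s_1\oplus c_j\cdot s_1\oplus c_j\cdot s_0 .
\]
Setting $U:=Z(s_0)X(s_1)$ and $\pi:=(\pi_1,\dots,\pi_5)$, this yields $U^\dagger\Pi^{t}U=\prod_j\frac{I+(-1)^{t_j}U^\dagger O_jU}{2}=\Pi^{t\oplus\pi}$, hence $\langle\phi|U^\dagger\Pi^{t}U|\phi\rangle=\langle\phi|\Pi^{t\oplus\pi}|\phi\rangle$ for every $t\in\F_2^5$. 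In words, the outcome vector $(r_1,\dots,r_5)$ of the sequential measurement on $Z(s_0)X(s_1)\ket\phi$ is distributed exactly like $(\tilde r_1\oplus\pi_1,\dots,\tilde r_5\oplus\pi_5)$, where $(\tilde r_1,\dots,\tilde r_5)$ is the outcome of the same sequential measurement performed directly on $\ket\phi$.

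Finally I would assemble the pieces. By definition $r=1$ exactly when $r_j\oplus\pi_j=0$ for all $j\in[5]$, which by the distributional identity above is an event of probability $\Pr[\forall j\colon \tilde r_j=0]=\langle\phi|\Pi^{0^5}|\phi\rangle=\langle\phi|P|\phi\rangle$. On the other side, with the convention used throughout this section (and in the completeness argument) that the outcome bit of $1-2P$ is $0$ on the $(+1)$-eigenspace and $1$ on the $(-1)$-eigenspace, i.e.\ the range of $P$, we get $\Pr[r'=1]=\langle\phi|P|\phi\rangle$ as well. Since $r$ and $r'$ are both $\{0,1\}$-valued and $\Pr[r=1]=\Pr[r'=1]$, they have the same distribution. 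I expect the only delicate step to be the phase bookkeeping behind the conjugation identity — pairing each of $X(s_1)$ and $Z(s_0)$ against the $X$-, $Z$- and $Y$-components of $O_j$ to land on precisely $\pi_j=a_j\cdot s_0\oplus b_j\cdot s_1\oplus c_j\cdot s_1\oplus c_j\cdot s_0$, exploiting the disjoint-support property — together with staying consistent about the $\pm1\leftrightarrow\{0,1\}$ outcome convention for $1-2P$; the remaining steps are routine.
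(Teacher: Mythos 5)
Your proposal is correct and follows essentially the same route as the paper's proof: the paper likewise writes $\Pr[r=1]$ as the norm of the product of projectors $\tfrac{I+(-1)^{r''_j}O_j}{2}$ applied to the padded state and then commutes $X(s_1)Z(s_0)$ through (Lemma~\ref{lem:switch_XZ}), which is exactly your conjugation identity $U^\dagger\Pi^{t}U=\Pi^{t\oplus\pi}$ with $\pi_j=a_j\cdot s_0\oplus b_j\cdot s_1\oplus c_j\cdot s_1\oplus c_j\cdot s_0$, arriving at $\Pr[r=1]=\bra{\phi}C^\dagger(\ketbra{0}{0})^{\otimes 5}C\otimes I\ket{\phi}=\Pr[r'=1]$. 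Your explicit remarks on commutativity/order-independence and on $\Pi^{0^5}=(C^\dagger(\ketbra{0}{0})^{\otimes5}C)\otimes I$ are only minor elaborations of steps the paper leaves implicit.
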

\begin{proof}

The distribution of $r'$ can be described by $\Pr[r'=1]=\bra{\phi} H \ket{\phi}$.

Let $r''_{j} = a_{j} s_{0} \oplus b_{j} s_{1} \oplus c_{j} s_{1} \oplus  c_{j} s_{0}$. Because $ r = 1 \oplus \bigvee_{j\in [5]} (r_{j} \oplus  r''_{j})$, the distribution of $r$ can be described by 
$$ \Pr[r=1]=\|\prod_{j=1}^5  \frac{I+(-1)^{  r''_{j} } O_{{6-j}}}{2}\cdot  X(s_{1})Z(s_{0})\ket{\phi}\|^2. $$ 
Note that $ O_{j} = (-1)^{d_{j}} X(a_{j}) Z(b_{j}) Y(c_{j}) $, by Lemma \ref{lem:switch_XZ}, we have that for $j\in [m]$,
\begin{align*}
\frac{I+ (-1)^{ r''_{j}} O_{j}}{2}\cdot  X(s_{1})Z(s_{0}) =X(s_{1})Z(s_{0}) \cdot \frac{I+ O_{j}}{2}
\end{align*}
Thus, 
\begin{align*}
\Pr[r=1]=&~\|\prod_{j=1}^5  \frac{I+(-1)^{  r''_{j} } O_{{6-j}}}{2}\cdot  X(s_{1})Z(s_{0})\ket{\phi}\|^2\\
=&~\|X(s_{1})Z(s_{0}) \prod_{j=1}^5  \frac{I+O_{{6-j}}}{2}\cdot  \ket{\phi}\|^2 \\
=&~ \| \prod_{j=1}^5  \frac{I+O_{{6-j}}}{2}\cdot  \ket{\phi}\|^2\\
=&~ \bra{\phi}\prod_{j=1}^5  
C^\dagger  \underbrace{I\otimes I \cdots \otimes I}_{5-j} \otimes Z^+ \otimes \underbrace{I\otimes I \cdots \otimes I}_{j-1}  C \ket{\phi}\\
=&~ \bra{\phi}
H \ket{\phi}.
\end{align*}

\end{proof}

\begin{lemma}
\label{lem:xor_r:tel_qiop}
Let $O_i$ for $i\in [m]$ be observables. Let $ H = \frac{1}{m}\sum_{i\in [m]}O_i$. Let $ \ket{\phi}$ be a quantum state. Let $l_i\in [m]$ for $i\in [N]$. 
Let $r'_i$ be the random variable that has the distribution of measuring $\ket{\phi}$ under $O_{l_i}$. Then,
\begin{align}
\Pr[\bigoplus_{i\in [N]} r'_i =0]
 = \Pr[\text{The outcome of measuring $ \ket{\phi}^{\otimes N}$ under $\bigotimes_{i\in [N]} O_{l_i}$  is $0$}] \label{eq:trivial_qcpcp:c:1}
\end{align}
\end{lemma}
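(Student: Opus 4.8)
\textbf{Proof plan for Lemma~\ref{lem:xor_r:tel_qiop}.}
The plan is to rewrite both sides of \eqref{eq:trivial_qcpcp:c:1} as the expectation, against the product state $\ket{\phi}^{\otimes N}$, of a single projection, and then observe that the two projections coincide. First I would record, for each $i \in [N]$, the spectral decomposition $O_{l_i} = \Pi_i^{(0)} - \Pi_i^{(1)}$, where $\Pi_i^{(r)} = \tfrac{1}{2}\big(I + (-1)^r O_{l_i}\big)$ is the projection onto the $(-1)^r$-eigenspace of $O_{l_i}$; this is valid because each $O_{l_i}$ is an observable with $O_{l_i}^2 = I$ (in the intended application $O_{l_i} = 1 - 2H_{l_i}$ with $H_{l_i}$ a projection). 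Under the convention, used consistently throughout, that ``the outcome of measuring $O_{l_i}$ on a state is $r$'' means the post-measurement state lies in the range of $\Pi_i^{(r)}$, the outcome $r$ occurs with probability $\bra{\phi}\Pi_i^{(r)}\ket{\phi}$, and in particular outcome $0$ corresponds to eigenvalue $+1$.

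Next I would use independence of the $N$ measurements producing $r'_1, \dots, r'_N$: interpreting $r'_i$ as the outcome of measuring $O_{l_i}$ on the $i$-th tensor factor of $\ket{\phi}^{\otimes N}$, for every fixed $(r_1, \dots, r_N) \in \F_2^N$ we have
\[
\Pr\big[(r'_1,\dots,r'_N) = (r_1,\dots,r_N)\big] = \prod_{i\in[N]} \bra{\phi}\Pi_i^{(r_i)}\ket{\phi} = \bra{\phi}^{\otimes N}\Big(\bigotimes_{i\in[N]}\Pi_i^{(r_i)}\Big)\ket{\phi}^{\otimes N}.
\]
Summing this over all $(r_1,\dots,r_N)$ with $\bigoplus_{i\in[N]} r_i = 0$ and applying Lemma~\ref{lem:otimes_obs_eq_sum_otimes_proj} to the observables $O_{l_i}$ (noting $\Pi_i^{(r_i)} = \tfrac12(I + (-1)^{r_i}O_{l_i})$) yields
\[
\Pr\Big[\bigoplus_{i\in[N]} r'_i = 0\Big] = \bra{\phi}^{\otimes N}\,\frac{I + \bigotimes_{i\in[N]} O_{l_i}}{2}\,\ket{\phi}^{\otimes N}.
\]

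Finally I would identify the right-hand side of \eqref{eq:trivial_qcpcp:c:1} with the same quantity: $\bigotimes_{i\in[N]} O_{l_i}$ is again an observable squaring to $I$, its projection onto the $+1$-eigenspace is exactly $\tfrac12\big(I + \bigotimes_{i\in[N]} O_{l_i}\big)$, and by our convention the outcome labelled $0$ of measuring it on $\ket{\phi}^{\otimes N}$ is the $+1$-eigenvalue outcome, occurring with probability $\bra{\phi}^{\otimes N}\tfrac12\big(I + \bigotimes_{i\in[N]} O_{l_i}\big)\ket{\phi}^{\otimes N}$. Comparing the two displayed expressions proves \eqref{eq:trivial_qcpcp:c:1}. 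I do not expect a genuine obstacle here: the statement is the elementary fact that the parity of independent $\pm1$-valued measurement outcomes on a product state is distributed like the single outcome of the tensor-product observable. The only points requiring care are keeping the convention that outcome $0$ corresponds to eigenvalue $+1$ fixed throughout, and the straightforward bookkeeping that matches $\sum_{\bigoplus_i r_i = 0}\bigotimes_i \Pi_i^{(r_i)}$ to $\tfrac12(I + \bigotimes_i O_{l_i})$ via Lemma~\ref{lem:otimes_obs_eq_sum_otimes_proj}.
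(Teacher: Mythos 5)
Your proposal is correct and follows essentially the same route as the paper's own proof: both express the parity probability as $\sum_{\oplus_i r_i = 0}\prod_i \bra{\phi}\tfrac{1}{2}(I+(-1)^{r_i}O_{l_i})\ket{\phi}$, write the right-hand side as $\bra{\phi}^{\otimes N}\tfrac{1}{2}(I+\bigotimes_i O_{l_i})\ket{\phi}^{\otimes N}$, and conclude via Lemma~\ref{lem:otimes_obs_eq_sum_otimes_proj}. Your write-up simply spells out the bookkeeping (spectral projections, independence across tensor factors, and the outcome-$0$ / eigenvalue-$+1$ convention) that the paper leaves implicit.
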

\begin{proof}

This is because 
\begin{align*}
\Pr[\bigoplus_{i\in [N]} r'_i =0] = \sum_{\bigoplus_{i\in [N]} r'_i =0} \prod_{i\in [N]} \bra{\phi} \frac{(I+(-1)^{r'_i} O_{l_i})}{2} \ket{\phi},
\end{align*}
and 
\begin{align*}
\Pr[\text{The outcome of measuring $ \ket{\phi}^{\otimes N}$ under $\bigotimes_{i\in [N]} (1-2 H_{l_i})$  is $0$}] = \bra{\phi}^{\otimes N}  \frac{I+\bigotimes_{i\in [N]} O_{l_i}}{2} \ket{\phi}^{\otimes N}.
\end{align*}
The conclusion follows from Lemma \ref{lem:otimes_obs_eq_sum_otimes_proj}.
\end{proof}

\section{Many Qubits Tests}
\label{sec:mqt}

In this section, we consider the extended Pauli matrices $X(g), Z(g)$. 
We design a protocol that forces the prover's strategies to align with these matrices. 
Specifically, an honest prover using these matrices can succeed in the protocol. Conversely, any prover that succeeds with sufficiently high probability must exhibit an approximate anti-commuting relationship when $|g|$ is odd and an approximate commuting relationship when $|g|$ is even, similar to the relationship between $Z(g)$ and $X(g)$. Since the test is a generalization of the single-qubit test sketched in Section~\ref{sec:intro-strong}, we recommend that the reader review that section first. 

\subsection{Error Correcting Code}

In this section we recall the properties of a specific family of error-correcting codes that play a central role in our construction. While we focus on the Hadamard code, most of our arguments apply to any code that satisfies a set of essential properties, such as local testability and self-correction. These properties are crucial for constructing interactive proof systems that are robust to errors and support verification with only limited access to the communicated data.

\begin{theorem}[Locally testable and self-correctable codes]
\label{thm:3c_code}
The family of Hadamard codes $\{C_k\}$ satisfies the following properties:
\begin{enumerate}
    \item The code has rate $r=1/\exp(k)$ and relative distance $d=\Theta(1)$.
    \item It is a linear code with generator matrix $G_k \in \mathbb{F}_2^{n(k) \times k}$, where $n(k)$ denotes the length of the encoded message as a function of the message length $k$.
    \item The code admits deterministic encoding and decoding algorithms, denoted by $\texttt{E}$ and $\texttt{D}$ respectively.
\end{enumerate}

Furthermore, there exists $\varepsilon \in (0, 0.001)$, $\kappa, \kappa' \in (0, 1)$, and $q_{\texttt{L}}, q_{\texttt{S}} \in \mathbb{Z}_+$ such that: 
\begin{enumerate}
\item \textnormal{Local Testing:}  
For all $n$, the code $C_n$ is $\kappa$-locally testable with query complexity $q_{\texttt{L}}$. That is, there exists a non-adaptive probabilistic polynomial-time oracle machine $\texttt{L}$ with query complexity $q_{\texttt{L}}$, such that:
\begin{enumerate}
\item For any $k \in \mathbb{Z}_+$ and $x \in \mathbb{F}_2^k$,
        \[
    \Pr[\texttt{L}^{\texttt{E}(x)}(1^k) = 1] = 1.
        \]
        \item For any $k \in \mathbb{Z}_+$ and $w \in \mathbb{F}_2^{n(k)}$,
        \[
        \Pr[\texttt{L}^w(1^k) = 1] \leq 1 - \kappa \cdot \mathrm{dist}(w, C_n).
        \]
    \end{enumerate}
    
\item \textnormal{Self-Correction:}  
For all $n$, the code $C_n$ is self-correctable with query complexity $q_{\texttt{S}}$. There exists a non-adaptive probabilistic polynomial-time oracle machine $\texttt{S}$ with query complexity $q_{\texttt{S}}$ such that for any $k \in \mathbb{Z}_+$ and any $w \in \mathbb{F}_2^{n(k)}$ satisfying
    \[
    \mathrm{dist}(w, C_n) < \kappa' \cdot d / 2,
    \]
    it holds for all $p \in [n(k)]$ that
    \[
    \Pr[\texttt{S}^w(1^k, p) = (\texttt{E} \circ \texttt{D}(w))_p] \geq 1 - \varepsilon.
    \]
\end{enumerate}
\end{theorem}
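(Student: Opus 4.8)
The plan is to instantiate $\{C_k\}$ as the Hadamard code and verify the listed properties one at a time, the only non-routine ingredient being the soundness of the Blum--Luby--Rubinfeld (BLR) linearity test. Put $n(k)=2^k$ and identify a string $w\in\F_2^{2^k}$ with a function $w:\F_2^k\to\F_2$ by indexing coordinates with $y\in\F_2^k$. Let $\texttt{E}:\F_2^k\to\F_2^{2^k}$ be the encoding $\texttt{E}(x)_y=\langle x,y\rangle$; its image is $C_k$, the map is $\F_2$-linear with generator matrix $G_k\in\F_2^{2^k\times k}$ whose row indexed by $y$ equals $y^\top$ (item~2), and it is injective because $\langle x,e_i\rangle=x_i$, so the rate is $k/2^k=1/\exp(k)$. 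For the distance, if $x\ne x'$ then $\{y:\langle x-x',y\rangle=1\}$ is an affine hyperplane of size exactly $2^{k-1}$, so any two distinct codewords are at relative distance exactly $1/2$; hence $d=\Theta(1)$ (indeed $d=1/2$), which together with the rate bound gives item~1. For item~3, $\texttt{E}$ is computable in time $\poly(2^k)=\poly(n(k))$, and I would take $\texttt{D}(w)$ to be the message of a codeword nearest to $w$ (ties broken lexicographically), computable in time $\poly(n(k))$ by scanning the $2^k$ codewords, and satisfying $\texttt{D}(\texttt{E}(x))=x$.

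For local testing I would use the BLR test with $q_{\texttt{L}}=3$: on oracle $w$, sample $y,z$ uniformly in $\F_2^k$, query $w(y),w(z),w(y+z)$, and accept iff $w(y)+w(z)=w(y+z)$. Every codeword is an $\F_2$-linear functional, so the test accepts it with probability $1$, which is part~(a). For part~(b), the $\F_2$-linear functionals on $\F_2^k$ are exactly the codewords of $C_k$, so it remains to invoke the classical soundness analysis of the linearity test: it yields an absolute constant $\rho_0\in(0,1)$ with $\Pr_{y,z}[w(y)+w(z)\ne w(y+z)]\ge\min\{\dist(w,C_k),\rho_0\}$ for every $w$, and since $\dist(w,C_k)\le 1$ this is at least $\rho_0\cdot\dist(w,C_k)$. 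Taking $\kappa=\rho_0$ and passing to the complement gives $\Pr[\texttt{L}^w(1^k)=1]\le 1-\kappa\cdot\dist(w,C_k)$.

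For self-correction, fix a small constant $\eps\in(0,0.001)$ and pick $\kappa'\in(0,1)$ with $\kappa' d/2<\eps/2$, which is possible since $d=1/2$. Suppose $\delta:=\dist(w,C_k)<\kappa' d/2<d/2$. Then there is a unique codeword $c$ within distance $\delta$ of $w$ (two distinct codewords both within $d/2$ of $w$ would lie within $d$ of each other), and by the choice of $\texttt{D}$ this $c$ equals $\texttt{E}(\texttt{D}(w))$. The self-corrector $\texttt{S}^w(1^k,p)$ samples $y$ uniformly, queries $w(y)$ and $w(p+y)$ (so $q_{\texttt{S}}=2$), and returns $w(y)+w(p+y)$. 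Each of $y$ and $p+y$ is individually uniform in $\F_2^k$, so a union bound gives $\Pr[w(y)=c(y)\text{ and }w(p+y)=c(p+y)]\ge 1-2\delta>1-\eps$, and on that event the output is $c(y)+c(p+y)=c(p)=(\texttt{E}\circ\texttt{D}(w))_p$ by linearity of $c$. (If one prefers to decouple the constants, output instead the majority of $O(1)$ independent trials, still with $q_{\texttt{S}}=O(1)$.)

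The only genuinely non-trivial ingredient is the BLR soundness bound used in the local-testing step --- that the test's rejection probability is at least an absolute constant times the distance to the code, uniformly over all $w$; this is a classical but somewhat delicate Fourier-analytic fact, and is where essentially all the content of the proof lies. The one point requiring care is that the decoder $\texttt{D}$ entering the self-correction requirement must be nearest-codeword decoding rather than the naive ``read off coordinates $e_1,\dots,e_k$'' decoder, so that $\texttt{E}\circ\texttt{D}(w)$ is exactly the unique codeword that $\texttt{S}$ reconstructs; item~3 is set up accordingly.
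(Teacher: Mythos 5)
Your proof is correct. The paper itself states this theorem without proof, treating it as the standard package of Hadamard-code facts, and your argument — BLR linearity testing for local testability, the two-query self-correction $w(y)+w(p+y)$ with a union bound, and the observation that $\texttt{D}$ must be nearest-codeword decoding so that $\texttt{E}\circ\texttt{D}(w)$ is the unique codeword within distance $d/2$ — is exactly the standard instantiation the paper relies on (note also that the Fourier-analytic BLR bound gives rejection probability at least $\mathrm{dist}(w,C_k)$ directly, so one may even take $\kappa=1$ without the $\min\{\cdot,\rho_0\}$ detour).
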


\begin{remark}
For $i \in \{0,1\}^k$, let $n(i) \in \mathbb{Z}$ denote the integer whose binary representation is $i$. Then, the $n(i)$-th row of $G_k$ equals $i$, i.e., $G_{k, n(i), \cdot} = i$.
\end{remark}

We now define the randomness-aware views of the testing and correction procedures.

\begin{definition}[Randomness-aware evaluation of local tester and self-corrector]
\label{def:3c}
Let $\{C_n\}$ be the code defined in Theorem~\ref{thm:3c_code}.
\begin{itemize}
    \item Define $\texttt{L}^w(1^k; r)$ as the output of $\texttt{L}^w(1^k)$ when using randomness $r \in \mathbb{F}_2^{l_{\texttt{L}}}$, where $l_{\texttt{L}}$ is the randomness complexity of $\texttt{L}$.
    \item Define $\texttt{S}^w(1^k, p; r)$ as the output of $\texttt{S}^w(1^k, p)$ when using randomness $r \in \mathbb{F}_2^{l_{\texttt{S}}}$, where $l_{\texttt{S}}$ is the randomness complexity of $\texttt{S}$.
\end{itemize}
\end{definition}

Next, we define unitary operations that simulate the decoding, local testing, and self-correction functionalities on quantum registers. These operations will be used in our quantum protocol constructions.

\begin{definition}[Unitary decoding operator]
\label{def:decode}
Let $\texttt{D}$ be the decoding algorithm from Theorem~\ref{thm:3c_code}. For $k \in \mathbb{Z}_+$, $w \in \mathbb{F}_2^{n(k)}$, and $a \in \mathbb{F}_2^k$, define the decoding unitary
\[
D_C(\ket{a} \ket{w}) = \ket{a \oplus \texttt{D}(w)} \ket{w}.
\]
\end{definition}

\begin{definition}[Unitary validity test operator]
\label{def:valid}
Let $\texttt{L}$ be as in Definition~\ref{def:3c}. For $k \in \mathbb{Z}_+$, $w \in \mathbb{F}_2^{n(k)}$, $r \in \mathbb{F}_2^{l_{\texttt{L}}(k)}$, and all $a \in \mathbb{F}_2$, define
\[
L(\ket{a} \ket{w} \ket{r}) = \ket{a \oplus \texttt{L}^w(1^k; r)} \ket{w} \ket{r}.
\]

\end{definition}

\begin{definition}[Unitary self-correction operator]
\label{def:self_corr}
Let $\texttt{S}$ be as in Definition~\ref{def:3c}. For $k \in \mathbb{Z}_+$, $w \in \mathbb{F}_2^{n(k)}$, $r \in \mathbb{F}_2^{l_{\texttt{S}}(k)}$, $a \in \mathbb{F}_2$, and $p \in [n(k)]$, define
\[
S_p(\ket{a} \ket{w} \ket{r}) = \ket{a \oplus \texttt{S}^w(1^k, p; r)} \ket{w} \ket{r}.
\]

\end{definition}

\subsection{Honest prover's strategy}

The honest prover interacts with the verifier over three rounds. We first describe the prover’s actions in the first two rounds, and treat the third round separately.

In the first round, the honest prover performs a purified measurement of the $k$-qubit witness in the $Z$ basis and encodes the $k$-bit outcome using the Hadamard code. He then sends the resulting state to the verifier. After receiving the same register back from the verifier, the prover applies the inverse of the same operation.

In the second round, the prover again performs a purified measurement of the witness, this time in the $X$ basis, and encodes the outcome in the Hadamard code. As in the first round, the prover sends the state to the verifier and later applies the inverse process after receiving the register back.

In the third round, the verifier sends a bit indicating a basis ($X$ or $Z$), and the honest prover performs a purified measurement of the witness in the corresponding basis and encodes the result using the Hadamard code.

We now define the prover’s actions in the first two rounds more precisely. These rounds follow a common template parameterized by observables $O$. The honest prover uses different values of $O$, specifically $O_a = Z(a),a\in\F_2^k$ in the first round and $O_b = X(b), b\in\F_2^k$ in the second round.
We now define the template for the honest prover’s actions in the first two rounds. The detailed implementation of this template will be described later.

\begin{definition}[Template of the honest prover's actions]
\label{def:mqt:honest_prover:block}
Let $k,n\in \Z_+$. 
Let $\mathsf{P}$ be a register of $k$ qubits, $\mathsf{R}=(\mathsf{R}_{j})_{j\in [n]}$ a register of $ n$ qubits. 
$\mathsf{P, R}$ are initially held by the prover. 
For $j\in[n]$, let $ O_{j}$ be observables on register $\mathsf{P}$. 
\begin{mdframed}
\begin{itemize}
\item \textnormal{The prover's first step:} 
The prover applies the following unitary 
\begin{align*}
\prod_{p\in[n]}P(O_{ p, \mathsf{P}}, \mathsf{R}_{p}). 
\end{align*}
The prover sends the verifier the register $\mathsf{R}$.
\item \textnormal{The verifier's first step:} 
The verifier receives the register $\mathsf{R}$ from the prover. The verifier sends the prover the register $\mathsf{R}$. 
\item \textnormal{The prover's second step:} 
The prover receives the register $\mathsf{R}$ from the verifier. The prover applies the following unitary 
\begin{align*}
(\prod_{p\in[n]} P(O_{p, \mathsf{P}}, \mathsf{R}_{p}))^{-1}.
\end{align*}
\end{itemize}
\end{mdframed}
\end{definition}
\begin{remark}
When we implement the template, 
$ O_{j}, j\in [n]$ always commute with each other. 
\end{remark}

Then, we describe the honest prover’s actions in the last round:
\begin{definition}[The honest prover's actions in the last round]
\label{def:mqt:honest_prover:l_consist}
Let $k,n\in \Z_+$. 
Let $\mathsf{P}$ be a register of $k$ qubits, 
$\mathsf{R}=(\mathsf{R}_{j})_{j\in [n]}$ a register of $ n$ qubits. 
$\mathsf{P, R}$ are initially held by the prover. 
For $q'\in[2], j\in[n]$, let $ O_{q', j}$ be observables on register $\mathsf{P}$. 
\begin{mdframed}
\begin{itemize}
\item The verifier sends the prover $q' \in [2]$. 
\item The honest prover acts as in Definition \ref{def:mqt:honest_prover:block} with $ O_{p}=O_{q', p}$ for $p\in [n]$. 
\end{itemize}
\end{mdframed}
\end{definition}
\begin{remark}
When we implement the template, for $q'\in [2], j_1,j_2\in [n]$, 
$ O_{q', j_1}$ always commutes with $ O_{q', j_2}$. 
\end{remark}
\begin{remark}
Note that the prover's second step in Definition \ref{def:mqt:honest_prover:block} is not necessary for Definition \ref{def:mqt:honest_prover:l_consist}. But for the ease of presentation, we keep it. In this section, the same holds for other definitions for the last rounds. Therefore, our protocol essentially contains $6$ messages, instead of $7$ messages.
\end{remark}

Now we summarize the honest prover's actions:
\begin{definition}[Many qubits tests, honest prover]\label{def:mqt:honest_prover}

Let $k,n\in \Z_+$. 
Let $ G_k=[g_1, g_2,\cdots, g_n]^T \in \F_2^{n} \otimes \F_2^k$ be as in Definition \ref{def:3c}.
Let $\mathsf{P}$ be a register of $k$ qubits, $\mathsf{R}$ a register of $n$ qubits. $\mathsf{P, R}$ are initially held by the prover and initialized to $\ket{0}$. 

The honest prover's actions are defined as follows: 
\begin{mdframed}
\begin{itemize}
\item The honest prover prepares the witness in register $\mathsf{P}$.
\item The honest prover acts as in Definition \ref{def:mqt:honest_prover:block} with $ O_{p}=Z(g_p)$ for $p\in [n]$.
\item The honest prover acts as in Definition \ref{def:mqt:honest_prover:block} with $ O_{p}=X(g_p)$ for $p\in [n]$. 
\item The honest prover acts as in Definition \ref{def:mqt:honest_prover:l_consist} with $ O_{1, p}=Z(g_p), O_{2, p}=X(g_p)$ for $p\in [n]$.
\end{itemize}
\end{mdframed}

\end{definition}

\subsection{Verifier's strategy}

Then we present the verifier's action. 
For ease of presentation, we first abstract the process of applying a Z gate on a self-corrected qubit.

\begin{definition}[The verifier's purified forms, Read out with self-correction]
\label{def:mqt:read_self_corr}
Let $S_{p} $ be as in Definition \ref{def:self_corr}. 
Let $\mathsf{R}=(\mathsf{R}_{j})_{j\in [n]}$ be a register of $n$ qubits. 
Let $\mathsf{S}$ be a register. 
Define 
\begin{align*}
{\tilde O}_{p, \mathsf{SRT_S}} = 
S_{p, \mathsf{SRT_S}}^\dagger \cdot Z_{\mathsf{S}} \cdot 
S_{p, \mathsf{SRT_S}}.
\end{align*}

\end{definition}
\begin{remark}
Here, $S_p$ performs self-correction on the codeword stored in register $\mathsf{R}$ at position $p$, and stores the self-corrected result in register $\mathsf{S}$. To perform the self-correction, the algorithm requires some classical randomness, which is stored in $\mathsf{T_S}$. Consequently, the quantum state stored in $\mathsf{T_S}$ is the all-$\ket{+}$ state. Note that although we describe the algorithm in this way, in the actual execution, the verifier samples classical randomness.
\end{remark}

In the first two rounds, the verifier can respond in various ways to the prover's messages. Later, we will show how the verifier chooses to respond in the whole protocol. 
In the following definition, $M, F, m, f$ can be viewed as the input to the template. 
$M$ indicates whether or not to measure. 
$m$ indicates which register to measure. 
$F$ indicates whether or not to flip. 
$f$ indicates which register to flip.

\begin{definition}[Template of the verifier's action for the consistency checks in the first two rounds]\label{def:mqt:verifier:block}
Let $\mathsf{R} =(\mathsf{R}_{j})_{j\in[n]}$ be a register. Let $ \mathsf{S}$ be a register held by the verifier and initialized with $ \ket{0}$. Let $ \mathsf{T_S} $ be a register held by the verifier and initialized with $H^{\otimes l_{\mathsf{S}}} \ket{0}$, where $l_{\mathsf{S}} $ is defined as in Definition \ref{def:3c}. 
Let $M,F\in \{0,1\}, m, f\in [n]$. 
\begin{mdframed}
\begin{itemize}
\item \textnormal{The prover's first step:} 
The prover sends the verifier the register $\mathsf{R}$.
\item \textnormal{The verifier's first step:} 
The verifier receives the register $\mathsf{R}$ from the prover. 
If $F=1$, the verifier measures $\mathsf{T_S}$, then the verifier applies ${\tilde O}_{ f, \mathsf{SRT_S}}$. If $M=1$, the verifier measures $\mathsf{T_S}$. Then, the verifier
measures ${\tilde O}_{m,\mathsf{SRT_S}}$ and outputs the measurement result $r$. 
The verifier sends the prover the register $\mathsf{R}$. 
\item \textnormal{The prover's second step:} 
The prover receives the register $\mathsf{R}$ from the verifier. 
\end{itemize}
\end{mdframed}
\end{definition}

Then, we describe the verifier's action in the last round:
\begin{definition}[The verifier's action for the consistency checks in the last round]
\label{def:mqt:verifier:last}
Let $q'\in[2], m\in [n]$. 
\begin{mdframed}
\begin{itemize}
\item  
The verifier sends $q'$ to the prover.
\item The verifier acts as in Definition \ref{def:mqt:verifier:block} with $M=1,F=0,m=m,f=0$. 
\end{itemize}
\end{mdframed}
\end{definition}

Now, we introduce the verifier's action in the many qubits tests. The verifier has two types of checks: consistency checks and anti-commuting checks. We first introduce the consistency check. In this check, the verifier checks the consistency between the first two rounds and the last round.
More specifically, the verifier checks that the answer in the final round matches the answer from the previous rounds.
\begin{definition}[Many qubits tests, consistency checks]\label{def:mqt:verifier:c} 
Let $ D$ be a distribution over $[n]$. 
Let $\mathsf{R} =(\mathsf{R}_{j})_{j\in[n]}$ be a register. Let $ \mathsf{S}$ be a register held by the verifier and initialized with $ \ket{0}$. Let $ \mathsf{T_{S, i}} $ for $i\in [3]$ be a register held by the verifier and initialized with $H^{\otimes l_{\mathsf{S}}} \ket{0}$. 
The verifier's actions are defined as follows: 
\begin{mdframed}
\begin{itemize}
\item The verifier samples $u\in_R [2]$, sets $M_u=1$, and sets $ M_i =0$ for $i \neq u, i\in [2]$. 
\item If $M_1=1$, the verifier sets $F_i = 0$ for $i\in [2]$. If $M_2=1$, the verifier sets $F_1 \in_R \F_2$, $F_2 = 0$.
\item The verifier samples $f_1,f_2\sim D$. 
\item The verifier repeats the verifier's actions in Definition \ref{def:mqt:verifier:block} twice. In the $i$-th time, the verifier acts as in Definition \ref{def:mqt:verifier:block} with $ M = M_i, F = F_i, m=f_i, f=f_i$; if $M_i=1$ he sets $r_i$ to be the output. 
\item In the end, the verifier acts as in Definition \ref{def:mqt:verifier:last} with $q'= u, m=f_u$. Let $r'$ be the output of the verifier.  
\item The verifier accepts iff $ r_u=r'$.
\end{itemize}
\end{mdframed}
\end{definition}

Then, we introduce the anti-commuting tests, where the verifier checks the consistency corresponding to the first $Z$ measurement and the $Z$ measurement in the last round with an $X$ applied in the middle.
\begin{definition}[Many qubits tests, anti-commuting tests]\label{def:mqt:verifier:a}
Let $ D$ be a distribution over $[n]$. 
Let $ G_k=[g_1, g_2,\cdots, g_n]^T \in \F_2^{n} \otimes \F_2^k$ be defined as in Definition \ref{def:3c}.
Let $\mathsf{R} =(\mathsf{R}_{j})_{j\in[n]}$ be a register. Let $ \mathsf{S}$ be a register held by the verifier and initialized with $ \ket{0}$. Let $ \mathsf{T_{S, i}} $ for $i\in [3]$ be a register held by the verifier and initialized with $H^{\otimes l_{\mathsf{S}}} \ket{0}$. 
The verifier's action are defined as follows: 
\begin{mdframed}
\begin{itemize}
\item The verifier sets $M_1=1, F_1=0, M_2=0, F_2=1$.  
\item The verifier samples $ (f_1,f_2)\sim D$. 
\item The verifier repeats the verifier's action in Definition \ref{def:mqt:verifier:block} for $2$ times. In the $i$-th time, the verifier acts as in Definition \ref{def:mqt:verifier:block} with $ M = M_i, F = F_i, m=f_i, f=f_i$, if $M_i=1$, sets $r_i$ be the output. 
\item In the end, the verifier acts as in Definition \ref{def:mqt:verifier:last} with $q'= 1, m=f_1$. Let $r'$ be the output of the verifier. 
\item The verifier accepts iff $ r_1=r' \oplus (g_{f_1}\cdot  g_{f_2})$.
\end{itemize}
\end{mdframed}
\end{definition}

In the end, we define the validity checks, where we check that the prover indeed sends the verifier an instance of the error correction code. 
\begin{definition}[Template of the verifier's validity checks]\label{def:mqt:verifier:b_code}
Let $\mathsf{R} =(\mathsf{R}_{j})_{j\in[n]}$ be a register. Let $ \mathsf{L}$ be a register held by the verifier and initialized with $ \ket{0}$. Let $ \mathsf{T_L} $ be a register held by the verifier and initialized with $H^{\otimes l_{\mathsf{L}}} \ket{0}$. 
Let $ \mathsf{T_S} $ be a register held by the verifier and initialized with $H^{\otimes l_{\mathsf{S}}} \ket{0}$. Let $t\in \{0,1\}, F\in\{0,1\},f\in[n]$. 
\begin{mdframed}
\begin{itemize}
\item \textnormal{The prover's first step:} 
The prover sends the verifier the register $\mathsf{R}$.
\item \textnormal{The verifier's first step:} 
The verifier receives the register $\mathsf{R}$ from the prover. 
If $F=1$, the verifier measures $\mathsf{T_S}$, then applies ${\tilde O}_{ f, \mathsf{SRT_S}}$.
\begin{itemize}
\item If $t=0$, the verifier sends the prover the register $\mathsf{R}$. 
\item If $t=1$, the verifier measures $\mathsf{T_L}$, then the verifier applies ${L}_{\mathsf{LRT_L}}$ and measures $Z_{\mathsf{L}}$ and outputs the measurement result $ r$. 
The verifier sends the prover the register $\mathsf{R}$. 
\end{itemize}
\item \textnormal{The prover's second step:} 
The prover receives the register $\mathsf{R}$ from the verifier. 
\end{itemize}
\end{mdframed}
\end{definition}

Then, we describe the verifier's action in the last round:

\begin{definition}[The verifier's validity checks in the last round]
\label{def:mqt:verifier:l_code}
Let $t\in \{0,1\}$. Let $q'\in [2]$. 
\begin{mdframed}
\begin{itemize}
\item  
The verifier sends $q'$ to the prover.
\item The verifier acts as in Definition \ref{def:mqt:verifier:b_code} with $t=t, F=0, f=0$. 
\end{itemize} 
\end{mdframed}
\end{definition}

We now specify the verifier’s parameters used during the validity checks.
\begin{definition}[Many qubits tests, validity checks]\label{def:mqt:verifier:v} 
Let $ D$ be a distribution over $[n]$. 
Let $\mathsf{R} =(\mathsf{R}_{j})_{j\in[n]}$ be a register. Let $ \mathsf{S}, \mathsf{L}$ be registers held by the verifier and initialized with $ \ket{0}$. Let $ \mathsf{T_{S, i}} $ for $i\in [3]$ be a register held by the verifier and initialized with $H^{\otimes l_{\mathsf{S}}} \ket{0}$.
Let $ \mathsf{T_{L, i}} $ for $i\in [3]$ be a register held by the verifier and initialized with $H^{\otimes l_{\mathsf{L}}} \ket{0}$.
The verifier's actions are defined as follows: 
\begin{mdframed}
\begin{itemize}
\item The verifier samples $u\in_R [3]$. The verifier sets $t_u=1$ and $t_i=0$ for $i\neq u, i\in [3]$.
\item The verifier samples $F_i\in_R \{0, 1\}$ for $i\in [u-1]$. The verifier sets $F_i=0$ for $i\geq u$. 
\item The verifier samples $f_1,f_2\sim D$. 
\item The verifier samples $q'\in_R [2]$. 
\item The verifier repeats the verifier's actions in Definition \ref{def:mqt:verifier:b_code} twice. In the $i$-th time, the verifier acts as in Definition \ref{def:mqt:verifier:b_code} with $t=t_i, F=F_i, f=f_i$ and if $t_i=1$ sets $r_i$ to be the output. 
\item In the end, the verifier acts as in Definition \ref{def:mqt:verifier:l_code} with $t=t_3,q'=q'$ and if $t_3=1$ sets $r_3$ to be the output.  
\item The verifier accepts iff $ r_u=1$.
\end{itemize}
\end{mdframed}
\end{definition}

Before describing the verifier’s actions in the many-qubits test, we first present the distribution used by the verifier during the test.

\begin{definition}[Verifier distribution for many qubits tests]\label{def:mqdef}
Let $p, q$ be as in Theorem~\ref{thm:clifford_ham}, and let $s, N$ be as in Lemma~\ref{lem:c_ham_amp}. 
Let $H = \frac{1}{m} \sum_{i \in [m]} H_i \in \mathbb{C}^{2^n} \otimes \mathbb{C}^{2^n}$ be a $5$-local Clifford Hamiltonian, where $m = \mathrm{poly}(n)$, and either $\lambda_{\min}(H) \leq 2^{-p(n)}$ or $\lambda_{\min}(H) \geq 1/q(n)$. 

For each $i \in [m]$, let
\begin{align*}
H_i =&~ \left(C_i^\dagger (\ketbra{0}{0})^{\otimes 5} C_i \right) \otimes I,\\
O_{i,j} =&~ \left(C_i^\dagger (I^{\otimes (j-1)} \otimes Z \otimes I^{\otimes (5-j)}) C_i\right) \otimes I.    \end{align*}
Each $O_{i,j}$ admits a Pauli representation
\[
O_{i,j} = (-1)^{d_{i,j}} X(a_{i,j}) Z(b_{i,j}) Y(c_{i,j}),
\]
as described in Lemma~\ref{lem:clifford_pauli}.
Let $t \in \mathbb{F}_2^{5N}$ and $l \in [m]^N$. Define
\[
O(t, l) = \bigotimes_{i \in [N]} \prod_{j \in [5]} O_{l_i, j}^{t_{i, j}},
\]
which has Pauli representation:
\[
O(t, l) = (-1)^{d(t, l)} X(a(t, l)) Z(b(t, l)) Y(c(t, l)).
\]
Define the measurement distributions used in the test as follows:
\begin{itemize}
\item Sample $ t \in_R \mathbb{F}_2^{5N}$, $r\in_R \mathbb{F}_2^{Nn}$, and $l_i \in_R [m]$ for each $i \in [N]$.
\item Define $\mu_1$ as the distribution over the pair $(b(t, l) + c(t, l), a(t, l) + c(t, l))$.
\item Define $\mu_2$ as the distribution over the pair $(b(t, l) + c(t, l) + r, a(t, l) + c(t, l))$.
\end{itemize}
\end{definition}

Now we put everything together and define the actions of the verifier in the many qubits tests.

\begin{definition}[Many qubits tests, verifier]\label{def:mqt:verifier} 
Let $\mathsf{R} $ be a register.  

The verifier's actions are defined as follows: 
\begin{mdframed}
\begin{enumerate}
\item With probability $1/6$, the verifier acts as in Definition \ref{def:mqt:verifier:c} with $D=\mu_1$,
\item With probability $1/6$, the verifier acts as in Definition \ref{def:mqt:verifier:a} with $D=\mu_1$,
\item With probability $1/6$, the verifier acts as in Definition \ref{def:mqt:verifier:v} with $D=\mu_1$,
\item With probability $1/6$, the verifier acts as in Definition \ref{def:mqt:verifier:c} with $D=\mu_2$,
\item With probability $1/6$, the verifier acts as in Definition \ref{def:mqt:verifier:a} with $D=\mu_2$,
\item With probability $1/6$, the verifier acts as in Definition \ref{def:mqt:verifier:v} with $D=\mu_2$,
\end{enumerate}
\end{mdframed}
\end{definition}

\subsection{General prover's actions}
Before we state our main theorem in this section, we first introduce a general prover's actions. 
\begin{definition}[The canonical prover's actions]
\label{def:mqt:canonical_prover}
Let $\mathsf{P}$ be a register, $\mathsf{R}$ be a register of $n$ qubits. 
$\mathsf{P, R}$ are initially hold by the prover and initialized by $\ket{0}$. 
For $q'\in [2]$, let $ U'_{q'}$ be a unitary. 
For $i\in [2]$, let $ U_i$ be a unitary. 
\begin{mdframed}
\hfill 

The canonical prover's actions in the first two rounds are defined as follows: At round $i$, 
\begin{itemize}
\item \textnormal{The prover's first step:} 
The prover applies a unitary $ {U}_{i,\mathsf{PR}}$. 
The prover sends the verifier the register $\mathsf{R}$.
\item \textnormal{The verifier's first step:} 
The verifier receives the register $\mathsf{R}$ from the prover. The verifier sends the prover the register $\mathsf{R}$. 
\item \textnormal{The prover's second step:} 
The prover receives the register $\mathsf{R}$ from the verifier. 
\end{itemize}

The canonical prover's actions in the last round  are defined as follows: 
\begin{itemize}
\item \textnormal{The verifier's first step:} 
The verifier sends $q'\in [2]$ to the prover.
\item \textnormal{The prover's first step:} 
The prover applies a unitary $ {U}'_{q',\mathsf{PR}}$. 
The prover sends the verifier the register $\mathsf{R}$.
\item \textnormal{The verifier's first step:} 
The verifier receives the register $\mathsf{R}$ from the prover. The verifier sends the prover the register $\mathsf{R}$. 
\item \textnormal{The prover's second step:} 
The prover receives the register $\mathsf{R}$ from the verifier. 
\end{itemize}
\end{mdframed}
\end{definition}
\begin{remark}
Note that any prover's actions can be purified into a canonical prover's actions described above. 
\end{remark}

For the ease of presentation, we introduce prover's and verifier's purified forms and clean forms.
\begin{definition}
[The prover's purified forms]
\label{def:pure_prover}
Let $\mathsf{P}, \mathsf{S}, \mathsf{T_S}$ be a register, $\mathsf{R}$ be a register of $n$ qubits. 
For $q'\in [2]$, let $ U'_{q'}$ be a unitary. 
For $a\in \F_2^k$, define the prover's hermitian forms in the last round as:
\begin{align*}
{\tilde O}'_{q',j,\mathsf{SPRT_S}} = ({U}'_{q', \mathsf{PR}})^{\dagger} \cdot {\tilde O}_{ j, \mathsf{SRT_S}}\cdot {U}'_{q', \mathsf{PR}}.
\end{align*}
\end{definition}
\begin{remark}
 The verifier's purified forms are defined in Definition \ref{def:mqt:read_self_corr}.
\end{remark}

\begin{definition}
[The verifier's clean forms]
\label{def:hermitian_verifier}
Let $\mathsf{P}, \mathsf{D}$ be a register, $\mathsf{R}$ be a register of $n$ qubits. 
Let $D_C$ be defined as in Definition \ref{def:decode}.
For $a\in \F_2^k$, define the verifier's hermitian forms in the first two rounds as:
\begin{align*}
{O}_\mathsf{DR}(a) = D_{C, \mathsf{DR}}^\dagger Z_{\mathsf{D}}(a) D_{C, \mathsf{DR}} .
\end{align*}
\end{definition}
\begin{remark}
The unitary $D_C$ decodes the quantum state stored in register $\mathsf{R}$ and stores it in register $\mathsf{D}$. The overall operator effectively applies $Z(a)$ directly on the quantum state protected by the classical error correction code.
\end{remark}

\begin{definition}
[The prover's clean forms]
\label{def:hermitian_prover}
Let $\mathsf{P}, \mathsf{D}$ be a register, $\mathsf{R}$ be a register of $n$ qubits. 
For $q'\in [2]$, let $ U'_{q'}$ be a unitary. 
For $a\in \F_2^k$, define the prover's hermitian forms in the last round as:
\begin{align*}
{O}'_{q',\mathsf{DPR}}(a) = ( U'_{q', \mathsf{PR}})^\dagger O_{\mathsf{DR}}(a)  U'_{q', \mathsf{PR}}.
\end{align*}
\end{definition}

For ease of presentation, we introduce transition actions.
\begin{definition}[Transition actions]\label{def:trans}
Let $ {U}_{i}$ for $i\in[2]$ be defined as in Definition \ref{def:mqt:canonical_prover}. 
For $i<j$, define the transition action from $i$ to $j$ as:
\begin{align*}
{U}_{i\rightarrow j} ={U}_{j} \cdots {U}_{i+2}{U}_{i+1}.
\end{align*}
\end{definition}

\begin{definition}[Transition actions with flips in the purified forms]\label{def:trans_f}
Let $ {U}_{i}$ for $i\in[2]$ be defined as in Definition \ref{def:mqt:canonical_prover}.
For $F\in \{0, 1\}^2, f\in [n]^2$, define the transition action with flips in the purified forms as:
\begin{align*}
{\tilde U}_{1, F, f, \mathsf{SPRT_S}} =&~ {\tilde O}_{f_1, \mathsf{SRT_S}}^{F_1} 
{U}_{1}, \\
{\tilde U}_{2, F, f, \mathsf{SPRT_S}}=&~ {\tilde O}_{f_2,\mathsf{SPRT_S}}^{F_2} {U}_{2}\cdot  {\tilde U}_{1, F, f, \mathsf{SPRT_S}} .
\end{align*}

\end{definition}

\begin{definition}[Transition actions with flips in the clean forms]\label{def:trans_f_herm}
Let $ G_k=[g_1, g_2,\cdots, g_n]^T \in \F_2^{n} \otimes \F_2^k$ be defined as in Definition \ref{def:3c}. 
Let $ {U}_{i}$ for $i\in[2]$ be defined as in Definition \ref{def:mqt:canonical_prover}.
For $F\in \{0, 1\}^2, f\in [n]^2$, define the transition action with flips in the clean forms as:
\begin{align*}
{U}_{1, F, f} =&~  O(g_{f_1})^{F_1} {U}_{1}, \\
{U}_{2, F, f} =&~  O(g_{f_2})^{F_2} {U}_{2} \cdot {U}_{1, F, f}.
\end{align*}

\end{definition}

\subsection{Main theorem}

Now, we state our main theorem in this section. 
\begin{theorem}[Many qubits tests]
\label{thm:mqt}
Let $ \texttt{V}$ be the verifier defined in Definition \ref{def:mqt:verifier}. Then, 

\begin{enumerate}
\item \textnormal{Completeness:} $ \texttt{V}$ accepts the prover in Definition \ref{def:mqt:honest_prover}  with probability $1$.
\item \textnormal{Soundness:} Let $ \texttt{P}$ be the prover defined in Definition \ref{def:mqt:canonical_prover}.  Let $ O'_{1}(\cdot),O'_{2}(\cdot)$ be defined as in Definition \ref{def:hermitian_prover}. 
If $ \texttt{V}$ accepts $ \texttt{P}$ with probability greater than $ 1- \delta$, then approximate anti-commuting / commuting relationships of $ O'_{1}(a),O'_{2}(b)$ hold under the state 
$$\rho ={U}_{0\rightarrow 2}\ket{000}_{\mathsf{PRD}}\bra{000}_{\mathsf{PRD}}{U}_{0\rightarrow 2}^\dagger. $$
More specifically,
$$\E_{(a,b)\sim \mu_1}\| O'_{1}(a)O'_{2}(b)-(-1)^{a\cdot b}O'_{2}(b)O'_{1}(a)\|^2_\rho =  \Theta(\delta),$$
and 
$$\E_{(a,b)\sim \mu_2}\| O'_{1}(a)O'_{2}(b)-(-1)^{a\cdot b}O'_{2}(b)O'_{1}(a)\|^2_\rho =  \Theta(\delta).$$
\end{enumerate}
\end{theorem}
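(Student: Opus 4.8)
The plan is to prove completeness and soundness separately, following the template of \cite{nv16,vid20} for many-qubit tests but adapted to the single-prover, multi-round setting, with the error-correcting code playing the role that self-testing plays in the multi-prover case. For completeness I would check directly that each of the six tests in Definition~\ref{def:mqt:verifier} accepts the honest prover of Definition~\ref{def:mqt:honest_prover} with certainty. Two facts drive this: (i) applying a phase-flip to the register holding a purified measurement outcome of an observable $O$ is the same as applying $O$ to the measured system, and measuring that register in the computational basis is the same as measuring $O$; (ii) because the honest prover's first two rounds consist of a purified measurement followed by its inverse, the register returned before the last round holds exactly the witness (possibly with an intermediate flip applied), and the answer bit in the last round for question $q'$ equals the purified outcome of $Z(g_f)$ for $q'=1$ and of $X(g_f)$ for $q'=2$. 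For the consistency checks (Definition~\ref{def:mqt:verifier:c}) this gives $r_u=r'$ directly, including the variants where a random phase-flip $\tilde O_{f_1}$ is applied before a measurement, since such a flip is diagonal in the matching basis; for the anti-commuting test (Definition~\ref{def:mqt:verifier:a}) the round-two flip corresponds to applying $X(g_{f_2})$ to the witness, which anticommutes with the round-one observable $Z(g_{f_1})$ exactly when $g_{f_1}\cdot g_{f_2}=1$, yielding $r_1=r'\oplus(g_{f_1}\cdot g_{f_2})$; and the validity checks (Definition~\ref{def:mqt:verifier:v}) accept because the honest prover always transmits a genuine Hadamard codeword, on which the local tester accepts with probability $1$ by Theorem~\ref{thm:3c_code}.

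For soundness, suppose the verifier of Definition~\ref{def:mqt:verifier} accepts the canonical prover of Definition~\ref{def:mqt:canonical_prover} with probability greater than $1-\delta$; then each of the six individual tests passes with probability greater than $1-6\delta$. The first step is to exploit the validity checks: since Definition~\ref{def:mqt:verifier:v} tests Hadamard-codeword membership after an arbitrary (randomly chosen) prefix of the flips $\{\tilde O_{f_i}\}$ has been applied, passing it with high probability forces the prover's round-one and round-two messages—conditioned on reaching the last round and on the flips applied so far—to be $O(\delta)$-close, in the relevant $\|\cdot\|$-seminorm, to codewords. By the self-correction guarantee of Theorem~\ref{thm:3c_code} this allows every self-corrected operator $\tilde O_p$ of Definition~\ref{def:mqt:read_self_corr} (and every $\tilde O'_{q',p}$ of Definition~\ref{def:pure_prover}) to be replaced by the corresponding clean form $O(g_p)$, $O'_{q'}(g_p)$ of Definitions~\ref{def:hermitian_verifier}--\ref{def:hermitian_prover}, at a cost of $O(\sqrt\delta)$ per substitution. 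The second step extracts from the anti-commuting test a single seed operator identity: ``measure $Z(g_{f_1})$ in round one, apply the round-two flip $O(g_{f_2})$, then read $Z(g_{f_1})$ in round three'' agrees with the same sequence carrying an extra sign $(-1)^{g_{f_1}\cdot g_{f_2}}$, up to $O(\delta)$ in $\|\cdot\|_\rho^2$.

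The third step uses the consistency tests as ``transport'' relations. The $Z$-consistency check moves the round-one measurement of $Z(g_{f_1})$ into the last round, expressing it as $O'_1(g_{f_1})$ conjugated by the transition unitary $U_{1\rightarrow2}$; the standard and flipped $X$-consistency checks together move the round-two flip $O(g_{f_2})$ into the last round as $O'_2(g_{f_2})$, using the transition actions with flips of Definitions~\ref{def:trans}--\ref{def:trans_f_herm}. Each of these is stated as an exact algebraic identity about products of clean forms with the $U_i$ absorbed, whose approximate validity on the relevant state follows from the corresponding test with an error $O(\sqrt\delta)$. Composing the seed identity with these finitely many transport relations and applying the triangle inequality for $\|\cdot\|_\rho$ (Lemma~\ref{lem:tri_sigma_norm}), together with the observation that $\rho=U_{0\rightarrow2}\ket{000}_{\mathsf{PRD}}\bra{000}_{\mathsf{PRD}}U_{0\rightarrow2}^\dagger$ is precisely the prover's state entering the last round, yields $\E_{(a,b)\sim D}\|O'_1(a)O'_2(b)-(-1)^{a\cdot b}O'_2(b)O'_1(a)\|_\rho^2=\Theta(\delta)$ for $D=\mu_1$ and, using the $r$-randomized tests in items~4--6 of Definition~\ref{def:mqt:verifier}, also for $D=\mu_2$; this is the claim.

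The main obstacle is the bookkeeping in the third step. The prover's mid-protocol unitaries $U_1,U_2$ are arbitrary rather than purified measurements, so the flip and read operations at different rounds interact with them nontrivially; I must phrase every consistency relation as an exact statement about products of clean forms with the $U_i$ absorbed via the transition actions, keep the number of triangle-inequality applications constant so the accumulated error stays $\Theta(\delta)$ rather than blowing up, and verify that the self-correction replacement of the first step is valid on exactly the (possibly flipped) intermediate states that actually arise inside the consistency and anti-commuting tests—this last point is what forces the validity check to randomize over which prefix of flips to apply before testing codeword membership, and getting that matching exactly right is the delicate part of the argument.
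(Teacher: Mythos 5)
Your proposal follows essentially the same route as the paper's proof: completeness via the exact operator identities satisfied by the honest prover in canonical form, and soundness by reducing overall acceptance to the constantly many test configurations, using the validity checks together with local testability and self-correction of the Hadamard code to replace the purified/self-corrected reads $\tilde O_p$, $\tilde O'_{q',p}$ by the clean forms $O(g_p)$, $O'_{q'}(g_p)$, then using the consistency checks (including the flipped variant) as transport relations and the anti-commuting test as the seed, chained by a constant number of triangle inequalities on the state $U_{0\rightarrow 2}\ket{0}$, for both $\mu_1$ and $\mu_2$. This is precisely the argument carried out in Lemmas~\ref{lem:mqt:prob}, \ref{lem:mqt:v1}--\ref{lem:mqt:v2}, \ref{lem:mqt:c}, \ref{lem:mqt:a} and \ref{lem:mqt:anti}, so the plan is correct and not materially different from the paper's.
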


Before proving the main theorem, we first 
establish several lemmas that will aid in the proof.
We begin by examining some basic properties of Definition \ref{def:hermitian_prover} and Definition \ref{def:hermitian_verifier}.
\begin{lemma}
[Properties of the prover's hermitian forms]
Let $O'_{q'}(a)$ for $q'\in [2], a\in \F_2^k$ be as in Definition \ref{def:hermitian_prover}. 

\begin{enumerate}
\item For $q'\in [2], a\in \F_2^k$,  $$O'_{q'}(a) = (O'_{q'}(a))^\dagger,$$
\item For $q'\in [2], a\in \F_2^k$,  $$(O'_{q'}(a))^2 = I,$$
\item For $q'\in [2], a,b\in \F_2^k$,   $$[O'_{q'}(a), O'_{q'}(b)]=0,$$
\item For $q'\in [2], a,b\in \F_2^k$,   $$O'_{q'}(a)\cdot O'_{q'}(b)=O'_{q'}(a\oplus b).$$
\end{enumerate}
\end{lemma}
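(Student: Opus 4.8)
The plan is to prove each of the four properties directly from the definition
$O'_{q'}(a) = (U'_{q',\mathsf{PR}})^\dagger\, O_{\mathsf{DR}}(a)\, U'_{q',\mathsf{PR}}$,
where $O_{\mathsf{DR}}(a) = D_{C,\mathsf{DR}}^\dagger\, Z_{\mathsf{D}}(a)\, D_{C,\mathsf{DR}}$ and $D_C$ is the unitary decoding operator from Definition~\ref{def:decode}. Since $U'_{q'}$ and $D_C$ are unitaries, conjugation by them is an algebra homomorphism that also preserves the adjoint; thus every algebraic identity we want to establish for $O'_{q'}(a)$ reduces to the corresponding identity for the Pauli-type operator $Z_{\mathsf{D}}(a)$ acting on register $\mathsf{D}$. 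The only non-routine observation is that conjugating by $D_C$ (which acts on $\mathsf{D}$ and $\mathsf{R}$ jointly) does not spoil these identities, which is immediate because conjugation by any unitary is multiplicative and $*$-preserving.

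Concretely, for Item~1 (Hermiticity), I would write $(O'_{q'}(a))^\dagger = (U'_{q'})^\dagger\, O_{\mathsf{DR}}(a)^\dagger\, U'_{q'}$ and note $O_{\mathsf{DR}}(a)^\dagger = D_C^\dagger\, Z_{\mathsf{D}}(a)^\dagger\, D_C = D_C^\dagger\, Z_{\mathsf{D}}(a)\, D_C = O_{\mathsf{DR}}(a)$, since $Z(a)$ is a tensor product of Hermitian Pauli-$Z$ operators. For Item~2, $(O'_{q'}(a))^2 = (U'_{q'})^\dagger\, O_{\mathsf{DR}}(a)^2\, U'_{q'}$ and $O_{\mathsf{DR}}(a)^2 = D_C^\dagger\, Z_{\mathsf{D}}(a)^2\, D_C = D_C^\dagger\, I\, D_C = I$, using $Z(a)^2 = I$. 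For Items~3 and~4, I would use that $Z_{\mathsf{D}}(a) Z_{\mathsf{D}}(b) = Z_{\mathsf{D}}(a\oplus b) = Z_{\mathsf{D}}(b) Z_{\mathsf{D}}(a)$ (componentwise, distinct $Z$'s on the same qubit multiply and all $Z$'s commute), so that
$O'_{q'}(a) O'_{q'}(b) = (U'_{q'})^\dagger O_{\mathsf{DR}}(a) U'_{q'} (U'_{q'})^\dagger O_{\mathsf{DR}}(b) U'_{q'} = (U'_{q'})^\dagger O_{\mathsf{DR}}(a) O_{\mathsf{DR}}(b) U'_{q'} = (U'_{q'})^\dagger D_C^\dagger Z_{\mathsf{D}}(a) Z_{\mathsf{D}}(b) D_C U'_{q'} = (U'_{q'})^\dagger D_C^\dagger Z_{\mathsf{D}}(a\oplus b) D_C U'_{q'} = O'_{q'}(a\oplus b)$,
which gives Item~4, and symmetry of this computation in $a,b$ gives Item~3.

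There is no real obstacle here — the statement is a bookkeeping lemma recording that the prover's clean forms inherit the group-representation structure of the underlying $Z$-type Paulis. The one thing to be careful about is the register structure: $O'_{q'}(a)$ nominally acts on $\mathsf{D}\mathsf{P}\mathsf{R}$ while $O_{\mathsf{DR}}(a)$ acts on $\mathsf{D}\mathsf{R}$ and $U'_{q',\mathsf{PR}}$ on $\mathsf{P}\mathsf{R}$, so one should implicitly tensor with identities on the missing registers; this does not affect any of the identities since identity factors commute with everything and square to themselves. I would therefore present the proof as four short displayed computations, each a chain of equalities reducing to the corresponding fact about $Z(a)$.

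\begin{proof}
All four items follow from the fact that conjugation by a unitary is a $*$-algebra homomorphism, together with the elementary identities $Z(a)^\dagger = Z(a)$, $Z(a)^2 = I$, and $Z(a)Z(b) = Z(a\oplus b) = Z(b)Z(a)$ for the $Z$-type Pauli operators. Recall from Definition~\ref{def:hermitian_prover} and Definition~\ref{def:hermitian_verifier} that
\[
O'_{q'}(a) = (U'_{q',\mathsf{PR}})^\dagger\, O_{\mathsf{DR}}(a)\, U'_{q',\mathsf{PR}}, \qquad
O_{\mathsf{DR}}(a) = D_{C,\mathsf{DR}}^\dagger\, Z_{\mathsf{D}}(a)\, D_{C,\mathsf{DR}},
\]
where $U'_{q',\mathsf{PR}}$ and $D_{C,\mathsf{DR}}$ are unitaries (tensored with the identity on the registers they do not act on).

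\textbf{Item 1.} Since $(Z(a))^\dagger = Z(a)$,
\[
(O'_{q'}(a))^\dagger = (U'_{q',\mathsf{PR}})^\dagger D_{C,\mathsf{DR}}^\dagger (Z_{\mathsf{D}}(a))^\dagger D_{C,\mathsf{DR}} U'_{q',\mathsf{PR}} = (U'_{q',\mathsf{PR}})^\dagger D_{C,\mathsf{DR}}^\dagger Z_{\mathsf{D}}(a) D_{C,\mathsf{DR}} U'_{q',\mathsf{PR}} = O'_{q'}(a).
\]

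\textbf{Item 2.} Since $Z(a)^2 = I$,
\[
(O'_{q'}(a))^2 = (U'_{q',\mathsf{PR}})^\dagger D_{C,\mathsf{DR}}^\dagger Z_{\mathsf{D}}(a)^2 D_{C,\mathsf{DR}} U'_{q',\mathsf{PR}} = (U'_{q',\mathsf{PR}})^\dagger D_{C,\mathsf{DR}}^\dagger D_{C,\mathsf{DR}} U'_{q',\mathsf{PR}} = I.
\]

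\textbf{Item 4.} Using $U'_{q',\mathsf{PR}}(U'_{q',\mathsf{PR}})^\dagger = I$, $D_{C,\mathsf{DR}}D_{C,\mathsf{DR}}^\dagger = I$, and $Z(a)Z(b) = Z(a\oplus b)$,
\begin{align*}
O'_{q'}(a)\, O'_{q'}(b)
&= (U'_{q',\mathsf{PR}})^\dagger D_{C,\mathsf{DR}}^\dagger Z_{\mathsf{D}}(a) D_{C,\mathsf{DR}} U'_{q',\mathsf{PR}} (U'_{q',\mathsf{PR}})^\dagger D_{C,\mathsf{DR}}^\dagger Z_{\mathsf{D}}(b) D_{C,\mathsf{DR}} U'_{q',\mathsf{PR}} \\
&= (U'_{q',\mathsf{PR}})^\dagger D_{C,\mathsf{DR}}^\dagger Z_{\mathsf{D}}(a) Z_{\mathsf{D}}(b) D_{C,\mathsf{DR}} U'_{q',\mathsf{PR}} \\
&= (U'_{q',\mathsf{PR}})^\dagger D_{C,\mathsf{DR}}^\dagger Z_{\mathsf{D}}(a\oplus b) D_{C,\mathsf{DR}} U'_{q',\mathsf{PR}} \\
&= O'_{q'}(a\oplus b).
\end{align*}

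\textbf{Item 3.} Since $Z(a)Z(b) = Z(b)Z(a)$, the computation in Item~4 is symmetric in $a$ and $b$, so $O'_{q'}(a) O'_{q'}(b) = O'_{q'}(a\oplus b) = O'_{q'}(b\oplus a) = O'_{q'}(b) O'_{q'}(a)$, i.e.\ $[O'_{q'}(a), O'_{q'}(b)] = 0$.
\end{proof}
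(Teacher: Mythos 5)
Your proof is correct and matches the paper's intent: the paper dismisses this lemma with ``follows from a straightforward calculation,'' and your write-up is precisely that calculation, reducing each identity to the corresponding fact about $Z_{\mathsf{D}}(a)$ via conjugation by the unitaries $D_{C,\mathsf{DR}}$ and $U'_{q',\mathsf{PR}}$ (with the register/identity-tensoring point handled correctly). Nothing further is needed.
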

\begin{proof}
This lemma follows from a straightforward calculation.
\end{proof}

\begin{lemma}
[Properties of the verifier's hermitian forms]
Let $O(a)$ for $a\in \F_2^k$ be defined as in Definition \ref{def:hermitian_verifier}. 

\begin{enumerate}
\item For $a\in \F_2^k$,  $$O(a) = (O(a))^\dagger,$$
\item For $a\in \F_2^k$,  $$(O(a))^2 = I,$$
\item For $a,b\in \F_2^k$,   $$[O(a), O(b)]=0,$$
\item For $a,b\in \F_2^k$,   $$O(a)\cdot O(b)=O(a\oplus b).$$
\end{enumerate}
\end{lemma}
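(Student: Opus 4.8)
The plan is to derive all four properties from the corresponding elementary facts about the Pauli-$Z$ string $Z_{\mathsf{D}}(a)$ by conjugating with the fixed unitary $D_C$ from Definition~\ref{def:decode}. First I would note that $D_C$, being a permutation of the computational basis of $\mathcal{D}\otimes\mathcal{R}$, is unitary, so that $D_C^\dagger D_C = D_C D_C^\dagger = I$ on the joint space.

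Next I would record the facts about $Z_{\mathsf{D}}(a) = Z^{a_1}\otimes\cdots\otimes Z^{a_k}$ acting on $\mathsf{D}$: it is Hermitian because each $Z$ is Hermitian; it squares to the identity because $Z^2 = I$; and because its tensor factors act on disjoint qubits, $Z_{\mathsf{D}}(a) Z_{\mathsf{D}}(b) = Z_{\mathsf{D}}(a\oplus b)$, which in particular gives $[Z_{\mathsf{D}}(a), Z_{\mathsf{D}}(b)] = 0$.

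Finally, writing $O(a) = D_C^\dagger Z_{\mathsf{D}}(a) D_C$, each property is preserved under conjugation by the unitary $D_C$: $O(a)^\dagger = D_C^\dagger Z_{\mathsf{D}}(a)^\dagger D_C = O(a)$; $O(a)^2 = D_C^\dagger Z_{\mathsf{D}}(a) (D_C D_C^\dagger) Z_{\mathsf{D}}(a) D_C = D_C^\dagger Z_{\mathsf{D}}(a)^2 D_C = I$; and $O(a) O(b) = D_C^\dagger Z_{\mathsf{D}}(a) Z_{\mathsf{D}}(b) D_C = D_C^\dagger Z_{\mathsf{D}}(a\oplus b) D_C = O(a\oplus b)$, with the commutation relation following from symmetry of the last identity. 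There is no genuine obstacle; the only point worth a remark is that $Z_{\mathsf{D}}(a)$ acts nontrivially only on $\mathsf{D}$ while $D_C$ acts on $\mathsf{D}\mathsf{R}$, so the cancellation $D_C D_C^\dagger = I$ must be read on the full tensor product $\mathcal{D}\otimes\mathcal{R}$. The preceding lemma on the prover's clean forms $O'_{q'}(a)$ follows by exactly the same argument after a further conjugation by $U'_{q'}$.
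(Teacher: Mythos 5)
Your proposal is correct and is exactly the computation the paper summarizes when it says the lemma ``follows from a straightforward calculation'': the four properties are inherited from the Pauli string $Z_{\mathsf{D}}(a)$ under conjugation by the unitary $D_C$. No discrepancies to note.
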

\begin{proof}
This lemma follows from a straightforward calculation.
\end{proof}

Next, we examine the relationship between the overall success probability and the success probabilities in each configuration.
\begin{lemma}\label{lem:mqt:prob}
Let $\texttt{V},\texttt{P}$ be defined as in Theorem \ref{thm:mqt}. 
Let $D_c$ be the distribution of $ M,F$ in Definition \ref{def:mqt:verifier:c}.
Let $D_v$ be the distribution of $ u$ in Definition \ref{def:mqt:verifier:v}.

Then
$$ \Pr[\text{ \texttt{V} accepts \texttt{P} } ] \geq 1-\Theta(\delta),$$  
if and only if
\begin{itemize}
\item For any $u\in[3], F\in \{0,1\}^2,q'\in[2]$ in the support of $D_v$,
\begin{align*}
\Pr[\text{ validity check passes }~|~F,q',u, D~] \geq 1- \Theta(\delta).
\end{align*}
\item For any $M,F\in\{0, 1\}^2$ in the support of $D_c$,
\begin{align*}
\Pr[\text{ consistency check passes }~|~M,F, D~] \geq 1- \Theta(\delta).
\end{align*}
\item Moreover, 
\begin{align*}
\Pr[\text{ anti-commuting test passes }~|~ D] \geq 1- \Theta(\delta).
\end{align*}
\end{itemize}
\end{lemma}
\begin{proof}
This lemma follows from a straightforward application of the union bound.
\end{proof}

The following lemma describes the relationship between the success probability of the local test (see Definition~\ref{def:valid}) and the coefficients of the underlying quantum state. 

For the next lemmas, recall notation from Theorem~\ref{thm:3c_code}.

\begin{lemma}\label{lem:7.38}
Let
\begin{align*}
\ket\psi=\ket{\phi}_{\mathsf{PR}}\otimes \ket{0}_{\mathsf{L}} \otimes H^{\otimes l_{\mathsf{L}}}\ket{0}_{\mathsf{T_L}}\;,
\end{align*}
where
\begin{align*}
\ket{\phi}=\sum_{w\in\F_2^n} \alpha_{w} \ket{\phi_w}_\mathsf{P}\ket{w}_\mathsf{R}\;,
\end{align*}
for some arbitrary (normalized) coefficients $\alpha_w$. Then
\begin{align*}
\|(\ketbra{1}{1}_\mathsf{L}\otimes I_{\mathsf{PRT_L}})  \cdot {L}_{ \mathsf{LRT_L}}\ket\psi\|^2 = \sum_{w\in\F_2^n}  |\alpha_{w}|^2\cdot \Pr[\texttt{L}^w(1^k)=1],
\end{align*}
and
\begin{align*}
\|(\ketbra{1}{1}_\mathsf{L}\otimes I_{\mathsf{PRT_L}})  \cdot {L}_{ \mathsf{LRT_L}}\ket\psi\|^2 \leq&~1-\frac{\kappa \cdot \kappa'\cdot d }{2}\Big(1- \sum_{dist(w, C_n) < \kappa'\cdot d / 2}  |\alpha_{ w}|^2 \Big).
\end{align*}
\end{lemma}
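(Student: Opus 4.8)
The plan is to unfold the definition of the unitary $L_{\mathsf{LRT_L}}$ (Definition~\ref{def:valid}) acting on $\ket{\psi}$, compute the squared norm of the projection onto $\ket{1}_{\mathsf{L}}$ directly, and then invoke the local-testability guarantee of Theorem~\ref{thm:3c_code}. The key point is that the local tester $\texttt{L}$ is non-adaptive with randomness stored in $\mathsf{T_L}$ initialized to the uniform superposition $H^{\otimes l_{\mathsf{L}}}\ket{0}$, so that summing over the computational basis of $\mathsf{T_L}$ exactly reproduces the average over the tester's internal randomness.

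First I would expand
\[
L_{\mathsf{LRT_L}}\ket{\psi} = \frac{1}{\sqrt{2^{l_{\mathsf{L}}}}} \sum_{w\in\F_2^n}\sum_{r\in\F_2^{l_{\mathsf{L}}}} \alpha_w \ket{\phi_w}_{\mathsf{P}}\ket{\texttt{L}^w(1^k;r)}_{\mathsf{L}}\ket{w}_{\mathsf{R}}\ket{r}_{\mathsf{T_L}}\;,
\]
using that $\mathsf{L}$ starts in $\ket{0}$ and $L$ adds $\texttt{L}^w(1^k;r)$ to it. Applying $\ketbra{1}{1}_{\mathsf{L}}$ keeps only the terms with $\texttt{L}^w(1^k;r)=1$. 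Since the vectors $\ket{\phi_w}_{\mathsf{P}}\ket{w}_{\mathsf{R}}\ket{r}_{\mathsf{T_L}}$ are orthonormal across distinct $(w,r)$ — orthogonality in $\mathsf{R}$ for distinct $w$ and in $\mathsf{T_L}$ for distinct $r$ — the squared norm splits as
\[
\frac{1}{2^{l_{\mathsf{L}}}}\sum_{w}|\alpha_w|^2\,\big|\{r : \texttt{L}^w(1^k;r)=1\}\big| = \sum_w |\alpha_w|^2 \Pr[\texttt{L}^w(1^k)=1]\;,
\]
which is the first claimed identity.

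For the inequality, I would split the sum over $w$ according to whether $\dist(w,C_n) < \kappa'd/2$ or not. For the first group I bound $\Pr[\texttt{L}^w(1^k)=1]\le 1$ trivially. For the second group, $\dist(w,C_n)\ge \kappa'd/2$, so by the soundness clause of Theorem~\ref{thm:3c_code} (local testing, part (b)),
\[
\Pr[\texttt{L}^w(1^k)=1] \le 1 - \kappa\cdot\dist(w,C_n) \le 1 - \frac{\kappa\kappa'd}{2}\;.
\]
Writing $p = \sum_{\dist(w,C_n)<\kappa'd/2}|\alpha_w|^2$ for the weight on the ``close'' group, the total is at most $p\cdot 1 + (1-p)\big(1 - \tfrac{\kappa\kappa'd}{2}\big) = 1 - \tfrac{\kappa\kappa'd}{2}(1-p)$, which is exactly the stated bound.

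I do not expect a serious obstacle here; the only point requiring a little care is making sure the randomness register $\mathsf{T_L}$ is treated consistently — i.e., that the ``randomness-aware'' unitary $L$ of Definition~\ref{def:valid} applied to a uniform superposition over $r$ indeed yields the average acceptance probability, which is precisely why $\mathsf{T_L}$ was initialized to $H^{\otimes l_{\mathsf{L}}}\ket{0}$ — and that cross terms between different $w$ and different $r$ vanish by orthogonality. Both are routine once the expansion above is written out.
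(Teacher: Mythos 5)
Your proposal is correct and follows essentially the same route as the paper's proof: both establish the first identity by unfolding the action of $L$ on the uniform superposition over the tester's randomness (using orthogonality across distinct $(w,r)$), and both obtain the inequality by splitting the sum according to $\dist(w,C_n)$ and applying the local-testing soundness bound $\Pr[\texttt{L}^w(1^k)=1]\le 1-\kappa\cdot\dist(w,C_n)$ to the far words. No gaps.
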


\begin{proof}
Note that
\begin{align*} &~ L_{}(\ket{a}\ket{w}\ket{r}) = \ket{a\oplus \texttt{L}^w(1^k; r)}\ket{w}\ket{r}, \end{align*}
thus
\begin{align*}
\Pr[\texttt{L}^w(1^k)=1]=\|(\ketbra{1}{1}_\mathsf{L}\otimes I_{\mathsf{PRT_L}})  \cdot {L}_{ \mathsf{LRT_L}}(\ket{\phi_w}_{\mathsf{P}}\ket{w}_{\mathsf{R}} \ket{0}_{\mathsf{L}} H^{\otimes l_{\mathsf{L}}}\ket{0}_{\mathsf{T_L}})\|^2 ,
\end{align*}
and
\begin{align*}
\sum_{w\in\F_2^n}|\alpha_w|^2\cdot \Pr[\texttt{L}^w(1^k)=1]=\|(\ketbra{1}{1}_\mathsf{L}\otimes I_{\mathsf{PRT_L}})  \cdot {L}_{ \mathsf{LRT_L}}(\ket\psi)\|^2.
\end{align*}
Moreover, by the guarantees of Theorem~\ref{thm:3c_code},
\begin{align*} \Pr[\texttt{L}^{w}(1^k)=1] \leq 1- \kappa \cdot dist(w, C_n). \end{align*} 
Thus
\begin{align*}
&~ \sum_{w\in\F_2^n} |\alpha_{w}|^2\cdot \Pr[\texttt{L}^w(1^k)=1]\\
\leq &~ \sum_{w\in\F_2^n}  |\alpha_{w}|^2\cdot(1- \kappa \cdot dist(w, C_n)) \\
\leq &~ \sum_{dist(w, C_n) < \kappa'\cdot d / 2} |\alpha_{w}|^2 + \sum_{dist(w, C_n) \geq \kappa'\cdot d / 2}  |\alpha_{w}|^2\cdot(1- \kappa \cdot \kappa'\cdot d / 2) \\
= &~1-\frac{\kappa \cdot \kappa'\cdot d }{2}\Big(1- \sum_{dist(w, C_n) < \kappa'\cdot d / 2}  |\alpha_{w}|^2 \Big).
\end{align*}
\end{proof}
 
The following lemma characterizes the relationship between the difference of the purified and clean forms and the coefficients of the underlying quantum state.

\begin{lemma}\label{lem:7.39}
Let 
\begin{align*}
\ket\psi=\ket{\phi}_{\mathsf{PR}}\otimes \ket{0}_{\mathsf{SD}} \otimes H^{\otimes l_{\mathsf{S}}}\ket{0}_{\mathsf{T_S}},
\end{align*}
and
\begin{align*}
\ket{\phi}=\sum_{w\in\F_2^n} \alpha_{w} \ket{\phi_w}_\mathsf{P}\ket{w}_\mathsf{R}.
\end{align*}
Then
\begin{align*}
\|
{\tilde O}_{p,\mathsf{SRT_S}}\ket\psi -
O_\mathsf{DR}(g_p)\ket\psi \|^2
\leq &~4-(4-4\eps) \sum_{ dist(w, C_n) < \kappa'\cdot d / 2 }|\alpha_{w} |^2  .
\end{align*} 

\end{lemma}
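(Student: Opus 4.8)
The plan is to unfold the definitions of the purified form $\tilde O_{p,\mathsf{SRT_S}}$ (Definition~\ref{def:mqt:read_self_corr}) and the clean form $O_{\mathsf{DR}}(g_p)$ (Definition~\ref{def:hermitian_verifier}) on the product state $\ket\psi$, and to reduce the squared norm to a sum over the computational-basis branches indexed by the string $w$ held in register $\mathsf R$. This mirrors the structure of the proof of Lemma~\ref{lem:7.38}.

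First I would compute the action of each operator on a single branch. Using Definition~\ref{def:self_corr} and the fact that $S_p$ is an involution, one checks that $\tilde O_{p,\mathsf{SRT_S}}$ applied to $\ket{0}_{\mathsf S}\ket{w}_{\mathsf R}\ket{r}_{\mathsf{T_S}}$ equals $(-1)^{\texttt S^w(1^k,p;r)}\ket{0}_{\mathsf S}\ket{w}_{\mathsf R}\ket{r}_{\mathsf{T_S}}$: it is diagonal in this basis, with eigenphase the self-corrector's output bit. Likewise, by Definition~\ref{def:decode}, $O_{\mathsf{DR}}(g_p)$ applied to $\ket{0}_{\mathsf D}\ket{w}_{\mathsf R}$ equals $(-1)^{g_p\cdot\texttt D(w)}\ket{0}_{\mathsf D}\ket{w}_{\mathsf R}$; and since the encoding map is linear with generator matrix $G_k$ whose $p$-th row is $g_p$, one has $g_p\cdot\texttt D(w)=(\texttt E\circ\texttt D(w))_p$. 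Because $\tilde O_{p,\mathsf{SRT_S}}$ and $O_{\mathsf{DR}}(g_p)$ act on disjoint registers apart from $\mathsf R$, which both leave fixed in the computational basis, and $\mathsf{T_S}$ is initialized to $H^{\otimes l_{\mathsf S}}\ket{0}$, the vector $\tilde O_{p,\mathsf{SRT_S}}\ket\psi-O_{\mathsf{DR}}(g_p)\ket\psi$ is a sum of mutually orthogonal pieces indexed by $(w,r)$, the $(w,r)$-piece having squared norm $|\alpha_w|^2\,2^{-l_{\mathsf S}}\,\big|(-1)^{\texttt S^w(1^k,p;r)}-(-1)^{(\texttt E\circ\texttt D(w))_p}\big|^2$. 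Using $|(-1)^a-(-1)^b|^2\in\{0,4\}$ and summing over $r$ gives $\|\tilde O_{p,\mathsf{SRT_S}}\ket\psi-O_{\mathsf{DR}}(g_p)\ket\psi\|^2=4\sum_w|\alpha_w|^2\,\Pr_r[\texttt S^w(1^k,p;r)\neq(\texttt E\circ\texttt D(w))_p]$.

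To finish, for each $w$ with $\mathrm{dist}(w,C_n)<\kappa'\cdot d/2$ the self-correction guarantee of Theorem~\ref{thm:3c_code} bounds $\Pr_r[\texttt S^w(1^k,p;r)\neq(\texttt E\circ\texttt D(w))_p]\leq\eps$, while for the remaining $w$ I bound this probability trivially by $1$. Writing $S_1=\sum_{\mathrm{dist}(w,C_n)<\kappa'd/2}|\alpha_w|^2$ and using $\sum_w|\alpha_w|^2=1$, this yields $4\eps S_1+4(1-S_1)=4-(4-4\eps)S_1$, which is exactly the claimed bound.

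I do not expect a genuine obstacle here: the computation is essentially mechanical. The only points that require care are the bookkeeping of which registers each operator touches (so that the cross terms vanish and the $(w,r)$-branches remain orthogonal), and the identification $g_p\cdot\texttt D(w)=(\texttt E\circ\texttt D(w))_p$, which rests on linearity of the code and the row structure of $G_k$ recorded in the remark following Theorem~\ref{thm:3c_code}.
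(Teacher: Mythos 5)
Your proposal is correct and follows essentially the same route as the paper: both reduce to the diagonal action of $\tilde O_{p,\mathsf{SRT_S}}$ and $O_{\mathsf{DR}}(g_p)$ on the $(w,r)$-branches, then split according to whether $\mathrm{dist}(w,C_n)<\kappa' d/2$ and invoke the self-correction guarantee. The only cosmetic difference is that the paper expands $\|\cdot\|^2$ via $2-2\Re\langle\cdot,\cdot\rangle$ and bounds the inner product, whereas you sum the orthogonal branch norms directly; the resulting bound $4-(4-4\eps)\sum_{\mathrm{dist}(w,C_n)<\kappa' d/2}|\alpha_w|^2$ is identical.
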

\begin{proof}

Note that
\begin{align*}
{O}_\mathsf{DR}(a) = D_{ \mathsf{DR}}^\dagger Z_{\mathsf{D}}(a) D_{ \mathsf{DR}} ,
\end{align*}
thus
\begin{align*}
{O}_\mathsf{DR}(g_p)\ket{0}\ket{w} =  (-1)^{(\texttt{E}\circ\texttt{D}(w))_p}\ket{0}\ket{w}.
\end{align*}
Moreover, note that
\begin{align*} {\tilde O}_{p, \mathsf{SRT_S}} = S_{p}^\dagger \cdot Z \cdot S_{p},\end{align*} 
and
\begin{align*} &~ S_{p}(\ket{a}\ket{w}\ket{r}) = \ket{a\oplus \texttt{S}^w(1^k, p; r)}\ket{w}\ket{r}, \end{align*}
thus
\begin{align*}
{\tilde O}_{p}  \ket{0}\ket{w}\ket{r} = (-1)^{\texttt{S}^w(1^k, p;r)} \ket{0}\ket{w}\ket{r}. 
\end{align*}
Then, we have that
\begin{align}
&~\|
{\tilde O}_{p,\mathsf{SRT_S}}\ket\psi -
O_\mathsf{DR}(g_p)\ket\psi \|^2\notag\\
&~=2-2\Re[({\tilde O}_{p,\mathsf{SRT_S}}\ket\psi)^\dagger (O_\mathsf{DR}(g_p)\ket\psi)],\label{eq:o-1}
\end{align}
and
\begin{align}
&~({\tilde O}_{p,\mathsf{SRT_S}}\ket\psi)^\dagger (O_\mathsf{DR}(g_p)\ket\psi)\notag\\
=&~(\sum_{w}\E_r \alpha_{w} (-1)^{\texttt{S}^w(1^k, p; r)} \ket{0}\ket{\phi_w}\ket{w}\ket{r})^\dagger(\sum_{w}\E_r \alpha_{w} (-1)^{(\texttt{E}\circ\texttt{D}(w))_p}\ket{0}\ket{\phi_w}\ket{w}\ket{r} )\notag\\
=&~\sum_{w}\E_r |\alpha_{w}|^2\cdot (-1)^{\texttt{S}^w(1^k, p; r)+(\texttt{E}\circ\texttt{D}(w))_p} .\label{eq:o-2}
\end{align}
Note that 
\begin{align*}
dist(w, C_n) < \kappa'\cdot d / 2 \implies \forall p\in[n], \Pr[\texttt{S}^w(1^k, p)=(\texttt{E}\circ\texttt{D}(w))_p] \geq 1-\eps,
\end{align*}
thus
\begin{align*}
&~\sum_w \E_r|\alpha_{w} |^2 \cdot (-1)^{\texttt{S}^w(1^k, p;r)+(\texttt{E}\circ\texttt{D}(w))_p}  \\
\geq&~\sum_{ dist(w, C_n) < \kappa'\cdot d / 2 }|\alpha_{w} |^2 \E_r (-1)^{\texttt{S}^w(1^k, p; r)+(\texttt{E}\circ\texttt{D}(w))_p} - \sum_{ dist(w, C_n) \geq \kappa'\cdot d / 2 } |\alpha_{w} |^2 \\
\geq&~ (1-2\eps) \sum_{ dist(w, C_n) < \kappa'\cdot d / 2 }|\alpha_{w} |^2  - \sum_{ dist(w, C_n) \geq \kappa'\cdot d / 2 } |\alpha_{w} |^2 \\
=&~ (2-2\eps) \sum_{ dist(w, C_n) < \kappa'\cdot d / 2 }|\alpha_{w} |^2  - 1.
\end{align*}
Plugging back into~\eqref{eq:o-1} and~\eqref{eq:o-2},
\begin{align*}
\|
{\tilde O}_{p,\mathsf{SRT_S}}\ket\psi -
O_\mathsf{DR}(g_p)\ket\psi \|^2\leq&~ 2-2((2-2\eps) \sum_{ dist(w, C_n) < \kappa'\cdot d / 2 }|\alpha_{w} |^2  - 1)\\
=&~ 4-(4-4\eps) \sum_{ dist(w, C_n) < \kappa'\cdot d / 2 }|\alpha_{w} |^2  
.
\end{align*}
\end{proof}

\begin{lemma}[Implication of the local testing]\label{lem:7.40}
Let $\ket{\psi}$ be in Lemma~\ref{lem:7.38}. Then
\begin{align*}
\|
{\tilde O}_{p,\mathsf{SRT_S}}\ket\psi -
O_\mathsf{DR}(g_p)\ket\psi \|^2 \leq 4\eps+\frac{8-8\eps}{\kappa \cdot \kappa'\cdot d }\Big(1-\|(\ketbra{1}{1}_\mathsf{L}\otimes I_{\mathsf{PRT_L}})  \cdot {L}_{ \mathsf{LRT_L}}\ket\psi\|^2\Big). 
\end{align*}
Furthermore, for $\eps, \kappa, \kappa', d$ defined as in Theorem \ref{thm:3c_code}, 
\begin{align*}
\|
{\tilde O}_{p,\mathsf{SRT_S}}\ket\psi -
O_\mathsf{DR}(g_p)\ket\psi \|^2 \leq \Theta(\eps)+\Theta\Big(1-\|(\ketbra{1}{1}_\mathsf{L}\otimes I_{\mathsf{PRT_L}})  \cdot {L}_{ \mathsf{LRT_L}}\ket\psi\|^2\Big). 
\end{align*}
\end{lemma}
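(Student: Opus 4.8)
The plan is to derive the claimed bound by combining the two preceding estimates, Lemma~\ref{lem:7.39} and Lemma~\ref{lem:7.38}; no new idea is needed, the lemma is pure bookkeeping. Write $\ket{\phi}=\sum_{w\in\F_2^n}\alpha_w\ket{\phi_w}_\mathsf{P}\ket{w}_\mathsf{R}$ as in both lemmas and abbreviate
\[
\beta \;:=\; \sum_{\dist(w,C_n)<\kappa'\cdot d/2}|\alpha_w|^2 ,
\]
the total weight that $\ket{\phi}$ places, in register $\mathsf{R}$, on strings within distance $\kappa' d/2$ of the code $C_n$. Both the ``distance between the purified and clean forms'' and the ``failure probability of the local test'' are controlled by the single quantity $1-\beta$, so the proof simply eliminates $1-\beta$ between the two inequalities.

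First I would invoke Lemma~\ref{lem:7.39} and rewrite its conclusion as
\[
\|{\tilde O}_{p,\mathsf{SRT_S}}\ket\psi - O_\mathsf{DR}(g_p)\ket\psi\|^2 \;\le\; 4-(4-4\eps)\beta \;=\; 4\eps + (4-4\eps)(1-\beta) .
\]
Next I would invoke the second conclusion of Lemma~\ref{lem:7.38}, namely $\|(\ketbra{1}{1}_\mathsf{L}\otimes I_{\mathsf{PRT_L}})\cdot L_{\mathsf{LRT_L}}\ket\psi\|^2 \le 1-\tfrac{\kappa\cdot\kappa'\cdot d}{2}(1-\beta)$, and rearrange it to $1-\beta \le \tfrac{2}{\kappa\kappa' d}\bigl(1-\|(\ketbra{1}{1}_\mathsf{L}\otimes I_{\mathsf{PRT_L}})\cdot L_{\mathsf{LRT_L}}\ket\psi\|^2\bigr)$. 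Substituting this into the previous display gives
\[
\|{\tilde O}_{p,\mathsf{SRT_S}}\ket\psi - O_\mathsf{DR}(g_p)\ket\psi\|^2 \;\le\; 4\eps + \frac{8-8\eps}{\kappa\cdot\kappa'\cdot d}\Bigl(1-\|(\ketbra{1}{1}_\mathsf{L}\otimes I_{\mathsf{PRT_L}})\cdot L_{\mathsf{LRT_L}}\ket\psi\|^2\Bigr),
\]
which is precisely the asserted inequality. The ``furthermore'' clause is then immediate: by Theorem~\ref{thm:3c_code} the quantities $\eps\in(0,0.001)$, $\kappa,\kappa'\in(0,1)$ and $d=\Theta(1)$ are absolute constants, so $4$ and $\tfrac{8-8\eps}{\kappa\kappa' d}$ are $\Theta(1)$, and the bound takes the form $\Theta(\eps)+\Theta\bigl(1-\|(\ketbra{1}{1}_\mathsf{L}\otimes I_{\mathsf{PRT_L}})\cdot L_{\mathsf{LRT_L}}\ket\psi\|^2\bigr)$.

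The only thing that requires a moment's care — and essentially the only place where a slip could occur — is that Lemmas~\ref{lem:7.38} and~\ref{lem:7.39} are literally stated over slightly different ambient states (the former adjoins the registers $\mathsf{L},\mathsf{T_L}$, the latter adjoins $\mathsf{S},\mathsf{D},\mathsf{T_S}$). Since the operators entering each inequality act as the identity on the registers not named there, both bounds hold verbatim for the common state $\ket\psi$ of Lemma~\ref{lem:7.38} (with $\mathsf{S},\mathsf{D},\mathsf{T_S}$ also present and in their initial configuration), and the coefficients $\alpha_w$ are the same in both. Hence there is no genuine obstacle; the content of this lemma is the one-line substitution above.
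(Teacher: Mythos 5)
Your proof is correct and is essentially the paper's own argument: apply Lemma~\ref{lem:7.39} to bound the left-hand side by $4-(4-4\eps)\beta$, use Lemma~\ref{lem:7.38} to bound $1-\beta$ by $\tfrac{2}{\kappa\kappa'd}$ times the local-test failure probability, and substitute. Your remark about the two lemmas being stated over slightly different ancilla registers is a reasonable observation the paper passes over silently, but it does not change the substance.
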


\begin{proof}
By Lemma~\ref{lem:7.38} we have that
\begin{align*}
1-\frac{2}{\kappa \cdot \kappa'\cdot d }\Big(1-\|(\ketbra{1}{1}_\mathsf{L}\otimes I_{\mathsf{PRT_L}})  \cdot {L}_{ \mathsf{LRT_L}}\ket\psi\|^2
\Big)
 \leq&~\sum_{dist(w, C_n) < \kappa'\cdot d / 2}  |\alpha_{ w}|^2.
\end{align*}
Thus,
\begin{align*}
&~ 4-(4-4\eps) \sum_{ dist(w, C_n) < \kappa'\cdot d / 2 }|\alpha_{w} |^2  \\
\leq&~  4\eps+\frac{8-8\eps}{\kappa \cdot \kappa'\cdot d }\Big(1-\|(\ketbra{1}{1}_\mathsf{L}\otimes I_{\mathsf{PRT_L}})  \cdot {L}_{ \mathsf{LRT_L}}\ket\psi\|^2\Big)
\end{align*}
Using Lemma~\ref{lem:7.39}, 
\begin{align*}
\|
{\tilde O}_{p,\mathsf{SRT_S}}\ket\psi -
O_\mathsf{DR}(g_p)\ket\psi \|^2 \leq 4\eps+\frac{8-8\eps}{\kappa \cdot \kappa'\cdot d }\Big(1-\|(\ketbra{1}{1}_\mathsf{L}\otimes I_{\mathsf{PRT_L}})  \cdot {L}_{ \mathsf{LRT_L}}\ket\psi\|^2\Big). 
\end{align*}

\end{proof}

Then, we start to consider the implication of the validity checks in the first two rounds.
\begin{lemma}
\label{lem:mqt:v1:0}
Let 
\begin{align*}
\ket\phi=\ket{0}_{\mathsf{PRDS}}\otimes H^{\otimes l_\mathsf{S}}\ket{0}_\mathsf{T_{S,1}}\otimes H^{\otimes l_\mathsf{S}}\ket{0}_\mathsf{T_{S,2}}\cdots \otimes H^{\otimes l_\mathsf{S}}\ket{0}_\mathsf{T_{S, u}},
\end{align*}
then for $u\in[2]$, 
\begin{align*}
&~ \E_{(f_1,f_2)\sim D}[\|{\tilde U}_{u, F, f}\ket\phi -  O(g_{f_u})^{F_u}{U}_{u}{\tilde U}_{u-1, F, f}\ket\phi\|^2]\\
\leq&~ \Theta(\eps)+ \Theta\Big(1- \Pr[\text{ validity check passes }~|~F, u, D~]  \Big).
\end{align*}
\end{lemma}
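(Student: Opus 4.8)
The plan is to peel off the last round. Write $\ket{\chi_f} := U_u\,\tilde U_{u-1,F,f}\ket\phi$ for the state reached just before the verifier's $u$-th flip (with the convention $\tilde U_{0,F,f}:=I$). Since $\tilde U_{u,F,f}\ket\phi = \tilde O_{f_u}^{F_u}\ket{\chi_f}$, the summand appearing in the left-hand side is exactly $\big\|(\tilde O_{f_u}^{F_u}-O(g_{f_u})^{F_u})\ket{\chi_f}\big\|^2$, where $O(g_{f_u})$ is the clean form $O_\mathsf{DR}(g_{f_u})$ of Definition~\ref{def:hermitian_verifier}. If $F_u=0$ both operators are the identity and this vanishes, so the inequality is trivial; hence assume $F_u=1$, and the goal becomes to bound $\E_{(f_1,f_2)\sim D}\big\|(\tilde O_{f_u}-O(g_{f_u}))\ket{\chi_f}\big\|^2$.

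Next I would check that $\ket{\chi_f}$ has the product structure required by Lemmas~\ref{lem:7.38}--\ref{lem:7.40}. The prover's unitaries $U_1,U_2$ act only on $\mathsf{PR}$; each earlier flip $\tilde O_{f_i}=S_{f_i}^\dagger Z_\mathsf S S_{f_i}$ with $i<u$ acts only on $\mathsf{SR}\,\mathsf{T_{S,i}}$ and, since $S_{f_i}$ merely reads the codeword from $\mathsf R$ into $\mathsf S$ and is then uncomputed, returns $\mathsf S$ to $\ket0$, leaves $\mathsf D$ untouched, and never touches the fresh registers $\mathsf{T_{S,u}}$, $\mathsf L$, $\mathsf{T_L}$. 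Thus, grouping $\mathsf P$ together with $\mathsf{T_{S,1}},\dots,\mathsf{T_{S,u-1}}$ as a single ancilla, one can write $\ket{\chi_f}=\sum_{w}\alpha_w\ket{\phi_w}\ket w_\mathsf R$ with $\mathsf{SD}$ in $\ket0$ and $\mathsf{T_{S,u}}$ in $H^{\otimes l_\mathsf S}\ket0$; tensoring in a fresh $\mathsf L$ in $\ket0$ and $\mathsf{T_L}$ in $H^{\otimes l_\mathsf L}\ket0$ puts it in the hypothesis form of Lemma~\ref{lem:7.40}. Applying Lemma~\ref{lem:7.40} with $p=f_u$, and noting that $\tilde O_{f_u}$ and $O(g_{f_u})$ act trivially on $\mathsf{LT_L}$, gives, pointwise in $f$,
\[
\big\|(\tilde O_{f_u}-O(g_{f_u}))\ket{\chi_f}\big\|^2 \;\le\; \Theta(\eps)+\Theta\big(1-L(f)\big),
\]
where $L(f):=\big\|(\ketbra{1}{1}_\mathsf L\otimes I)\,L_{\mathsf{LRT_L}}\,(\ket{\chi_f}\otimes\ket0_\mathsf L\otimes H^{\otimes l_\mathsf L}\ket0_{\mathsf{T_L}})\big\|^2$ is the probability that the codeword test of Definition~\ref{def:valid} accepts on $\ket{\chi_f}$.

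It then remains to identify $\E_{(f_1,f_2)\sim D}[L(f)]$ with $\Pr[\text{validity check passes}\mid F,u,D]$. In the validity check the verifier reaches the state $\tilde U_{u,F,f}\ket\phi=\tilde O_{f_u}^{F_u}\ket{\chi_f}$ and then accepts iff the codeword test outputs $1$. But $\tilde O_{f_u}$ is diagonal in the computational basis of $\mathsf R$ — it is $\sum_w(\text{unitary on }\mathsf{S}\,\mathsf{T_{S,u}})\otimes\ketbra ww_\mathsf R$ — so it does not change the marginal weights $|\alpha_w|^2$ on $\mathsf R$; by Lemma~\ref{lem:7.38} the test's acceptance probability depends only on those weights, namely on $\sum_w|\alpha_w|^2\Pr[\texttt L^w(1^k)=1]$, and hence equals $L(f)$ whether or not the $u$-th flip precedes it. Averaging the displayed bound over $(f_1,f_2)\sim D$ and collecting constants then gives $\E_f\|(\tilde O_{f_u}-O(g_{f_u}))\ket{\chi_f}\|^2\le \Theta(\eps)+\Theta(1-\Pr[\text{validity check passes}\mid F,u,D])$, as claimed.

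The main obstacle — more bookkeeping than insight, but genuinely the crux — is establishing that after $u-1$ rounds of interaction with an \emph{adversarial} prover the verifier's private registers $\mathsf S,\mathsf D$ and the freshly prepared randomness registers are still \emph{exactly} in the product state demanded by Lemmas~\ref{lem:7.38}--\ref{lem:7.40}. This holds because the prover only ever receives $\mathsf R$ back (never $\mathsf S$, $\mathsf D$, $\mathsf L$, or any $\mathsf T$-register) and each round draws fresh randomness, so the prover cannot entangle anything with these registers; combined with the $\mathsf R$-diagonality of the purified flips, this is precisely what makes the validity-check acceptance probability coincide with the abstract local-test quantity $L(f)$ coming out of Lemma~\ref{lem:7.40}.
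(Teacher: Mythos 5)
Your proof is correct and follows essentially the same route as the paper's: peel off the round-$u$ flip to reduce the left-hand side to $\|(\tilde O_{f_u}^{F_u}-O(g_{f_u})^{F_u})\,U_u\tilde U_{u-1,F,f}\ket\phi\|^2$, tensor in fresh $\mathsf L,\mathsf{T_L}$ registers, apply Lemma~\ref{lem:7.40}, and identify the resulting local-test acceptance quantity with $\Pr[\text{validity check passes}\mid F,u,D]$. The extra bookkeeping you supply (the trivial $F_u=0$ case, the product structure of the verifier's private registers after adversarial rounds, and the $\mathsf R$-diagonality of the purified flips) is exactly what the paper leaves implicit, so no substantive difference.
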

\begin{proof}

Let $\ket{\phi'} = {U}_{u}{\tilde U}_{u-1, F, f}\ket{\phi}$, $ \ket{\phi''}=\ket{\phi'}\otimes \ket{0}_\mathsf{L}\otimes H^{\otimes l_\mathsf{L}}\ket{0}_\mathsf{T_{L}}$. Then
\begin{align*}
&~\Pr[\text{ validity check passes }~|~F, u~] \\
=&~ \E_{(f_1,f_2)\sim D} \|(\ketbra{1}{1}_\mathsf{L}\otimes I)  \cdot {L}_{ \mathsf{LRT_L}}(\ket{\phi''})\|^2.
\end{align*}

Moreover,  
\begin{align*}
&~ \E_{(f_1,f_2)\sim D}[\|{\tilde U}_{u, F, f}\ket\phi -  O(g_{f_u})^{F_u}{U}_{u}{\tilde U}_{u-1, F, f}\ket\phi\|^2]\\
=&~ \E_{(f_1,f_2)\sim D}[\|
{\tilde O}_{ f_u}^{F_u}
\ket{\phi'} -  O(g_{f_u})^{F_u}\ket{\phi'}\|^2]\\
=&~ \E_{(f_1,f_2)\sim D}[\|
{\tilde O}_{ f_u}^{F_u}
\ket{\phi''} -  O(g_{f_u})^{F_u}\ket{\phi''}\|^2]\\
\leq &~ \E_{(f_1,f_2)\sim D}[\Theta(\eps)+\Theta\Big(1-\|(\ketbra{1}{1}_\mathsf{L}\otimes I)  \cdot {L}_{ \mathsf{LRT_L}}\ket{\phi''}\|^2\Big)]\\
= &~ \Theta(\eps)+\Theta\Big(1-\Pr[\text{ validity check passes }~|~F, u,D~] \Big)\;,
\end{align*}
where the inequality is by Lemma~\ref{lem:7.40}.
\end{proof}

\begin{lemma}[Implication of the validity checks I]\label{lem:mqt:v1}
Let 
\begin{align*}
\ket\phi=\ket{0}_{\mathsf{PRDS}}\otimes H^{\otimes l_\mathsf{S}}\ket{0}_\mathsf{T_{S,1}}\otimes H^{\otimes l_\mathsf{S}}\ket{0}_\mathsf{T_{S,2}}\cdots \otimes H^{\otimes l_\mathsf{S}}\ket{0}_\mathsf{T_{S, u}},
\end{align*}
then for $u\in[2]$, 
\begin{align*}
&~ \E_{(f_1,f_2)\sim D}[\|{\tilde U}_{u, F, f}\ket\phi -  {U}_{u, F, f}\ket\phi\|^2]\\
\leq&~ \Theta(\eps)+ \sum_{v\leq u}\Theta\Big(1- \Pr[\text{ validity check passes }~|~F, u=v,D~]  \Big)
\end{align*}
\end{lemma}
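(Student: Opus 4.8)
Looking at this statement, I need to prove Lemma~\ref{lem:mqt:v1} (Implication of the validity checks I), which bounds the expected squared distance between the purified transition action $\tilde{U}_{u,F,f}$ and the clean transition action $U_{u,F,f}$ on the initial state $\ket\phi$.

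The key is that Lemma~\ref{lem:mqt:v1:0} already gives us the "one-step" version: it bounds $\E[\|\tilde{U}_{u,F,f}\ket\phi - O(g_{f_u})^{F_u}U_u\tilde{U}_{u-1,F,f}\ket\phi\|^2]$ by $\Theta(\eps) + \Theta(1 - \Pr[\text{validity check passes} | F,u,D])$. I would prove Lemma~\ref{lem:mqt:v1} by induction on $u$, using a triangle-inequality telescoping argument. The plan is as follows.

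\begin{proof}[Proof sketch]
We proceed by induction on $u$. For $u=0$, both ${\tilde U}_{0,F,f}$ and ${U}_{0,F,f}$ are the identity (empty product), so the left-hand side is $0$ and the claim holds trivially.

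Assume the bound holds for $u-1$. By Definition~\ref{def:trans_f} and Definition~\ref{def:trans_f_herm}, we have ${\tilde U}_{u,F,f} = {\tilde O}_{f_u}^{F_u}{U}_u\cdot {\tilde U}_{u-1,F,f}$ and ${U}_{u,F,f} = O(g_{f_u})^{F_u}{U}_u\cdot{U}_{u-1,F,f}$. Write
\begin{align*}
{\tilde U}_{u,F,f}\ket\phi - {U}_{u,F,f}\ket\phi = &~ \big({\tilde O}_{f_u}^{F_u}{U}_u{\tilde U}_{u-1,F,f}\ket\phi - O(g_{f_u})^{F_u}{U}_u{\tilde U}_{u-1,F,f}\ket\phi\big)\\
&~+ \big(O(g_{f_u})^{F_u}{U}_u{\tilde U}_{u-1,F,f}\ket\phi - O(g_{f_u})^{F_u}{U}_u{U}_{u-1,F,f}\ket\phi\big).
\end{align*}
Applying the triangle inequality for the $\ket{\phi}\bra{\phi}$-seminorm (equivalently, the Euclidean norm on the state vector; Lemma~\ref{lem:tri_sigma_norm}) and then the inequality $\|A+B\|^2 \leq 2\|A\|^2 + 2\|B\|^2$, the expected squared distance is at most
\begin{align*}
&~2\,\E_{(f_1,f_2)\sim D}\big[\|{\tilde O}_{f_u}^{F_u}{U}_u{\tilde U}_{u-1,F,f}\ket\phi - O(g_{f_u})^{F_u}{U}_u{\tilde U}_{u-1,F,f}\ket\phi\|^2\big]\\
&~+ 2\,\E_{(f_1,f_2)\sim D}\big[\|O(g_{f_u})^{F_u}{U}_u{\tilde U}_{u-1,F,f}\ket\phi - O(g_{f_u})^{F_u}{U}_u{U}_{u-1,F,f}\ket\phi\|^2\big].
\end{align*}
The first term is exactly what Lemma~\ref{lem:mqt:v1:0} bounds (noting $\tilde{U}_{u,F,f} = {\tilde O}_{f_u}^{F_u}{U}_u\tilde{U}_{u-1,F,f}$), giving $\Theta(\eps) + \Theta(1 - \Pr[\text{validity check passes}\mid F,u,D])$. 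For the second term, since $O(g_{f_u})^{F_u}{U}_u$ is unitary, it preserves the norm, so this term equals $\E_{(f_1,f_2)\sim D}[\|{\tilde U}_{u-1,F,f}\ket\phi - {U}_{u-1,F,f}\ket\phi\|^2]$, which by the inductive hypothesis is at most $\Theta(\eps) + \sum_{v\leq u-1}\Theta(1 - \Pr[\text{validity check passes}\mid F,u=v,D])$. Here one must be slightly careful that the distribution of $f$ in Lemma~\ref{lem:mqt:v1:0} and in the inductive hypothesis is the same distribution $D$ (applied to the relevant coordinate), which it is by construction. Summing the two contributions and absorbing constants into the $\Theta(\cdot)$ notation yields the claimed bound.
\end{proof}

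The main subtlety — and the step I would be most careful about — is the bookkeeping of which state and which distribution each invocation of Lemma~\ref{lem:mqt:v1:0} is applied to. In particular, Lemma~\ref{lem:mqt:v1:0} is stated for the state $\ket\phi$ with $u$ copies of the randomness register $\mathsf{T_{S,i}}$, so I must check that the auxiliary registers appearing in $\tilde{U}_{u-1,F,f}$ versus $\tilde{U}_{u,F,f}$ are consistently tracked (the paper's convention of keeping the prover's "second step" inverse-unitary, per the remark after Definition~\ref{def:mqt:honest_prover:l_consist}, helps here since it keeps the state structure uniform across rounds). The unitary-invariance step is routine, and the only genuinely quantitative input is Lemma~\ref{lem:mqt:v1:0}; everything else is a clean telescoping induction via the triangle inequality, with constants folded into $\Theta$.
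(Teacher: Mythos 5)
Your proposal is correct and matches the paper's (very terse) proof, which simply invokes the triangle inequality together with Lemma~\ref{lem:mqt:v1:0}; your telescoping decomposition, use of unitarity of $O(g_{f_u})^{F_u}U_u$ to reduce to the previous round, and absorption of constants into $\Theta(\cdot)$ is exactly the intended fleshing-out of that one-line argument.
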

\begin{proof}

This lemma follows from the triangle inequality and Lemma \ref{lem:mqt:v1:0}. 
\end{proof}

Then, we start to consider the implication of the validity checks in the last round.

\begin{lemma}\label{lem:mqt:v2:0}
Let $ G_k=[g_1, g_2,\cdots, g_n]^T \in \F_2^{n} \otimes \F_2^k$ be defined as in Definition \ref{def:3c}. 
Let 
\begin{align*}
\ket\phi=\ket{0}_{\mathsf{PRDS}}\otimes H^{\otimes l_\mathsf{S}}\ket{0}_\mathsf{T_{S,1}}\otimes H^{\otimes l_\mathsf{S}}\ket{0}_\mathsf{T_{S,2}}\cdots \otimes H^{\otimes l_\mathsf{S}}\ket{0}_\mathsf{T_{S, u}}\;.
\end{align*}
Then for $q'\in[2]$, 
\begin{align*}
&~ \E_{(f_1, f_2)\sim D}[\|{\tilde O}'_{q', {f_{q'}}}{\tilde U}_{2, F, f}\ket\phi -  {O}'_{q' }(g_{f_{q'}}){\tilde U}_{2, F, f}\ket\phi\|^2]\\
\leq&~ \Theta(\eps)+ \Theta\Big(1- \Pr[\text{ validity check passes }~|~F, q',u=3, D~]  \Big)
\end{align*}
\end{lemma}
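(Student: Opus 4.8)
\emph{Proof plan.} The argument mirrors that of Lemma~\ref{lem:mqt:v1:0}, pushed one round further so that the last-round prover unitary $U'_{q'}$ is absorbed by conjugation and the statement reduces to Lemma~\ref{lem:7.40}. First I would strip the conjugation. By Definition~\ref{def:pure_prover} and Definition~\ref{def:hermitian_prover}, ${\tilde O}'_{q',f_{q'}} = (U'_{q',\mathsf{PR}})^\dagger\, {\tilde O}_{f_{q'},\mathsf{SRT_S}}\, U'_{q',\mathsf{PR}}$ and $O'_{q'}(g_{f_{q'}}) = (U'_{q',\mathsf{PR}})^\dagger\, O_{\mathsf{DR}}(g_{f_{q'}})\, U'_{q',\mathsf{PR}}$, so with $\ket{\phi'} := U'_{q',\mathsf{PR}}\,{\tilde U}_{2,F,f}\ket\phi$ and unitarity of $U'_{q'}$,
\[
\bigl\| {\tilde O}'_{q',f_{q'}}{\tilde U}_{2,F,f}\ket\phi - O'_{q'}(g_{f_{q'}}){\tilde U}_{2,F,f}\ket\phi \bigr\| = \bigl\| {\tilde O}_{f_{q'},\mathsf{SRT_S}}\ket{\phi'} - O_{\mathsf{DR}}(g_{f_{q'}})\ket{\phi'} \bigr\|.
\]

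Next I would verify that $\ket{\phi'}$ has exactly the shape required by Lemma~\ref{lem:7.40}, namely $\ket{\phi'} = \ket{\Xi}_{\mathsf{P}'\mathsf{R}}\otimes\ket 0_{\mathsf{SD}}\otimes H^{\otimes l_{\mathsf S}}\ket 0_{\mathsf{T_{S,3}}}$, where $\mathsf{P}'$ absorbs $\mathsf{P}$ together with the randomness registers $\mathsf{T_{S,1}},\mathsf{T_{S,2}}$ already used in the first two rounds. This uses: $\tilde U_{2,F,f} = {\tilde O}_{f_2,\mathsf{SRT_{S,2}}}^{F_2}\,U_2\,{\tilde O}_{f_1,\mathsf{SRT_{S,1}}}^{F_1}\,U_1$; the unitaries $U_1,U_2,U'_{q'}$ act only on $\mathsf{PR}$; each $\tilde O_{f_i}$ is the diagonal operator $S_{f_i}^\dagger Z_{\mathsf S} S_{f_i}$ (Definition~\ref{def:self_corr}, Definition~\ref{def:mqt:read_self_corr}), which maps $\ket 0_{\mathsf S}$ to itself because $S_{f_i}$ only writes the self-correction bit into $\mathsf{S}$ and is then undone; $\mathsf{D}$ is never touched in the first two rounds; and a fresh copy $\mathsf{T_{S,3}}$ of the randomness register, still in the all-$\ket +$ state, is available for the conceptual last-round self-correction. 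Hence $\mathsf{S},\mathsf{D}$ remain in $\ket 0$ and $\mathsf{T_{S,3}}$ in $H^{\otimes l_{\mathsf S}}\ket 0$ in $\ket{\phi'}$, while $\mathsf{P}'\mathsf{R}$ is an arbitrary state — precisely what Lemma~\ref{lem:7.40} allows. I expect this bookkeeping to be the main obstacle: it is the only place where the structure of the purified self-corrector (diagonal, returns $\mathsf S$ to $\ket 0$) is essential, and a careless reuse of $\mathsf{T_S}$ or a ``dirty'' $\mathsf{S}$ would break the reduction.

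Finally I would identify the $L$-test success probability of Lemma~\ref{lem:7.40} with the conditioned validity-check acceptance probability. Unwinding Definition~\ref{def:mqt:verifier:v} with $u=3$, together with Definition~\ref{def:mqt:verifier:l_code} and Definition~\ref{def:mqt:verifier:b_code}: the verifier applies $\tilde O_{f_1}^{F_1}$ in round~$1$ and $\tilde O_{f_2}^{F_2}$ in round~$2$, and in round~$3$ — after the prover applies $U'_{q'}$ — it measures $\mathsf{T_L}$, applies $L_{\mathsf{LRT_L}}$, measures $Z_{\mathsf L}$ and accepts iff the outcome is $1$; so the state fed to $L$ is $\ket{\phi''} := \ket{\phi'}\otimes\ket 0_{\mathsf L}\otimes H^{\otimes l_{\mathsf L}}\ket 0_{\mathsf{T_L}}$ and
\[
\Pr[\text{validity check passes}\mid F,q',u=3,D] = \E_{(f_1,f_2)\sim D}\,\bigl\| (\ketbra{1}{1}_{\mathsf L}\otimes I)\cdot L_{\mathsf{LRT_L}}\ket{\phi''} \bigr\|^2,
\]
exactly as in the proof of Lemma~\ref{lem:mqt:v1:0} (measuring $\mathsf{T_L}$ first is harmless since $L$ is coherently controlled on $\mathsf{T_L}$). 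Applying Lemma~\ref{lem:7.40} for each fixed $(f_1,f_2)$ with $p=f_{q'}$ and averaging over $(f_1,f_2)\sim D$ by linearity of expectation gives
\[
\E_{(f_1,f_2)\sim D}\bigl\| {\tilde O}'_{q',f_{q'}}{\tilde U}_{2,F,f}\ket\phi - O'_{q'}(g_{f_{q'}}){\tilde U}_{2,F,f}\ket\phi \bigr\|^2 \leq 4\eps + \frac{8-8\eps}{\kappa\kappa'd}\bigl(1-\Pr[\text{validity check passes}\mid F,q',u=3,D]\bigr),
\]
which is $\Theta(\eps)+\Theta(1-\Pr[\cdots])$ for the constants $\eps,\kappa,\kappa',d$ of Theorem~\ref{thm:3c_code}, as claimed.
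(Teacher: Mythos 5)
Your proposal is correct and follows essentially the same route as the paper: strip the conjugation by $U'_{q'}$ via unitary invariance of the norm to reduce to $\|\tilde O_{f_{q'}}\ket{\phi'}-O_{\mathsf{DR}}(g_{f_{q'}})\ket{\phi'}\|$ with $\ket{\phi'}=U'_{q'}\tilde U_{2,F,f}\ket\phi$, identify the conditioned validity-check acceptance probability with the $L$-test success probability on $\ket{\phi''}=\ket{\phi'}\otimes\ket0_{\mathsf L}\otimes H^{\otimes l_{\mathsf L}}\ket0_{\mathsf{T_L}}$, and conclude by Lemma~\ref{lem:7.40} averaged over $(f_1,f_2)\sim D$. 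The only difference is that you make explicit the register bookkeeping (that $\mathsf{S},\mathsf{D}$ remain in $\ket0$ and a fresh all-$\ket{+}$ randomness register is used) which the paper leaves implicit; this is a correct and harmless elaboration, not a different argument.
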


\begin{proof}
Let $\ket{\phi'} = {U}'_{q'}{\tilde U}_{2, F, f}\ket{\phi}$, $ \ket{\phi''}=\ket{\phi'}\otimes \ket{0}_\mathsf{L}\otimes H^{\otimes l_\mathsf{L}}\ket{0}_\mathsf{T_{L}}$. Then by definition,
\begin{align*}
&~\Pr[\text{ validity check passes }~|~F, q',u=3, D~] \\
=&~ \E_{(f_1, f_2)\sim D} \|(\ketbra{1}{1}_\mathsf{L}\otimes I)  \cdot {L}_{ \mathsf{LRT_L}}(\ket{\phi''})\|^2.
\end{align*}
Moreover,  
\begin{align*}
&~ \E_{(f_1, f_2)\sim D}[\|{\tilde O}'_{q', {f_{q'}}}{\tilde U}_{2, F, f}\ket\phi -  {O}'_{q' }(g_{f_{q'}}){\tilde U}_{2, F, f}\ket\phi\|^2]\\
=&~ \E_{(f_1, f_2)\sim D}[\|{\tilde O}_{{f_{q'}}}\ket{\phi'} -  O(g_{f_{q'}})\ket{\phi'}\|^2]\\
=&~ \E_{(f_1, f_2)\sim D}[\|{\tilde O}_{{f_{q'}}}\ket{\phi''} -  O(g_{f_{q'}})\ket{\phi''}\|^2]\\
\leq &~ 
\E_{(f_1, f_2)\sim D}[\Theta(\eps)+\Theta\Big(1-\|(\ketbra{1}{1}_\mathsf{L}\otimes I)  \cdot {L}_{ \mathsf{LRT_L}}(\ket{\phi''})\|^2\Big)]\\
= &~ \Theta(\eps)+\Theta\Big(1-\Pr[\text{ validity check passes }~|~F, q',u=3, D~] \Big)\;,
\end{align*}
where the inequality is by Lemma~\ref{lem:7.40}.
\end{proof}

\begin{lemma}[Implication of the validity checks II]\label{lem:mqt:v2}

Let 
\begin{align*}
\ket\phi=\ket{0}_{\mathsf{PRDS}}\otimes H^{\otimes l_\mathsf{S}}\ket{0}_\mathsf{T_{S,1}}\otimes H^{\otimes l_\mathsf{S}}\ket{0}_\mathsf{T_{S,2}}\cdots \otimes H^{\otimes l_\mathsf{S}}\ket{0}_\mathsf{T_{S, u}},
\end{align*}
then for $q'\in[2]$, 
\begin{align*}
&~ \E_{(f_1, f_2)\sim D}[\|{\tilde O}'_{q', {f_{q'}}}{\tilde U}_{2, F, f}\ket\phi -  {O}'_{q' }(g_{f_{q'}}){U}_{2, F, f}\ket\phi\|^2]\\
\leq&~ \Theta(\eps)+ \sum_{u\in[3]}\Theta\Big(1- \Pr[\text{ validity check passes }~|~F, q',u, D~]  \Big).
\end{align*}
\end{lemma}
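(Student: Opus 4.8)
The plan is to combine the two previous lemmas by the triangle inequality. The target quantity compares the \emph{purified} composition ${\tilde O}'_{q',f_{q'}}{\tilde U}_{2,F,f}$ with the \emph{clean} composition ${O}'_{q'}(g_{f_{q'}}){U}_{2,F,f}$, so I want to interpolate through the hybrid term ${O}'_{q'}(g_{f_{q'}}){\tilde U}_{2,F,f}\ket\phi$. First I would write, for any fixed $(f_1,f_2)$,
\begin{align*}
\|{\tilde O}'_{q',f_{q'}}{\tilde U}_{2,F,f}\ket\phi - {O}'_{q'}(g_{f_{q'}}){U}_{2,F,f}\ket\phi\|
&\leq \|{\tilde O}'_{q',f_{q'}}{\tilde U}_{2,F,f}\ket\phi - {O}'_{q'}(g_{f_{q'}}){\tilde U}_{2,F,f}\ket\phi\| \\
&\quad + \|{O}'_{q'}(g_{f_{q'}})({\tilde U}_{2,F,f} - {U}_{2,F,f})\ket\phi\|\;,
\end{align*}
take expectations over $(f_1,f_2)\sim D$, and use $\|a+b\|_\rho^2\le 2\|a\|_\rho^2+2\|b\|_\rho^2$ to pass to squared norms.

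The first term on the right is controlled directly by Lemma~\ref{lem:mqt:v2:0}, which bounds $\E_{(f_1,f_2)\sim D}\|{\tilde O}'_{q',f_{q'}}{\tilde U}_{2,F,f}\ket\phi - {O}'_{q'}(g_{f_{q'}}){\tilde U}_{2,F,f}\ket\phi\|^2$ by $\Theta(\eps)$ plus $\Theta(1-\Pr[\text{validity check passes}\mid F,q',u=3,D])$. For the second term, since ${O}'_{q'}(g_{f_{q'}})$ is a unitary (indeed an observable, by the properties lemma for the prover's clean forms), it preserves the norm, so $\|{O}'_{q'}(g_{f_{q'}})({\tilde U}_{2,F,f}-{U}_{2,F,f})\ket\phi\| = \|({\tilde U}_{2,F,f}-{U}_{2,F,f})\ket\phi\|$; taking expectations and applying Lemma~\ref{lem:mqt:v1} with $u=2$ bounds this by $\Theta(\eps) + \sum_{v\le 2}\Theta(1-\Pr[\text{validity check passes}\mid F,u=v,D])$. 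Summing the two contributions and absorbing everything into $\sum_{u\in[3]}\Theta(1-\Pr[\text{validity check passes}\mid F,q',u,D])$ (noting the $u=1,2$ terms do not depend on $q'$, but including $q'$ in the conditioning only weakens the bound) together with the $\Theta(\eps)$ term gives the claimed inequality.

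I do not expect a serious obstacle here — this is a routine triangle-inequality assembly of two lemmas already in hand. The only points requiring a little care are: checking that ${O}'_{q'}(g_{f_{q'}})$ is norm-preserving so that it can be pulled out of the second term (this is immediate from $({O}'_{q'}(a))^\dagger {O}'_{q'}(a) = I$), and bookkeeping the various $\Theta(\cdot)$ constants and the conditioning events so the final sum over $u\in[3]$ genuinely dominates both pieces. One subtlety is that Lemma~\ref{lem:mqt:v1} is stated for the state $\ket\phi$ with $u$ copies of the $\mathsf{T_S}$ registers, so I would take $u=2$ (or as many copies as the purified transition actions require) to match the hypotheses; the extra $\mathsf{T_{S,3}}$ register present in the statement here is simply a spectator that does not affect the norm computations.
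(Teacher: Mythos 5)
Your proposal is correct and matches the paper's own argument: the paper proves this lemma exactly by the triangle inequality combined with Lemma~\ref{lem:mqt:v1} and Lemma~\ref{lem:mqt:v2:0}, which is precisely your decomposition through the hybrid ${O}'_{q'}(g_{f_{q'}}){\tilde U}_{2,F,f}\ket\phi$ with the unitarity of ${O}'_{q'}(g_{f_{q'}})$ used to reduce the second term to Lemma~\ref{lem:mqt:v1}. Your bookkeeping of the $\Theta(\cdot)$ terms and the conditioning on $q'$ is also consistent with how the paper states the bound.
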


\begin{proof}
This lemma follows from the triangle inequality, Lemma \ref{lem:mqt:v1}, and Lemma \ref{lem:mqt:v2:0}. 
\end{proof}

\begin{lemma}[Implication of the consistency checks]\label{lem:mqt:c}
Let $ {U}_{i}$ for $i\in[2]$ be defined as in Definition \ref{def:mqt:canonical_prover}.
Let $\mathsf{P},\mathsf{R}$ be defined as in Definition \ref{def:mqt:canonical_prover}. 
Let $D$ be the sampling distribution of $ M,F$ in Definition \ref{def:mqt:verifier:c}. For any $M,F\in\{0, 1\}^2$ in the support of $D$. Let $ M_u=1$. 

Then
\begin{align*}
&~\E_{(f_1, f_2)\sim D}\|(U_{u\rightarrow 2, \mathsf{PR}} {O}_{\mathsf{DR}}(g_{f_u}) {U}_{u, F, f, \mathsf{DPR}}-{O}'_{u, \mathsf{DPR}}(g_{f_u}) {U}_{2, F, f, \mathsf{DPR}})\ket{0}_\mathsf{DPR}\|^2\\
\leq &~ \Theta(1-\Pr[\text{ consistency check passes }~|~M,F, D])\\
&~\quad+\sum_{ F,q',u}\Theta(1- \Pr[\text{ validity check passes }~|~F, q',u, D~])+\Theta(\eps).
\end{align*}

\end{lemma}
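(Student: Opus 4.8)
The plan is to unroll the consistency check of Definition~\ref{def:mqt:verifier:c} in its purified form, extract an exact identity expressing its failure probability as a squared norm-distance between two vectors built from the purified operators $\tilde{O},\tilde{U},\tilde{O}'$, and then pass from the purified operators to the clean operators $O_{\mathsf{DR}},U_{u,F,f},O'_u$ by three triangle-inequality steps, each invoking one of Lemma~\ref{lem:7.40}, Lemma~\ref{lem:mqt:v1}, Lemma~\ref{lem:mqt:v2} and costing a validity-check failure term plus $\Theta(\eps)$.

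First I would fix $M,F$ in the support of $D$ with $M_u=1$ and purify all of the verifier's measurements (deferring the measurements of the randomness registers $\mathsf{T_S}$ to the end, which is legitimate since each $S_p$ preserves the value stored in $\mathsf{T_S}$). Reading off Definition~\ref{def:mqt:verifier:c} and Definition~\ref{def:mqt:canonical_prover}: the state reached just before the round-$u$ measurement is $\tilde{U}_{u,F,f}\ket{\phi}$ (this uses $F_u=0$, and for $u=2$ the fixed flip $F_1$ is already absorbed inside $\tilde{U}_{2,F,f}=\tilde{U}_{u,F,f}$); the observable measured there is $A=\tilde{O}_{f_u}$; between this measurement and the final-round measurement the prover applies $W:=U'_u U_{u\rightarrow 2}$; and the observable measured in the last round, after the question $q'=u$, is $B=\tilde{O}_{f_u}$ on a distinct self-correction ancilla. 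A direct computation using only $A^2=B^2=I$ and unitarity of $W$ gives, for two observables measured in sequence with $W$ in between and initial state $\ket{\psi}$, the identity $\Pr[\text{outcomes agree}]=\tfrac12+\tfrac12\Re\langle\psi|A\,W^\dagger B W|\psi\rangle$, hence $\|A\ket{\psi}-W^\dagger B W\ket{\psi}\|^2=4(1-\Pr[\text{agree}])$. Taking $\ket{\psi}=\tilde{U}_{u,F,f}\ket{\phi}$, using $W^\dagger B W=U_{u\rightarrow 2}^\dagger\tilde{O}'_{u,f_u}U_{u\rightarrow 2}$ (Definition~\ref{def:pure_prover}), applying the unitary $U_{u\rightarrow 2}$ to both terms of the difference (norm-preserving), using $U_{u\rightarrow 2}\tilde{U}_{u,F,f}=\tilde{U}_{2,F,f}$ (immediate for $u=2$; for $u=1$ it reads $U_2 U_1=U_2 U_1$ since $F_1=F_2=0$), and averaging over $(f_1,f_2)\sim D$, I obtain
\[
4\bigl(1-\Pr[\text{consistency check passes}\mid M,F,D]\bigr)
=\E_{(f_1,f_2)\sim D}\bigl\|U_{u\rightarrow 2}\tilde{O}_{f_u}\tilde{U}_{u,F,f}\ket{\phi}-\tilde{O}'_{u,f_u}\tilde{U}_{2,F,f}\ket{\phi}\bigr\|^2 .
\]

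Next I would bound the target quantity by the triangle inequality, inserting the two vectors of the identity above as intermediaries. The target difference splits into (i) $U_{u\rightarrow 2}O_{\mathsf{DR}}(g_{f_u})U_{u,F,f}\ket{0}$ versus $U_{u\rightarrow 2}\tilde{O}_{f_u}\tilde{U}_{u,F,f}\ket{\phi}$, (ii) the purified difference of the identity, and (iii) $\tilde{O}'_{u,f_u}\tilde{U}_{2,F,f}\ket{\phi}$ versus $O'_u(g_{f_u})U_{2,F,f}\ket{0}$. Term (ii), in expectation, equals $4(1-\Pr[\text{consistency check passes}\mid M,F,D])$. Term (iii) is, in expectation over $(f_1,f_2)\sim D$, precisely the content of Lemma~\ref{lem:mqt:v2} with $q'=u$, giving $\Theta(\eps)+\sum_{v\in[3]}\Theta(1-\Pr[\text{validity check passes}\mid F,q'=u,v,D])$. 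For term (i) I would use that $U_{u\rightarrow 2}$ is norm-preserving and split once more: first replace $\tilde{O}_{f_u}$ by $O_{\mathsf{DR}}(g_{f_u})$ on the \emph{same} state $\tilde{U}_{u,F,f}\ket{\phi}$ — which is exactly the state probed by the round-$u$ validity check — so Lemma~\ref{lem:7.40} bounds this by $\Theta(\eps)+\Theta(1-\Pr[\text{validity check passes}\mid F,u,D])$; then, since $O_{\mathsf{DR}}(g_{f_u})$ is unitary, replace $\tilde{U}_{u,F,f}\ket{\phi}$ by $U_{u,F,f}\ket{0}$ at cost $\Theta(\eps)+\sum_{v\leq u}\Theta(1-\Pr[\text{validity check passes}\mid F,u=v,D])$ by Lemma~\ref{lem:mqt:v1}. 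Summing the three contributions (absorbing the triangle-inequality constant into the $\Theta$'s) and majorizing every validity-failure term by $\sum_{F,q',u}\Theta(1-\Pr[\text{validity check passes}\mid F,q',u,D])$ yields the claimed bound.

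The main obstacle is Step~1: faithfully purifying the three-round interaction and matching the abstract operators $U_{u\rightarrow 2},\tilde{U}_{u,F,f},\tilde{U}_{2,F,f},\tilde{O}'_{u,f_u}$ of the earlier definitions to what the verifier in Definition~\ref{def:mqt:verifier:c} actually performs, in particular handling $u=1$ and $u=2$ uniformly — the asymmetry (a flip occurring \emph{before} the measured round when $u=2$, versus an extra prover round occurring \emph{after} the measured round when $u=1$) is reconciled exactly by the identity $U_{u\rightarrow 2}\tilde{U}_{u,F,f}=\tilde{U}_{2,F,f}$, which itself relies on $F_v=0$ for $v\geq u$ as guaranteed by the protocol. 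A second point requiring care is register bookkeeping: the clean operators act on $\mathsf{DPR}$ and the purified ones on the self-correction registers $\mathsf{S},\mathsf{T_S}$ (and $\mathsf{L},\mathsf{T_L}$ for the local test), but since each family leaves its own ancillas in their initial product states, all the norm-distances appearing in the statement and in Lemmas~\ref{lem:7.40}, \ref{lem:mqt:v1}, \ref{lem:mqt:v2} may be computed in the common Hilbert space of the full protocol, which legitimizes chaining the replacements. Everything after Step~1 is routine triangle inequalities and direct appeals to the three cited lemmas.
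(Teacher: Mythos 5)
Your proposal is correct and follows essentially the same route as the paper: the same exact identity expressing $4(1-\Pr[\text{consistency check passes}\mid M,F,D])$ as $\E_{(f_1,f_2)\sim D}\|U_{u\rightarrow 2}\tilde O_{f_u}\tilde U_{u,F,f}\ket{0}-\tilde O'_{u,f_u}\tilde U_{2,F,f}\ket{0}\|^2$, followed by triangle-inequality replacements of the purified operators by the clean forms, charged to the validity checks via Lemmas~\ref{lem:7.40}, \ref{lem:mqt:v1}, and \ref{lem:mqt:v2} plus $\Theta(\eps)$ (constants absorbed into $\Theta$, just as the paper does). The only cosmetic difference is that you split the round-$u$ replacement into two explicit sub-steps, whereas the paper bounds both purified-to-clean error terms in one display citing the same lemmas.
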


\begin{proof}

By a straightforward calculation,
\begin{align*}
&~\Pr[\text{ consistency check passes }~|~M,F, D]\\
=&~\E_{(f_1, f_2)\sim D}\sum_{a\in \{\pm 1\}}
\|  \frac{I + a\cdot  {\tilde O}'_{u, f_u}}{2} \cdot { U}_{u\rightarrow 2} \cdot \frac{I + a\cdot  {\tilde O}_{f_u}}{2}  \cdot {\tilde U}_{u, F, f} \ket{0}  \|^2\\
=&~\E_{(f_1, f_2)\sim D}\bra{0} \frac{2I + {\tilde U}_{u, F,f}^\dagger {\tilde O}_{f_u} U_{u\rightarrow 2}^\dagger {\tilde O}'_{u, f_u} {\tilde U}_{2, F, f}+{\tilde U}_{2, F, f}^\dagger {\tilde O}'_{u, f_u} U_{u\rightarrow 2} {\tilde O}_{f_u} {\tilde U}_{u, F, f} }{4} \ket{0} \\
=&~1-\E_{(f_1, f_2)\sim D}\frac{\|U_{u\rightarrow 2} {\tilde O}_{f_u} {\tilde U}_{u, F, f}\ket{0}-{\tilde O}'_{u, f_u} {\tilde U}_{2, F, f}\ket{0}\|^2 }{4}  .
\end{align*} 

Moreover,
\begin{align*}
&~\|U_{u\rightarrow 2} {\tilde O}_{f_u} {\tilde U}_{u, F, f}\ket{0}-{\tilde O}'_{u, f_u} {\tilde U}_{2, F, f}\ket{0}\|^2\\
\geq&~\|U_{u\rightarrow 2} {O}(g_{f_u}) {U}_{u, F, f}\ket{0}-{O}'_{u}(g_{f_u}) {U}_{2, F, f}\ket{0}\|^2\\
&~\quad-\|{O}'_{u}(g_{f_u}) {U}_{2, F, f}\ket{0}-{\tilde O}'_{u, f_u} {\tilde U}_{2, F, f}\ket{0}\|^2\\
&~\quad-\|U_{u\rightarrow 2} {\tilde O}_{f_u} {\tilde U}_{u, F, f}\ket{0}-U_{u\rightarrow 2} {O}(g_{f_u}) {U}_{u, F, f}\ket{0}\|^2\\
\geq&~ \|U_{u\rightarrow 2} {O}(g_{f_u}) {U}_{u, F, f}\ket{0}-{O}'_{u}(g_{f_u}) {U}_{2, F, f}\ket{0}\|^2\\
&~\quad-\Theta(\eps)- \sum_{ F,q',u}\Theta\Big(1- \Pr[\text{ validity check passes }~|~F, q',u, D~]  \Big).
\end{align*}

Therefore,
\begin{align*}
&~\E_{(f_1, f_2)\sim D}\|U_{u\rightarrow 2} {O}(g_{f_u}) {U}_{u, F, f}\ket{0}-{O}'_{u}(g_{f_u}) {U}_{2, F, f}\ket{0}\|^2\\
\leq &~ 4(1-\Pr[\text{ consistency check passes }~|~M,F, D])\\
&~\quad+\Theta(\eps)+ \sum_{ F,q',u}\Theta\Big(1- \Pr[\text{ validity check passes }~|~F, q',u, D~]  \Big).
\end{align*}

\end{proof}

\begin{lemma}[Implication of the anti-commuting tests]\label{lem:mqt:a}
Let $ {U}_{i}$ for $i\in[2]$ be defined as in Definition \ref{def:mqt:canonical_prover}.
Let $\mathsf{P},\mathsf{R}$ be defined as in Definition \ref{def:mqt:canonical_prover}. 

Then,
\begin{align*}
&~\E_{(m_1,m_2)\sim D}\|{O}'_{1}(m_1) {O}(m_2) U_{0 \rightarrow 2}\ket{0}-(-1)^{g_{m_1}\cdot g_{m_2}}{ O}(m_2)  U_{2} { O}(m_1) U_{1}\ket{0}\|^2 \\
\leq &~ \Theta(1-\Pr[\text{ anti-commuting test passes }~|~D])\\
&~\quad+\sum_{ F,q',u}\Theta(1- \Pr[\text{ validity check passes }~|~F, q',u, D~])+\Theta(\eps).
\end{align*}
\end{lemma}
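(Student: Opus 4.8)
The plan is to first rewrite the rejection probability of the anti-commuting test (Definition~\ref{def:mqt:verifier:a}) as a squared-norm discrepancy between two states built from the prover's purified forms $\tilde O_{f_1},\tilde O_{f_2},\tilde O'_{1,f_1}$, and then replace those purified forms by the clean forms $O(g_{f_1}),O(g_{f_2}),O'_1(g_{f_1})$ (Definitions~\ref{def:hermitian_verifier} and~\ref{def:hermitian_prover}) using the two validity-check implications, Lemma~\ref{lem:mqt:v1} and Lemma~\ref{lem:mqt:v2}, together with the triangle inequality; this last step is exactly where the $\Theta(\eps)$ and the validity-check error terms enter. (In the statement $O(m)$ abbreviates $O(g_m)$ with $g_m$ the $m$-th row of $G_k$, and because the clean forms act trivially on the ancillas $\mathsf S,\mathsf{T_{S,i}}$, a bound for the discrepancy evaluated on the full purified initial state descends to a bound on $\rho=U_{0\to2}\ket{000}_{\mathsf{PRD}}\bra{000}_{\mathsf{PRD}}U_{0\to2}^\dagger$.)

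\textbf{Stage 1.} I work in the coherent (deferred-measurement) picture, which reproduces the true acceptance probability: each $\tilde O_\cdot$ commutes with the computational-basis measurement of its randomness register, a fresh such register is allotted to each round, each $\tilde O_\cdot$ preserves $\mathsf S=\ket0$, and the prover's intermediate unitaries $U_1,U_2$ act only on $\mathsf{PR}$. Writing $B=\tilde O_{f_1}$, $A=\tilde O'_{1,f_1}$, $s=g_{f_1}\cdot g_{f_2}$, and $\hat A=\tilde O_{f_2}A\tilde O_{f_2}$, the test applies $U_1$, projects the first-round outcome $a$ by $\tfrac{I+aB}2$, applies $U_2$ then the flip $\tilde O_{f_2}$, projects the last-round outcome $a'$ by $\tfrac{I+a'A}2$, and accepts iff $aa'=(-1)^s$. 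Commuting $\tilde O_{f_2}$ past the last projector turns it into $\tfrac{I+a'\hat A}2$, so
\[
\Pr[\text{anti-commuting test passes}\mid D]=\E_{(f_1,f_2)\sim D}\sum_{a}\Big\|\tfrac{I+a(-1)^s\hat A}2\,U_2\,\tfrac{I+aB}2\,U_1\ket0\Big\|^2 .
\]
Since the four joint outcome probabilities sum to $1$, the rejection probability for fixed $f_1,f_2$ equals $\sum_a\big\|\tfrac{I-a(-1)^s\hat A}2\,U_2\,\tfrac{I+aB}2\,U_1\ket0\big\|^2$. Expanding $\|\hat A U_2U_1\ket0-(-1)^sU_2BU_1\ket0\|^2$, using $B^2=\hat A^2=I$, the unitarity of $U_1,U_2$, and the vanishing of the cross term (because $\tfrac{I+\hat A}2$ and $\tfrac{I-\hat A}2$ are mutually orthogonal projections), a short case check on $s\in\{0,1\}$ gives $\|\hat A U_2U_1\ket0-(-1)^sU_2BU_1\ket0\|^2=4\Pr[\text{reject}\mid f_1,f_2]$. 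Applying the unitary $\tilde O_{f_2}$ inside the norm then yields
\[
\E_{(f_1,f_2)\sim D}\big\|\tilde O'_{1,f_1}\tilde O_{f_2}U_2U_1\ket0-(-1)^{g_{f_1}\cdot g_{f_2}}\tilde O_{f_2}U_2\tilde O_{f_1}U_1\ket0\big\|^2=\Theta\big(1-\Pr[\text{anti-commuting test passes}\mid D]\big).
\]

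\textbf{Stage 2.} Denote the two states above by $Y$ and $Y'$, and set $X=O'_1(g_{f_1})O(g_{f_2})U_2U_1\ket0$, $Z=(-1)^{g_{f_1}\cdot g_{f_2}}O(g_{f_2})U_2O(g_{f_1})U_1\ket0$. With the flip pattern $F=(0,1)$ — the flips actually performed in the anti-commuting test — the transition actions (Definitions~\ref{def:trans_f} and~\ref{def:trans_f_herm}) give $Y=\tilde O'_{1,f_1}\tilde U_{2,F,f}\ket0$ and $X=O'_1(g_{f_1})U_{2,F,f}\ket0$, so Lemma~\ref{lem:mqt:v2} (with $q'=1$) bounds $\E_f\|Y-X\|^2$ by $\Theta(\eps)$ plus validity-check error terms; with $F=(1,1)$ they give $Y'=(-1)^{g_{f_1}\cdot g_{f_2}}\tilde U_{2,F,f}\ket0$ and $Z=(-1)^{g_{f_1}\cdot g_{f_2}}U_{2,F,f}\ket0$, so Lemma~\ref{lem:mqt:v1} (with $u=2$) bounds $\E_f\|Y'-Z\|^2$ by $\Theta(\eps)$ plus validity-check error terms. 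Combining via $\|X-Z\|^2\le3(\|X-Y\|^2+\|Y-Y'\|^2+\|Y'-Z\|^2)$, taking $\E_f$, using Stage 1 on the middle term, and folding all individual validity-check contributions into $\sum_{F,q',u}\Theta(1-\Pr[\text{validity check passes}\mid F,q',u,D])$ gives the claim; finally drop the ancillas to pass from the purified initial state to $\rho$.

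\textbf{Main obstacle.} The genuinely delicate part is Stage 1: justifying that the coherent picture computes the true acceptance probability (tracking which randomness register each $\tilde O_\cdot$ uses and that $\mathsf S$ remains $\ket0$ throughout), and carrying out the expansion and the $s\in\{0,1\}$ case analysis so that the sign $(-1)^{g_{f_1}\cdot g_{f_2}}$ coming from the acceptance rule $r_1=r'\oplus(g_{f_1}\cdot g_{f_2})$ lands on the correct side. Everything after that is a routine application of Lemmas~\ref{lem:mqt:v1} and~\ref{lem:mqt:v2} together with the triangle inequality.
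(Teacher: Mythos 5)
Your proposal is correct and follows essentially the same route as the paper: first the exact identity $1-\Pr[\text{anti-commuting test passes}\mid D]=\tfrac14\,\E_{(f_1,f_2)\sim D}\|{\tilde O}'_{1,f_1}{\tilde O}_{f_2}U_{0\rightarrow 2}\ket{0}-(-1)^{g_{f_1}\cdot g_{f_2}}{\tilde O}_{f_2}U_2{\tilde O}_{f_1}U_1\ket{0}\|^2$, then the triangle inequality together with Lemmas~\ref{lem:mqt:v1} and~\ref{lem:mqt:v2} to replace the purified forms by the clean forms, which is exactly where the $\Theta(\eps)$ and validity-check terms enter. Your explicit identification of the flip patterns $F=(0,1)$ and $F=(1,1)$ (and the clean three-term triangle bound) merely spells out what the paper leaves implicit.
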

\begin{proof}
By a straightforward calculation,
\begin{align*}
&~\Pr[\text{ anti-commuting test passes }~|~D]\\
=&~\E_{(m_1,m_2)\sim D}\sum_{a\in \{\pm 1\}}
\|  \frac{I +(-1)^{g_{m_1}\cdot g_{m_2}} \cdot a\cdot  {\tilde O}'_{1, m_1}}{2} \cdot {\tilde O}_{m_2} \cdot U_{2} \cdot \frac{I + a\cdot  {\tilde O}_{m_1}}{2} \cdot U_{1} \ket{0}  \|^2\\
=&~\E_{(m_1,m_2)\sim D}\bra{0} \frac{2I +(-1)^{g_{m_1}\cdot g_{m_2}}\cdot\Big(U_{0\rightarrow 2}^\dagger {\tilde O}_{m_2} {\tilde O}'_{1, m_1}  {\tilde O}_{m_2}  U_{2} {\tilde O}_{m_1} U_{1}+U_{1}^\dagger {\tilde O}_{m_1} U_{2}^\dagger {\tilde O}_{m_2} {\tilde O}'_{1, m_1} {\tilde O}_{m_2} U_{0 \rightarrow 2}\Big) }{4} \ket{0} \\
=&~1-\E_{(m_1,m_2)\sim D} \frac{ \|{\tilde O}'_{1, m_1} {\tilde O}_{m_2} U_{0 \rightarrow 2}\ket{0}-(-1)^{g_{m_1}\cdot g_{m_2}}{\tilde O}_{m_2}  U_{2} {\tilde O}_{m_1} U_{1}\ket{0}\|^2  }{4}  . 
\end{align*}

Moreover,
\begin{align*}
&~\|{\tilde O}'_{1, m_1} {\tilde O}_{m_2} U_{0 \rightarrow 2}\ket{0}-(-1)^{g_{m_1}\cdot g_{m_2}}{\tilde O}_{m_2}  U_{2} {\tilde O}_{m_1} U_{1}\ket{0}\|^2 \\
\geq &~ \|{O}'_{1}(m_1) {O}(m_2) U_{0 \rightarrow 2}\ket{0}-(-1)^{g_{m_1}\cdot g_{m_2}}{ O}(m_2)  U_{2} { O}(m_1) U_{1}\ket{0}\|^2 \\
&~\quad-\|{\tilde O}'_{1, m_1} {\tilde O}_{m_2} U_{0 \rightarrow 2}\ket{0}-{O}'_{1}(m_1) {O}(m_2) U_{0 \rightarrow 2}\ket{0}\|^2 \\
&~\quad-\|{\tilde O}_{m_2}  U_{2} {\tilde O}_{m_1} U_{1}\ket{0}-{ O}(m_2)  U_{2} { O}(m_1) U_{1}\ket{0}\|^2 \\
\geq&~ \|{O}'_{1}(m_1) {O}(m_2) U_{0 \rightarrow 2}\ket{0}-(-1)^{g_{m_1}\cdot g_{m_2}}{ O}(m_2)  U_{2} { O}(m_1) U_{1}\ket{0}\|^2\\
&~\quad-\Theta(\eps)- \sum_{ F,q',u}\Theta\Big(1- \Pr[\text{ validity check passes }~|~F, q',u, D~]  \Big).
\end{align*}

Therefore,
\begin{align*}
&~\E_{(m_1,m_2)\sim D}\|{O}'_{1}(m_1) {O}(m_2) U_{0 \rightarrow 2}\ket{0}-(-1)^{g_{m_1}\cdot g_{m_2}}{ O}({m_2})  U_{2} { O}(m_1) U_{1}\ket{0}\|^2\\
\leq &~ 4(1-\Pr[\text{ anti-commuting test passes }~|~D])\\
&~\quad+\Theta(\eps)+ \sum_{ F,q',u}\Theta\Big(1- \Pr[\text{ validity check passes }~|~F, q',u, D~]  \Big).
\end{align*}

\end{proof}

The following lemmas will be instrumental in proving the soundness of Theorem \ref{thm:mqt}.
\begin{lemma}
\label{lem:mqt:anti}

Let \begin{align*}
    \rho = U_{0\rightarrow 2}\ket{00}_{\mathsf{PRD}}\bra{00}_{\mathsf{PRD}} U_{0\rightarrow 2}^\dagger.
\end{align*}

Then
\begin{align*}
&~ \E_{(a, b)\sim D}\| O'_{1}(a)O'_{2}(b)-(-1)^{a\cdot b}O'_{2}(b)O'_{1}(a)\|^2_\rho  \\
\leq &~  \sum_{ M,F} \Theta(1-\Pr[\text{ consistency check passes }~|~M,F, D])\\
&~\quad+\Theta(1-\Pr[\text{ anti-commuting test passes }~|~D])\\
&~\quad+\sum_{ F,q',u}\Theta(1- \Pr[\text{ validity check passes }~|~F, q',u,D~])+\Theta(\eps).
\end{align*}
\end{lemma}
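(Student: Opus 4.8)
The plan is a hybrid argument that transports the two clean observables one operator at a time: the consistency checks are used to move operators between the transition unitaries $U_1,U_2$ and the final-round observables (the ``player switching'' described in the overview), while the anti-commuting test is invoked exactly once, to supply the sign $(-1)^{a\cdot b}$. All of the analytic work relating test success to operator distances is already packaged in Lemma~\ref{lem:mqt:c} and Lemma~\ref{lem:mqt:a}; what remains is to chain them in the right order.

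Since $\rho=U_{0\rightarrow 2}\ket{000}_{\mathsf{PRD}}\bra{000}_{\mathsf{PRD}}U_{0\rightarrow 2}^\dagger$ is rank one, for any operator $X$ on $\mathsf{DPR}$ we have $\|X\|_\rho^2=\|X\ket v\|^2$ with $\ket v:=U_{0\rightarrow 2}\ket{000}_{\mathsf{PRD}}$; put also $\ket{w_a}:=U_2 O(a) U_1\ket{000}_{\mathsf{PRD}}$, which is precisely the vector appearing on the right-hand side of Lemma~\ref{lem:mqt:a}. I would then decompose the target vector as the telescoping sum
\begin{align*}
&~\big(O'_1(a)O'_2(b)-(-1)^{a\cdot b}O'_2(b)O'_1(a)\big)\ket v\\
=&~\big(O'_1(a)O'_2(b)-O'_1(a)O(b)\big)\ket v\\
&~+\big(O'_1(a)O(b)\ket v-(-1)^{a\cdot b}O(b)\ket{w_a}\big)\\
&~+(-1)^{a\cdot b}\big(O(b)-O'_2(b)\big)\ket{w_a}\\
&~+(-1)^{a\cdot b}O'_2(b)\big(\ket{w_a}-O'_1(a)\ket v\big),
\end{align*}
and bound the expectation (over $(a,b)\sim D$) of the squared norm of each of the four pieces separately. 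The first piece: by unitarity of $O'_1(a)$ it equals $\E\|(O'_2(b)-O(b))\ket v\|^2$, which is the bound of Lemma~\ref{lem:mqt:c} with $u=2$ and $F=(0,0)$ (there $U_{2\rightarrow 2}=I$ and $U_{2,(0,0),f}\ket{000}=U_{0\rightarrow 2}\ket{000}=\ket v$). The second piece is verbatim the quantity bounded by the anti-commuting test, Lemma~\ref{lem:mqt:a}. The third piece equals $\E\|(O(b)-O'_2(b))\ket{w_a}\|^2$, which is Lemma~\ref{lem:mqt:c} with $u=2$ and the \emph{flipped} configuration $F=(1,0)$: in that case $U_{2,(1,0),f}\ket{000}=U_2 O(a) U_1\ket{000}=\ket{w_a}$, so this is exactly where the flipped consistency check is used. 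The fourth piece, by unitarity of $O'_2(b)$, equals $\E\|(\ket{w_a}-O'_1(a)\ket v)\|^2=\E\|(U_2 O(a) U_1-O'_1(a)U_{0\rightarrow 2})\ket{000}\|^2$, which is Lemma~\ref{lem:mqt:c} with $u=1$ and $F=(0,0)$ (here $U_{1\rightarrow 2}=U_2$, $U_{1,(0,0),f}\ket{000}=U_1\ket{000}$, $U_{2,(0,0),f}\ket{000}=\ket v$).

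To conclude, apply the triangle inequality and $\|\sum_{i=1}^4 x_i\|^2\le 4\sum_{i=1}^4\|x_i\|^2$: the expected squared norm of the target is at most four times the sum of the four bounds above. Each of these bounds has the shape $\Theta\big(1-\Pr[\text{relevant check passes}]\big)+\sum_{F,q',u}\Theta\big(1-\Pr[\text{validity check passes}\mid F,q',u,D]\big)+\Theta(\eps)$, where the ``relevant check'' is, respectively, the consistency check with $M_2=1,F=(0,0)$; the anti-commuting test; the consistency check with $M_2=1,F=(1,0)$; and the consistency check with $M_1=1,F=(0,0)$. All three of these $(M,F)$ configurations lie in the support of the consistency-check distribution of Definition~\ref{def:mqt:verifier:c}, so each consistency contribution is at most $\sum_{M,F}\Theta\big(1-\Pr[\text{consistency check passes}\mid M,F,D]\big)$; absorbing constant factors into $\Theta$ and collecting the three kinds of terms yields exactly the asserted inequality.

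I expect the only real obstacle to be choosing the hybrid sequence correctly, since the estimates themselves are immediate from the two quoted lemmas. The delicate point is that the anti-commuting test does not produce a vector of the form $O(\cdot)\ket v$ but the ``twisted'' vector $\ket{w_a}=U_2 O(a) U_1\ket{000}$; one must therefore use the $u=1$ consistency check to pull $O'_1(a)$ back through $U_2$, and the $u=2$ consistency check \emph{with the flip $F=(1,0)$} --- so that $O(a)$ sits between the two transition unitaries exactly when $O(b)$ is being matched against $O'_2(b)$ --- to replace $O(b)$ by $O'_2(b)$. Getting the flip pattern of each invoked consistency configuration to align with $\ket v$ and $\ket{w_a}$, rather than with some other intermediate state, is the one place where care is required.
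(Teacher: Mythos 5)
Your proposal is correct and follows essentially the same route as the paper: the four-term telescoping decomposition (consistency with $u=2,F=(0,0)$; the anti-commuting test; flipped consistency with $u=2,F=(1,0)$; consistency with $u=1,F=(0,0)$, using unitarity of $O'_1(a)$ and $O'_2(b)$ to drop outer factors) is exactly the paper's chain of $\approx_{\delta,D}$ approximations, just written out explicitly and bounded via the triangle inequality.
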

\begin{proof}

We have that, 
\begin{align*}
&~ \E_{(a, b)\sim D}\| O'_{1}(a)O'_{2}(b)-(-1)^{a\cdot b}O'_{2}(b)O'_{1}(a)\|^2_\rho \\
=&~\E_{(a, b)\sim D}\bra{00}_{\mathsf{PRD}}U_{0\rightarrow 2}^\dagger(h.c.)\cdot (O'_{1}(a)O'_{2}(b)-(-1)^{a\cdot b}O'_{2}(b)O'_{1}(a))U_{0\rightarrow 2}\ket{00}_{\mathsf{PRD}} \\
=&~\E_{(a, b)\sim D}\|O'_{1}(a)O'_{2}(b)U_{0\rightarrow 2}\ket{00}_{\mathsf{PRD}}-(-1)^{a\cdot b} O'_{2}(b)O'_{1}(a)U_{0\rightarrow 2}\ket{00}_{\mathsf{PRD}}\|^2 .
\end{align*}

We use $ \ket{\alpha(a,b)} \approx_{\delta, D} \ket{ \beta(a,b)}$ to denote $\E_{(a,b)\sim D} \|\ket{\alpha(a,b)}- \ket{\beta(a,b)}\|^2 \leq \Theta (\delta)$.

By Lemma \ref{lem:mqt:prob}, \ref{lem:mqt:c}, and \ref{lem:mqt:a},
we establish the following relationship:
\begin{align*}
&~ O'_{1}(a)O'_{2}(b)U_{0\rightarrow 2}\ket{0}\\
\approx_{\delta, D} &~ O'_{1}(a)O(b) U_{2} U_1\ket{0}\\
\approx_{\delta, D} &~ (-1)^{a\cdot b} O(b) U_{2} O(a) U_1\ket{0}\\
\approx_{\delta, D} &~ (-1)^{a\cdot b} O'_2(b) U_{2} O(a) U_1\ket{0}\\
\approx_{\delta, D} &~ (-1)^{a\cdot b} O'_2(b) O'_1(a) U_{0\rightarrow 2}\ket{0}.
\end{align*}

Therefore,  
\begin{align*}
\E_{(a, b)\sim D}\|O'_{1}(a)O'_{2}(b)U_{0\rightarrow 2}\ket{00}_{\mathsf{PRD}}-(-1)^{a\cdot b} O'_{2}(b)O'_{1}(a)U_{0\rightarrow 2}\ket{00}_{\mathsf{PRD}}\|^2 \leq \Theta(\delta).
\end{align*}

\end{proof}

The following lemmas will be instrumental in proving the completeness of Theorem \ref{thm:mqt}.

\begin{lemma}[The honest prover in the canonical prover’s form]
\label{lem:mqt:hp_cp}
Let 
$$ R_1 = \prod_{p\in[n]}P((Z(g_p))_{\mathsf{P}}, \mathsf{R}_p), R_2 = \prod_{p\in[n]}P((X(g_p))_{\mathsf{P}}, \mathsf{R}_p). $$

The prover in Definition \ref{def:mqt:honest_prover} is the prover in Definition \ref{def:mqt:canonical_prover} with:
\begin{align*}
U_1 =&~ R_1,\\
U_2 =&~ R_2\cdot R_{1}^{-1},\\
U'_{1}=&~ R_1\cdot R_{2}^{-1},\\
U'_{2}=&~ I.
\end{align*}
\end{lemma}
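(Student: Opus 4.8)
The statement is essentially a bookkeeping identity: it claims that the honest prover of Definition~\ref{def:mqt:honest_prover}, which is described as a sequence of purified-measurement unitaries and their inverses across three rounds, can be re-expressed in the ``canonical'' two-unitary-plus-two-last-round-unitaries form of Definition~\ref{def:mqt:canonical_prover}, with the explicit values $U_1 = R_1$, $U_2 = R_2 R_1^{-1}$, $U_1' = R_1 R_2^{-1}$, $U_2' = I$. The plan is to simply compose the unitaries the honest prover applies in each round and check that they telescope to these expressions.

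First I would recall the structure of Definition~\ref{def:mqt:honest_prover}: after preparing the witness in $\mathsf{P}$, the prover (i) runs the template of Definition~\ref{def:mqt:honest_prover:block} with $O_p = Z(g_p)$, which means applying $\prod_{p} P(Z(g_p)_{\mathsf P},\mathsf R_p) = R_1$, handing $\mathsf R$ over, receiving it back, and applying $R_1^{-1}$; (ii) runs the same template with $O_p = X(g_p)$, i.e. applies $R_2$, exchanges $\mathsf R$, then applies $R_2^{-1}$; (iii) runs Definition~\ref{def:mqt:honest_prover:l_consist} in the last round, which (after receiving the verifier's bit $q'$) applies the block template with $O_p = Z(g_p)$ if $q'=1$ and $O_p = X(g_p)$ if $q'=2$, i.e. applies $R_1$ resp. $R_2$, exchanges, then applies $R_1^{-1}$ resp. $R_2^{-1}$. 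Now match this against Definition~\ref{def:mqt:canonical_prover}: in round $i$ the canonical prover applies $U_i$ before sending $\mathsf R$ and nothing after receiving it back; in the last round it applies $U'_{q'}$ before sending and nothing after. So I would absorb the ``nothing-after'' steps of the honest prover into the ``before'' step of the next round. The first-step unitary of round~1 is $R_1$, giving $U_1 = R_1$. The unitary applied between sending $\mathsf R$ in round~1 and sending it in round~2 is the composition (prover's second step of round~1) $\cdot$ (prover's first step of round~2) $= R_2 \cdot R_1^{-1}$, giving $U_2 = R_2 R_1^{-1}$. Entering the last round, the honest prover has, since last sending $\mathsf R$, applied $R_2^{-1}$ (second step of round~2) and then in the last round applies $R_1$ (if $q'=1$) or $R_2$ (if $q'=2$); so $U'_1 = R_1 R_2^{-1}$ and $U'_2 = R_2 R_2^{-1} = I$. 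The trailing $R_1^{-1}$ or $R_2^{-1}$ after the final exchange is, by the Remark following Definition~\ref{def:mqt:honest_prover:l_consist}, irrelevant to the protocol and can be dropped (or equivalently absorbed into a final operation that does not affect the verifier's decision).

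Concretely I would write this as a short chain of equalities, stating for each round the operator applied between consecutive moments at which register $\mathsf R$ leaves the prover, and observing it equals the claimed $U_i$ or $U'_{q'}$; no nontrivial computation is needed beyond noting $R_2 R_2^{-1} = I$ and that the block template's ``second step'' is precisely the inverse of its ``first step''. I do not expect a genuine obstacle here — the only subtlety worth a sentence is justifying that the honest prover's final inverse operation (and the initial witness preparation, which can be folded into the initial state or into $U_1$) may be discarded without affecting which canonical prover we obtain, which follows from the Remark after Definition~\ref{def:mqt:honest_prover:l_consist} and from the fact that the verifier never touches $\mathsf P$. So the proof is one displayed \texttt{align} environment exhibiting the four identities, preceded by the identification of the per-round composed unitaries.

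\begin{proof}
We track the unitaries applied by the honest prover of Definition~\ref{def:mqt:honest_prover} between consecutive moments at which register $\mathsf{R}$ is handed to the verifier, and compare them with the canonical prover of Definition~\ref{def:mqt:canonical_prover}.

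Recall that the block template (Definition~\ref{def:mqt:honest_prover:block}) with observables $O_p$ consists of applying $\prod_{p\in[n]}P(O_{p,\mathsf P},\mathsf R_p)$, sending $\mathsf R$, receiving $\mathsf R$, and applying $(\prod_{p\in[n]}P(O_{p,\mathsf P},\mathsf R_p))^{-1}$. With $O_p=Z(g_p)$ the first-step unitary is $R_1$ and the second-step unitary is $R_1^{-1}$; with $O_p=X(g_p)$ they are $R_2$ and $R_2^{-1}$. The last round (Definition~\ref{def:mqt:honest_prover:l_consist}) runs the block template with $O_p=Z(g_p)$ when $q'=1$ and $O_p=X(g_p)$ when $q'=2$.

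The honest prover first prepares the witness in $\mathsf P$ (an operation acting on $\mathsf P$ only) and then, before first sending $\mathsf R$, applies $R_1$; hence $U_1=R_1$. Between sending $\mathsf R$ in the first round and sending it in the second round the prover applies $R_1^{-1}$ (second step of the first block) followed by $R_2$ (first step of the second block), so $U_2=R_2\cdot R_1^{-1}$. Between sending $\mathsf R$ in the second round and sending it in the last round the prover applies $R_2^{-1}$ (second step of the second block) followed by the first step of the last block, which is $R_1$ if $q'=1$ and $R_2$ if $q'=2$; hence
\begin{align*}
U'_1 &= R_1\cdot R_2^{-1},\\
U'_2 &= R_2\cdot R_2^{-1}=I.
\end{align*}
The only remaining operation of the honest prover is the inverse $R_{q'}^{-1}$ applied after the final exchange of $\mathsf R$; by the remark following Definition~\ref{def:mqt:honest_prover:l_consist} this operation is immaterial to the protocol and may be omitted, and the witness preparation acts on $\mathsf P$ only and is already accounted for in the initial state of the canonical prover. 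This exhibits the honest prover as the canonical prover with the stated unitaries.
\end{proof}
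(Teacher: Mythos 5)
Your proof is correct and takes the same route the paper intends: the paper's proof is just "follows from a straightforward calculation," and your round-by-round tracking of the unitaries applied between successive hand-offs of $\mathsf{R}$, with the second block step absorbed into the next round's first step and the trailing $R_{q'}^{-1}$ discarded, is exactly that calculation made explicit. The only slightly loose point is attributing the witness preparation to "the initial state of the canonical prover" (whose registers are nominally initialized to $\ket{0}$); strictly it should be folded into $U_1$ (or into the arbitrary initialization of $\mathsf{P}$ allowed in Definition~\ref{def:qip}), but this matches the paper's own convention and does not affect the identities claimed.
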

\begin{proof}
This lemma follows from a straightforward calculation.
\end{proof}

\begin{lemma}[Properties of the honest prover]\label{lem:mqt:hp_properties}
For $i\in [2]$, let $ U_i$ be defined as in Lemma \ref{lem:mqt:hp_cp}. For $q'\in [2]$, let $ U'_{q'}$ be defined as in Lemma \ref{lem:mqt:hp_cp}. 
Then:

\begin{itemize}
\item For any $u\in[2]$,
\begin{align*}
{\tilde O}'_{u, j} U_{u\rightarrow 2} = U_{u\rightarrow 2} {\tilde O}_{j}.
\end{align*}
\item Moreover, 
\begin{align*}
{\tilde O}'_{1, j_1} {\tilde O}_{j_2} U_{2} = (-1)^{g_{j_1}\cdot g_{j_2}} {\tilde O}_{j_2} U_{2} {\tilde O}_{j_1}
\end{align*}
\end{itemize}
\end{lemma}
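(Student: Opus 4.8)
The plan is to unwind the definitions of the purified forms for the honest prover and reduce both claims to Pauli commutation identities. Recall from Lemma~\ref{lem:mqt:hp_cp} that the honest prover is the canonical prover with $U_1 = R_1$, $U_2 = R_2 R_1^{-1}$, $U'_1 = R_1 R_2^{-1}$, $U'_2 = I$, where $R_1 = \prod_p P((Z(g_p))_\mathsf{P},\mathsf{R}_p)$ and $R_2 = \prod_p P((X(g_p))_\mathsf{P},\mathsf{R}_p)$. By Definition~\ref{def:pure_prover}, ${\tilde O}'_{q',j} = (U'_{q'})^\dagger {\tilde O}_j U'_{q'}$, and by Definition~\ref{def:mqt:read_self_corr}, ${\tilde O}_j = S_j^\dagger Z_\mathsf{S} S_j$ acts only on the registers $\mathsf{S}$, $\mathsf{R}$, $\mathsf{T_S}$ — in particular it is diagonal in the computational basis of $\mathsf{R}$ (it reads off a self-corrected bit at position $j$ of the codeword stored in $\mathsf{R}$ and phases by it). The key structural fact I will use repeatedly is: \emph{${\tilde O}_j$ commutes with $R_1$ (since $R_1$ is generated by purified $Z$-measurements, which are also diagonal in the $\mathsf{R}$ computational basis once one tracks the phase register carefully — or more simply, one works in the clean forms)}. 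In fact, rather than fight with the self-correction machinery directly, I would prove the two identities at the level of the \emph{clean} forms $O_\mathsf{DR}(a)$, $O'_{q'}(a)$ of Definitions~\ref{def:hermitian_verifier}--\ref{def:hermitian_prover}, which are exactly $Z(a)$ conjugated by the decoding unitary; the passage between purified and clean forms is already handled by Lemmas~\ref{lem:7.39}--\ref{lem:7.40} where it matters for soundness, and for the honest prover the codewords are exact so the two coincide.

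For the first claim, ${\tilde O}'_{u,j} U_{u\to 2} = U_{u\to 2}{\tilde O}_j$: expanding the left side, ${\tilde O}'_{u,j} U_{u\to 2} = (U'_u)^\dagger {\tilde O}_j U'_u \cdot U_{u\to 2}$. For $u=2$: $U_{2\to 2} = I$ (empty product, Definition~\ref{def:trans}) and $U'_2 = I$, so both sides equal ${\tilde O}_j$ trivially. For $u=1$: $U_{1\to 2} = U_2 = R_2 R_1^{-1}$ and $U'_1 = R_1 R_2^{-1}$, so the left side is $(R_1 R_2^{-1})^\dagger {\tilde O}_j (R_1 R_2^{-1}) R_2 R_1^{-1} = R_2 R_1^{-1} {\tilde O}_j R_1 R_1^{-1} = R_2 R_1^{-1}{\tilde O}_j = U_{1\to 2}{\tilde O}_j$, where I used $R_1^{-1}{\tilde O}_j R_1 = {\tilde O}_j$, i.e. that the purified $Z$-measurement $R_1$ commutes with the read-out-with-self-correction operator ${\tilde O}_j$. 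This commutation is the one genuine input: it holds because $R_1 = \prod_p P((Z(g_p))_\mathsf{P},\mathsf{R}_p)$ writes into the $\mathsf{R}$ register a string that is a fixed linear (codeword) function of a computational-basis string of $\mathsf{P}$, hence after $R_1$ the register $\mathsf{R}$ is classically correlated with $\mathsf{P}$ and ${\tilde O}_j$, acting diagonally on $\mathsf{R}$, commutes with the write-in unitary — a short bookkeeping argument on the structure of purified measurements (Definition of $P(O,\mathsf{R})$).

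For the second claim, ${\tilde O}'_{1,j_1}{\tilde O}_{j_2}U_2 = (-1)^{g_{j_1}\cdot g_{j_2}}{\tilde O}_{j_2}U_2{\tilde O}_{j_1}$: substitute ${\tilde O}'_{1,j_1} = (U'_1)^\dagger {\tilde O}_{j_1} U'_1 = R_2 R_1^{-1}{\tilde O}_{j_1}R_1 R_2^{-1} = R_2 R_1^{-1}{\tilde O}_{j_1}R_1 R_2^{-1}$, and use $R_1^{-1}{\tilde O}_{j_1}R_1 = {\tilde O}_{j_1}$ again to get ${\tilde O}'_{1,j_1} = R_2 {\tilde O}_{j_1} R_2^{-1}$. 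Then ${\tilde O}'_{1,j_1}{\tilde O}_{j_2}U_2 = R_2{\tilde O}_{j_1}R_2^{-1}{\tilde O}_{j_2}R_2 R_1^{-1}$, and here I need $R_2^{-1}{\tilde O}_{j_2}R_2 = {\tilde O}_{j_2}$ (the purified $X$-measurement commutes with ${\tilde O}_{j_2}$ because, dually, ${\tilde O}_{j_2}$ after being conjugated by $R_2$ reads the $j_2$-th bit of the $X$-outcome codeword, which commutes with $R_2$'s own write-in) — wait, this is exactly where the anti-commutation must enter; the honest operators ${\tilde O}_j$ in the $Z$-round correspond to $Z$-type observables and in the $X$-round to $X$-type, and $R_2 {\tilde O}_{j_1} R_2^{-1}$ is morally $X(g_{j_1})$-conjugated-into-a-$Z$-read while ${\tilde O}_{j_2}$ is a $Z$-read, so passing one past the other picks up $(-1)^{\langle g_{j_1},g_{j_2}\rangle}$. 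Concretely I would reduce everything to the clean forms and invoke $O(g_{j_1}) = Z(g_{j_1})$ up to decoding, $R_2^{-1}O_\mathsf{DR}(g_{j_2})R_2$ behaving like $X(g_{j_2})$-on-the-logical-qubit, and the Pauli identity $Z(a)X(b) = (-1)^{a\cdot b}X(b)Z(a)$; the sign $g_{j_1}\cdot g_{j_2}$ is precisely $\langle g_{j_1},g_{j_2}\rangle \bmod 2$. The main obstacle is the careful tracking of \emph{which} register each purified operator is diagonal or anti-diagonal on — that is, verifying the two commutation facts $R_1^{-1}{\tilde O}_j R_1 = {\tilde O}_j$ and the twisted relation for $R_2$ — but these follow mechanically from the explicit form of $P(O,\mathsf{R})$ as $\sum_i P_{i,\mathsf{P}}\otimes X(a_i)_\mathsf{R}$ together with the structure of the Hadamard generator matrix $G_k$ (Remark after Theorem~\ref{thm:3c_code}), and everything reduces to the single-qubit Pauli relation $XZ = -ZX$ tensored up.
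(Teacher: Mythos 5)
Your handling of the first bullet survives, but only by accident: since $R_1,R_2$ are Hermitian involutions, $(U'_1)^\dagger=R_2R_1^{-1}=U_{1\rightarrow 2}$ and $U'_1U_2=I$, so the case $u=1$ follows by pure cancellation (and $u=2$ is trivial because $U'_2=I$); the commutation you cite there is never actually used. The genuine gap is the ``key structural fact'' you announce and then rely on for the second bullet: $R_1^{-1}{\tilde O}_jR_1={\tilde O}_j$ is false. The purified measurement $R_1$ XORs the outcomes $g_p\cdot z$ into $\mathsf{R}$ (a controlled-$X$ write-in), while ${\tilde O}_j=S_j^\dagger Z_{\mathsf{S}}S_j$ is a diagonal phase on $\mathsf{S}\mathsf{R}\mathsf{T_S}$ determined by the self-corrected $j$-th bit of the word in $\mathsf{R}$; a diagonal phase does not commute with a write-in that shifts the word. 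Because the Hadamard self-corrector is linear in the word, $\texttt{S}^{w\oplus E(z)}(1^k,j;r)=\texttt{S}^{w}(1^k,j;r)\oplus g_j\cdot z$, the correct relation is \emph{twisted}: conjugating the readout by the purified measurement produces the corresponding logical Pauli on $\mathsf{P}$, i.e.\ $R_1^{-1}{\tilde O}_jR_1=(Z(g_j))_{\mathsf{P}}\,{\tilde O}_j$, which on the states arising in the honest run acts as $(Z(g_j))_{\mathsf{P}}$, and similarly $R_2^{-1}{\tilde O}_jR_2$ acts as $(X(g_j))_{\mathsf{P}}$. These two conjugation identities are exactly what the paper's proof rests on, and they are the sole source of the sign: the second bullet reduces to $(Z(g_{j_1}))_{\mathsf{P}}(X(g_{j_2}))_{\mathsf{P}}=(-1)^{g_{j_1}\cdot g_{j_2}}(X(g_{j_2}))_{\mathsf{P}}(Z(g_{j_1}))_{\mathsf{P}}$.

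In your write-up the error is fatal precisely where it matters: applying the false commutation gives ${\tilde O}'_{1,j_1}=R_2{\tilde O}_{j_1}R_2^{-1}$, after which there is no mechanism producing $(-1)^{g_{j_1}\cdot g_{j_2}}$ — you notice this yourself mid-argument and then defer to an unspecified reduction to the clean forms, but even the fallback targets the wrong identities (you propose to verify ``$R_1^{-1}{\tilde O}_jR_1={\tilde O}_j$ and the twisted relation for $R_2$,'' whereas both conjugations are twisted and neither is a plain commutation). The repair is the paper's route: write ${\tilde O}'_{1,j_1}{\tilde O}_{j_2}U_2=R_2\,(R_1^{-1}{\tilde O}_{j_1}R_1)(R_2^{-1}{\tilde O}_{j_2}R_2)\,R_1^{-1}$ and $(-1)^{g_{j_1}\cdot g_{j_2}}{\tilde O}_{j_2}U_2{\tilde O}_{j_1}=(-1)^{g_{j_1}\cdot g_{j_2}}R_2\,(R_2^{-1}{\tilde O}_{j_2}R_2)(R_1^{-1}{\tilde O}_{j_1}R_1)\,R_1^{-1}$, then substitute the two conjugation identities and conclude from $Z(a)X(b)=(-1)^{a\cdot b}X(b)Z(a)$ on $\mathsf{P}$. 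Without establishing those identities (which your proposal not only omits but contradicts), the second statement of the lemma is not proved.
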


\begin{proof}
Let $R_i$ be defined as in Lemma \ref{lem:mqt:hp_cp}. 

{\bf Part I:}
It is sufficient to show that
\begin{align*}
R_2^{-1} {\tilde O}'_{u, j} R_2 = R_u^{-1} {\tilde O}_j R_u,
\end{align*}
which follows from 
\begin{align*}
R_2^{-1} {\tilde O}'_{1, j} R_2=&~ (Z(g_j))_{\mathsf{P}}, \\
R_2^{-1} {\tilde O}'_{2, j} R_2=&~ (X(g_j))_{\mathsf{P}}, 
\end{align*}
and
\begin{align*}
R_1^{-1} {\tilde O}_j R_1 =&~ (Z(g_j))_{\mathsf{P}}, \\
R_2^{-1} {\tilde O}_j R_2 =&~ (X(g_j))_{\mathsf{P}}.
\end{align*}

{\bf Part II:}
It is sufficient to show that
\begin{align*}
R_1^{-1} {\tilde O}_{j_1}  R_1 R_2^{-1} {\tilde O}_{j_2}  R_2 = (-1)^{g_{j_1}\cdot g_{j_2}}R_2^{-1} {\tilde O}_{j_2}  R_2 R_1^{-1} {\tilde O}_{j_1}  R_1,
\end{align*}
which follows from 
\begin{align*}
(Z(g_{j_1}))_{\mathsf{P}} (X(g_{j_2}))_{\mathsf{P}}  = (-1)^{g_{j_1}\cdot g_{j_2}}(X(g_{j_2}))_{\mathsf{P}} (Z(g_{j_1}))_{\mathsf{P}} .
\end{align*}
\end{proof}

\begin{proof}[Proof of Theorem \ref{thm:mqt}]

\hfill 

{\bf Completeness:}
For $i\in [2]$, let $ U_i$ be defined as in Lemma \ref{lem:mqt:hp_cp}. For $q'\in [2]$, let $ U'_{q'}$ be defined as in Lemma \ref{lem:mqt:hp_cp}. 
Let $D$ be the sampling distribution of $ M,F$ in Definition \ref{def:mqt:verifier:c}.

By Definition \ref{def:mqt:honest_prover} and Definition \ref{def:mqt:verifier}, the verifier’s validity checks pass with probability $1$. 

Following the proof of Lemma \ref{lem:mqt:c},
\begin{align*}
\Pr[\text{ consistency check passes }~|~M,F]=&~1-\E_{(f_1, f_2)\sim D}\frac{\|U_{u\rightarrow 2} {\tilde O}_{m_u} {\tilde U}_{u, F, f}\ket{0}-{\tilde O}'_{u, m_u} {\tilde U}_{2, F, f}\ket{0}\|^2 }{4}\;.
\end{align*}
Following the proof of Lemma \ref{lem:mqt:a},
\begin{align*}
\Pr[\text{ anti-commuting test passes }]=&~1-\E_{(m_1,m_2)\sim D} \frac{ \|{\tilde O}'_{1, m_1} {\tilde O}_{m_2} U_{0 \rightarrow 2}\ket{0}-(-1)^{g_{m_1}\cdot g_{m_2}}{\tilde O}_{m_2}  U_{2} {\tilde O}_{m_1} U_{1}\ket{0}\|^2  }{4} 
\end{align*}
Therefore, by Lemma \ref{lem:mqt:prob}, we only need to prove that for any $M,F\in\{0, 1\}^2$ in the support of $D$, letting $ M_u=1$,
\begin{align*}
{\tilde O}'_{u, m_u} U_{u\rightarrow 2} =U_{u\rightarrow 2} {\tilde O}_{m_u},
\end{align*}
and 
\begin{align*}
{\tilde O}'_{1,m_1} {\tilde O}_{m_2} U_{0 \rightarrow 2} = (-1)^{g_{m_1}\cdot g_{m_2}}{\tilde O}_{m_2} U_{2} {\tilde O}_{m_1} U_{1}.
\end{align*}
Both parts follow directly from Lemma \ref{lem:mqt:hp_properties}.

{\bf Soundness:}
The soundness follows from Lemma \ref{lem:mqt:anti}. 
\end{proof}

\section{Strong Quantum IOPs with Logarithmic-Depth Honest Prover}
\label{sec:sqiop}

In this section we present the complete construction of our strong quantum IOP. In addition to the many-qubits test introduced in Section~\ref{sec:mqt}, we also introduce the energy test and the energy consistency test. Furthermore, we separate the Pauli measurement component from the energy consistency test for clarity.

The high-level idea is as follows: the verifier randomly samples a term from the amplified Clifford Hamiltonian and instructs the prover to measure according to it. However, the prover may not follow the honest strategy, so we must correlate the prover’s measurement strategy with a reference that is known to be close to the honest strategy; this is achieved via the many-qubits test.

Note that a $5$-local Clifford projection is not itself a Pauli measurement. To address this, we decompose each projection into $5$-local Pauli measurements. Since the Hamiltonian is applied to $N$ copies of the original witness, the honest prover will obtain $5N$ measurement outcomes. As the verifier cannot read all $5N$ bits, we follow the approach in Section~\ref{sec:qiopepr}, using a PCPP proof to certify that the measurement result would cause the verifier to accept.

To test the correctness of the $5N$ outcomes, we draw inspiration from the equality test \cite{bk97}. Specifically, we compute the inner product of the $5N$-bit string with a random mask. 
For the honest prover, this inner product corresponds to a Pauli measurement obtained by multiplying the $5N$ local Pauli operators selected according to the random mask.
If any of the $5N$ individual outcomes deviate from honest measurement results, the resulting inner product is unlikely to match the actual measurement outcome of the composed Pauli observable. This composed measurement can be obtained from the many-qubits test.

To obtain Pauli measurements we do not use the observables derived from the third round of interaction in the many-qubits tests. Instead, we rely on the prover’s messages in the first two rounds: the purified measurement results in the $Z$ basis and the $X$ basis. Measuring an observable is equivalent to applying a controlled observable with a control qubit in the state $\ket{+}$ and measuring the control qubit in the Hadamard basis. To measure an observable of the form $X(a)Z(b)Y(c)$, we combine a controlled-$Z(b+c)$ from the first round, a controlled-$X(a+c)$ from the second round, and apply the phase factor $i^c$ directly via the verifier.
We note that our approach to measuring $Y$ observables is significantly simpler than the method used in prior work, such as~\cite{cgjv19}.

\begin{theorem}[Restatement of Theorem~\ref{thm:sqiop}]
\label{thm:restated_sqiop}
$\QMA \subseteq \SQIOP(O(1), O(\exp \circ \poly), O(1))$.
\end{theorem}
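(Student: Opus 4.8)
\textbf{Proof proposal for Theorem~\ref{thm:restated_sqiop}.}

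The plan is to build a strong quantum IOP for the amplified Clifford-Hamiltonian problem $\mathcal{LCH}(5, 2^{-p(n)}, 1/q(n))$, which is $\QMA$-complete by Theorem~\ref{thm:clifford_ham}, and whose gap is boosted to a constant via Lemma~\ref{lem:c_ham_amp} by passing to $H' = (1-2H)^{\otimes N}$. The overall protocol runs the many-qubits test of Definition~\ref{def:mqt:verifier} with $k = nN$ and $G_k$ the Hadamard-code generator matrix, and interleaves into it three new ingredients: (i) an \emph{energy test}, which asks the prover to produce a PCPP proof that the $5N$-bit outcome of measuring all the local Pauli factors of a sampled amplified-Hamiltonian term is \emph{accepting} (using the \textsc{CktVal} PCPP of Theorem~\ref{thm:pcpp} applied to the circuit that evaluates the logical AND/XOR decision of Definition~\ref{def:qpcp_tlp_v}), exactly as in the commitment-in-IOP technique of Section~\ref{sec:qiopepr}; (ii) an \emph{energy consistency test}, which checks that the $5N$-bit answer string, when contracted against a uniformly random mask $t \in \F_2^{5N}$, agrees with the outcome of directly measuring the single composed Pauli observable $O(t,l) = (-1)^{d(t,l)} X(a(t,l))Z(b(t,l))Y(c(t,l))$; and (iii) a \emph{Pauli measurement} subroutine that extracts the outcome of $O(t,l)$ from the first two rounds of the many-qubits test, by running a controlled-$Z(b+c)$ on register $\mathsf R$ in the first round, a controlled-$X(a+c)$ in the second round, and applying the scalar $\i^{c}$ on the control qubit, then measuring the control in the Hadamard basis. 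The distributions $\mu_1, \mu_2$ in Definition~\ref{def:mqdef} are precisely the distributions over $(b+c, a+c)$ (resp.\ with an extra uniform shift $r$ on the first coordinate) induced by this sampling, so that Theorem~\ref{thm:mqt} directly controls the anticommutation of the prover's clean observables $O'_1(\cdot), O'_2(\cdot)$ along exactly the pairs we query.

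The steps, in order, are as follows. First, specify the honest prover (Definition~\ref{def:mqt:honest_prover} with the witness being $\ket\phi^{\otimes N}$, augmented to also output, in the final round, the PCPP proof strings and the $5N$ purified measurement outcomes) and verify completeness: an honest prover passes the many-qubits test with probability $1$ by Theorem~\ref{thm:mqt}, passes all validity/consistency/anticommuting checks, the energy-consistency mask contraction is exactly consistent because the honest $X$ and $Z$ purified measurements really do compose to the measurement of $O(t,l)$ up to the known phase, and the PCPP is accepted with probability $1$ by the perfect completeness in Theorem~\ref{thm:pcpp}; the honest prover is implementable in $\exp(\poly(n))$ time but, as claimed, in logarithmic depth with unbounded-fan-out CNOTs. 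Second, for soundness, suppose the verifier accepts with probability $> 1-\delta$. By a union bound over the constantly-many sub-tests (as in Lemma~\ref{lem:mqt:prob}), each sub-test passes with probability $1-\Theta(\delta)$; Theorem~\ref{thm:mqt} then gives an isometry $V$ and a genuine Pauli representation $g$ with $\E_{(b,a)\sim\mu_i}\|f_X(a)f_Z(b) - V^\dagger g(X(a)Z(b)) V\|_\sigma^2 \le \Theta(\delta)$, where $f_X, f_Z$ are the prover's actual purified-measurement observables and $\sigma$ is the reduced state of the appropriate register. Third, use the energy-consistency test to argue that the prover's reported $5N$-bit string is, on a random mask, $\Theta(\delta)$-close (in the $\sigma$-seminorm sense) to the true outcome of measuring $g(O(t,l))$; by an equality-test / random-subset-sum argument (à la \cite{bk97}) this forces all $5N$ individual reported bits to be close to the outcomes of the individual $5$-local Pauli observables $g(O_{l_i,j})$, hence the AND/XOR decision circuit is being evaluated, up to $\Theta(\delta)$ error, on the genuine measurement statistics of the state $V|\psi\rangle$ (a $\poly(n)$-qubit state) under $(1-2H)^{\otimes N} = \bigotimes_i (1-2H_{l_i})$ averaged over $l$. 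Fourth, invoke the PCPP soundness (Theorem~\ref{thm:pcpp}) exactly as in the soundness analysis of Theorem~\ref{thm:qpcp_tlp}: if the committed string is far from any accepting codeword the verifier rejects, so the acceptance probability is at most $\eps + \tr\big[\tfrac{I + (1-2H)^{\otimes N}}{2}\rho\big] + \Theta(\delta)$ for a genuine quantum state $\rho$, which by Lemma~\ref{lem:c_ham_amp} is $\le \tfrac12 + \eps + \tfrac12 2^{-s(n)} + \Theta(\delta)$ on NO-instances and $\ge 1 - \Theta(2^{-s(n)})$ on YES-instances, giving the required constant gap.

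The main obstacle I expect is the third step: turning the $\sigma$-seminorm closeness coming from Theorem~\ref{thm:mqt} (which only controls the composed observable $O(t,l)$ on random masks) into a statement that the prover's \emph{individual} reported bits for each of the $5N$ local Pauli factors are honest, and doing so \emph{coherently}. Two subtleties make this delicate. (a) The prover's middle-round unitaries $U_1, U_2$ need not be purified measurements, so the object playing the role of the "prover's observable" is the clean form $O'_{q'}(a)$ of Definition~\ref{def:hermitian_prover}, and one must be careful that the energy-consistency test and the many-qubits test refer to the \emph{same} operators on the \emph{same} state $\rho = U_{0\to 2}\ket{000}\bra{000}U_{0\to2}^\dagger$; this is where Lemmas~\ref{lem:mqt:v1}, \ref{lem:mqt:v2}, and \ref{lem:mqt:c} (transferring between purified and clean forms, and between rounds) do the bookkeeping, and I expect the energy test to require an analogous transfer lemma. (b) Unlike the synchronous multi-prover setting, the single prover chooses an \emph{arbitrary} initial state, so $\sigma$ is not maximally mixed; this is exactly the difficulty flagged in the technical overview about \cite{dls22}, and it is the reason we must measure $O(t,l)$ via the Gowers--Hatami representation $g$ (which is a genuine unitary representation by Theorem~\ref{thm:ourqubits}) rather than via a Poincaré-type inequality. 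Handling the $Y$ factors — i.e.\ verifying that $\i^{c}$-corrected product of controlled-$Z(b+c)$ and controlled-$X(a+c)$ really implements a measurement of $X(a)Z(b)Y(c)$ on the committed state, given only the approximate commutation relations — is the other place where care is needed, though conceptually it reduces to the identity $Y = \i XZ$ together with the disjoint-support structure $a \wedge b = b \wedge c = c \wedge a = 0$ of Lemma~\ref{lem:clifford_pauli}.
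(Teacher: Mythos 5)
Your proposal is correct and follows essentially the same route as the paper: reduce to the amplified Clifford-Hamiltonian, use the many-qubits test together with the Gowers--Hatami robustness theorem (Theorem~\ref{thm:ourqubits} — note it is this theorem, not Theorem~\ref{thm:mqt} itself, that supplies the isometry $V$ and genuine Pauli representation; the latter only yields the approximate anti-commutation), enforce honesty of the $5N$ reported bits via the random-mask equality argument (the paper's Lemma~\ref{lem:mqt:ec} and Lemma~\ref{lem:q_sound_had}), and close with PCPP soundness and the energy identity of Lemma~\ref{lem:measP2e}. The obstacles you flag (purified-versus-clean bookkeeping, the non-maximally-mixed initial state, and handling $Y$ via $\i XZ$) are exactly the points the paper resolves through Lemmas~\ref{lem:mqt:v1}, \ref{lem:mqt:v2}, \ref{lem:mqt:c}, and \ref{lem:ec_in_e}.
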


\begin{remark}
For the honest prover, the quantum circuit can be implemented with logarithmic depth. In the multi-qubit test, the prover performs a purified measurement of their witness in either the $Z$ or $X$ basis. Switching between these bases requires only a single layer of Hadamard gates. Using unbounded fan-out CNOT gates, the witness can be copied in the computational basis an exponential number of times. An inner product is then computed for each copy with the corresponding index, resulting in a Hadamard code encoding of the witness. The deepest part of this computation, the inner product, requires only logarithmic depth.

The strong quantum IOP protocol is based on this multi-qubit test. In the energy consistency test, the prover’s behavior for Pauli measurements matches that of the multi-qubit test. The energy test requires a measurement of the witness, which consists of $N$ copies of the original witness. Each copy is measured using a local Clifford projection, which is also constant-depth.

The other components of the energy consistency test involve providing a PCPP proof for the inner product with a random mask. Similarly, in the energy test, the prover supplies an error-correcting encoding of the measurement result along with a PCPP proof that convinces the verifier to accept. These steps are classical and therefore their computational complexity (beyond being polynomial time) is not considered.
\end{remark}

The section is organized as follows.  In Section~\ref{sec:siop:xz}, we present a construction that allows the verifier to perform Pauli measurements, where the prover behaves identically to the many-qubits test. This forms a component of the energy consistency test, which checks that the measurement used in the energy test matches that of the many-qubits test. Section~\ref{sec:siop:e} presents the full construction of the verifier and honest prover behavior in both the energy test and energy consistency test. In Section~\ref{sec:siop:formula}, we describe the general strategy of a malicious prover in a without-loss-of-generality form. Finally, in Section~\ref{sec:siop:main}, we prove the main theorem of this section.

\subsection{Measurement in the Pauli bases}
\label{sec:siop:xz}
We define the verifier’s strategy for measuring observables of the form $X(a)Z(b)Y(c)$. We begin by describing the verifier’s general actions in the first two rounds of the protocol, without specifying the parameters. Then, we detail how the verifier instantiates this template to implement measurements of $X(a)Z(b)Y(c)$.

\begin{definition}[Template of the verifier’s actions for measurements in the first two rounds]\label{def:mqt:verifier:m}
Let $\mathsf{R} =(\mathsf{R}_{j})_{j\in[n]}$ be a register. Let $ \mathsf{S}$ be a register held by the verifier and initialized with $ \ket{0}$. Let $ \mathsf{M}$ be a register held by the verifier. Let $ \mathsf{T_S} $ be a register held by the verifier and initialized with $H^{\otimes l_{\mathsf{S}}} \ket{0}$. Let $m\in [n]$. Let $\tilde{O}_{m,\mathsf{SRT_S}}$ be defined as in Definition~\ref{def:mqt:read_self_corr}.
\begin{mdframed}
\begin{itemize}
\item \textnormal{The prover's first step:} 
The prover sends the verifier the register $\mathsf{R}$.
\item \textnormal{The verifier's first step:} 
The verifier receives the register $\mathsf{R}$ from the prover. 
The verifier measures $\mathsf{T_S}$. Then, the verifier
applies $C({\tilde O}_{m,\mathsf{SRT_S}}, \mathsf{M})$. 
The verifier sends the prover the register $\mathsf{R}$. 
\item \textnormal{The prover's second step:} 
The prover receives the register $\mathsf{R}$ from the verifier. 
\end{itemize}
\end{mdframed}
\end{definition}

\begin{definition}[Measure $X(a)Z(b)Y(c)$]\label{def:mqt:verifier:measure} 
Let $ a,b,c\in \{0, 1\}^k$ such that $ a\cap b = b\cap c = c\cap a = \emptyset $. 
Let $\mathsf{R} =(\mathsf{R}_{j})_{j\in[n]}$ be a register. Let $ \mathsf{S}$ be a register held by the verifier and initialized with $ \ket{0}$. Let $ \mathsf{M}$ be a register held by the verifier and initialized with $\ket{+}$. 
Let $ \mathsf{T_{S, i}} $ for $i\in [3]$ be a register held by the verifier and initialized with $H^{\otimes l_{\mathsf{S}}} \ket{0}$. 
The verifier's actions are defined as follows: 
\begin{mdframed}
\begin{itemize}
\item The verifier sets $m_1 = b+c, m_2 = a+c$.
\item The verifier repeats the verifier's actions in Definition \ref{def:mqt:verifier:block} twice. In the $i$-th time, the verifier acts as in Definition \ref{def:mqt:verifier:block} with $m=m_i$.
\item In the end, the verifier applies the phase gate $|c|$ times as $S^{|c|}_{\mathsf{M}}$. Then, the verifier applies $ H_{\mathsf{M}}$, measures the register $\mathsf{M}$ under $Z_\mathsf{M}$, and outputs the measurement result. 
\end{itemize}

\end{mdframed}
\end{definition}

\subsection{Energy tests and energy consistency tests}
\label{sec:siop:e}

In this section, we describe the actions of the verifier and the prover in the energy test and the energy consistency test.

To perform the energy test, the verifier begins by sending the prover the indices $l_i \in [m]$, for $i \in [N]$, corresponding to the Hamiltonian terms the verifier wishes to test. The honest prover is expected to return the measurement outcomes of all observables $O_{l_i,j}$ for $i \in [N]$, $j \in [5]$, along with a PCPP proof certifying that these outcomes will lead the verifier to accept.

However, since the prover may not be honest, the energy consistency test is introduced to ensure integrity. In this test, the verifier randomly samples a binary mask, and the prover is expected to compute the inner product between the measurement outcomes and the mask. Classically, if even a single reported outcome is incorrect, the entire inner product will be incorrect with probability exactly $1/2$.

Importantly, the verifier can directly measure this inner product as a single Pauli observable, since it corresponds to the product of all $O_{l_i,j}$ with $l_i,j$ under the mask. This measurement can be performed in the first two rounds of the protocol, allowing the verifier to compare the result with the prover’s claimed value.

We begin this section by recalling the relevant PCPP construction, and then present the strategies of the honest prover and verifier in both tests.

\begin{definition}[Energy tests and energy consistency tests, the honest prover’s actions in the last round]
\label{def:mqt:hp_last_energy}
Let $\mathsf{P}$ be a register of $k$
qubits. 
Let $p,q$ be as in Theorem \ref{thm:clifford_ham}, $s,N$ be as in Lemma \ref{lem:c_ham_amp}. 
Use the notation for $H$, $O_{i,j}$,  and $a_{i,j}, b_{i,j}, c_{i,j}$ from Definition~\ref{def:mqdef}
Let $$ C(x) = \bigoplus_{i\in [N]} \big(1 \oplus \bigvee_{j\in [5]} x_{i,j} \big) ,$$ 
where $x=(x_{i,j})_{i \in [N], j\in[5]}$. Let $E$ be the encoding circuit of the codes in Theorem \ref{thm:ltc}, $D$ the decoding circuit of the codes in Theorem \ref{thm:ltc}, and $T$ the  code tester 
circuit of the codes in Theorem \ref{thm:ltc}.

\begin{mdframed}
\begin{itemize}
\item The verifier sends the prover $l_i\in[m]$ for $i\in[N]$.
\item For $i\in [N],j\in [5]$, the honest prover measures $\mathsf{P}$ under observable 
\begin{align*}
\underbrace{I\otimes I \otimes \cdots \otimes I}_{i-1} \otimes O_{l_i, j} \otimes \underbrace{I\otimes I \otimes \cdots \otimes I}_{N-i},
\end{align*}
and stores the  measurement result in $r_{i,j}$. Let \begin{align*}
    r=(r_{i,j})_{i\in [N], j\in [5]}.
\end{align*}
\item The honest prover sends the verifier $E(r)$, $C(r)$, and $w$, where $w$ is a witness such that the verifier in Theorem \ref{thm:pcpp} with explicit input the circuit $C(D(\cdot)) + C(r) + 1$ and implicit input the input of the circuit $E(r)$ accepts the PCPP witness $w$.
\item The verifier sends the prover $ t\in_R\F_2^{5N}$.
\item Let $ C'(r)=r\cdot t$. The prover sends the verifier $C'(r)$ and $w'$, where $w'$ is a witness such that the verifier in Theorem \ref{thm:pcpp} with explicit input $V(C'(D(\cdot )) + C'(r) + 1)$ and implicit input $E(r)$ accepts the witness $w'$.
\end{itemize}

\end{mdframed}

\end{definition}
\begin{remark}
If $C(D(e')) + r' + 1 = 1$, which means that $C(D(e')) = r'$, then the verifier accepts with probability $1$ using the explicit input, the circuit $C(D(\cdot)) + r' + 1$, and the implicit input, the input of the circuit $e'$, along with the PCPP witness $w$.
\end{remark}

\begin{definition}[Energy tests and energy consistency tests, honest prover]\label{def:mqt:hp_energy} 
The honest prover's actions are defined as follows: 
\begin{mdframed}
\begin{itemize}
\item The honest prover acts as in Definition \ref{def:mqt:honest_prover} for the first two rounds. 
\item The verifier sends the prover a symbol ``EnergyTests" or ``QubitsTests". 
\item If the symbol is ``QubitsTests", the honest prover continues as in Definition \ref{def:mqt:honest_prover}. If the symbol is ``EnergyTests", the honest prover acts as in Definition \ref{def:mqt:hp_last_energy}.
\end{itemize}

\end{mdframed}

\end{definition}

\begin{definition}[Energy tests and energy consistency tests, the verifier’s actions in the last round]\label{def:mqt:verifier_e_last}
Let $ c,e\in \{0, 1\}$. Let $ t\in\F_2^{5N}$. Let $ l_i\in [m]$ for $i\in [N]$. 
Let $E$ be the encoding circuit of the codes in Theorem \ref{thm:ltc}, $D$ be the decoding circuit of the codes in Theorem \ref{thm:ltc}, $T$ be the  code tester 
circuit of the codes in Theorem \ref{thm:ltc}.

\begin{mdframed}
\begin{itemize}
\item The verifier sends $l_i\in[m]$ for $i\in[N]$ to the prover. 
\item The prover sends the verifier $M, o, w$: the implicit input of the PCPP, the value supposed to be $C(D(M))$ (which is also used to build the explicit input of the PCPP), and the witness of the PCPP.
\item The verifier sets $a_1=1$ if and only if $o=0$ and the verifier in Theorem \ref{thm:pcpp}, when run a constant number of times, accepts each time with the following parameters:
explicit input $x = C(D(\cdot)) + o + 1 \wedge T(\cdot)$, implicit input $ y = M$, PCPP proof  $\pi= w$.
\item The verifier sends $t$ to the prover.
\item The prover sends the verifier $o',w' $.  The verifier sets $a_2=1$ if and only if $o'=e$ and the verifier in Theorem \ref{thm:pcpp}, when run a constant number of times, accepts each time with the following parameters:
$x = C'(D(\cdot)) + o' + 1 \wedge T(\cdot)$, $y = M$, $\pi= w'$. 
\item If $c=0$, the verifier accepts iff $a_1=1$. If $ c=1$, the verifier accepts iff $a_2=1$.
\end{itemize}
\end{mdframed}

\end{definition}

\begin{definition}[Energy tests and energy consistency tests, verifier]\label{def:mqt:verifier_energy} 
Let $p,q$ be as in Theorem \ref{thm:clifford_ham}, $s,N$ be as in Lemma \ref{lem:c_ham_amp}. Let $H$, $H_i$, and $O_{i,j}$ be as in Definition~\ref{def:mqt:hp_last_energy}.

The verifier's actions are defined as follows: 

\begin{mdframed}
\begin{enumerate}
\item The verifier samples $ t\in_R\F_2^{5N}$ and samples $l_i\in_R[m]$ for $i\in [N]$. Let
\begin{align*}
O = \prod_{i\in [N], j\in [5]} O_{l_i,j}^{t_{i,j}}\;, 
\end{align*}
with Pauli representation
\begin{align*}
O =  (-1)^{d} X(a) Z(b) Y(c)\;.  
\end{align*}
\item \textnormal{ (Qubits tests:)} With probability $1/3$, the verifier acts as in Definition \ref{def:mqt:verifier} and sends the prover a symbol ``QubitsTests'' after the end of the first two rounds. 
\item \textnormal{ (Energy tests:)} With probability $1/3$, the verifier repeats the verifier’s action in Definition \ref{def:mqt:verifier:block} for $2$ times. Each time, the verifier acts as in Definition \ref{def:mqt:verifier:block} 
with $M = 0, F = 0, m = 0, f = 0$. In the end, the verifier sends the prover a symbol ``EnergyTests'' and acts as in Definition \ref{def:mqt:verifier_e_last} with $ c=0,e=0$. 
\item \textnormal{ (Energy consistency tests:)} With probability $1/3$, the verifier acts as in Definition \ref{def:mqt:verifier:measure} with $a=a, b=b,c=c$. Let $e$ be the output. In the end, the verifier sends the prover a symbol ``EnergyTests'' and acts as in Definition \ref{def:mqt:verifier_e_last} with $ c=1, e=e+d, t=t, l_i=l_i$ for $i\in [N]$. 
\end{enumerate}
\end{mdframed}

\end{definition}

\subsection{General prover's actions}
\label{sec:siop:formula}

To state the main theorem of this section, we first introduce the behavior of a malicious prover.
\begin{definition}[Energy tests, the canonical prover's actions]
\label{def:mqt:canonical_prover_e}
Let $\mathsf{P}$ be a register. 
Let $h = |E(1^{5N})|$ be
the output length of the encoding circuit of the codes in Theorem \ref{thm:ltc}. 
Let $g$ denote the length of the witness used by the verifier in Theorem \ref{thm:pcpp}, where the circuit has input size $h$.
For $l\in [m]^N$, let $P_{l}=\{P_{l,M,o,w}\}_{M\in \F_2^h,o\in\F_2,w\in\F_2^{g}}$ be a projective measurement acting on $\mathsf{P}$. 
\begin{mdframed}
\hfill 

The canonical prover's actions in the last round  of the energy tests are defined as follows: 
\begin{itemize}
\item \textnormal{The verifier's first step:} 
The verifier sends $l_i\in [m]$ for $i\in[N]$ to the prover.
\item \textnormal{The prover's first step:} 
The prover measures according to $ {P}_{l,\mathsf{P}}$. 
The prover sends the verifier the measurement result $M$ and $o,w$.
\item \textnormal{The verifier's second step:} 
The verifier sends the prover $t\in \F_2^{5N}$. 
\item \textnormal{The prover's second step:} 
The prover sends verifier $o',w'$. 
\end{itemize}
\end{mdframed}
\end{definition}

\subsection{Main theorem}
\label{sec:siop:main}

\begin{lemma}[Implication of the energy consistency tests]
\label{lem:mqt:ec}

Let 
\begin{align*}
P'_{l,t}=\sum_{M,o,w} 1_{D(M)\cdot t=0} P_{l,M,o,w} - \sum_{M,o,w} 1_{D(M)\cdot t=1} P_{l,M,o,w}.
\end{align*}

If for any $ M,F,D,u,q'$,
\begin{align*}
\Pr[\text{ consistency check passes }~|~M,F, D]\geq&~ 1- \eps'\\
\Pr[\text{ validity check passes }~|~F, q',u, D~] \geq&~ 1-\eps'\\
 \Pr[\text{ anti-commuting test passes }~|~D]\geq&~ 1-\eps'.
\end{align*}

Then 
\begin{align*}
&~ \E_{l\in [m]^N, t\in \F_2^{5N}}
\|(
P'_{l,t}-
V^\dagger ((-1)^{d}\cdot  X(a)Z(b)Y(c)\otimes I) V
)U_{0\rightarrow 2}\ket{0} \|^2\\
\leq &~\Theta(\eps+\delta+\eps'),
\end{align*}
where $a,b,c,d$ are defined as in Definition \ref{def:mqt:verifier_energy}, $\eps$ is the self-correction parameter from Theorem~\ref{thm:3c_code} and $\delta$ is the PCPP proximity parameter from Theorem~\ref{thm:pcpp}. 
\end{lemma}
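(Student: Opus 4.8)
## Proof proposal for Lemma~\ref{lem:mqt:ec}

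\textbf{Overall strategy.} The plan is to chain together three approximation statements, all measured in the $\|\cdot\|$-norm on the state $U_{0\to 2}\ket 0$, via the triangle inequality: (i) the prover's purported inner-product observable $P'_{l,t}$ agrees with the ``clean'' composed Pauli observable built from the first-two-round measurements, by soundness of the PCPPs (this is where $\delta$ and $\eps'$ enter, via a classical argument about the inner product of a mask with a putatively-correct string); (ii) that clean composed observable, acting on $U_{0\to2}\ket 0$, equals (up to the isometry $V$) the canonical Pauli $(-1)^d X(a)Z(b)Y(c)\otimes I$, which is exactly the conclusion of the many-qubits soundness analysis — here we invoke Theorem~\ref{thm:ourqubits} / Theorem~\ref{thm:mqt} together with the Gowers–Hatami stability theorem to produce $V$ and $g$, and this is where $\eps$ and $\delta$ again appear; and (iii) bookkeeping that relates the observable actually measured by the verifier in Definition~\ref{def:mqt:verifier:measure} (a controlled-$Z(b+c)$ in round~1, controlled-$X(a+c)$ in round~2, phase $i^{|c|}$) to the composed Pauli — this is the ``measuring $X(a)Z(b)Y(c)$'' identity already sketched in Section~\ref{sec:siop:xz} and is essentially algebra on Pauli/Clifford relations.

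\textbf{Key steps in order.} First I would fix $l\in[m]^N$ and $t\in\F_2^{5N}$ and recall from Definition~\ref{def:mqdef} that the composed observable $O(t,l)=\prod_{i,j}O_{l_i,j}^{t_{i,j}}$ has Pauli representation $(-1)^{d(t,l)}X(a(t,l))Z(b(t,l))Y(c(t,l))$ with pairwise-disjoint supports, so that $a,b,c,d$ in the statement are exactly $a(t,l)$ etc. Second, I would use the PCPP soundness (Theorem~\ref{thm:pcpp}) exactly as in the proof of Theorem~\ref{thm:qpcp_tlp} / the interactive oracle commitment: on $U_{0\to2}\ket0$, the measurement $\{P_{l,M,o,w}\}$ followed by the check that $o'=C'(D(M))=D(M)\cdot t$ forces $P'_{l,t}$ to be $\Theta(\delta+\eps')$-close (in $\|\cdot\|_\rho$ with $\rho$ the relevant reduced state) to the $\pm1$ observable that returns $(-1)^{D(M)\cdot t}$ — up to the probability $\le\varepsilon$ that the PCPP verifier is fooled when $M$ is $\delta$-far from the code — and then, because the first-two-round registers are consistent with the decoded string via the self-correction guarantee (the $\Theta(\eps)$ term, Lemma~\ref{lem:7.40}, Lemma~\ref{lem:mqt:v1}), $D(M)\cdot t$ matches the outcome of directly measuring the composed Pauli on the committed state. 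Third, I would apply Theorem~\ref{thm:mqt} (soundness), whose hypotheses are exactly the three displayed inequalities with $\eps'$ in place of $\delta$, to obtain the isometry $V$ and representation $g(x)=\rho(x)\otimes I$ with $\rho(\pm X(a)Z(b))=\pm X(a)Z(b)$, and the bound $\E_{(a,b)\sim\mu_1,\mu_2}\|f_X(a)f_Z(b)-V^\dagger g(X(a)Z(b))V\|^2_\rho=\Theta(\eps')$; converting $f_X,f_Z$ back into the verifier's controlled operations and composing over the disjoint supports of $a,b,c$ (plus the $i^{|c|}$ phase) gives that the verifier's measured observable equals $V^\dagger((-1)^d X(a)Z(b)Y(c)\otimes I)V$ up to $\Theta(\eps'+\eps)$. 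Finally, I would take expectations over $l\in[m]^N$ and $t\in\F_2^{5N}$ — noting that by Definition~\ref{def:mqdef} the induced distribution on $(b+c,a+c)$ is exactly $\mu_1$, and the randomized version with the extra mask $r$ is $\mu_2$, so the averaged many-qubits bound applies directly — and collect the error terms into $\Theta(\eps+\delta+\eps')$ by the triangle inequality for $\|\cdot\|_\sigma$ (Lemma~\ref{lem:tri_sigma_norm}).

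\textbf{Main obstacle.} The delicate point is step~(ii)–(iii): matching the \emph{abstract} anti-commuting observables $O'_1,O'_2$ (equivalently $f_X,f_Z$) produced by the many-qubits soundness analysis with the \emph{concrete} controlled operations the verifier performs in Definition~\ref{def:mqt:verifier:measure}, and doing so \emph{coherently} — i.e., showing that ``apply controlled-$Z(b+c)$ in round 1, then controlled-$X(a+c)$ in round 2, then phase $i^{|c|}$, then measure the control qubit in the Hadamard basis'' implements precisely the observable $(-1)^dX(a)Z(b)Y(c)$ on the committed state, not just on an EPR half. This requires carefully tracking the purified-measurement registers across the two rounds (using the ``Z gate on a purified outcome = observable applied directly'' principle and the transition unitaries $U_{0\to2}$), and combining the per-$(a,b)$ stability bounds into a bound on the four-fold product $X(a)Z(b)X(c)Z(d)$-type expression — which is exactly the Weyl–Heisenberg multiplicativity computation carried out in the proof of Theorem~\ref{thm:ourqubits}. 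A secondary annoyance is keeping the quantifier bookkeeping straight: the hypotheses are stated for \emph{all} $M,F,D,u,q'$ in the supports, and one must verify that both $D=\mu_1$ and $D=\mu_2$ occur with constant probability in Definition~\ref{def:mqt:verifier_energy} so that Lemma~\ref{lem:mqt:anti} (hence Theorem~\ref{thm:mqt}) can be invoked for each, with the $\Theta(\cdot)$ constants absorbing the constant $1/3$ and $1/6$ sampling probabilities.
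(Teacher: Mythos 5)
Your proposal is correct and follows essentially the same route as the paper's proof: a triangle-inequality chain in which (a) the consistency and validity checks (Lemmas~\ref{lem:mqt:c} and~\ref{lem:mqt:v1}, with the self-correction error $\eps$) relate the verifier's concrete controlled-$Z(b+c)$/controlled-$X(a+c)$/$\i^{|c|}$ measurement to the prover's clean forms $O'_2(a+c)O'_1(b+c)$, (b) PCPP soundness forces $P'_{l,t}$ to agree with that measured outcome up to $\Theta(\eps+\delta)$ plus the failure probability of the energy consistency test, and (c) Theorem~\ref{thm:ourqubits} (Gowers--Hatami, fed by Lemma~\ref{lem:mqt:anti}) supplies the isometry $V$ turning $(-1)^d\i^{c}O'_2(a+c)O'_1(b+c)$ into $V^\dagger((-1)^d X(a)Z(b)Y(c)\otimes I)V$, with the expectation over $(l,t)$ matching the distributions $\mu_1,\mu_2$ exactly as you note. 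The only caveat — shared with the paper's own write-up, whose stated hypotheses omit it — is that step (b) additionally needs the energy consistency test itself to pass with probability $1-\Theta(\eps')$, an assumption the paper does make when invoking this lemma in the proof of Theorem~\ref{thm:sqiop}.
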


\begin{proof}
By Lemma \ref{lem:mqt:c},
\begin{align*}
\E \|{O}({a+c}) U_2 {O}({b+c}) U_1\ket{0} - {O}'_2({a+c}) U_2 {O}({b+c}) U_1\ket{0}\|^2 \leq \Theta(\eps+\eps'),
\end{align*}
and
\begin{align*}
\E \| {O}'_2({a+c}) U_2 {O}({b+c}) U_1\ket{0} -  {O}'_2({a+c}) {O}'_1({b+c})  U_2 U_1\ket{0} \|^2 \leq \Theta(\eps+\eps').
\end{align*}
By Lemma \ref{lem:mqt:v1}, 
\begin{align*}
\E\|{O}({a+c}) U_2 {O}({b+c}) U_1\ket{0}-{\tilde O}_{a+c} U_2 {\tilde O}_{b+c} U_1\ket{0} \|^2 \leq \Theta(\eps+\eps'). 
\end{align*}
Therefore, by combining the three inequalities above and the triangle inequality
\begin{align}
\E\|  {O}'_2({a+c}) {O}'_1({b+c})  U_2 U_1\ket{0}  -{\tilde O}_{a+c} U_2 {\tilde O}_{b+c} U_1\ket{0} \|^2 \leq \Theta(\eps+\eps')\;.\label{eq:mqt-1}
\end{align}
To analyze the energy consistency test, note that according to Definition \ref{def:mqt:verifier:measure} and Definition \ref{def:mqt:verifier_energy}, during the energy consistency test, the verifier first measure $X(a)Z(b)Y(c)$ by leveraging the first two rounds. In the rest of the proof, instead of $e$, we use $s$ to denote the measurement result.
\begin{align*}
\frac{1}{\sqrt{2}}(U_2  U_1\ket{+}_{\mathsf{M}}\ket{0}+\i^{c} \cdot & {\tilde O}_{a+c} U_2 {\tilde O}_{b+c} U_1\ket{-}_{\mathsf{M}}\ket{0} )\\
=&~ \frac{1}{{2}}(U_2  U_1\ket{0}+\i^{c} \cdot {\tilde O}_{a+c} U_2 {\tilde O}_{b+c} U_1\ket{0} )\ket{0}_{\mathsf{M}}\\
&~\qquad +\frac{1}{{2}}(U_2  U_1\ket{0}-\i^{c} \cdot {\tilde O}_{a+c} U_2 {\tilde O}_{b+c} U_1\ket{0} )\ket{1}_{\mathsf{M}}.
\end{align*}
We have that
\begin{align}
\Pr[\text{energy}&\text{ consistency test passes}] \notag\\
=& \E_{\substack{l\in [m]^N\\ t\in \F_2^{5N}}}\sum_{\substack{s\in \F_2\\M,o,w}}  \Pr[\text{verifier accepts}|l,t,M,o',w'] \cdot \Big\| P_{l,M,o,w} \frac{U_2  U_1\ket{0}+(-1)^{s}\i^{c} \cdot {\tilde O}_{a+c} U_2 {\tilde O}_{b+c} U_1\ket{0}  }{2}\Big\|^2\notag\\
\leq &\E_{\substack{l\in [m]^N\\ t\in \F_2^{5N}}}\sum_{\substack{s\in \F_2\\M,o,w}} \Pr[\text{verifier accepts}|l,t,M,o',w'] \notag\\
&\qquad\cdot \Big\| P_{l,M,o,w} \frac{U_2  U_1\ket{0}+(-1)^{s}\i^{c} \cdot   {O}'_2({a+c}) {O}'_1({b+c})  U_2 U_1\ket{0} }{2}\Big\|^2+\Theta(\eps+\delta)\;,\label{eq:mqt-2}
\end{align}
where the last step follows from~\eqref{eq:mqt-1}.

Let $S = \{x | C'(D(x)) = s+d \wedge T(x) = 1\}$. 
If $ dist(M,S) > \delta$,
\begin{align*}
\Pr[a_2=1|l,t,M,o',w'] \leq  \Pr[\text{verifier accepts}|l,t,M,o'=s+d,w'] \leq  \eps. 
\end{align*}
If $ dist(M,S) \leq \delta$,
\begin{align*}
\Pr[a_2=1|l,t,M,o',w'] \leq  1 = 1_{C'(D(M))=s+d}. 
\end{align*}
As a result,
\begin{align*}
\Pr[\text{verifier accepts}|l,t,M,o',w'] \leq 1_{D(M)\cdot t=s+d} +\eps.
\end{align*}
Plugging back into~\eqref{eq:mqt-2}, 
\begin{align}
\Pr[\text{energy}&\text{ consistency test passes}]\leq  \Theta(\eps+\delta)\notag\\
 & + E_{\substack{l\in [m]^N\\ t\in \F_2^{5N}}}\sum_{\substack{s\in \F_2\\M,o,w}} \Big\|(\sum_{M,o,w} 1_{D(M)\cdot t=s} P_{l,M,o,w}) \frac{I + (-1)^{d+s}\cdot \i^{c} \cdot O'_2(a+c) O'_1(b+c) }{2}
U_{0\rightarrow 2}\ket{0} \Big\|^2 \;.
\label{eq:mqt-3}
\end{align}
Let 
\begin{align*}
P'_{l,t}=\sum_{M,o,w} 1_{D(M)\cdot t=0} P_{l,M,o,w} - \sum_{M,o,w} 1_{D(M)\cdot t=1} P_{l,M,o,w}
\end{align*}
For any matrix $ R$,
\begin{align*}
\sum_{s\in\F_2}\Big\| \frac{I+(-1)^s P'_{l,t}}{2}\cdot \frac{I + (-1)^{s} R}{2}
U_{0\rightarrow 2}\ket{0} \Big\|^2 = 1- \frac{1}{4}\big\|P'_{l,t}U_{0\rightarrow 2}\ket{0} - R U_{0\rightarrow 2}\ket{0} \big\|^2,
\end{align*}

Thus by taking 
\begin{align*}
R = (-1)^{d}\cdot \i^{c} \cdot O'_2(a+c) O'_1(b+c),
\end{align*}
and Eq.~\eqref{eq:mqt-3}, we have that
\begin{align}
&~ \E_{l\in [m]^N, t\in \F_2^{5N}}
\big\|(
P'_{l,t}-
{(-1)^{d}\cdot \i^{c} \cdot O'_2(a+c) O'_1(b+c) }
)U_{0\rightarrow 2}\ket{0} \big\|^2\notag\\
\leq &~\Theta(1-\Pr[\text{energy consistency test passes}] )+\Theta(\eps+\delta). 
\label{eq:mqt-5}
\end{align}

Furthermore, according to Lemma \ref{lem:mqt:anti} and Definition \ref{def:hermitian_prover}, we can apply Theorem \ref{thm:ourqubits}, 
\begin{align}
\E \|{(-1)^{d}\cdot \i^{c} \cdot O'_2(a+c) O'_1(b+c) }
U_{0\rightarrow 2}\ket{0} -V^\dagger ((-1)^{d}\cdot  X(a)Z(b)Y(c)\otimes I) VU_{0\rightarrow 2}\ket{0}\|^2 \leq \Theta(\delta).
\label{eq:mqt-4}
\end{align}
As a result of Eq.~\eqref{eq:mqt-5} and Eq.~\eqref{eq:mqt-4},
\begin{align*}
&~ \E_{l\in [m]^N, t\in \F_2^{5N}}
\|(
P'_{l,t}-
V^\dagger ((-1)^{d}\cdot  X(a)Z(b)Y(c)\otimes I) V
)U_{0\rightarrow 2}\ket{0} \|^2\\
\leq &~\Theta(\eps+\delta+\eps'). 
\end{align*}

\end{proof}

The following lemma is a quantum analogue of the classical equality test via inner product with a random mask.

\begin{lemma}[Implications of using a random mask]
\label{lem:q_sound_had}
Let $ d' > d$. Let $ \{P_s\}_{s\in \F_2^n}$ be an orthogonal projective measurement on $ \C^{d}$.
Let $ O_i$ be observables 
on $\C^{d'}$ for $i\in [n]$. Let $ V:\C^{d}\rightarrow\C^{d'}$ be an isometry. 
For $t\in \F_2^n$, let
\begin{align*}
O^{(t)} = \prod O^{t_i}_i.
\end{align*}
Moreover, let 
\begin{align*}
P^{(t)}=\sum_s (-1)^{st}\cdot P_s.
\end{align*}
Then
\begin{align*}
\sum_s \Big\|P_s - V^\dagger \prod_{i\in [n]} \frac{I+(-1)^{s_i}O_i}{2} V\Big\|_\rho^2 \leq 
\E_t \|P^{(t)}- V^\dagger O^{(t)} V\|_\rho^2 +1 - \E_t \|V^\dagger O^{(t)} V\|_\rho^2. 
\end{align*}
\end{lemma}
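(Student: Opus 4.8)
## Proof proposal for Lemma~\ref{lem:q_sound_had}

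The plan is to expand the left-hand side directly in terms of the $\sigma$-seminorm (here written $\|\cdot\|_\rho$), pass to the Fourier/character expansion of the projective measurement $\{P_s\}$, and use the fact that for a fixed $t$ the Fourier coefficient $P^{(t)} = \sum_s (-1)^{s\cdot t} P_s$ averages the comparison operator nicely against $O^{(t)}$.

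\medskip

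\textbf{Step 1: unfold the left-hand side.}
Fix $s$ and write $\Pi_s := V^\dagger \prod_{i\in[n]} \tfrac{I+(-1)^{s_i}O_i}{2} V$. Expanding the product of the rank-one-in-each-coordinate projectors gives
\[
\prod_{i\in[n]} \frac{I+(-1)^{s_i}O_i}{2} \;=\; \frac{1}{2^n}\sum_{t\in\F_2^n} (-1)^{s\cdot t}\, O^{(t)},
\]
since each $O_i$ squares to $I$ and $O^{(t)}=\prod_i O_i^{t_i}$. Hence $\Pi_s = \tfrac{1}{2^n}\sum_t (-1)^{s\cdot t}\, V^\dagger O^{(t)} V$. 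Likewise, orthogonality and completeness of $\{P_s\}$ give $P_s = \tfrac{1}{2^n}\sum_t (-1)^{s\cdot t} P^{(t)}$, where $P^{(\mathbf 0)}=\sum_s P_s = I$. Therefore
\[
P_s - \Pi_s \;=\; \frac{1}{2^n}\sum_{t\in\F_2^n}(-1)^{s\cdot t}\bigl(P^{(t)} - V^\dagger O^{(t)} V\bigr).
\]

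\medskip

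\textbf{Step 2: sum over $s$ and apply Parseval.}
Summing $\|P_s-\Pi_s\|_\rho^2 = \tr\bigl((P_s-\Pi_s)^\dagger(P_s-\Pi_s)\rho\bigr)$ over all $s\in\F_2^n$, the cross terms with $t\neq t'$ cancel by the character orthogonality $\sum_s (-1)^{s\cdot(t+t')} = 2^n \mathbf 1_{t=t'}$, leaving
\[
\sum_s \|P_s-\Pi_s\|_\rho^2 \;=\; \frac{1}{2^n}\sum_{t\in\F_2^n}\bigl\|P^{(t)}-V^\dagger O^{(t)} V\bigr\|_\rho^2 \;=\; \E_t \bigl\|P^{(t)}-V^\dagger O^{(t)} V\bigr\|_\rho^2.
\]
This already gives an identity — in fact slightly stronger than the claimed inequality — provided one is allowed to include the $t=\mathbf 0$ term on the right. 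That term contributes $\|I - V^\dagger O^{(\mathbf 0)} V\|_\rho^2 = \|I-V^\dagger V\|_\rho^2 = 0$ (since $V$ is an isometry and $O^{(\mathbf 0)}=I$), so in fact $\sum_s\|P_s-\Pi_s\|_\rho^2 = \E_t\|P^{(t)}-V^\dagger O^{(t)}V\|_\rho^2$ exactly.

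\medskip

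\textbf{Step 3: reconcile with the stated bound.}
The right-hand side as written in the lemma is $\E_t\|P^{(t)}-V^\dagger O^{(t)}V\|_\rho^2 + 1 - \E_t\|V^\dagger O^{(t)}V\|_\rho^2$, so it suffices to check $1 - \E_t\|V^\dagger O^{(t)}V\|_\rho^2 \geq 0$, i.e. $\E_t\|V^\dagger O^{(t)}V\|_\rho^2 \leq 1$. This follows because $O^{(t)}$ is an observable (each $O_i$ Hermitian and squaring to $I$; their product need not be Hermitian, but $\|V^\dagger O^{(t)}V\|_\rho^2 = \tr(V^\dagger (O^{(t)})^\dagger V V^\dagger O^{(t)} V\rho) \leq \tr(V^\dagger (O^{(t)})^\dagger O^{(t)} V\rho) = \tr(V^\dagger V\rho) = \tr(\rho) = 1$), using $VV^\dagger \preceq I$ and $(O^{(t)})^\dagger O^{(t)} = I$. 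Taking the expectation over $t$ preserves the bound. Combining Steps~2 and~3 yields the claimed inequality.

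\medskip

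\textbf{Main obstacle.} The computation itself is routine Fourier analysis over $\F_2^n$; the only subtlety is bookkeeping the $t=\mathbf 0$ term and verifying that the extra slack $1-\E_t\|V^\dagger O^{(t)}V\|_\rho^2$ is nonnegative, which is where the isometry property of $V$ and the involutivity of each $O_i$ are genuinely used. One should be careful that $O^{(t)}$ is generally not Hermitian (the $O_i$ need not commute), so the bound $\|V^\dagger O^{(t)}V\|_\rho \le 1$ must be argued via $(O^{(t)})^\dagger O^{(t)}=I$ rather than by treating $O^{(t)}$ as an observable; similarly $P^{(t)}$ is Hermitian but not a projection, so no projection identity should be invoked for it.
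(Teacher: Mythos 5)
Your proof is correct, and it is a mild sharpening of the paper's argument rather than a reproduction of it. Writing $\Pi_s=\prod_{i\in[n]}\frac{I+(-1)^{s_i}O_i}{2}$, the paper expands both $\E_t\|P^{(t)}-V^\dagger O^{(t)}V\|_\rho^2$ and $\sum_s\|P_s-V^\dagger \Pi_s V\|_\rho^2$, matches the cross terms through the same character expansion $\Pi_s=\E_t(-1)^{s\cdot t}O^{(t)}$ that you use, and then bounds the quadratic term $\sum_s\|V^\dagger\Pi_s V\|_\rho^2$ crudely by $1$ using $VV^\dagger\preceq I$ (together with the telescoping of the projector products); the term $1-\E_t\|V^\dagger O^{(t)}V\|_\rho^2$ in the stated right-hand side is precisely the gap between that crude bound and the corresponding term kept from the other expansion. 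You instead Fourier-invert $P_s$ and $V^\dagger\Pi_sV$ simultaneously and apply Parseval, which computes the quadratic term exactly (namely $\sum_s\|V^\dagger\Pi_sV\|_\rho^2=\E_t\|V^\dagger O^{(t)}V\|_\rho^2$) and yields the identity $\sum_s\|P_s-V^\dagger\Pi_sV\|_\rho^2=\E_t\|P^{(t)}-V^\dagger O^{(t)}V\|_\rho^2$, which is strictly stronger than the lemma; the price is the extra verification in your Step 3 that the slack $1-\E_t\|V^\dagger O^{(t)}V\|_\rho^2$ is nonnegative, which the paper never needs because its slack arises with the favorable sign. Your argument for that step, via $(O^{(t)})^\dagger O^{(t)}=I$ and $VV^\dagger\preceq I$, is the right one given that $O^{(t)}$ need not be Hermitian, and your treatment of the $t=0$ term is fine (it vanishes since $V$ is an isometry). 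Two cosmetic remarks: the inversion $P_s=\frac{1}{2^n}\sum_t(-1)^{s\cdot t}P^{(t)}$ needs neither orthogonality nor completeness of $\{P_s\}$ (it is immediate from the definition of $P^{(t)}$), and both your proof and the paper's implicitly use $\tr(\rho)=1$, which the lemma statement leaves tacit.
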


\begin{proof}
Note that
\begin{align}
\E_t \|P^{(t)}- V^\dagger O^{(t)} V\|_\rho^2 =&~ 1 + \E_t \tr(V^\dagger O^{(t)} VV^\dagger O^{(t)} V\rho) - 2\E_t \Re(\tr((P^{(t)})^\dagger V^\dagger O^{(t)} V\rho))\notag\\
=&~ 1 + \E_t \|V^\dagger O^{(t)} V\|_\rho^2 - 2\E_t \sum_s \Re(\tr(((-1)^{st}\cdot P_s)^\dagger V^\dagger O^{(t)} V\rho))\;.\label{eq:prodt-1}
\end{align}
Moreover,
\begin{align}
\sum_s \Big\|P_s -& V^\dagger \prod_{i\in [n]} \frac{I+(-1)^{s_i}O_i}{2} V\Big\|_\rho^2 \notag\\
=& \sum_s \tr(P_s\rho) + \sum_s \tr( V^\dagger \prod_{i\in [n]} \frac{I+(-1)^{s_i}O_i}{2} V V^\dagger \prod_{i\in [n]} \frac{I+(-1)^{s_i}O_i}{2} V \rho)\notag \\
&\qquad - 2\sum_s \Re(\tr( P_s^\dagger V^\dagger \prod_{i\in [n]} \frac{I+(-1)^{s_i}O_i}{2} V \rho))\notag\\
\leq &~ 2 - 2\sum_s \Re(\tr( P_s^\dagger V^\dagger \prod_{i\in [n]} \frac{I+(-1)^{s_i}O_i}{2} V \rho))\;,\label{eq:prodt-2}
\end{align}
where the middle term was bounded using that $V$ is an isometry. 
Moreover,
\begin{align*}
\prod_{i\in [n]} \frac{I+(-1)^{s_i}O_i}{2} = \frac{1}{2^n}\sum_{t\in\F_2^n} \prod_i O_i^{t_i} (-1)^{s_i t_i} = \E_t (-1)^{st} O^{(t)}.
\end{align*}
Therefore, plugging back into~\eqref{eq:prodt-2},
\begin{align*}
&~\sum_s \|P_s - V^\dagger \prod_{i\in [n]} \frac{I+(-1)^{s_i}O_i}{2} V\|_\rho^2 \\
\leq &~ 2 - 2\E_t \sum_s \Re(\tr( P_s^\dagger  (-1)^{st} V^\dagger O^{(t)} V \rho))\\
\leq &~ \E_t \|P^{(t)}- V^\dagger O^{(t)} V\|_\rho^2 +1 - \E_t \|V^\dagger O^{(t)} V\|_\rho^2\;,
\end{align*}
where the second inequality is by~\eqref{eq:prodt-1}.
\end{proof}

\begin{lemma}[Further implication of the energy consistency tests]\label{lem:8.13}
\begin{align*}
\E_l\sum_s \Big\|\sum_{M,o,w} 1_{D(M)=s} P_{l,M,o,w}- V^\dagger \prod_{i\in [n]} \frac{I+(-1)^{s_i}O_{i,j}}{2} V\Big\|_\rho^2 
\leq 
\Theta(\sqrt{\eps+\eps'+\delta})\;. 
\end{align*}
\end{lemma}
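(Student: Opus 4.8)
\textbf{Proof plan for Lemma~\ref{lem:8.13}.}
The plan is to combine Lemma~\ref{lem:mqt:ec} with Lemma~\ref{lem:q_sound_had}, using the decomposition $O = \prod_{i\in[N],j\in[5]} O_{l_i,j}^{t_{i,j}}$ as the product of observables referenced in the random-mask lemma. First I would set up the identification: the observables $O_i$ appearing in Lemma~\ref{lem:q_sound_had} will be the $V^\dagger(X(a_{l_i,j})Z(b_{l_i,j})Y(c_{l_i,j})\otimes I)V$ --- more precisely, since the Pauli representation carries signs $(-1)^{d_{i,j}}$, I would absorb those into the projective measurement $\{P_{l,M,o,w}\}$ by a relabelling, or carry the signs explicitly. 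The projective measurement $\{P_s\}$ of the random-mask lemma corresponds to $\{\sum_{M,o,w}1_{D(M)=s}P_{l,M,o,w}\}_{s\in\F_2^{5N}}$, which is genuinely a projective measurement since it is a coarse-graining of the $P_{l,M,o,w}$. Then $P^{(t)}$ in that lemma is exactly $P'_{l,t} = \sum_{M,o,w}1_{D(M)\cdot t=0}P_{l,M,o,w} - \sum_{M,o,w}1_{D(M)\cdot t=1}P_{l,M,o,w}$ from Lemma~\ref{lem:mqt:ec}, and $O^{(t)}$ is $V^\dagger((-1)^{d(t,l)}X(a(t,l))Z(b(t,l))Y(c(t,l))\otimes I)V$.

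Next I would invoke Lemma~\ref{lem:mqt:ec}, which under the stated hypotheses on the consistency, validity, and anti-commuting test success probabilities gives
\[
\E_{l,t}\big\|\big(P'_{l,t} - V^\dagger((-1)^d X(a)Z(b)Y(c)\otimes I)V\big)U_{0\to 2}\ket 0\big\|^2 \leq \Theta(\eps+\delta+\eps')\,,
\]
i.e.\ $\E_{l,t}\|P^{(t)} - V^\dagger O^{(t)} V\|_\rho^2 \leq \Theta(\eps+\delta+\eps')$ with $\rho = U_{0\to2}\ketbra{0}{0}U_{0\to2}^\dagger$. Plugging into Lemma~\ref{lem:q_sound_had} applied for each fixed $l$ and then averaging over $l$, I get
\[
\E_l\sum_s\Big\|\sum_{M,o,w}1_{D(M)=s}P_{l,M,o,w} - V^\dagger\prod_{i}\tfrac{I+(-1)^{s_i}O_{i}}{2}V\Big\|_\rho^2 \leq \Theta(\eps+\delta+\eps') + \big(1 - \E_{l,t}\|V^\dagger O^{(t)}V\|_\rho^2\big)\,.
\]
So the remaining task is to lower bound $\E_{l,t}\|V^\dagger O^{(t)} V\|_\rho^2$, i.e.\ to show it is $1 - O(\sqrt{\eps+\eps'+\delta})$. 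Here $\|V^\dagger O^{(t)}V\|_\rho^2 = \tr(V^\dagger O^{(t)}V V^\dagger O^{(t)}V\rho)$; since $O^{(t)}$ is a genuine $\pm1$ observable on $\C^{d'}$ and $V$ an isometry, $VV^\dagger$ is a projection, so $\|V^\dagger O^{(t)}V\|_\rho^2 \leq 1$ with equality iff $O^{(t)}V U_{0\to2}\ket0$ lies in the range of $V$. The closeness of $P'_{l,t}U_{0\to2}\ket0$ (a unit-norm-bounded vector in the small space, pulled back by $V^\dagger$) to $V^\dagger O^{(t)}VU_{0\to2}\ket0$ forces $O^{(t)}VU_{0\to2}\ket0$ to be close to $VV^\dagger O^{(t)}VU_{0\to2}\ket0$; quantifying this gives $\E_{l,t}\|V^\dagger O^{(t)}V\|_\rho^2 \geq 1 - O(\eps+\delta+\eps')$, and in fact I would be willing to lose a square root here (matching the lemma statement's $\Theta(\sqrt{\cdot})$), which suggests the intended route actually passes through a weaker but more robust estimate --- e.g.\ bounding $\|V^\dagger O^{(t)}V\|_\rho - \|P^{(t)}\|_\rho$ via the triangle inequality for the $\rho$-seminorm (Lemma~\ref{lem:tri_sigma_norm}) together with $\|P^{(t)}\|_\rho^2 = 1$ (since $P^{(t)}$ is a $\pm1$ observable and $\rho$ a state), then applying Cauchy--Schwarz $(\E X)^2 \leq \E X^2$ to convert an $L^1$ bound into the claimed $\sqrt{}$-type $L^2$ bound.

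The main obstacle I anticipate is precisely this last step: controlling the ``defect'' $1 - \E_{l,t}\|V^\dagger O^{(t)}V\|_\rho^2$. The subtlety is that $V^\dagger O^{(t)}V$ is generally a contraction, not a unitary, so $\|V^\dagger O^{(t)}V\|_\rho^2$ can be strictly less than $1$ even in the honest case unless the isometry $V$ produced by Theorem~\ref{thm:ourqubits} genuinely captures the relevant subspace; one must use that $P'_{l,t}$ is an exact $\pm1$ observable to ``pull back'' norm through $V^\dagger$. Concretely, $\|V^\dagger O^{(t)}V\|_\rho \geq \|V V^\dagger O^{(t)} V U_{0\to2}\ket0\| \geq \|O^{(t)}VU_{0\to2}\ket0\| - \|(I-VV^\dagger)O^{(t)}VU_{0\to2}\ket0\| = 1 - \|(I-VV^\dagger)O^{(t)}VU_{0\to2}\ket0\|$, and the latter error term is bounded since $P'_{l,t}U_{0\to2}\ket0 = V^\dagger \cdot(\text{something in range of }V)$ up to the Lemma~\ref{lem:mqt:ec} error, so $O^{(t)}VU_{0\to2}\ket0$ is within $O(\sqrt{\eps+\delta+\eps'})$ (after the square-root loss from Jensen) of $\mathrm{range}(V)$. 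Assembling these pieces and keeping track of the $L^1$-vs-$L^2$ conversions yields the stated bound; the bookkeeping of the $\Theta(\cdot)$ constants and the sign factors $(-1)^{d(t,l)}$ is routine but needs care.
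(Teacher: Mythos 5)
Your proposal is correct and follows essentially the same route as the paper: apply Lemma~\ref{lem:q_sound_had} with the coarse-grained measurement $P_s=\sum_{M,o,w}1_{D(M)=s}P_{l,M,o,w}$ and the Pauli observables $O_{l_i,j}$, bound $\E_{l,t}\|P'_{l,t}-V^\dagger O(t,l)V\|_\rho^2$ by Lemma~\ref{lem:mqt:ec}, and control the defect $1-\E_{l,t}\|V^\dagger O(t,l)V\|_\rho^2$ via the $\rho$-seminorm triangle inequality with $\|P'_{l,t}\|_\rho=1$, followed by Jensen/Cauchy--Schwarz, which is exactly where the $\sqrt{\eps+\eps'+\delta}$ arises. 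The only blemish is the initial (notational) identification of the $O_i$ in Lemma~\ref{lem:q_sound_had} as $V^\dagger(\cdot)V$ rather than as the signed Pauli operators on the large space with $V$ the isometry, but you apply the lemma's conclusion in the correct form, so the argument goes through as in the paper.
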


\begin{proof}
Let 
\begin{align*}
P'_{l,t}=\sum_{M,o,w} 1_{D(M)\cdot t=0} P_{l,M,o,w} - \sum_{M,o,w} 1_{D(M)\cdot t=1} P_{l,M,o,w}.
\end{align*}
By Lemma \ref{lem:q_sound_had} with 
\begin{align*}
n=&~5N,\\
P_s =&~ \sum_{M,o,w} 1_{D(M)=s} P_{l,M,o,w} ,\\
O_{i,j} =&~ (-1)^{d_{{l_i},j}}X(a_{{l_i},j})Z(b_{{l_i},j})Y(c_{{l_i},j}),
\end{align*}
we have that
\begin{align}
\sum_s \Big\|\sum_{M,o,w} 1_{D(M)=s} P_{l,M,o,w}&- V^\dagger \prod_{i\in [N], j\in [5]} \frac{I+(-1)^{s_{i,j}}O_{l_i,j}}{2} V\Big\|_\rho^2 \notag\\
\leq &~
\E_t \|P'_{l,t}- V^\dagger O(t,l) V\|_\rho^2 +1 - \E_t \|V^\dagger O(t,l) V\|_\rho^2\;. \label{eq:mt-4}
\end{align}
By Lemma \ref{lem:mqt:ec}, 
\begin{align*}
\E_l \E_t \|P'_{l,t}- V^\dagger O(t,l) V\|_\rho^2 \leq \Theta (\eps + \delta+\eps'). 
\end{align*}

Note that, by Lemma \ref{lem:tri_sigma_norm},
\begin{align*}
&~ \E_l \E_t \| V^\dagger O(t,l) V\|_\rho \\
\geq &~ \E_l \E_t \|P'_{l,t}\|_\rho - \E_l \E_t \|P'_{l,t}- V^\dagger O(t,l) V\|_\rho \\
= &~ 1 - \E_l \E_t \|P'_{l,t}- V^\dagger O(t,l) V\|_\rho \\
\geq &~ 1 - \sqrt{\E_l \E_t \|P'_{l,t}- V^\dagger O(t,l) V\|_\rho^2} \\
\geq &~ 1 - \Theta(\sqrt{\eps+\eps'+\delta}) .
\end{align*}

Therefore, 
\begin{align*}
&~\E_l \E_t \| V^\dagger O(t,l) V\|_\rho^2  \\
\geq &~ (\E_l \E_t \| V^\dagger O(t,l) V\|_\rho )^2 \\
\geq &~ (1 - \Theta(\sqrt{\eps+\eps'+\delta}) )^2\\
\geq &~ 1 - \Theta(\sqrt{\eps+\eps'+\delta}) .
\end{align*}

Plugging back into~\eqref{eq:mt-4},
\begin{align*}
\E_l\sum_s \Big\|\sum_{M,o,w} 1_{D(M)=s} P_{l,M,o,w}- V^\dagger \prod_{i\in [N], j\in [5]} \frac{I+(-1)^{s_{i,j}}O_{l_i,j}}{2} V\Big\|_\rho^2 
\leq 
\Theta(\sqrt{\eps+\eps'+\delta})\;. 
\end{align*}
\end{proof}

\begin{lemma}[Implication of the energy tests]
\label{lem:e_imply}

\begin{align*}
\Pr[\text{energy test passes}]\leq  \eps + \E_{l\in_R [m]^N} 
 \Big\|  \sum_{M,o,w} 
1_{C(D(M))=0} 
\cdot P_{l,M,o,w} U_{0\rightarrow 2}\ket{0}\Big\|^2.
\end{align*}
\end{lemma}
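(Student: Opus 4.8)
The plan is to unfold the definition of the energy test from Definition~\ref{def:mqt:verifier_energy} (item 3 for the ``EnergyTests'' branch with $c=0,e=0$) together with the canonical prover's actions in Definition~\ref{def:mqt:canonical_prover_e}, and then invoke the soundness of the PCPP from Theorem~\ref{thm:pcpp} exactly as was done in the soundness argument of Theorem~\ref{thm:qpcp_tlp} and in the energy-consistency analysis of Lemma~\ref{lem:mqt:ec}. First I would write the acceptance probability of the energy test as an expectation over $l\in_R[m]^N$ of $\sum_{M,o,w}\|P_{l,M,o,w}\,U_{0\rightarrow 2}\ket 0\|^2$ times $\Pr[\text{verifier accepts}\mid l,M,o,w]$, using that the honest-prover side of the interaction in the energy branch only consists of the prover's projective measurement $\{P_{l,M,o,w}\}$ applied to the state $U_{0\rightarrow 2}\ket 0$ (the first two rounds of the canonical prover, followed by the $l$-dependent measurement; the $Z$-gates / measurements the verifier performs in those two rounds are trivial here since $M=F=m=f=0$ in item~3). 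In the $c=0$ case the verifier accepts iff $a_1=1$, i.e. iff $o=0$ \emph{and} the $T(\cdot)$-guarded PCPP verifier of Theorem~\ref{thm:pcpp} accepts on explicit input $x=C(D(\cdot))+o+1\wedge T(\cdot)$, implicit input $y=M$, and proof $\pi=w$.

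The key step is the case split on $\mathrm{dist}(M,S)$ where $S=\{x\mid C(D(x))=o\wedge T(x)=1\}$, precisely mirroring the soundness proof of the interactive oracle commitment and of Lemma~\ref{lem:mqt:ec}. If $\mathrm{dist}(M,S)>\delta$ then by the soundness clause of Theorem~\ref{thm:pcpp} the PCPP verifier accepts with probability at most $\eps$ (after the constant number of repetitions), so $\Pr[a_1=1\mid l,M,o,w]\le\eps$. If $\mathrm{dist}(M,S)\le\delta$ then, because $T$ is the codeword tester and $C_n$ has constant relative distance, there is a unique nearby codeword $E(s)$ with $s=D(M)$ and $T(x)=1$ forces $x$ to be a codeword, so membership in $S$ forces $C(D(M))=o$; hence acceptance requires $o=0$ and $C(D(M))=0$, giving $\Pr[a_1=1\mid l,M,o,w]\le 1_{C(D(M))=0}$. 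Combining the two cases, $\Pr[\text{verifier accepts}\mid l,M,o,w]\le 1_{C(D(M))=0}+\eps$. Substituting this bound and using that $\{P_{l,M,o,w}\}$ is a projective measurement so $\sum_{M,o,w}\|P_{l,M,o,w}U_{0\rightarrow2}\ket0\|^2=1$, the $\eps$ term contributes at most $\eps$ and the remaining term is exactly $\E_{l}\big\|\sum_{M,o,w}1_{C(D(M))=0}P_{l,M,o,w}U_{0\rightarrow2}\ket0\big\|^2$ after collapsing the sum of squared norms of orthogonal components into the squared norm of the sum (orthogonality of the $P_{l,M,o,w}$). This yields the stated inequality.

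I expect the only genuinely delicate point to be making the $T(\cdot)$-guard do its job in the $\mathrm{dist}(M,S)\le\delta$ branch: one must argue that the constant $\delta$ can be taken below $\kappa'd/2$ (or simply below half the code distance) so that ``$M$ is $\delta$-close to a string accepted by the PCPP'' really does pin down a single codeword and hence a single value $D(M)$, and that the conjunction with $T$ inside the PCPP input is handled by running Theorem~\ref{thm:pcpp} a constant number of times with an appropriate proximity parameter — exactly the bookkeeping already carried out in Definition~\ref{def:ioc_verifier} and its accompanying lemma, and again in Lemma~\ref{lem:mqt:ec}. Everything else is a routine repetition of that template, so I would keep the write-up short and point back to those places for the PCPP parameter choices.
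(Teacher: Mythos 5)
Your proposal is correct and follows essentially the same route as the paper: write the acceptance probability as $\E_l\sum_{M,o,w}\Pr[a_1=1\mid l,M,o,w]\,\|P_{l,M,o,w}U_{0\rightarrow 2}\ket{0}\|^2$, case-split on $\mathrm{dist}(M,S)$ to bound the conditional acceptance by $\eps$ (PCPP soundness) or by $1_{C(D(M))=0}$ (closeness to a codeword pins down $D(M)$), and then collapse the weighted sum into the squared norm of $\sum_{M,o,w}1_{C(D(M))=0}P_{l,M,o,w}U_{0\rightarrow 2}\ket{0}$ using orthogonality of the projectors. The only differences are cosmetic (your $S$ is parameterized by $o$ rather than fixed at $0$, and you spell out the code-distance bookkeeping the paper leaves implicit).
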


\begin{proof}
Let $S = \{t | C(D(t)) = 0 \wedge T(t) = 1\}$. 
If $ dist(M,S) > \delta$,
\begin{align*}
\Pr[a_1=1|l,M,o,w] \leq  \Pr[\text{verifier accepts}|l,M,o=0,w] \leq  \eps. 
\end{align*}
If $ dist(M,S) \leq \delta$,
\begin{align*}
\Pr[a_1=1|l,M,o,w] \leq  1 = 1_{C(D(M))=0}. 
\end{align*}
Then,
\begin{align*}
\Pr[\text{energy test passes}] 
=&~ \E_{l\in_R [m]^N} 
\sum_{M,o,w} 
\Pr[a_1=1|l,M,o,w] 
\cdot \| P_{l,M,o,w} U_{0\rightarrow 2}\ket{0}\|^2\\
\leq &~ \eps + \E_{l\in_R [m]^N} 
\sum_{M,o,w} 
1_{C(D(M))=0} 
\cdot \| P_{l,M,o,w} U_{0\rightarrow 2}\ket{0}\|^2\\
= &~ \eps + \E_{l\in_R [m]^N} 
 \Big\| \sum_{M,o,w} 
1_{C(D(M))=0} 
\cdot P_{l,M,o,w} U_{0\rightarrow 2}\ket{0}\Big\|^2.
\end{align*}
\end{proof}

\begin{lemma}[The role of the energy consistency test in the energy test]
\label{lem:ec_in_e}
\begin{align*}
&~ \E_{l\in_R [m]^N} 
 \Big\| \sum_{M,o,w} 
1_{C(D(M))=0} 
\cdot P_{l,M,o,w} U_{0\rightarrow 2}\ket{0}\Big\|^2 \\
\leq  &~
\Theta(\sqrt{\eps+\eps'+\delta})+ 1.01\cdot \E_{l\in_R [m]^N} 
 \sum_{s} 1_{C(s)=0}\Big\| V^\dagger \prod_{i\in [N], j\in [5]} \frac{I+(-1)^{s_{i,j}}O_{l_i,j}}{2} V  U_{0\rightarrow 2}\ket{0}\Big\|^2\;.
\end{align*}
\end{lemma}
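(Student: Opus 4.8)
\textbf{Proof plan for Lemma~\ref{lem:ec_in_e}.}

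The plan is to compare the projective measurement $\{P_{l,M,o,w}\}$ with the ``ideal'' measurement $V^\dagger \bigl(\prod_{i,j} \frac{I+(-1)^{s_{i,j}}O_{l_i,j}}{2}\bigr) V$ that Lemma~\ref{lem:8.13} tells us is close in the $\rho$-norm, and to convert that closeness into an inequality between the two weighted sums appearing in the statement. First I would set, for $l \in [m]^N$ and $s \in \F_2^{5N}$,
\[
\Pi_{l,s} = \sum_{M,o,w} 1_{D(M)=s}\, P_{l,M,o,w}\;, \qquad Q_{l,s} = V^\dagger \prod_{i\in[N],j\in[5]} \frac{I+(-1)^{s_{i,j}}O_{l_i,j}}{2} V\;,
\]
so that Lemma~\ref{lem:8.13} reads $\E_l \sum_s \|\Pi_{l,s} - Q_{l,s}\|_\rho^2 \le \Theta(\sqrt{\eps+\eps'+\delta})$. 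Since $C$ is a fixed Boolean function of the decoded string, $\sum_{M,o,w} 1_{C(D(M))=0} P_{l,M,o,w} = \sum_{s : C(s)=0} \Pi_{l,s}$. Because the $P_{l,M,o,w}$ are orthogonal projections summing to $I$, the $\Pi_{l,s}$ are orthogonal projections summing to $I$ (they partition the measurement according to the value of $D(M)$), hence $\sum_{s:C(s)=0}\Pi_{l,s}$ is itself an orthogonal projection; similarly $\sum_{s:C(s)=0} Q_{l,s}$ is an orthogonal projection since $V$ is an isometry and the $Q_{l,s}$ are orthogonal projections summing to $I$. So the left-hand side is $\|(\sum_{s:C(s)=0}\Pi_{l,s})\,U_{0\to2}\ket{0}\|^2$ and the right-hand side (without the error term) has a factor $1.01$ times $\|(\sum_{s:C(s)=0}Q_{l,s})\,U_{0\to2}\ket{0}\|^2$ (after noting that orthogonality of the $Q_{l,s}$ turns $\sum_s 1_{C(s)=0}\|Q_{l,s}\xi\|^2$ into $\|(\sum_{s:C(s)=0}Q_{l,s})\xi\|^2$).

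The key step is then a perturbation bound: for two projections $P, Q$ and a unit-norm-ish vector $\xi$ (here $\xi = U_{0\to2}\ket{0}$, $\|\xi\|=1$), $\|P\xi\|^2 \le (1+\eta)\|Q\xi\|^2 + (1+1/\eta)\|(P-Q)\xi\|^2$ for any $\eta>0$, which follows from $\|P\xi\| \le \|Q\xi\| + \|(P-Q)\xi\|$ and the AM–GM / Cauchy–Schwarz inequality $(x+y)^2 \le (1+\eta)x^2 + (1+1/\eta)y^2$. Applying this with $P = \sum_{s:C(s)=0}\Pi_{l,s}$, $Q = \sum_{s:C(s)=0}Q_{l,s}$, $\eta$ a small constant chosen so that $1+\eta \le 1.01$ (the $\rho$-norm here is $\|\cdot\|_\rho$ with $\rho = U_{0\to2}\ketbra{00}{00}_{\mathsf{PRD}}U_{0\to2}^\dagger$, so $\|A\|_\rho = \|A\,U_{0\to2}\ket{0}\|$ after tracing out the auxiliary register — I should make sure the register bookkeeping is consistent with how $\rho$ was defined in Lemma~\ref{lem:mqt:anti}), and then taking expectation over $l$: the cross term is controlled by $\E_l \|(P-Q)\xi\|_\rho^2 \le \E_l \sum_{s:C(s)=0}\|\Pi_{l,s}-Q_{l,s}\|_\rho^2 \le \E_l\sum_s\|\Pi_{l,s}-Q_{l,s}\|_\rho^2 \le \Theta(\sqrt{\eps+\eps'+\delta})$ by Lemma~\ref{lem:8.13}, where the first inequality uses the triangle inequality for the $\rho$-seminorm (Lemma~\ref{lem:tri_sigma_norm}) together with orthogonality to pass from the norm of the sum to the sum of the norms-squared (being slightly careful: $\|\sum_k A_k\|_\rho^2 \le (\sum_k \|A_k\|_\rho)^2$, but when the ranges are orthogonal one actually gets $\|\sum_k A_k \xi\|^2 = \sum_k \|A_k\xi\|^2$ only when the $A_k$ have orthogonal ranges, which $\Pi_{l,s}-Q_{l,s}$ need not; the safe route is to bound $\|\sum_{s:C(s)=0}(\Pi_{l,s}-Q_{l,s})\|_\rho$ by a Cauchy–Schwarz argument over the $\le 2^{5N}$ values of $s$, but that introduces an exponential factor — so instead I would argue directly that $\|(P-Q)\xi\| = \|\sum_{s:C(s)=0}\Pi_{l,s}\xi - \sum_{s:C(s)=0}Q_{l,s}\xi\|$ and use that both partial sums are projections applied to the \emph{same} orthogonal decompositions, giving $\|(P-Q)\xi\|^2 \le \sum_{s:C(s)=0}\|\Pi_{l,s}\xi - Q_{l,s}\xi\|^2$ by expanding in the orthogonal blocks of $\{\Pi_{l,s}\}$ and separately $\{Q_{l,s}\}$ — this requires a short direct computation but no exponential loss).

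The main obstacle I anticipate is exactly this last point: passing from $\E_l\sum_s\|\Pi_{l,s}-Q_{l,s}\|_\rho^2 \le \text{small}$ to a bound on $\E_l\|\sum_{s:C(s)=0}(\Pi_{l,s}-Q_{l,s})\|_\rho^2$ without picking up a factor of $2^{5N}$. The resolution is to avoid the triangle inequality on the sum and instead use that $\sum_s\Pi_{l,s} = I = V^\dagger V$ and $\sum_s Q_{l,s} = V^\dagger V$: writing $\xi = \sum_s \Pi_{l,s}\xi$ and $\xi = \sum_s Q_{l,s}\xi$ simultaneously, the difference $\sum_{s:C(s)=0}(\Pi_{l,s}-Q_{l,s})\xi$ can be rewritten as $(\sum_{s:C(s)=0}\Pi_{l,s})(\xi - V^\dagger V\xi) + \dots$; more cleanly, since $\sum_{s:C(s)=0}\Pi_{l,s}$ and $\sum_{s:C(s)=0}Q_{l,s}$ are the spectral projections of the $\pm1$-observables $\sum_s(-1)^{1_{C(s)=1}}\Pi_{l,s}$ and its $Q$-analogue onto eigenvalue $+1$, and those two observables are $\rho$-close (their difference has $\rho$-norm-squared at most $\sum_s\|\Pi_{l,s}-Q_{l,s}\|_\rho^2$ by orthogonality of the blocks — this \emph{does} hold because $\{(-1)^{1_{C(s)=1}}\Pi_{l,s}\}_s$ have orthogonal ranges so their sum's action squared is the sum of the squares), a standard fact that spectral projections are Lipschitz in operator difference with respect to the state-dependent seminorm (exactly as in the ``rounding'' lemmas in the nonlocal games literature) gives the needed bound. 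Once this is in hand the rest is the elementary perturbation inequality above, and the constant $1.01$ absorbs the small $\eta$. I would carry out the steps in the order: (i) rewrite both sides in terms of the partial-sum projections; (ii) establish the spectral-projection perturbation bound from Lemma~\ref{lem:8.13}; (iii) apply the elementary $(x+y)^2\le(1+\eta)x^2+(1+1/\eta)y^2$ inequality pointwise in $l$; (iv) take expectation over $l$ and collect the error terms into $\Theta(\sqrt{\eps+\eps'+\delta})$.
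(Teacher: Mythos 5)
There is a genuine gap, and it sits exactly at the point you flagged as the ``main obstacle.'' Your plan groups the operators first, comparing $P=\sum_{s:C(s)=0}\Pi_{l,s}$ with $Q=\sum_{s:C(s)=0}Q_{l,s}$, and this creates two problems. First, the rewriting of the right-hand side as $\|Q\,U_{0\rightarrow2}\ket{0}\|^2$ is unjustified: $Q_{l,s}=V^\dagger\bigl(\prod_{i,j}\tfrac{I+(-1)^{s_{i,j}}O_{l_i,j}}{2}\bigr)V$ is in general \emph{not} a projection (since $VV^\dagger\neq I$), the $Q_{l,s}$ do not have orthogonal ranges, and so $\sum_s 1_{C(s)=0}\|Q_{l,s}\xi\|^2\neq\|(\sum_{s:C(s)=0}Q_{l,s})\xi\|^2$ in general; the lemma's right-hand side is the sum of squared norms, not the squared norm of the sum. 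Second, your proposed resolutions of the grouping problem rely on inequalities that are false. The claimed bound $\|\sum_{s\in S}(\Pi_{l,s}-Q_{l,s})\xi\|^2\le\sum_{s\in S}\|(\Pi_{l,s}-Q_{l,s})\xi\|^2$ fails already for two honest projective measurements: take $\Pi_1=\ketbra{0}{0},\Pi_2=\ketbra{1}{1},\Pi_3=0$ and $Q_1=\ketbra{+}{+},Q_2=0,Q_3=\ketbra{-}{-}$ with $S=\{1,2\}$ and $\xi$ close to $\ket{-}$; then the left side is close to $1$ while the right side is close to $0.9$. The fallback via ``$\pm1$ observables and Lipschitz spectral projections'' has the same flaw: the grouped $Q$-operator is a POVM element, not a spectral projection of a $\pm1$ observable, and the intermediate claim $\|\sum_s(-1)^{C(s)}(\Pi_{l,s}-Q_{l,s})\|_\rho^2\le\sum_s\|\Pi_{l,s}-Q_{l,s}\|_\rho^2$ ``by orthogonality of the blocks'' also fails (the differences $\Pi_{l,s}-Q_{l,s}$ do not have orthogonal ranges; the same two-dimensional example violates it).

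The paper's proof avoids the grouping entirely, and that is the missing idea. Only the prover's operators need to be orthogonal, and they are: since $\{P_{l,M,o,w}\}$ is a projective measurement, the vectors $\Pi_{l,s}\,U_{0\rightarrow2}\ket{0}$ for distinct $s$ are mutually orthogonal, so the left-hand side equals \emph{exactly}
\begin{align*}
\E_{l}\sum_{s}1_{C(s)=0}\,\bigl\|\Pi_{l,s}\,U_{0\rightarrow2}\ket{0}\bigr\|^2\;.
\end{align*}
One then applies the elementary inequality $\|a+b\|^2\le\tfrac{101}{100}\|a\|^2+101\,\|b\|^2$ \emph{separately for each $s$}, with $a=Q_{l,s}\,U_{0\rightarrow2}\ket{0}$ and $b=(\Pi_{l,s}-Q_{l,s})\,U_{0\rightarrow2}\ket{0}$, sums over $s$ (bounding $1_{C(s)=0}\le1$ in the error term), and takes the expectation over $l$; the accumulated error is $101\cdot\E_l\sum_s\|\Pi_{l,s}-Q_{l,s}\|_\rho^2\le\Theta(\sqrt{\eps+\eps'+\delta})$ directly by Lemma~\ref{lem:8.13}, and the main term is exactly the right-hand side as stated. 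So your overall toolkit (Lemma~\ref{lem:8.13} plus the weighted triangle inequality with constant $1.01$) is the right one, but it must be applied term by term in $s$; the grouped comparison you build the argument around cannot be repaired without an exponential loss or additional structure you do not have.
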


\begin{proof}
For any $n>0$,
\begin{align*}
\|a+b\|^2 \leq \frac{1+n^2}{n^2} \|a\|^2 + (1+n^2) \|b\|^2 ,
\end{align*}
because
\begin{align*}
0 \leq \frac{1}{n^2} \|a\|^2 + n^2 \|b\|^2 - a^\dagger b - b^\dagger a 
 = \Big\| \frac{1}{n} a - n b \Big\|^2.
\end{align*}
Therefore
\begin{align*}
 &~\E_{l\in_R [m]^N} 
 \Big\| \sum_{M,o,w} 
1_{C(D(M))=0} 
\cdot P_{l,M,o,w} U_{0\rightarrow 2}\ket{0}\Big\|^2 \\
=&~ \E_{l\in_R [m]^N} 
 \sum_{s} 1_{C(s)=0} \Big\| \sum_{M,o,w} 
1_{D(M)=s} 
\cdot P_{l,M,o,w} U_{0\rightarrow 2}\ket{0}\Big\|^2   \\
\leq &~ 101 \E_{l\in_R [m]^N} 
 \sum_{s} 1_{C(s)=0} \Big\| (\sum_{M,o,w} 
1_{D(M)=s} \cdot P_{l,M,o,w} -V^\dagger \prod_{i\in [N], j\in [5]} \frac{I+(-1)^{s_{i,j}}O_{l_i,j}}{2} V)U_{0\rightarrow 2}\ket{0}\Big\|^2 \\
&~\quad+  \frac{101}{100} \E_{l\in_R [m]^N} 
 \sum_{s} 1_{C(s)=0}\Big\| V^\dagger \prod_{i\in [N], j\in [5]} \frac{I+(-1)^{s_{i,j}}O_{l_i,j}}{2} V  U_{0\rightarrow 2}\ket{0}\Big\|^2 \\
\leq &~
\Theta(\sqrt{\eps+\eps'+\delta})+ 1.01\cdot \E_{l\in_R [m]^N} 
 \sum_{s} 1_{C(s)=0}\Big\| V^\dagger \prod_{i\in [N], j\in [5]} \frac{I+(-1)^{s_{i,j}}O_{l_i,j}}{2} V  U_{0\rightarrow 2}\ket{0}\Big\|^2\;,
\end{align*}
Where the last inequality is by Lemma~\ref{lem:8.13}.
\end{proof}

\begin{lemma}[Estimating the energy of Clifford Hamiltonians via Pauli measurements]
\label{lem:measP2e}
Let $H$, $H_i$, $O_{i,j}$ and $C$ be as in Definition~\ref{def:mqt:hp_last_energy}.
Then,
\begin{align*}
&~\E_{l\in_R [m]^N} 
 \sum_{s} 1_{C(s)=0}\Big\| \prod_{i\in [N], j\in [5]}\underbrace{I\otimes I \cdots \otimes I}_{i-1} \otimes\frac{I+(-1)^{s_{i,j}}O_{l_i,j}}{2}\otimes \underbrace{I\otimes I \cdots \otimes I}_{N-i} \ket{\phi}\Big\|^2\\
=&~ \bra{\phi} \frac{1+(1-2H)^{\otimes N} }{2}\ket{\phi} \;.
\end{align*}
\end{lemma}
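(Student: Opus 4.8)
# Proof Plan for Lemma~\ref{lem:measP2e}

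\textbf{Overview of the approach.} The plan is to recognize that the left-hand side is exactly the probability that, for a uniformly random choice of Hamiltonian-term indices $l \in [m]^N$ and a resulting string of $5N$ Pauli-measurement outcomes $s$ on the state $\ket{\phi} = \ket{\phi_0}^{\otimes N}$ (where $\ket{\phi_0}$ is the original witness), the circuit $C$ evaluates to $0$. By the definition of $C(x) = \bigoplus_{i\in[N]} (1 \oplus \bigvee_{j\in[5]} x_{i,j})$, this event says that for an \emph{odd} number of copies $i \in [N]$, all five outcomes $s_{i,1},\ldots,s_{i,5}$ are zero — i.e., the projection $C_{l_i}^\dagger \ketbra{0^5}{0^5} C_{l_i}$ "accepts" on copy $i$. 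So the first step is to rewrite the norm-squared on the left as the expectation, over $l$, of a product of acceptance probabilities of the individual local Clifford projections, aggregated by the XOR.

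\textbf{Key steps.} First I would use the fact that the operators $\{O_{l_i,j}\}_{j\in[5]}$ for fixed $i$ all commute (they are $C_{l_i}^\dagger (I^{\otimes(j-1)} \otimes Z \otimes I^{\otimes(5-j)}) C_{l_i}$, conjugates of commuting Paulis) and that $\prod_{j\in[5]} \frac{I+(-1)^{s_{i,j}}O_{l_i,j}}{2}$ is the projector onto the joint $(-1)^{s_{i,\cdot}}$-eigenspace; in particular $\prod_{j\in[5]} \frac{I+O_{l_i,j}}{2} = C_{l_i}^\dagger \ketbra{0^5}{0^5} C_{l_i} = H_{l_i}$ restricted appropriately, so for each copy $i$ the measurement outcome $s_{i,\cdot}$ is all-zero with probability $\bra{\phi_0} H_{l_i}\ket{\phi_0}$ and otherwise (summing over all nonzero patterns) with probability $1 - \bra{\phi_0} H_{l_i}\ket{\phi_0}$. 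This is precisely Lemma~\ref{lem:same_dist:tel_qiop}-style reasoning (measuring the five commuting Pauli projections and taking the AND reproduces the Clifford projection). Second, define for each $i$ the random bit $r_i := 1 \oplus \bigvee_j s_{i,j}$, so $\Pr[r_i = 1 \mid l] = \bra{\phi_0} H_{l_i} \ket{\phi_0}$, equivalently $r_i$ has the distribution of measuring $1-2H_{l_i}$ on $\ket{\phi_0}$ read off as a $\pm$-outcome, matching the setup of Lemma~\ref{lem:xor_r:tel_qiop}. Third, the left-hand side equals $\Pr[\bigoplus_{i\in[N]} r_i = 0]$ averaged over $l$. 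Now apply Lemma~\ref{lem:xor_r:tel_qiop} with $O_i = 1 - 2H_i$ and the observation (Lemma~\ref{lem:otimes_obs_eq_sum_otimes_proj}) that the parity-even sum of tensor products of two-outcome projections equals $\frac{I + \bigotimes_i O_{l_i}}{2}$, to get
\begin{align*}
\E_{l}\Pr[\textstyle\bigoplus_i r_i = 0] = \E_{l} \bra{\phi_0}^{\otimes N} \frac{I + \bigotimes_{i\in[N]}(1-2H_{l_i})}{2}\ket{\phi_0}^{\otimes N}.
\end{align*}
Finally, average over $l$: since $\E_{l_i}[1 - 2H_{l_i}] = 1 - \frac{2}{m}\sum_i H_i = 1-2H$, and the copies are independent, $\E_{l}\bigotimes_{i}(1-2H_{l_i}) = (1-2H)^{\otimes N}$, giving $\bra{\phi}\frac{I + (1-2H)^{\otimes N}}{2}\ket{\phi}$ as required.

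\textbf{Main obstacle.} The only genuinely delicate point is carefully justifying the reduction from "five sequential commuting Pauli-projection measurements, aggregated by AND on each copy" to "the single Clifford projection $H_{l_i}$", i.e. that the induced distribution on the bit $r_i$ is exactly Bernoulli with parameter $\bra{\phi_0}H_{l_i}\ket{\phi_0}$, independently across copies given $l$. This is where commutativity of the $O_{l_i,j}$ (so the order of the five measurements is irrelevant and their joint statistics are governed by the common eigenbasis $C_{l_i}^\dagger \ket{s}$) must be invoked explicitly, together with the identity $\sum_{s_{i,\cdot}\neq 0^5}\prod_j \frac{I+(-1)^{s_{i,j}}O_{l_i,j}}{2} = I - H_{l_i}$. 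I expect this to follow from the same manipulations already used in Lemma~\ref{lem:same_dist:tel_qiop} and Lemma~\ref{lem:otimes_obs_eq_sum_otimes_proj}, so the overall proof should be short, essentially a bookkeeping combination of those two lemmas plus linearity of expectation over $l$. A minor care point is that $C$ as defined here acts on $5N$ bits indexed $(i,j)$, not the $2Nn$-bit pad of Section~\ref{sec:qiopepr}, so the parity structure must be read off from this specific $C$; but that is immediate from its formula.
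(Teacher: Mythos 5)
There is a genuine gap: your argument proves the identity only for a product state $\ket{\phi}=\ket{\phi_0}^{\otimes N}$, but the lemma is stated for an arbitrary state on the $N$-copy register, and this generality is actually needed — in the soundness analysis the lemma is applied (via Lemma~\ref{lem:ec_in_e}) to the state $VU_{0\rightarrow 2}\ket{0}$ coming from an adversarial prover, which has no reason to be a tensor power of a witness. Your key steps two and three ("$\Pr[r_i=1\mid l]=\bra{\phi_0}H_{l_i}\ket{\phi_0}$", "independently across copies given $l$") and your appeal to Lemma~\ref{lem:xor_r:tel_qiop} all rely on this product structure; Lemma~\ref{lem:xor_r:tel_qiop} is itself stated and proved only for $\ket{\phi}^{\otimes N}$, so invoking it does not yield the general statement. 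For an entangled $\ket{\phi}$ the per-copy outcomes are correlated and there is no product formula for $\Pr[\bigoplus_i r_i=0]$.

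The repair is to drop the probabilistic per-copy framing and argue purely at the operator level, which is essentially what the paper does and which your own ingredients already permit. You correctly identify that $H_{l_i}=\prod_{j\in[5]}\frac{I+O_{l_i,j}}{2}$ and $I-H_{l_i}=\sum_{s_{i,\cdot}\neq 0^5}\prod_{j}\frac{I+(-1)^{s_{i,j}}O_{l_i,j}}{2}$; combining this with the state-independent parity identity of Lemma~\ref{lem:otimes_obs_eq_sum_otimes_proj} applied to $O_i=1-2H_{l_i}$ gives the operator equality
\begin{align*}
\frac{I+(1-2H)^{\otimes N}}{2}
=\E_{l\in_R[m]^N}\sum_{s}1_{C(s)=0}\prod_{i\in[N],\,j\in[5]}
\underbrace{I\otimes\cdots\otimes I}_{i-1}\otimes\frac{I+(-1)^{s_{i,j}}O_{l_i,j}}{2}\otimes\underbrace{I\otimes\cdots\otimes I}_{N-i}\;,
\end{align*}
after also using linearity to write $(1-2H)^{\otimes N}=\E_l\bigotimes_i(1-2H_{l_i})$ (that step of yours is fine). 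Since for each fixed $(l,s)$ the commuting product is an orthogonal projection, $\bigl\|\Pi_{l,s}\ket{\phi}\bigr\|^2=\bra{\phi}\Pi_{l,s}\ket{\phi}$ for \emph{any} state, and sandwiching the operator identity with $\bra{\phi}\cdot\ket{\phi}$ gives the lemma in full generality. So the gap is a misreading of the quantified state rather than a flaw in the underlying calculation, but as written the proposal establishes only the special case that does not suffice for the theorem.
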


\begin{proof}
Note that 
\begin{align*}
H_i = \prod_{j\in[5]}\frac{I+(-1)^0 O_{i,j}}{2}= \sum_{1 \oplus \vee_{j\in [5]} x_{j} =1}\prod_{j\in[5]}\frac{I+(-1)^{x_j} O_{i,j}}{2}.
\end{align*}
and 
\begin{align*}
I-H_i = I-\prod_{j\in[5]}\frac{I+O_{i,j}}{2}= \sum_{1 \oplus \vee_{j\in [5]} x_{j} =0 }\prod_{j\in[5]}\frac{I+(-1)^{x_j} O_{i,j}}{2}.
\end{align*}
Thus,
\begin{align*}
\frac{1}{2}+\frac{1}{2}(1-2H)^{\otimes N}
=&~\frac{1}{2}+\frac{1}{2}\cdot \frac{1}{m^N} \sum_{l\in [m]^N} \bigotimes_{i\in [N]} (1-2H_{l_i})\\
=&~\frac{1}{m^N} \sum_{l\in [m]^N}(\frac{1}{2}+\frac{1}{2} \bigotimes_{i\in [N]} (1-2H_{l_i}))\\
=&~  \E_{l\in_R [m]^N}\sum_{ \oplus_{k\in[N]} r_k = 0}  \prod_{i=1}^N (\frac{1}{2}+(-1)^{r_i}(\frac{1}{2}-H_{l_i})) \\
=&~  \E_{l\in_R [m]^N}\sum_{ \oplus_{k\in[N]} r_k = 0}  \prod_{i=1}^N  \sum_{1 \oplus \vee_{j\in [5]} s_{i,j} =r_i }\prod_{j\in[5]}\frac{I+(-1)^{s_{i,j}} O_{l_i,j}}{2}\\
=&~ \E_{l\in_R [m]^N} 
 \sum_{s} 1_{C(s)=0}\prod_{i\in [N], j\in [5]}\frac{I+(-1)^{s_{i,j}}O_{l_i,j}}{2}\;.
\end{align*}
\end{proof}

\begin{theorem}\label{thm:sqiop}
$\QMA\subseteq \SQIOP(O(1), O(\exp\circ \poly), O(1))$   .
\end{theorem}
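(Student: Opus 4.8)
## Proof Plan for Theorem~\ref{thm:sqiop}

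The plan is to instantiate the protocol whose verifier is given in Definition~\ref{def:mqt:verifier_energy} and whose honest prover is given in Definition~\ref{def:mqt:hp_energy}, run on the amplified Clifford-Hamiltonian problem. By Theorem~\ref{thm:clifford_ham} every $\QMA$ language reduces to $\mathcal{LCH}(5,2^{-p(n)},1/q(n))$, and by Lemma~\ref{lem:c_ham_amp}, taking $N=q^2(n)$ odd, a YES instance satisfies $\lambda_{\max}((1-2H)^{\otimes N})\geq 1-2^{-s(n)}$ while a NO instance satisfies $\lambda_{\max}((1-2H)^{\otimes N})\leq 2^{-s(n)}$; the verifier of Definition~\ref{def:mqt:verifier_energy} is built precisely to sample (via the indices $l_i$ and the mask $t$) a term of this amplified Hamiltonian and to estimate its energy on the state the prover commits to. First I would record the resource bounds: the number of messages is a fixed constant; the verifier's query complexity is $O(1)$, since every operation it applies to the prover's message register $\mathsf R$ — the self-correction unitaries $S_p$, the local test $L$, the controlled observables $C(\tilde O_m,\mathsf M)$, and the PCPP verifier of Theorem~\ref{thm:pcpp} — touches only $O(1)$ positions of $\mathsf R$, while the classical side messages $M,o,o',w,w'$ do not count toward query complexity; and the communication complexity is $\exp(\poly(n))$ because the Hadamard encodings of the $\poly(n)$-bit purified measurement outcomes have length $2^{\poly(n)}$. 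As noted after Theorem~\ref{thm:restated_sqiop}, the honest prover of Definition~\ref{def:mqt:hp_energy} has a logarithmic-depth (exponential-size) implementation, so it is a legitimate honest strategy.

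For completeness it suffices to check that the honest prover passes each branch of Definition~\ref{def:mqt:verifier_energy}. The ``QubitsTests'' branch is handled directly by the completeness half of Theorem~\ref{thm:mqt}. On the ``EnergyTests'' branch the honest prover measures each of the $N$ copies of the optimal witness $\ket\phi$ with its five $5$-local Clifford projections, reports the outcomes and a valid PCPP witness; by perfect completeness of the PCPP (Theorem~\ref{thm:pcpp}) the proof is accepted, and by Lemma~\ref{lem:measP2e} (together with the algebraic identity of Lemma~\ref{lem:otimes_obs_eq_sum_otimes_proj}) the probability that the verifier's output bit is $0$ equals $\bra\phi\tfrac{1+(1-2H)^{\otimes N}}{2}\ket\phi=\tfrac12+\tfrac12(\bra\phi(1-2H)\ket\phi)^N\geq 1-\Theta(2^{-s(n)})$. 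On the energy-consistency branch the inner product $r\cdot t$ reported by the honest prover is exactly the outcome of the composed Pauli observable $O=(-1)^dX(a)Z(b)Y(c)$ obtained from the committed witness, and the verifier obtains the same value independently from the first two rounds via Definition~\ref{def:mqt:verifier:measure} (the purified $Z$- and $X$-registers implement the controlled-$Z(b+c)$ and controlled-$X(a+c)$ legs, and the $\i^{|c|}$ phase accounts for the $Y$ part), so this branch accepts with probability $1$. Hence every YES instance is accepted with the constant probability $1-\Theta(2^{-s(n)})$.

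For soundness, suppose a NO instance were accepted with probability $>1-\eta$ for a small absolute constant $\eta$. Averaging over the constantly many equiprobable branches, each of the consistency checks, validity checks, anti-commuting test, energy test, and energy-consistency test passes with probability $\geq 1-\Theta(\eta)$; Lemma~\ref{lem:mqt:prob} then applies with $\eps'=\Theta(\eta)$, and Theorem~\ref{thm:mqt} yields the approximate anti-/commutation relations for the prover's clean-form observables $O'_1(\cdot),O'_2(\cdot)$ under $\rho=U_{0\rightarrow 2}\ket{0}\bra{0}U_{0\rightarrow 2}^\dagger$, whence by Theorem~\ref{thm:ourqubits} an isometry $V$ and representation $g(x)=\rho(x)\otimes I$ approximating them. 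Feeding this into Lemma~\ref{lem:mqt:ec} shows the prover's mask-decoding measurement $P'_{l,t}$ is $\Theta(\eps+\delta+\eta)$-close on $U_{0\rightarrow 2}\ket{0}$ to $V^\dagger((-1)^dX(a)Z(b)Y(c)\otimes I)V$; Lemma~\ref{lem:8.13} (through the random-mask equality argument of Lemma~\ref{lem:q_sound_had}) upgrades this to closeness of the full decoded measurement $\sum_{M,o,w}1_{D(M)=s}P_{l,M,o,w}$ to $V^\dagger\prod_{i,j}\tfrac{I+(-1)^{s_{i,j}}O_{l_i,j}}{2}V$; and Lemma~\ref{lem:e_imply} together with Lemma~\ref{lem:ec_in_e} bounds the energy-test acceptance probability by $\Theta(\sqrt{\eps+\eps'+\delta})+1.01\cdot\E_l\sum_s 1_{C(s)=0}\,\|V^\dagger\prod_{i,j}\tfrac{I+(-1)^{s_{i,j}}O_{l_i,j}}{2}V\,U_{0\rightarrow 2}\ket{0}\|^2$. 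Since $g(x)=\rho(x)\otimes I$, the second tensor factor and the isometry factor out, and by Lemma~\ref{lem:measP2e} the remaining expectation equals $\tr\!\big(\tfrac{1+(1-2H)^{\otimes N}}{2}\,\phi\big)$ for the reduced density matrix $\phi$, which for a NO instance is at most $\tfrac12+\tfrac12 2^{-s(n)}$. Hence the energy test accepts with probability at most $\tfrac12+\Theta(\sqrt{\eps+\delta})+\Theta(\sqrt{\eta})$, which is $<1-\eta$ once $\eps,\delta$ are fixed as sufficiently small absolute constants and $\eta$ is small enough — a contradiction. This gives a constant soundness bound strictly below the constant completeness bound, establishing $\QMA\subseteq\SQIOP(O(1),O(\exp\circ\poly),O(1))$.

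The hard part is the soundness chain: one must track that all approximation errors accumulated through Lemmas~\ref{lem:mqt:ec}, \ref{lem:8.13}, \ref{lem:ec_in_e} — in particular the $\sqrt{\cdot}$ losses from the random-mask equality test — remain below the constant $\tfrac12$-gap between the YES energy ($\approx 1$) and the NO energy ($\approx\tfrac12$). This is also exactly why the construction is based on the Clifford-Hamiltonian problem rather than the XZ-Hamiltonian: shifting the $5$-local projections to observables $1-2P$ keeps the $\ell_1$-norm equal to $1$ through the tensor-power amplification of Lemma~\ref{lem:c_ham_amp}, so the constant energy separation survives intact, whereas an analogous decomposition into XZ terms would blow the $\ell_1$-norm up exponentially and destroy the effective gap.
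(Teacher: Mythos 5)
Your proposal is correct and follows essentially the same route as the paper: the same protocol (Definitions~\ref{def:mqt:hp_energy} and~\ref{def:mqt:verifier_energy}), the same completeness argument via Theorem~\ref{thm:mqt}, PCPP completeness and Lemma~\ref{lem:measP2e}, and the same soundness chain through Lemmas~\ref{lem:mqt:ec}, \ref{lem:8.13}, \ref{lem:e_imply}, \ref{lem:ec_in_e} and Theorem~\ref{thm:ourqubits}. The only cosmetic difference is that you phrase soundness as a contradiction, whereas the paper does an explicit two-case analysis (all auxiliary tests pass with high probability versus one failing noticeably, with branch probability $p=1/3$); the content and the constant-gap bookkeeping are the same.
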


\begin{proof}
Let $ \texttt V$ be the verifier in Definition \ref{def:mqt:verifier_energy}. We will show that $\texttt V $ is $O(1)$-message non-adaptive strong quantum interactive oracle proof system for $\mathcal{LCH}(5,2^{-p(n)},1/q(n))$ and thus $$ \mathcal{LCH}(5,2^{-p(n)},1/q(n))\in \SQIOP(O(1), O(\exp\circ\poly), O(1)).$$ 
By Theorem \ref{thm:clifford_ham}, $\mathcal{LCH}(5,2^{-p(n)},1/q(n))$ is $ \QMA$-complete. 
Therefore, $$\QMA\subseteq \SQIOP(O(1), O(\exp\circ\poly), O(1)).$$

By  Definition \ref{def:mqt:hp_energy}, Definition \ref{def:mqt:verifier_energy}, and  Definition \ref{def:qiop}, $\texttt V $ is consistent, has polynomial time complexity, is efficient, has proof length $ O(\exp\circ\poly)$, has query complexity $ O(1) $. Moreover, the completeness and soundness results are as follows. Let $$p= \Pr[\text{The verifier opts to execute the ``EnergyTests" subroutine}].$$

{\bf Completeness:} 
By Theorem \ref{thm:mqt}, ``Qubits tests" passes with probability $1$. By Definition \ref{def:mqt:hp_energy} and Definition \ref{def:mqt:verifier_energy}, the energy consistency tests pass with probability $1$ and the energy tests pass with probability 
\begin{align*}
\bra{\phi} \frac{1+(1-2H)^{\otimes N} }{2}\ket{\phi} .
\end{align*}
As a result,
\begin{align*}
\Pr[\text{\texttt{V} accepts}] = &~ 1-p
 + p\cdot \bra{\phi} \frac{1+(1-2H)^{\otimes N} }{2}\ket{\phi}\\
 \geq&~ 1-p\cdot \Theta(2^{-s(n)}).
\end{align*}

{\bf Soundness:} 
Assume that 
\begin{align*}
\Pr[\text{energy consistency test passes}] )\geq &~ 1-\Theta(\eps')\\
\Pr[\text{ consistency check passes }])\geq &~ 1-\Theta(\eps')\\
\Pr[\text{ validity check passes }]\geq &~ 1-\Theta(\eps')\\
\Pr[\text{ anti-commuting test passes }]\geq &~ 1-\Theta(\eps')\;.
\end{align*}

Then by combining Lemma \ref{lem:e_imply}, Lemma \ref{lem:ec_in_e},
and Lemma \ref{lem:measP2e},
\begin{align*}
\Pr[\text{energy test passes}] \leq \Theta(\sqrt{\eps+\eps'+\delta})+ 1.01\cdot  \bra{\phi} \frac{1+(1-2H)^{\otimes N} }{2}\ket{\phi} .
\end{align*}

Moreover, by Definition \ref{def:mqt:verifier_energy}, the acceptance probability of the verifier have two part, the acceptance probability of the energy test and the the acceptance probability of the other tests. According to the inequality above,
\begin{align}
\Pr[\text{\texttt{V} accepts}] \leq &~1-p
 + p\cdot (\Theta(\sqrt{\eps+\eps'+\delta})+ 1.01\cdot  \bra{\phi} \frac{1+(1-2H)^{\otimes N} }{2}\ket{\phi})\notag\\
\leq &~ 1-p
 + p\cdot (\Theta(\sqrt{\eps+\eps'+\delta})+ 0.505 +\Theta(2^{-s(n)})) \notag\\
 \leq &~ 1- 0.4 p ,\label{eq:f1}
\end{align}
where the last step follows by taking  ${\eps,\eps',\delta}$ small enough.

If one of the followings is not true
\begin{align*}
\Pr[\text{energy consistency test passes}] )\geq &~ 1-\Theta(\eps')\\
\Pr[\text{ consistency check passes }])\geq &~ 1-\Theta(\eps')\\
\Pr[\text{ validity check passes }]\geq &~ 1-\Theta(\eps')\\
\Pr[\text{ anti-commuting test passes }]\geq &~ 1-\Theta(\eps')\;,
\end{align*}
then
\begin{align}
\Pr[\text{\texttt{V} accepts}] \leq&~ p+ (1-p)\cdot(1- \Theta(\eps'))\notag\\
= &~ 1- (1-p)\cdot \Theta(\eps')\;. \label{eq:f2}
\end{align}

Note that $\eps$ is the self-correction parameter from Theorem~\ref{thm:3c_code} and $\delta$ is the PCPP proximity parameter from Theorem~\ref{thm:pcpp}. 
As a result, we can take $p=1/3$ and choose $\varepsilon=\delta=\varepsilon'$ sufficiently small. We can set $\varepsilon$ to be any fixed small constant, at the cost of increasing the query complexity of the self-correction procedure by a larger constant factor. By repeating the PCPP verifier a constant number of times, we can obtain a PCPP with an arbitrarily small constant proximity parameter and an arbitrarily small constant soundness error.
By Eq.~\eqref{eq:f1} and Eq.~\eqref{eq:f2}, we get that  
\begin{align*}
\Pr[\text{\texttt{V} accepts}] \leq&~ \max\{1- (1-p)\cdot \Theta(\eps'),1- 0.4 p  \} \leq  1-c\;,
\end{align*}
where $c$ is a small constant. 
\end{proof}

\newpage
\onecolumn
\appendix

\section{Single Qubit Tests}
\label{sec:sqt}

Our single qubit test starts with two rounds corresponding to X and Z measurement. The game is as follows: in each round, the honest prover sends the purified measurement of X for the first round, Z for the second round. The verifier either measures under the $Z$ bases or applies the $Z$ gates on some of the qubits, then the verifier sends all the registers from the prover's message back, the honest prover undoes the purified measurement. 

In the end, there is an additional round. The verifier sends a symbol either ``X" or ``Z". The honest prover returns the result of the measurement. 

The verifier performs the anti-commuting tests and the consistency checks. More specifically, the verifier checks the consistency corresponding to the first Z measurement and the Z measurement in the last round with an X applying in the middle. Moreover, the verifier checks the first two rounds consisting with the last rounds. We will show that the anti-commuting relationship will appear at the last round.

\subsection{Honest prover's strategy}
Now we define the exact way the honest prover acts in each of the first two rounds. The honest prover uses different $ O$ in different rounds. Later, we will describe the details of the implementation of the template.
\begin{definition}[Template of the honest prover's actions in the first two rounds]
\label{def:s:honest_prover_sqt:block}
Let $\mathsf{P}$ be a register of $1$ qubit, $\mathsf{R}$ be a register of $1$ qubit. 
$\mathsf{P, R}$ are initially hold by the prover. 
Let $ O_1,O_2$ be observables on register $\mathsf{P}$. 
\begin{mdframed}
\begin{itemize}
\item \textnormal{The prover's first step:} 
The prover applies the following unitary 
\begin{align*}
P(O_{\mathsf{P}}, \mathsf{R}). 
\end{align*}
The prover sends the verifier the register $\mathsf{R}$.
\item \textnormal{The verifier's first step:} 
The verifier receives the register $\mathsf{R}$ from the prover. The verifier sends the prover the register $\mathsf{R}$. 
\item \textnormal{The prover's second step:} 
The prover receives the register $\mathsf{R}$ from the verifier. The prover applies the following unitary 
\begin{align*}
(P(O_{\mathsf{P}}, \mathsf{R}))^{-1}. 
\end{align*}
\end{itemize}

\end{mdframed}
\end{definition}

Then, we describe the honest prover's actions in the last round:
\begin{definition}[The honest prover's actions in the last round]
\label{def:s:honest_prover_sqt:last}
Let $\mathsf{P}$ be a register of $1$ qubits. 
$\mathsf{P}$ is initially hold by the prover.  
\begin{mdframed}
\begin{itemize}
\item \textnormal{The verifier's first step:} The verifier sends the prover $q'\in [2]$.
\item \textnormal{The prover's first step:} 
The prover measures $\mathsf{P}$ under observable $ Z$ if $q'=1$, under observable $X$ if $q'=2$ and returns the measurement result. 
\end{itemize}
\end{mdframed}
\end{definition}

Now we summarize the honest prover's actions:
\begin{definition}[Single qubit tests, honest prover]\label{def:s:sqt:honest_prover} 
Let $\mathsf{P}$ be a register of $1$ qubits, $\mathsf{R}$ be a register of $1$ qubits. $\mathsf{P, R}$ are initially hold by the prover and initialized by $\ket{0}$. 

The honest prover's actions are defined as follows: 
\begin{mdframed}
\begin{itemize}
\item The honest prover acts as in Definition \ref{def:s:honest_prover_sqt:block} with $ O=Z$. 
\item The honest prover acts as in Definition \ref{def:s:honest_prover_sqt:block} with $ O=X$. 
\item In the end, the honest prover acts as in Definition \ref{def:s:honest_prover_sqt:last}.  
\end{itemize}
\end{mdframed}

\end{definition}

\subsection{Verifier's strategy}
Then we present the verifier's action. For the ease of the presentation, we first abstract the process of applying a specific unitary operation on a single qubit within the prover's message.

In the first two rounds, the verifier can respond in various ways to the prover's messages. Later, we will show how the verifier chooses to respond in the whole protocol of the single qubit tests. 
In the following definition, $M, F$ can be viewed as the input to the template. 
$M$ indicates whether or not to measure. 
$F$ indicates whether or not to flip. 

\begin{definition}[Template of the verifier's action in the first two rounds]\label{def:s:verifier_sqt:block}
Let $\mathsf{R} $ be a register. Let $M,F\in \{0,1\}$. 
\begin{mdframed}
\begin{itemize}
\item \textnormal{The prover's first step:} 
The prover sends the verifier the register $\mathsf{R}$.
\item \textnormal{The verifier's first step:} 
The verifier receives the register $\mathsf{R}$ from the prover. 
If $F=1$, verifier applies $Z_{\mathsf{R}}$. If $M=1$, verifier
measures $Z_{\mathsf{R}}$ and outputs the measurement result. 
The verifier sends the prover the register $\mathsf{R}$. 
\item \textnormal{The prover's second step:} 
The prover receives the register $\mathsf{R}$ from the verifier. 
\end{itemize}
\end{mdframed}
\end{definition}

Then, we describe the verifier's action in the last round, where $q'=1$ corresponding to a symbol ``X", $q'=2$ corresponding to a symbol ``Z":
\begin{definition}[The verifier's action in the last round]
\label{def:s:verifier_sqt:last}
Let $q'\in [2]$. 
\begin{mdframed}
\begin{itemize}
\item \textnormal{The verifier's first step:} 
The verifier sends $q'$ to the prover.
\item \textnormal{The prover's first step:} 
The prover returns a single bit $r'$.  
\end{itemize}
\end{mdframed}
\end{definition}

Now, we introduce the verifier's action in the single qubit tests. The verifier has two types of checks: consistency checks and anti-commuting checks. We first introduce the consistency check. In this check, the verifier checks the consistency between the first two rounds and the last round.
More specifically, the verifier checks that the answer in the final round matches the answer from the previous round.
\begin{definition}[Single qubit tests, consistency check]\label{def:s:sqt:verifier:c} 
Let $\mathsf{R} $ be a register.  
The verifier's action are defined as follows: 
\begin{mdframed}
\begin{itemize}
\item The verifier samples $u\in_R [2]$, sets $M_u=1$, and sets $ M_i =0$ for $i \neq u, i\in [2]$. 
\item If $M_1=1$, the verifier sets $F_i = 0$ for $i\in [2]$. If $M_2=1$, the verifier sets $F_1 \in_R \F_2$, $F_2 = 0$. 
\item The verifier repeats the verifier's action in Definition \ref{def:s:verifier_sqt:block} for $2$ times. In the $i$-th time, the verifier acts as in Definition \ref{def:s:verifier_sqt:block} with $ M = M_i, F = F_i$, if $M_i=1$, sets $r_i$ be the output. 
\item In the end, the verifier acts as in Definition \ref{def:s:verifier_sqt:last} with $q'= u$. Let $r'$ denote the bit that the verifier received from the prover.  
\item The verifier accepts iff $ r_u=r'$.
\end{itemize}
\end{mdframed}
\end{definition}

Then, we introduce the anti-commuting tests, where the verifier checks the consistency corresponding to the first Z measurement and the Z measurement in the last round with an X applies in the middle.
\begin{definition}[Single qubit tests, anti-commuting tests]\label{def:s:sqt:verifier:a} 
Let $\mathsf{R} $ be a register.  
The verifier's action are defined as follows: 
\begin{mdframed}
\begin{itemize}
\item The verifier sets $M_1=1, F_1=0, M_2=0, F_2=1$.  
\item The verifier repeats the verifier's action in Definition \ref{def:s:verifier_sqt:block} for $2$ times. In the $i$-th time, the verifier acts as in Definition \ref{def:s:verifier_sqt:block} with $ M = M_i, F = F_i$, if $M_i=1$, sets $r_i$ be the output. 
\item In the end, the verifier acts as in Definition \ref{def:s:verifier_sqt:last} with $q'= 1$ (corresponding to the symbol ``Z"). Let $r'$ denote the bit that the verifier received from the prover.  
\item The verifier accepts iff $ r_1=r'\oplus 1$.
\end{itemize}
\end{mdframed}
\end{definition}

Now we put everything together and define the action of the verifier in the single qubit tests. 
\begin{definition}[Single qubit tests, verifier]\label{def:s:sqt:verifier} 
Let $\mathsf{R} $ be a register.  

The verifier's action are defined as follows: 
\begin{mdframed}
\begin{enumerate}
    \item With probability $1/2$, the verifier acts as in Definition \ref{def:s:sqt:verifier:c},
    \item With probability $1/2$, the verifier acts as in Definition \ref{def:s:sqt:verifier:a}.
\end{enumerate}
\end{mdframed}
\end{definition}

\subsection{Main theorem}
Before we state our main theorem in this section, we first introduce a general prover's actions. 
\begin{definition}[The canonical prover's actions]
\label{def:s:sqt:canonical_prover}
Let $\mathsf{P}$ be a register, $\mathsf{R}$ be a register of $1$ qubits. 
$\mathsf{P, R}$ are initially hold by the prover and initialized by $\ket{0}$. For $q'\in [2]$, let $O'_{q'}$ be an observable such that $ (O'_{q'})^2=I$. 
For $i\in [2]$, let $ U_i$ be a unitary. 
\begin{mdframed}
\hfill 

The canonical prover's actions in the first two rounds are defined as follows: At round $i$, 
\begin{itemize}
\item \textnormal{The prover's first step:} 
The prover applies a unitary $ {U}_{i,\mathsf{PR}}$. 
The prover sends the verifier the register $\mathsf{R}$.
\item \textnormal{The verifier's first step:} 
The verifier receives the register $\mathsf{R}$ from the prover. The verifier sends the prover the register $\mathsf{R}$. 
\item \textnormal{The prover's second step:} 
The prover receives the register $\mathsf{R}$ from the verifier. 
\end{itemize}

The canonical prover's actions in the last round  are defined as follows: 
\begin{itemize}
\item \textnormal{The verifier's first step:} 
The verifier sends $q'\in [2]$ to the prover.
\item \textnormal{The prover's first step:} 
The prover measures the observable $ O'_{q', \mathsf{PR}}$ and returns the measurement result.  
\end{itemize}
\end{mdframed}
\end{definition}
\begin{remark}
Note that any prover's actions can be purified into a canonical prover's actions described above. 
\end{remark}

For the ease of presentation, we introduce transition actions.
\begin{definition}[Transition actions]\label{def:s:trans}
Let $ {U}_{i}$ for $i\in[2]$ be defined as in Definition \ref{def:s:sqt:canonical_prover}. 

For $i<j$, $i,j\in \{0, 1,2\}$, define the transition action from $i$ to $j$ as:
\begin{align*}
{U}_{i\rightarrow j} ={U}_{j} \cdots {U}_{i+2}{U}_{i+1}.
\end{align*}

\end{definition}

Now, we state our main theorem in this section. 
\begin{theorem}[Single qubit tests]
\label{thm:s:sqt}
Let $ \texttt{V                 
      }$ be the verifier defined in Definition \ref{def:s:sqt:verifier}. Then, 

\begin{enumerate}
\item \textnormal{Completeness:} $ \texttt{V}$ accepts the prover in Definition \ref{def:s:sqt:honest_prover}  with probability $1$.
\item \textnormal{Soundness:} Let $ \texttt{P}$ be the prover defined in Definition \ref{def:s:sqt:canonical_prover}.  Let $ O'_{1},O'_{2}$ be defined as in Definition \ref{def:s:sqt:canonical_prover}. 
If $ \texttt{V}$ accepts $ \texttt{P}$ with probability greater than $ 1- \delta$, then the approximate anti-commuting relationship of $ O'_{1},O'_{2}$ hold under the state 
$$\rho = U_{0\rightarrow 2}\ket{00}_{\mathsf{PR}}\bra{00}_{\mathsf{PR}} U_{0\rightarrow 2}^\dagger. $$
More specifically,
$$\tr( (O'_{1}O'_{2}+O'_{2}O'_{1})^2\rho) <  \Theta(\delta).$$
\end{enumerate}
\end{theorem}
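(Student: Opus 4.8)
\textbf{Proof plan for Theorem~\ref{thm:s:sqt}.}

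The plan is to analyze the soundness part by unfolding what each of the two tests forces, and then chaining the resulting operator approximations. First I would set up the canonical form: by the remark following Definition~\ref{def:s:sqt:canonical_prover}, any prover can be purified to one where in the first two rounds the prover applies unitaries $U_1, U_2$ on $\mathsf{PR}$ (sending back $\mathsf{R}$ in between), and in the last round measures an observable $O'_{q'}$ for $q' \in [2]$. The verifier's honest operations in the first two rounds — applying $Z_{\mathsf{R}}$ (``flip'') or measuring $Z_{\mathsf{R}}$ — should be absorbed into the prover's action, exactly as discussed in the ``Other difficulties in analyzing the robustness'' paragraph: a flip in round $i$ becomes the conjugated operator, and a measurement becomes a projective measurement of $Z_{\mathsf{R}}$ in the Heisenberg picture pushed through $U_{i \to 2}$. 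I would introduce the clean forms $Z^{(1)} := U_{0\to 2}^\dagger \cdot (\text{$Z_{\mathsf R}$ pushed through from round 1}) \cdot U_{0\to 2}$, etc., and treat everything under the state $\rho = U_{0\to 2}\ket{00}\bra{00}U_{0\to 2}^\dagger$ on $\mathsf{PR}$.

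Next I would extract the two operator equations. From the consistency check (Definition~\ref{def:s:sqt:verifier:c}), with $M_u = 1$: the $Z$-consistency case ($u=1$, no flip) forces that measuring $Z_{\mathsf R}$ in round 1 agrees with $O'_1$ at the end, i.e. (after absorbing as above) $O'_1 \approx_\delta$ (the round-1 $Z$ observable conjugated to the final round); likewise the two $X$-consistency cases ($u=2$, with $F_1 \in_R \F_2$) force that measuring $Z_{\mathsf R}$ in round 2 — with or without a flip inserted in round 1 — agrees with $O'_2$. The role of the randomized $F_1$ here is precisely the ``flipping'' discussed in the overview: it forces the relation to hold not just on one fixed initial state but after a $Z$ has been inserted, which is what will let me move operators between rounds. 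From the anti-commuting test (Definition~\ref{def:s:sqt:verifier:a}), $M_1=1, F_2=1$: measuring $Z_{\mathsf R}$ in round 1, inserting a $Z_{\mathsf R}$-flip in round 2, and then reading ``Z'' at the end must give \emph{opposite} bits, which translates to $\langle \cdot\rangle$-level statement that $Z^{(1)}$ and $O'_1$ anticommute through the $X^{(2)}$ inserted in the middle. Each of these is converted from a probability bound (``test passes with prob $\geq 1-\delta$'') to an $L^2$-norm bound $\|A - B\|_\rho^2 = \Theta(\delta)$ by the standard identity $\Pr[\text{agree}] = 1 - \tfrac14\|(\text{diff})\rho^{1/2}\|^2$, as is done in Lemma~\ref{lem:mqt:c} and Lemma~\ref{lem:mqt:a}.

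Then I would chain the approximations (the $\approx_{\delta}$-calculus used in Lemma~\ref{lem:mqt:anti}): starting from the anti-commuting relation involving the round-1 and round-2 clean forms, I use the two flavors of consistency check to successively replace the round-1 $Z$-observable by $O'_1$ and the round-2 $X$-observable by $O'_2$, keeping track of signs, until I arrive at $O'_1 O'_2 \approx_\delta -\, O'_2 O'_1$ under $\rho$, i.e. $\|(O'_1 O'_2 + O'_2 O'_1)\sqrt\rho\|^2 = \Theta(\delta)$. Since $O'_1, O'_2$ are observables ($(O'_{q'})^2 = I$, Hermitian), $O'_1 O'_2 + O'_2 O'_1$ is Hermitian, so this norm bound is exactly $\tr((O'_1 O'_2 + O'_2 O'_1)^2 \rho) = \Theta(\delta) < \Theta(\delta)$, which is the claimed conclusion. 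For completeness, I would just verify that the honest prover of Definition~\ref{def:s:sqt:honest_prover} satisfies each equation exactly — purified $Z$-measurement then purified $X$-measurement, undone and redone, makes every consistency check pass deterministically, and $ZX = -XZ$ on a single qubit makes the anti-commuting test pass deterministically. The main obstacle I anticipate is the bookkeeping in the middle: correctly absorbing the verifier's in-protocol measurements/flips into the prover's unitaries so that ``the observable applied in round $i$'' is a genuine Hermitian square-root-of-identity that can be compared to the final-round observable, and making sure the randomized flip $F_1$ in the $X$-consistency test is used in exactly the place needed to license the operator move (this is where the single-prover analysis genuinely differs from the two-prover one, since there is no EPR bridge to exploit). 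The cleanest route is to mirror the structure of Lemmas~\ref{lem:mqt:c}, \ref{lem:mqt:a}, and \ref{lem:mqt:anti} specialized to $n=k=1$, $G_1 = [1]$, so that $g_{f_1}\cdot g_{f_2} = 1$ always and the distribution $D$ is trivial.
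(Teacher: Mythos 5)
Your plan is correct and follows essentially the same route as the paper's proof: canonical prover form, converting each test's acceptance probability into a $\|\cdot\|^2$ bound (the paper's Lemmas~\ref{lem:s:sqt:c} and~\ref{lem:s:sqt:a}), chaining the approximations through the flipped and unflipped consistency checks to get $O'_1O'_2U_{0\to2}\ket{00}\approx_\delta -O'_2O'_1U_{0\to2}\ket{00}$, and identifying this with $\tr((O'_1O'_2+O'_2O'_1)^2\rho)$, with completeness from the exact operator identities satisfied by the honest prover (Lemma~\ref{lem:s:hp_properties}). Your observation that the argument is the $n=k=1$ specialization of the many-qubit lemmas is exactly how the paper structures it.
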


Before proving the main theorem, we first introduce the necessary notations. Following this, we will establish several lemmas that will aid in the proof of the main theorem.
\begin{definition}[Transition actions with flips]\label{def:s:trans_f}
Let $ {U}_{i}$ for $i\in[2]$ be defined as in Definition \ref{def:s:sqt:canonical_prover}.
For $u\in [2], F\in \{0, 1\}^2$, define the transition action with flips as:
\begin{align*}
{U}_{1, F} =&~ Z_{\mathsf{R}}^{F_1}{U}_{1}, \\
{U}_{2, F} =&~ Z_{\mathsf{R}}^{F_2}{U}_{2} Z_{\mathsf{R}}^{F_1}{U}_{1}.
\end{align*}

\end{definition}

Next, we examine the relationship between the overall success probability and the success probabilities in each configuration.
\begin{lemma}\label{lem:s:sqt:prob}
Let $\texttt{V},\texttt{P}$ be defined as in Theorem \ref{thm:s:sqt}. 
Let $D$ be the distribution of $ M,F$ in Definition \ref{def:s:sqt:verifier:c}.
Then
$$ \Pr[\text{ \texttt{V} accepts \texttt{P} } ] \geq 1-\Theta(\delta)\;,$$  
if and only if:
\begin{itemize}
\item for any $M,F\in\{0, 1\}^2$ in the support of $D$,
\begin{align*}
\Pr[\text{ consistency check passes }~|~M,F] \geq 1- \Theta(\delta)\;;
\end{align*}
\item and
\begin{align*}
\Pr[\text{ anti-commuting test passes }] \geq 1- \Theta(\delta)\;.
\end{align*}
\end{itemize}
\end{lemma}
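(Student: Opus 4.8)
\textbf{Proof proposal for Lemma~\ref{lem:s:sqt:prob}.}

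The plan is to prove the equivalence by a routine union-bound argument in one direction and an averaging argument in the other, treating the verifier's overall acceptance probability as a convex combination of the acceptance probabilities of the two sub-tests (consistency check and anti-commuting test), and further, within the consistency check, as a convex combination over the finitely many configurations $(M,F)$ in the support of $D$.

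First I would unfold the definition of $\omega$: by Definition~\ref{def:s:sqt:verifier}, the verifier runs the consistency check of Definition~\ref{def:s:sqt:verifier:c} with probability $1/2$ and the anti-commuting test of Definition~\ref{def:s:sqt:verifier:a} with probability $1/2$, so
\begin{align*}
\Pr[\texttt{V}\text{ accepts }\texttt{P}] = \tfrac12 \Pr[\text{consistency check passes}] + \tfrac12 \Pr[\text{anti-commuting test passes}]\;.
\end{align*}
Then I would further decompose the consistency-check term by conditioning on the sampled $(M,F)$: since $u\in_R[2]$ and $F$ is sampled from a fixed (constant-size) distribution depending on $M$, the support of $D$ has constant size $c_0$, and
\begin{align*}
\Pr[\text{consistency check passes}] = \sum_{(M,F)\in\mathrm{supp}(D)} \Pr_D[M,F]\cdot \Pr[\text{consistency check passes}\mid M,F]\;.
\end{align*}
For the ``if'' direction, if each conditional probability in the statement is at least $1-\Theta(\delta)$, then each of the two top-level terms is at least $1-\Theta(\delta)$ (for the consistency term, because it is a convex combination of quantities each at least $1-\Theta(\delta)$), hence their average is at least $1-\Theta(\delta)$. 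For the ``only if'' direction, if $\Pr[\texttt{V}\text{ accepts}]\ge 1-\Theta(\delta)$, then since each summand/term is a probability bounded by $1$ and the coefficients are fixed positive constants ($1/2$ at the top level, and $\Pr_D[M,F]=\Omega(1)$ for each $(M,F)$ in the support because the support is of constant size), each individual conditional acceptance probability must be at least $1-\Theta(\delta)$; this is exactly the standard fact that if a convex combination with constant-bounded-below weights of numbers in $[0,1]$ is $\ge 1-\eta$ then each number is $\ge 1-\eta/w_{\min}$, and here $w_{\min}=\Omega(1)$ so the $\Theta(\cdot)$ is preserved.

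I do not expect a real obstacle here: the only thing to be careful about is confirming that the support of the distribution $D$ over $(M,F)$ (and the branch choice in Definition~\ref{def:s:sqt:verifier}) has size bounded by an absolute constant, so that $w_{\min}=\Omega(1)$ and the $\Theta$-notation is not corrupted by hidden dependence on $n$ — which is immediate from inspection of Definitions~\ref{def:s:sqt:verifier:c} and~\ref{def:s:sqt:verifier:a}, since $u\in[2]$ and $F_1\in\F_2$ are the only random choices and all of $M_1,M_2,F_1,F_2$ range over $\{0,1\}$. With that observation in hand, the lemma follows from the union bound (one direction) and the averaging/pigeonhole bound on convex combinations (the other direction), as the proof in the excerpt already indicates.
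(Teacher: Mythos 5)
Your proposal is correct and is essentially the argument the paper intends: the paper's own proof is just the one-line remark that the lemma follows from a straightforward calculation (a union bound in one direction and averaging over the constant-size support of $D$ with weights $\Omega(1)$ in the other), which is exactly the decomposition you spell out. Your extra care in checking that the support of $(M,F)$ and the top-level branch choice are of constant size, so the $\Theta(\cdot)$ notation is not corrupted, is the only substantive point, and you handle it correctly.
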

\begin{proof}
This lemma follows from a straightforward calculation.
\end{proof}

\begin{lemma}
Let $O_1, O_2 \in \mathbb{C}^{n \times n}$ be observables, $ O_1^2 = O_2^2 = I, O_1^\dagger = O_1, O_2^\dagger = O_2$. Let $\ket{\phi} \in \mathbb{C}^n$ be a quantum state. Then
\[
\sum_{a \in \{\pm 1\}} \Big\| \frac{I + a \cdot O_1}{2} \cdot U \cdot \frac{I + a \cdot O_2}{2} \ket{\phi} \Big\|^2 = \bra{\phi} \frac{2I + U^\dagger O_1 U O_2 + O_2^\dagger U^\dagger O_1 U}{4} \ket{\phi}.
\]
\end{lemma}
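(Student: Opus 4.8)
The statement is a purely algebraic identity, so the plan is a direct computation, organized to keep the bookkeeping clean. First I would record the two elementary facts that make everything work: since $O_1^2=O_2^2=I$ and $a\in\{\pm1\}$, each operator $\tfrac{I+aO_1}{2}$ and $\tfrac{I+aO_2}{2}$ is an orthogonal projection (one checks $(\tfrac{I+aO_i}{2})^2=\tfrac{I+aO_i}{2}$ using $a^2=1$ and $O_i^2=I$), and each is Hermitian. Consequently
\[
\Big\|\tfrac{I+aO_1}{2}\,U\,\tfrac{I+aO_2}{2}\ket\phi\Big\|^2
=\bra\phi\,\tfrac{I+aO_2}{2}\,U^\dagger\,\tfrac{I+aO_1}{2}\,U\,\tfrac{I+aO_2}{2}\,\ket\phi ,
\]
where I have already absorbed one copy of $\tfrac{I+aO_1}{2}$ using idempotence.

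Next I would introduce $A:=U^\dagger O_1 U$ and note it is again an observable: $A^\dagger=U^\dagger O_1^\dagger U=A$ and $A^2=U^\dagger O_1^2 U=I$, using $U^\dagger U=I$. This lets me rewrite $U^\dagger\,\tfrac{I+aO_1}{2}\,U=\tfrac{I+aA}{2}$, so the summand becomes $\tfrac18(I+aO_2)(I+aA)(I+aO_2)$. I would then expand this triple product, repeatedly replacing $a^2$ by $1$ and $O_2^2$ by $I$; the expansion collapses to something of the form $2I + a\,(\text{linear in }O_2,A) + (AO_2+O_2A) + a\,O_2AO_2$.

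Finally I would sum over $a\in\{+1,-1\}$. The terms carrying an odd power of $a$ cancel and the others double, leaving $\tfrac18\cdot 2\,(2I+AO_2+O_2A)=\tfrac14(2I+AO_2+O_2A)$. Substituting back $A=U^\dagger O_1 U$ and using $O_2=O_2^\dagger$ to write $O_2A=O_2^\dagger U^\dagger O_1 U$ yields exactly $\bra\phi\,\tfrac{2I+U^\dagger O_1 U O_2 + O_2^\dagger U^\dagger O_1 U}{4}\,\ket\phi$, as claimed.

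There is no genuine obstacle here: the only thing to be careful about is the expansion of the triple product and the parity argument in the sum over $a$, i.e.\ ordinary noncommutative-algebra bookkeeping. The conceptually load-bearing observations are just (i) $\tfrac{I+aO_i}{2}$ is a projection when $O_i$ squares to $I$, and (ii) conjugation by the unitary $U$ preserves the observable property, so $U^\dagger O_1 U$ can be treated on the same footing as $O_1$.
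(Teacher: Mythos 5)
Your proposal is correct and follows essentially the same route as the paper's proof: write the squared norm as an expectation value, absorb one factor of $\tfrac{I+aO_1}{2}$ by idempotence (equivalently, expand the triple product), and let the sum over $a\in\{\pm1\}$ cancel the odd-in-$a$ terms, leaving $\tfrac{1}{4}(2I+U^\dagger O_1 U O_2+O_2^\dagger U^\dagger O_1 U)$. Your substitution $A=U^\dagger O_1 U$ is only a notational convenience and does not change the argument.
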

\begin{proof}
We have that 
\begin{align*}
&~ \sum_{a \in \{\pm 1\}} \Big\| \frac{I + a \cdot O_1}{2} \cdot U \cdot \frac{I + a \cdot O_2}{2} \ket{\phi} \Big\|^2 \\
=&~ \sum_{a \in \{\pm 1\}} \bra{\phi}\frac{I + a \cdot O_2^\dagger}{2} \cdot U^\dagger \cdot \frac{I + a \cdot O_1}{2} \cdot U \cdot \frac{I + a \cdot O_2}{2} \ket{\phi}  \\
=&~ \sum_{a \in \{\pm 1\}} 
\bra{\phi}\frac{2I + U^\dagger O_1 U O_2 + O_2^\dagger U^\dagger O_1 U + a(O_2+ U^\dagger O_1 U + O_2^\dagger +O_2^\dagger U^\dagger O_1 U O_2)}{8} \ket{\phi}
\\
=&~ 
\bra{\phi}\frac{2I + U^\dagger O_1 U O_2 + O_2^\dagger U^\dagger O_1 U }{4} \ket{\phi}
.
\end{align*}

\end{proof}

\begin{lemma}[Implication of the consistency check]\label{lem:s:sqt:c}
Let $ {U}_{i}$ for $i\in[2]$ be defined as in Definition \ref{def:s:sqt:canonical_prover}.
Let $\mathsf{P},\mathsf{R}$ be defined as in Definition \ref{def:s:sqt:canonical_prover}. 
Let $D$ be the distribution of $ M,F$ in Definition \ref{def:s:sqt:verifier:c}. For any $M,F\in\{0, 1\}^2$ in the support of $D$, let $u\in\{0,1\}$ be such that $ M_u=1$. 
Then
\begin{align*}
\|(O'_{u} U_{2, F} -U_{u\rightarrow 2} Z_{\mathsf{R}} U_{u, F} )\ket{00}_{\mathsf{PR}}\|^2 =  4\cdot(1-\Pr[\text{ consistency check passes }~|~M,F])\;. 
\end{align*}
\end{lemma}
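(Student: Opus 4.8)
The plan is to directly compute $\Pr[\text{consistency check passes} \mid M, F]$ using the canonical prover's actions and the definition of the consistency check, and to recognize the resulting expression as (a rescaling of) the squared norm appearing in the statement. First I would unfold Definition~\ref{def:s:sqt:verifier:c}: conditioned on $(M,F)$ with $M_u = 1$, the verifier applies the transition actions with flips ${U}_{u, F}$ up through round~$u$, then measures $Z_{\mathsf{R}}$ obtaining a bit $r_u$, then (since the other round is not measured) continues applying ${U}_{u \to 2}$ to reach the post-round-2 state, and finally in the last round receives the prover's measurement outcome of the observable $O'_u = O'_{q'}$ with $q' = u$. The verifier accepts iff $r_u = r'$. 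Writing the $Z_{\mathsf R}$ measurement in round $u$ as the pair of projectors $\tfrac{I + a Z_{\mathsf R}}{2}$ for $a \in \{\pm 1\}$ (identifying the outcome bit with $a$), and the last-round measurement of $O'_u$ as the projectors $\tfrac{I + a O'_u}{2}$ for the \emph{matching} outcome, the acceptance probability becomes
\[
\Pr[\text{consistency check passes} \mid M,F] = \sum_{a \in \{\pm 1\}} \Big\| \frac{I + a\cdot O'_u}{2} \cdot {U}_{u \to 2} \cdot \frac{I + a\cdot Z_{\mathsf R}}{2} \cdot {U}_{u, F}\ket{00}_{\mathsf{PR}} \Big\|^2.
\]
Here I would use that ${U}_{u \to 2}\,{U}_{u,F} = {U}_{2,F}$ by Definition~\ref{def:s:trans_f} and Definition~\ref{def:s:trans} (the flips only occur at or before round $u$, so ${U}_{2,F} = {U}_{u \to 2}{U}_{u,F}$), and that $Z_{\mathsf R}$ commutes with nothing in particular but is simply the observable being measured.

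Next I would apply the immediately preceding lemma (the one with $O_1, O_2, U$) with the substitution $O_1 = O'_u$, $O_2 = Z_{\mathsf R}$, $U = {U}_{u\to 2}$, and $\ket\phi = {U}_{u,F}\ket{00}_{\mathsf{PR}}$. That lemma gives
\[
\sum_{a\in\{\pm 1\}} \Big\| \frac{I + a O'_u}{2}\, {U}_{u\to 2}\, \frac{I + a Z_{\mathsf R}}{2}\ket\phi \Big\|^2 = \bra\phi \frac{2I + {U}_{u\to 2}^\dagger O'_u {U}_{u\to 2} Z_{\mathsf R} + Z_{\mathsf R}^\dagger {U}_{u\to 2}^\dagger O'_u {U}_{u\to 2}}{4}\ket\phi,
\]
using that both $O'_u$ and $Z_{\mathsf R}$ are Hermitian squaring to $I$. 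Then I would expand the squared norm on the left-hand side of the target identity: since $\|A\ket\psi - B\ket\psi\|^2 = \|A\ket\psi\|^2 + \|B\ket\psi\|^2 - 2\Re\langle A\psi | B\psi\rangle$, and here both $A = O'_u {U}_{2,F}$ and $B = {U}_{u\to 2} Z_{\mathsf R} {U}_{u,F}$ are unitaries (products of unitaries/observables) applied to the unit vector $\ket{00}_{\mathsf{PR}}$, each squared norm equals $1$, so
\[
\|(O'_u {U}_{2,F} - {U}_{u\to 2} Z_{\mathsf R} {U}_{u,F})\ket{00}_{\mathsf{PR}}\|^2 = 2 - 2\Re\big\langle O'_u {U}_{2,F}\,00 \,\big|\, {U}_{u\to 2} Z_{\mathsf R} {U}_{u,F}\, 00 \big\rangle.
\]
Rewriting $\Re\langle O'_u {U}_{2,F} 00 | {U}_{u\to 2} Z_{\mathsf R} {U}_{u,F} 00\rangle = \Re\,\bra{00}{U}_{u,F}^\dagger Z_{\mathsf R}^\dagger {U}_{u\to 2}^\dagger O'_u {U}_{2,F}\ket{00}$ and substituting ${U}_{2,F} = {U}_{u\to 2}{U}_{u,F}$, this inner product equals $\bra\phi Z_{\mathsf R}^\dagger {U}_{u\to 2}^\dagger O'_u {U}_{u\to 2}\ket\phi$ with $\ket\phi = {U}_{u,F}\ket{00}_{\mathsf{PR}}$; taking the real part symmetrizes it into exactly $\bra\phi \tfrac{{U}_{u\to 2}^\dagger O'_u {U}_{u\to 2} Z_{\mathsf R} + Z_{\mathsf R}^\dagger {U}_{u\to 2}^\dagger O'_u {U}_{u\to 2}}{2}\ket\phi$. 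Comparing with the lemma's formula, $\Pr[\text{consistency check passes}\mid M,F] = \tfrac{1}{4}\big(2 + 2\Re\langle\cdots\rangle\big) = 1 - \tfrac{1}{4}\|(\cdots)\ket{00}_{\mathsf{PR}}\|^2$, which rearranges to the claimed identity.

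\textbf{Main obstacle.} The computation itself is routine linear algebra; the one place requiring care is the bookkeeping of \emph{which} flips and unitaries appear before versus after the measured round $u$, i.e.\ justifying the factorization ${U}_{2,F} = {U}_{u\to 2}\,{U}_{u,F}$ and confirming that when $u = 1$ the round-2 flip $F_2 = 0$ (so the transition from round $1$ to round $2$ is the plain ${U}_2$) while when $u = 2$ there is no transition after the measurement. This is dictated by Definition~\ref{def:s:sqt:verifier:c} (if $M_1 = 1$ then $F_1 = F_2 = 0$; if $M_2 = 1$ then $F_2 = 0$), so in every case in the support of $D$ the measured round's preceding operator is ${U}_{u,F}$ and the post-measurement operator is ${U}_{u\to 2}$ with no further flips — making the substitution into the preceding lemma clean. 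I expect no genuine difficulty beyond this indexing check.
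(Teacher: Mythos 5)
Your proposal is correct and follows essentially the same route as the paper: expand the acceptance probability as $\sum_{a\in\{\pm1\}}\bigl\|\tfrac{I+aO'_u}{2}\,U_{u\to 2}\,\tfrac{I+aZ_{\mathsf R}}{2}\,U_{u,F}\ket{00}\bigr\|^2$, apply the preceding two-projector lemma with $\ket\phi=U_{u,F}\ket{00}$ and the factorization $U_{2,F}=U_{u\to 2}U_{u,F}$ (valid since $F_2=0$ in the support of $D$), and compare with the expansion $\|O'_uU_{2,F}\ket{00}-U_{u\to 2}Z_{\mathsf R}U_{u,F}\ket{00}\|^2=2-2\Re\langle\cdots\rangle$. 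Your indexing check on the flips is exactly the point the paper leaves implicit, and it is handled correctly.
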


\begin{proof}
By a straightforward calculation,
\begin{align*}
&~\Pr[\text{ consistency check passes }~|~M,F]\\
=&~\sum_{a\in \{\pm 1\}}
\|  \frac{I + a\cdot  O'_{u}}{2} \cdot U_{u\rightarrow 2} \cdot \frac{I + a\cdot  Z}{2} \cdot U_{u, F} \ket{0}  \|^2\\
=&~\bra{0} \frac{2I + U_{u, F}^\dagger   Z  U_{u\rightarrow 2}^\dagger O'_{u} U_{2, F}+U_{2, F}^\dagger O'_{u} U_{u\rightarrow 2} Z U_{u, F} }{4} \ket{0} . 
\end{align*} 
Moreover, we have that
\begin{align*}
&~\|O'_{u} U_{2, F} \ket{0} - U_{u\rightarrow 2} Z_{\mathsf{R}} U_{u, F} \ket{0}\|^2 \\
=&~ 2 - \bra{0} U_{u, F}^\dagger   Z  U_{u\rightarrow 2}^\dagger O'_{u} U_{2, F}+U_{2, F}^\dagger O'_{u} U_{u\rightarrow 2} Z U_{u, F} \ket{0}.
\end{align*}

\end{proof}

\begin{lemma}[Implication of the anti-commuting tests]\label{lem:s:sqt:a}
Let $ {U}_{i}$ for $i\in[2]$ be defined as in Definition \ref{def:s:sqt:canonical_prover}.
Let $\mathsf{P},\mathsf{R}$ be defined as in Definition \ref{def:s:sqt:canonical_prover}.

Then,
\begin{align*}
\|(O'_{1} Z U_{0\rightarrow 2} + Z U_2 Z U_1)\ket{00}_{\mathsf{PR}}\|^2= 4\cdot(1-\Pr[\text{ anti-commuting test passes }]). 
\end{align*}
\end{lemma}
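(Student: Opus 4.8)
The plan is to unfold both sides of the claimed equation into the same sesquilinear expression in $U_1$, $U_2$ and the prover's final observable $O'_1$, exactly paralleling the proof of Lemma~\ref{lem:s:sqt:c}. First I would trace through the anti-commuting test of Definition~\ref{def:s:sqt:verifier:a} in the canonical-prover form of Definition~\ref{def:s:sqt:canonical_prover}: in the first round the prover applies $U_1$ and (since $M_1=1$, $F_1=0$) the verifier measures $Z_{\mathsf{R}}$, recording $r_1$; in the second round the prover applies $U_2$ and (since $M_2=0$, $F_2=1$) the verifier applies the phase flip $Z_{\mathsf{R}}$ without measuring; in the last round the verifier sends $q'=1$, the prover measures $O'_1$ recording $r'$, and the verifier accepts iff $r_1=r'\oplus 1$. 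Writing $a=(-1)^{r_1}\in\{\pm1\}$ and noting that the projector onto the accepting branch $r'=r_1\oplus 1$ is $\tfrac{I+(-1)^{r_1\oplus 1}O'_1}{2}=\tfrac{I-a\,O'_1}{2}$, this gives
\[
\Pr[\text{ anti-commuting test passes }]=\sum_{a\in\{\pm1\}}\Big\|\frac{I-a\,O'_1}{2}\,Z_{\mathsf{R}}\,U_2\,\frac{I+a\,Z_{\mathsf{R}}}{2}\,U_1\,\ket{00}_{\mathsf{PR}}\Big\|^2 .
\]

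Next I would invoke the algebraic identity stated just above Lemma~\ref{lem:s:sqt:c} — namely $\sum_{a\in\{\pm1\}}\|\tfrac{I+aO_1}{2}U\tfrac{I+aO_2}{2}\ket\phi\|^2=\bra\phi\tfrac{2I+U^\dagger O_1 U O_2+O_2^\dagger U^\dagger O_1 U}{4}\ket\phi$ — with the substitution $O_1=-O'_1$, $O_2=Z_{\mathsf{R}}$, $U=Z_{\mathsf{R}}U_2$, $\ket\phi=U_1\ket{00}_{\mathsf{PR}}$; these substitutions are legitimate since $-O'_1$ and $Z_{\mathsf{R}}$ are observables and $Z_{\mathsf{R}}U_2$ is unitary. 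Writing $A=U_1^\dagger U_2^\dagger Z_{\mathsf{R}}O'_1 Z_{\mathsf{R}}U_2 Z_{\mathsf{R}}U_1$, the two cross terms produced are $-\bra{00}_{\mathsf{PR}}A\ket{00}_{\mathsf{PR}}$ and $-\bra{00}_{\mathsf{PR}}A^\dagger\ket{00}_{\mathsf{PR}}$, which are complex conjugates, so $\Pr[\text{ anti-commuting test passes }]=\tfrac12-\tfrac12\Re\,\bra{00}_{\mathsf{PR}}A\ket{00}_{\mathsf{PR}}$. In parallel I would expand the left-hand side of the lemma directly: using that $O'_1$, $Z_{\mathsf{R}}$, $U_1$, $U_2$ are unitary and that $O'_1$, $Z_{\mathsf{R}}$ are Hermitian with square $I$, the two squared norms contribute $1+1$ and the cross term contributes $2\Re\,\bra{00}_{\mathsf{PR}}A\ket{00}_{\mathsf{PR}}$, i.e. $\|(O'_1 Z_{\mathsf{R}}U_2 U_1+Z_{\mathsf{R}}U_2 Z_{\mathsf{R}}U_1)\ket{00}_{\mathsf{PR}}\|^2=2+2\Re\,\bra{00}_{\mathsf{PR}}A\ket{00}_{\mathsf{PR}}$. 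Comparing the two expressions and recalling $U_{0\to 2}=U_2 U_1$ from Definition~\ref{def:s:trans} gives $\|(O'_1 Z_{\mathsf{R}}U_{0\to 2}+Z_{\mathsf{R}}U_2 Z_{\mathsf{R}}U_1)\ket{00}_{\mathsf{PR}}\|^2=4(1-\Pr[\text{ anti-commuting test passes }])$, which is the claim.

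I do not expect any genuine obstacle: this is bookkeeping in the same spirit as Lemma~\ref{lem:s:sqt:c}, and in fact one can skip the operator $A$ altogether and simply expand both sides to the common value $2+2\Re\,\bra{00}_{\mathsf{PR}}U_1^\dagger U_2^\dagger Z_{\mathsf{R}}O'_1 Z_{\mathsf{R}}U_2 Z_{\mathsf{R}}U_1\ket{00}_{\mathsf{PR}}$. The only two points that need care are (i) the sign in the accept condition — the anti-commuting test accepts precisely when the two recorded bits \emph{disagree}, which is what turns $\tfrac{I+a\,O'_1}{2}$ into $\tfrac{I-a\,O'_1}{2}$ — and (ii) keeping the verifier's phase flip and measurement localized on the single qubit $\mathsf{R}$, so that the $Z_{\mathsf{R}}$ factors are correctly interleaved with the prover's unitaries on $\mathsf{PR}$.
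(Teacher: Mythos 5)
Your proposal is correct and follows essentially the same route as the paper: expand the acceptance probability of the anti-commuting test in canonical-prover form as $\sum_{a\in\{\pm1\}}\|\tfrac{I-aO'_1}{2}\,Z\,U_2\,\tfrac{I+aZ}{2}\,U_1\ket{0}\|^2$, reduce it to $\tfrac12-\tfrac12\Re\bra{0}U_{0\to2}^\dagger Z O'_1 Z U_2 Z U_1\ket{0}$, and compare with the direct expansion of the squared norm on the left-hand side. Your explicit appeal to the identity preceding Lemma~\ref{lem:s:sqt:c} with $O_1=-O'_1$ is just a packaged form of the same "straightforward calculation" the paper performs, and your handling of the flipped accept condition matches the paper's sign.
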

\begin{proof}
By a straightforward calculation,
\begin{align*}
&~\Pr[\text{ anti-commuting test passes }]\\
=&~\sum_{a\in \{\pm 1\}}
\|  \frac{I - a\cdot  O'_{1}}{2} \cdot Z \cdot U_{2} \cdot \frac{I + a\cdot  Z}{2} \cdot U_{1} \ket{0}  \|^2\\
=&~\bra{0} \frac{2I -U_{0\rightarrow 2}^\dagger Z O'_{1}  Z  U_{2} Z U_{1}-U_{1}^\dagger Z U_{2}^\dagger Z O'_{1} Z U_{0 \rightarrow 2} }{4} \ket{0} . 
\end{align*} 
Moreover, we have that
\begin{align*}
&~\|O'_{1} Z U_{0\rightarrow 2}\ket{0} + Z U_2 Z U_1\ket{0}\|^2 \\
=&~ 2 + \bra{0} U_{0\rightarrow 2}^\dagger Z O'_{1}  Z  U_{2} Z U_{1}+U_{1}^\dagger Z U_{2}^\dagger Z O'_{1} Z U_{0 \rightarrow 2} \ket{0}.
\end{align*}

\end{proof}

The following lemmas will be instrumental in proving the completeness of Theorem \ref{thm:s:sqt}.

\begin{lemma}[The honest prover in the canonical prover’s form]
\label{lem:s:sqt:hp_cp}
Let 
$$ R_1 = P(Z_{\mathsf{P}}, \mathsf{R}), R_2 = P(X_{\mathsf{P}}, \mathsf{R}). $$

The prover in Definition \ref{def:s:sqt:honest_prover} is the prover in Definition \ref{def:s:sqt:canonical_prover} with:
\begin{align*}
    U_1 =&~ R_1,\\
    U_2 =&~ R_2\cdot R_{1}^{-1},
\end{align*}
and
\begin{align*}
O'_{1}=&~ R_2 \cdot Z_{\mathsf{P}} \cdot R_2^{-1}, \\
O'_{2}=&~ R_2 \cdot X_{\mathsf{P}} \cdot R_2^{-1}.
\end{align*}
\end{lemma}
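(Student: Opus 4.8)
\textbf{Proof proposal for Lemma~\ref{lem:s:sqt:hp_cp}.}

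The plan is to verify directly that the canonical prover obtained by plugging in the stated $U_1, U_2, O'_1, O'_2$ reproduces, round by round, the honest prover of Definition~\ref{def:s:sqt:honest_prover}. This is a bookkeeping lemma: no inequalities, no approximation — just matching unitaries and observables. First I would recall that in Definition~\ref{def:s:sqt:honest_prover} the honest prover, in round $i$, applies $P(O_{\mathsf P},\mathsf R)$ for $O=Z$ (round $1$) and $O=X$ (round $2$), sends $\mathsf R$, receives it back, and applies the inverse; and that $R_1=P(Z_{\mathsf P},\mathsf R)$, $R_2=P(X_{\mathsf P},\mathsf R)$ are exactly those purified-measurement unitaries. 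So the honest prover's net ``first step'' unitary in round $1$ is $R_1$, and (after undoing $R_1$ at the end of round $1$ and applying $R_2$ at the start of round $2$) its net ``first step'' unitary in round $2$ is $R_2 R_1^{-1}$. Comparing with Definition~\ref{def:s:sqt:canonical_prover}, where the prover applies $U_i$ at the start of round $i$, this forces $U_1=R_1$ and $U_2=R_2 R_1^{-1}$, which is precisely the claim. I would phrase this as: the composition of all the prover's operations through the start of round $i$ equals $U_{i}U_{i-1}\cdots U_1$, and a short induction using $U_1=R_1$, $U_2=R_2R_1^{-1}$ shows this composition is $R_1$ after round $1$ and $R_2$ after round $2$, matching the honest prover.

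For the last round I would handle the two values of $q'$ separately. In Definition~\ref{def:s:sqt:honest_prover:last}, after the first two rounds the honest prover has applied the net unitary $R_2$ (in the canonical form $U_2U_1 = (R_2R_1^{-1})R_1 = R_2$), and then, on receiving $q'$, it measures the witness register $\mathsf P$ in the $Z$ basis if $q'=1$ and in the $X$ basis if $q'=2$, returning the outcome. Measuring $Z_{\mathsf P}$ (resp. $X_{\mathsf P}$) after the state has been acted on by $R_2$ is, by conjugation, the same as measuring the observable $R_2 Z_{\mathsf P} R_2^{-1}$ (resp. $R_2 X_{\mathsf P} R_2^{-1}$) on the state just before $R_2$ was reversed — but in the canonical formulation the prover does \emph{not} undo $R_2$ and instead measures a fixed observable $O'_{q'}$ on the current state. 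Hence we must take $O'_1 = R_2 Z_{\mathsf P} R_2^{-1}$ and $O'_2 = R_2 X_{\mathsf P} R_2^{-1}$, which is exactly the stated expression. One should also remark that $(O'_{q'})^2 = I$ and $(O'_{q'})^\dagger = O'_{q'}$, since $R_2$ is unitary and $Z_{\mathsf P}, X_{\mathsf P}$ are observables, so the $O'_{q'}$ are legitimate observables as required by Definition~\ref{def:s:sqt:canonical_prover}.

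There is essentially no hard step here; the only point requiring a little care is the observation that ``apply $R_2^{-1}$, then measure $Z_{\mathsf P}$'' equals ``measure $R_2 Z_{\mathsf P} R_2^{-1}$, then apply $R_2^{-1}$'' (measurement commutes with the post-processing unitary up to relabelling), together with the fact that in the canonical form the trailing $R_2^{-1}$ is irrelevant because it acts after the decision bit is produced — this is the same simplification noted in the remark after Definition~\ref{def:mqt:honest_prover:l_consist}. I would write the verification of these identities as a one-line ``straightforward calculation,'' matching the style of the analogous Lemma~\ref{lem:mqt:hp_cp} in the many-qubits section, and conclude that the honest prover of Definition~\ref{def:s:sqt:honest_prover} is the canonical prover of Definition~\ref{def:s:sqt:canonical_prover} with the parameters as stated.
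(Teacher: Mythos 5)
Your proposal is correct and matches the paper's intent: the paper's proof is just "a straightforward calculation," and your round-by-round bookkeeping (absorbing the undoing of $R_1$ into $U_2=R_2R_1^{-1}$ and replacing "apply $R_2^{-1}$ then measure $Z_{\mathsf P}$ or $X_{\mathsf P}$" by measuring the conjugated observables $R_2 Z_{\mathsf P}R_2^{-1}$, $R_2 X_{\mathsf P}R_2^{-1}$) is exactly that calculation spelled out. No gaps.
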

\begin{proof}
This lemma follows from a straightforward calculation.
\end{proof}

\begin{lemma}[Properties of the honest prover]\label{lem:s:hp_properties}
For $i\in [2]$, let $ U_i$ be defined as in Lemma \ref{lem:s:sqt:hp_cp}. For $q'\in [2]$, let $ O'_{q'}$ be defined as in Lemma \ref{lem:s:sqt:hp_cp}. 
Then:

\begin{itemize}
\item For any $u\in[2]$,
\begin{align*}
O'_{u} U_{u\rightarrow 2} =U_{u\rightarrow 2} Z_{\mathsf{R}}.
\end{align*}
\item Moreover, 
\begin{align*}
O'_{1} Z_{\mathsf{R}} U_{2} = -Z_{\mathsf{R}} U_{2} Z_{\mathsf{R}}
\end{align*}
\end{itemize}
\end{lemma}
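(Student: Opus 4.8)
\textbf{Proof plan for Lemma~\ref{lem:s:hp_properties}.}
The plan is to verify both identities by a direct computation using the explicit definitions of $R_1 = P(Z_{\mathsf P},\mathsf R)$ and $R_2 = P(X_{\mathsf P},\mathsf R)$ from Lemma~\ref{lem:s:sqt:hp_cp}, the definition of purified measurements, and the commutation relations of single-qubit Paulis. The main conceptual point is that conjugating the verifier's operator $Z_{\mathsf R}$ by a purified measurement $P(O_{\mathsf P},\mathsf R)$ turns it into $O_{\mathsf P}$ acting on the prover's register: indeed $P(O_{\mathsf P},\mathsf R)^\dagger Z_{\mathsf R} P(O_{\mathsf P},\mathsf R) = O_{\mathsf P}$ for a single-qubit observable $O$ with $O^2=I$, since $P(O_{\mathsf P},\mathsf R) = \tfrac{I+O}{2}\otimes I + \tfrac{I-O}{2}\otimes X$ and $Z_{\mathsf R}$ flips the sign of the $X$-branch.

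For \textbf{Part I}, I would argue as follows. By Lemma~\ref{lem:s:sqt:hp_cp}, for $u=2$ we have $U_{2\rightarrow 2}=I$ and $O'_2 = R_2 X_{\mathsf P} R_2^{-1}$; using $R_2^\dagger Z_{\mathsf R} R_2 = X_{\mathsf P}$ (the identity above with $O=X$) gives $O'_2 = R_2 R_2^\dagger Z_{\mathsf R} R_2 R_2^{-1}$, wait — more directly, $O'_2 U_{2\rightarrow 2} = O'_2$, and one checks $O'_2 = R_2 X_{\mathsf P} R_2^\dagger$ equals $Z_{\mathsf R}$ only after noting $U_{2\rightarrow 2}=I$ is trivial, so the claim for $u=2$ reduces to showing $O'_2 Z_{\mathsf R}^{-1}$-type consistency; the cleaner route is to rewrite the target identity $O'_u U_{u\rightarrow 2} = U_{u\rightarrow 2} Z_{\mathsf R}$ as $U_{u\rightarrow 2}^\dagger O'_u U_{u\rightarrow 2} = Z_{\mathsf R}$. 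For $u=1$: $U_{1\rightarrow 2} = U_2 = R_2 R_1^{-1}$, so $U_{1\rightarrow 2}^\dagger O'_1 U_{1\rightarrow 2} = R_1 R_2^{-1}\,(R_2 Z_{\mathsf P} R_2^{-1})\,R_2 R_1^{-1} = R_1 Z_{\mathsf P} R_1^{-1} = Z_{\mathsf R}$, using $R_1 Z_{\mathsf P} R_1^{-1} = Z_{\mathsf R}$ (purified $Z$-measurement leaves $Z_{\mathsf P}$ unchanged and transfers it to $\mathsf R$, equivalently $R_1^\dagger Z_{\mathsf R} R_1 = Z_{\mathsf P}$ with $Z_{\mathsf P}$ commuting with $R_1$). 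For $u=2$: $U_{2\rightarrow 2}=I$, and $O'_2 = R_2 X_{\mathsf P} R_2^{-1} = Z_{\mathsf R}$ by the purified-measurement identity with $O=X$. Both cases are routine once the key conjugation fact is in hand.

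For \textbf{Part II}, I would compute $O'_1 Z_{\mathsf R} U_2$ directly: $O'_1 = R_2 Z_{\mathsf P} R_2^{-1}$ and $U_2 = R_2 R_1^{-1}$, so $O'_1 Z_{\mathsf R} U_2 = R_2 Z_{\mathsf P} R_2^{-1} Z_{\mathsf R} R_2 R_1^{-1} = R_2 Z_{\mathsf P} X_{\mathsf P} R_1^{-1}$, using $R_2^{-1} Z_{\mathsf R} R_2 = X_{\mathsf P}$. On the other side, $Z_{\mathsf R} U_2 Z_{\mathsf R} = Z_{\mathsf R} R_2 R_1^{-1} Z_{\mathsf R}$; I push $Z_{\mathsf R}$ through using $Z_{\mathsf R} R_2 = R_2 X_{\mathsf P} Z_{\mathsf R}$? — actually the cleanest is to write $Z_{\mathsf R} R_2 R_1^{-1} Z_{\mathsf R} = R_2 (R_2^\dagger Z_{\mathsf R} R_2)(R_1^{-1}Z_{\mathsf R}R_1) R_1^{-1}$, hmm this introduces $R_1^{-1}Z_{\mathsf R}R_1 = Z_{\mathsf P}$ only if $Z_{\mathsf P}$ commutes — which it does. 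So $Z_{\mathsf R}U_2 Z_{\mathsf R} = R_2 X_{\mathsf P} Z_{\mathsf P} R_1^{-1}$. Comparing with $O'_1 Z_{\mathsf R} U_2 = R_2 Z_{\mathsf P} X_{\mathsf P} R_1^{-1}$ and using $Z_{\mathsf P}X_{\mathsf P} = -X_{\mathsf P}Z_{\mathsf P}$ yields exactly $O'_1 Z_{\mathsf R} U_2 = -Z_{\mathsf R}U_2 Z_{\mathsf R}$, as claimed.

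\textbf{Main obstacle.} There is no deep obstacle here; the statement is a bookkeeping lemma. The only place to be careful is keeping track of which register each Pauli acts on after conjugation by $R_1$ and $R_2$ — in particular that $R_j^\dagger Z_{\mathsf R} R_j$ equals $Z_{\mathsf P}$ for $j=1$ and $X_{\mathsf P}$ for $j=2$, and that $R_1$ commutes with $Z_{\mathsf P}$ while $R_2$ commutes with $X_{\mathsf P}$ — so that the transition unitaries $U_{u\rightarrow 2}$ telescope correctly. Once these elementary identities about purified single-qubit measurements are recorded, both parts follow by a straightforward calculation, which is why the paper states the proof is ``a straightforward calculation.''
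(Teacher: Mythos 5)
Your proposal is correct and follows essentially the same route as the paper's own proof: both reduce the two identities to the purified-measurement conjugation facts $R_1^{-1}Z_{\mathsf{R}}R_1=Z_{\mathsf{P}}$ and $R_2^{-1}Z_{\mathsf{R}}R_2=X_{\mathsf{P}}$ (which you, like the paper, state up to the residual $Z_{\mathsf{R}}$ factor, harmless on the states to which the lemma is applied), together with the commutation of $Z_{\mathsf{P}}$ with $R_1$ and of $X_{\mathsf{P}}$ with $R_2$, and the anticommutation $Z_{\mathsf{P}}X_{\mathsf{P}}=-X_{\mathsf{P}}Z_{\mathsf{P}}$ for the sign in Part II. The only difference is organizational: you conjugate by $U_{u\rightarrow 2}$ and telescope directly, whereas the paper first rewrites the claims as $R_2^{-1}O'_uR_2=R_u^{-1}Z_{\mathsf{R}}R_u$; the underlying computations are identical.
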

\begin{proof}
\hfill

Let $R_i$ be defined as in Lemma \ref{lem:s:sqt:hp_cp}. 

{\bf Part I:}
It is sufficient to show that
\begin{align*}
R_2^{-1} O'_{u} R_2 = R_u^{-1} Z_{\mathsf{R}} R_u,
\end{align*}
which follows from 
\begin{align*}
R_2^{-1} O'_{1} R_2= Z_{\mathsf{P}},\\
R_2^{-1} O'_{2} R_2= X_{\mathsf{P}},
\end{align*}
and
\begin{align*}
R_1^{-1} Z_{\mathsf{R}} R_1 =&~ Z_{\mathsf{P}}, \\
R_2^{-1} Z_{\mathsf{R}} R_2 =&~ X_{\mathsf{P}}.
\end{align*}

{\bf Part II:}
It is sufficient to show that
\begin{align*}
Z_{\mathsf{P}} R_2^{-1} Z_{\mathsf{R}} R_2 = -R_2^{-1} Z_{\mathsf{R}} R_2 R_1^{-1} Z_{\mathsf{R}} R_1,
\end{align*}
which follows from 
\begin{align*}
Z_{\mathsf{P}} X_{\mathsf{P}} = -X_{\mathsf{P}} Z_{\mathsf{P}}.
\end{align*}
\end{proof}

\begin{proof}[Proof of Theorem \ref{thm:s:sqt}]

\hfill 

{\bf Completeness:}
For $i\in [2]$, let $ U_i$ be defined as in Lemma \ref{lem:s:sqt:hp_cp}. For $q'\in [2]$, let $ O'_{q'}$ be defined as in Lemma \ref{lem:s:sqt:hp_cp}. 
Let $D$ be the sampling distribution of $ M,F$ in Definition \ref{def:s:sqt:verifier:c}.

By Lemma \ref{lem:s:sqt:prob}, Lemma \ref{lem:s:sqt:c}, and Lemma \ref{lem:s:sqt:a}, we only need to prove that for any $M,F\in\{0, 1\}^2$ in the support of $D$, letting $u$ be such that $ M_u=1$,
\begin{align*}
O'_{u} U_{u\rightarrow 2} =U_{u\rightarrow 2} Z_{\mathsf{R}},
\end{align*}
and 
\begin{align*}
O'_{1} Z_{\mathsf{R}} U_{0 \rightarrow 2} = -Z_{\mathsf{R}} U_{2} Z_{\mathsf{R}} U_{1}.
\end{align*}
Both parts follow directly from Lemma \ref{lem:s:hp_properties}.

{\bf Soundness:}

We use $ \ket \alpha \approx_{\delta} \ket \beta$ to denote $ \|\ket \alpha- \ket \beta\|^2 \leq \Theta (\delta)$.
We have that
\begin{align*}
&~\tr( (O'_{1}O'_{2}+O'_{2}O'_{1})^2 U_{0\rightarrow 2}\ket{00}_{\mathsf{PR}}\bra{00}_{\mathsf{PR}} U_{0\rightarrow 2}^\dagger)\\
=&~\bra{00}_{\mathsf{PR}}U_{0\rightarrow 2}^\dagger(O'_{1}O'_{2}+O'_{2}O'_{1})^2U_{0\rightarrow 2}\ket{00}_{\mathsf{PR}} \\
=&~\|O'_{1}O'_{2}U_{0\rightarrow 2}\ket{00}_{\mathsf{PR}}+O'_{2}O'_{1}U_{0\rightarrow 2}\ket{00}_{\mathsf{PR}}\|^2 .
\end{align*}
By Lemma \ref{lem:s:sqt:prob}, Lemma \ref{lem:s:sqt:c}, and Lemma \ref{lem:s:sqt:a}, we establish the following relationship:
\begin{align*}
&~ O'_{1}O'_{2}U_{0\rightarrow 2}\ket{0}\\
\approx_\delta &~ O'_{1}Z U_{2} U_1\ket{0}\\
\approx_\delta &~ - Z U_{2} Z U_1\ket{0}\\
\approx_\delta &~ - O'_2 U_{2} Z U_1\ket{0}\\
\approx_\delta &~ - O'_2 O'_1 U_{0\rightarrow 2}\ket{0}.
\end{align*}
Therefore,  
\begin{align*}
\|O'_{1}O'_{2}U_{0\rightarrow 2}\ket{00}_{\mathsf{PR}}+O'_{2}O'_{1}U_{0\rightarrow 2}\ket{00}_{\mathsf{PR}}\|^2\leq \Theta(\delta).
\end{align*}
\end{proof}

\newpage
\ifdefined\isarxiv

\bibliographystyle{alpha}
\bibliography{ref}

\else
\bibliography{ref}
\bibliographystyle{alpha}
\fi

\end{document}